\begin{document}
\hypersetup{pageanchor=false}
\title{An Improved Parameterized Algorithm for Treewidth\thanks{The research leading to these results has received funding from the Research Council of Norway via the project
BWCA (grant no. 314528).}}

\author{
Tuukka Korhonen\thanks{Department of Informatics, University of Bergen, Norway. \texttt{tuukka.korhonen@uib.no}}
\and
Daniel Lokshtanov\thanks{Department of Computer Science, University of California Santa Barbara, USA. \texttt{daniello@ucsb.edu}}
}

\maketitle

\thispagestyle{empty}

\begin{abstract}
We give an algorithm 
that takes as input an $n$-vertex graph $G$ and an integer $k$,
runs in time $2^{\OO(k^2)} n^{\OO(1)}$,
and outputs a tree decomposition of $G$ of width at most $k$, if such a decomposition exists. 
This resolves the long-standing open problem of whether there is a $2^{o(k^3)} n^{\OO(1)}$ time algorithm for treewidth.
In particular, our algorithm is the first improvement on the dependency on $k$ 
in algorithms for treewidth 
since the $2^{\OO(k^3)} n^{\OO(1)}$ time algorithm given by Bodlaender and Kloks [ICALP 1991] and Lagergren and Arnborg [ICALP 1991].

We also give an algorithm that given an $n$-vertex graph $G$, an integer $k$, and a rational $\varepsilon \in (0,1)$, in time $k^{\OO(k/\varepsilon)} n^{\OO(1)}$ either outputs a tree decomposition of $G$ of width at most $(1+\varepsilon)k$ or determines that the treewidth of $G$ is larger than $k$.
Prior to our work, no approximation algorithms for treewidth with approximation ratio less than $2$, other than the exact algorithms, were known. Both of our algorithms work in polynomial space.
\end{abstract}

\newpage
\hypersetup{pageanchor=true}
\pagestyle{plain}
\pagenumbering{arabic}

\section{Introduction}
A {\em tree decomposition} of a graph $G$ is a pair $(T, \bag)$ where $T$ is a tree and $\bag$
is a function assigning to each node $t$ of $T$ a set $\bag(t)$ (called a {\em bag}) of vertices of $G$. 
The function $\bag$ must satisfy the tree decomposition axioms: 
{\em (i)} for every edge $uv$ of $G$ at least one bag $\bag(t)$ contains both $u$ and $v$, and
{\em (ii)} for every vertex $v$ of $G$, the set $\{t \in V(T) \mid v \in \bag(t)\}$ induces a {\em non-empty} and {\em connected} subtree of $T$.
The {\em width} of a tree decomposition $(T, \bag)$ is the size of a largest bag minus one, and the {\em treewidth} of a graph $G$ is the minimum width of a tree decomposition of $G$.
The treewidth of a graph $G$ measures, in some sense, how far away $G$ is from being a tree. The treewidth of $G$ is at most $1$ if and only if every connected component of $G$ is a tree
, while the treewidth of a complete graph on $n$ vertices is equal to $n-1$~\cite{Diestel}.

Treewidth and tree decompositions play a central role in graph theory and graph algorithms, and the concept has been independently rediscovered several times~\cite{BerteleF72,Halin:1976it,RobertsonS3} under different names in different contexts. 
It is a key tool in the celebrated Graph Minors project of Robertson and Seymour~\cite{RobertsonS3,RobertsonS-GMXIII,RobertsonS04}.
Many problems that are intractable on general graphs are solvable in linear time if a tree decomposition of the input graph $G$ of constant width is provided as an input (see e.g.~\cite{Bodlaender93} and references within). Indeed, the classic Courcelle's Theorem~\cite{Courcelle90} states that such an algorithm exists for every problem expressible in Monadic Second Order Logic (see also~\cite{BoriePT92}).

Therefore it should not come as a surprise that a significant amount of attention has been devoted to designing algorithms to determine, given as input a graph $G$ and an integer $k$, whether the treewidth of $G$ is at most $k$ (and to produce a tree decomposition of width at most $k$ in the ``yes'' case). 
This problem is known to be \textsf{NP}-complete~\cite{ArnborgCP87},
however, in many settings tree decompositions are only relevant if the treewidth of the input graph is sufficiently small, directing research towards algorithms with running times of the form $f(k) \cdot n^{g(k)}$ or $f(k) \cdot n^{O(1)}$.
Algorithms with running time of the first type are called {\em slicewise polynomial}, since they run in polynomial time when $k$ is considered a constant. Algorithms of the second type are called {\em fixed-parameter tractable (FPT)} as they run in polynomial time if $k$ is considered a constant, and furthermore the exponent of the polynomial remains the same for different values of $k$. We refer to the textbooks~\cite{cygan2015parameterized, DowneyFbook13, FlumGrohebook, Niedermeierbook06} for an introduction to parameterized algorithms. 

The first slicewise polynomial algorithm for treewidth was given by Arnborg, Corneil and Proskurowski~\cite{ArnborgCP87}, with running time $O(n^{k+2})$.
%
Subsequently, Robertson and Seymour~\cite{RobertsonS-GMXIII}, gave a non-constructive (see Bodlaender~\cite{DBLP:journals/dam/Bodlaender94} for a discussion of the non-constructive nature of~\cite{RobertsonS-GMXIII}) $f(k)n^2$ time algorithm for treewidth, 
and Bodlaender~\cite{DBLP:journals/dam/Bodlaender94}, building on work of Fellows and Langston~\cite{FellowsL89} made this algorithm constructive. The function $f$ in the running time both of the algorithm of Robertson and Seymour~\cite{RobertsonS-GMXIII} and of Bodlaender~\cite{DBLP:journals/dam/Bodlaender94} is unspecified and was not even known to be computable at the time of publication. 

The algorithm of Robertson and Seymour~\cite{RobertsonS-GMXIII} follows a ``two-step'' approach. In the first step they compute a tree decomposition of $G$ of width at most $4k+3$ in time $O(3^{3k}n^2)$, or conclude that the treewidth of $G$ is more than $k$.
In the second step they do dynamic programming over the tree decomposition found in the first step.
The second step is the only non-constructive part of their algorithm, and runs in time $f(k)n$ where the function $f$ is unspecified.


Matou\v{s}ek and Thomas~\cite{DBLP:journals/jal/MatousekT91}, Lagergren~\cite{DBLP:journals/jal/Lagergren96}, and Reed~\cite{DBLP:conf/stoc/Reed92} gave improved algorithms for the first step.
The algorithms of Matou\v{s}ek and Thomas and Lagergren run in time $k^{\OO(k)} n \log^2 n$, and the algorithm of Reed runs in time $k^{\OO(k)} n \log n$.
All three algorithms either conclude that the treewidth of $G$ is more than $k$, or produce a tree decomposition of width at most $O(k)$.
The algorithm of Lagergren~\cite{DBLP:journals/jal/Lagergren96} is given as a parallel algorithm with $k^{\OO(k)} \log^3 n$ running time on $\OO(k^2 n)$ processors.
%

For the second step, constructive $2^{\OO(k^3)} n$ time dynamic programming algorithms were discovered in 1991 independently by Lagergren and Arnborg~\cite{DBLP:conf/icalp/LagergrenA91}, and Bodlaender and Kloks~\cite{DBLP:conf/icalp/BodlaenderK91,BodlaenderK96}. None of \cite{DBLP:conf/icalp/LagergrenA91,DBLP:conf/icalp/BodlaenderK91,BodlaenderK96} explicitly mention the dependence on $k$, but the $2^{\OO(k^3)}$ bound directly follows from the analysis in~\cite{BodlaenderK96} and is mentioned in~\cite{Bodlaender96}. Combined with the algorithm for the first step by Lagergren~\cite{DBLP:journals/jal/Lagergren96}, this led to a  $2^{\OO(k^3)}n\log^2 n$  time algorithm for treewidth. In 1993, Bodlaender showed that the first phase of the algorithms can be replaced by an ingenious recursion scheme, and designed a linear $2^{\OO(k^3)} n$ time algorithm for treewidth~\cite{DBLP:conf/stoc/Bodlaender93,Bodlaender96}. Much more recently, Elberfeld, Jakoby, and Tantau~\cite{DBLP:conf/focs/ElberfeldJT10} gave an algorithm for treewidth that uses {\em space} $f(k)\log n$ and time $n^{f(k)}$.

Downey and Fellows asked in their monograph from 1999 whether the dependence on $k$ in Bodlaender's algorithm could be improved from $2^{\OO(k^3)}$ to $2^{\OO(k)}$~\cite[Chapter~6.3]{DowneyF99}.
Later, in 2006, Telle~\cite[Problem~2.7.1]{bodlaender2006open} asked the less ambitious question of whether there is any fixed-parameter algorithm for treewidth whose running time as a function of $k$ is better than~$2^{\OO(k^3)}$.
The problem of obtaining a $2^{o(k^3)} n^{\OO(1)}$ time algorithm was also asked by Bodlaender, Drange, Dregi, Fomin, Lokshtanov, and Pilipczuk~\cite{BodlanderDDFLP13} and called a ``long-standing open problem'' by Bodlaender, Jaffke, and Telle~\cite{bodlaenderrevisited}.
In this paper, we resolve this problem.

\begin{theorem}
\label{the:mainexact}
There is an algorithm that takes as input an $n$-vertex graph $G$ and an integer $k$, and in time $2^{\OO(k^2)} n^4$ either outputs a tree decomposition of $G$ of width at most $k$ or concludes that the treewidth of $G$ is larger than $k$.
Moreover, the algorithm works in space polynomial in $n$.
\end{theorem}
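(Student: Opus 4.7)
The plan is to follow the standard two-phase strategy for treewidth but replace both phases with new subroutines that together achieve the claimed $2^{\OO(k^2)} n^4$ bound. For the first phase I would invoke a known $k^{\OO(k)} n^{\OO(1)}$ approximation algorithm such as Reed's~\cite{DBLP:conf/stoc/Reed92} to either produce a tree decomposition $(T, \bag)$ of $G$ of width $\OO(k)$ or certify that the treewidth of $G$ exceeds $k$. Its running time is absorbed into $2^{\OO(k^2)} n^4$, so the entire remaining technical burden is to turn this $\OO(k)$-approximate decomposition into an exact one of width at most $k$, within the same budget and in polynomial space.

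The key design choice is to abandon the Bodlaender--Kloks dynamic programming over typical-sequence characteristics, whose $2^{\OO(k^3)}$ table size is the origin of the cubic exponent, and to replace it with an iterative local-improvement scheme. The scheme is built around a core subroutine that solves \pstw: given a graph $H$, a set $S \subseteq V(H)$ of size $\OO(k)$, and a partition of $S$ into at most $k+1$ parts, decide in time $2^{\OO(k^2)}$ whether $H$ admits a tree decomposition of width at most $k$ in which each part sits in a single bag. I would traverse the approximate decomposition and, for each adhesion $S$ together with a component $C$ it cuts off, call the subroutine on the \torso{} of $G[C \cup S]$ with $S$ as the prescribed set. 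A successful call lets me splice the returned decomposition into the global one, reducing its width; if a failure occurs for an adhesion that is also compatible with some optimal decomposition, infeasibility is certified. After $n^{\OO(1)}$ such local operations the global decomposition has width at most $k$ or the instance has been rejected.

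The main obstacle is the \pstw{} subroutine itself: it must simultaneously achieve time $2^{\OO(k^2)}$ and polynomial space. I would pursue a divide-and-conquer strategy: at each step, guess a balanced separator of the desired decomposition that refines the partition on $S$, and recurse on the two sides. The branching factor is bounded by $n^{\OO(1)} \cdot 2^{\OO(k^2)}$ because such a separator has size at most $k+1$ and its interaction with $S$ can be encoded in $\OO(k^2)$ bits; balancedness controls the recursion depth at $\OO(\log n)$, depth-first execution keeps the space polynomial, and the overall multiplication produces the $n^4$ factor. The most delicate point will be to prove that the information exposed at each recursive call is indeed $\OO(k^2)$ rather than $\OO(k^3)$ bits---this is exactly where the improvement over Bodlaender--Kloks must come from, and it likely relies on a structural lemma ensuring that only the partition of $S$ (and not, for example, a full typical sequence for every vertex of the separator) needs to be tracked across recursive calls.
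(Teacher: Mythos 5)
Your high-level framing (start from an approximate decomposition, iteratively improve by solving a bounded-size subproblem, keep everything polynomial-space by branching rather than dynamic programming) matches the paper, but two of your three core ingredients have genuine gaps.

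First, the subproblem you isolate is not the one the paper isolates, and the difference matters. You ask the subroutine to decide whether a graph $H$ (a torso of $G[C\cup S]$) admits a \emph{full} tree decomposition of width at most $k$ with each part of $S$ in a single bag. But $G[C\cup S]$ for a single adhesion $S$ can contain almost all of $G$, so this subproblem is essentially as hard as computing the treewidth of $G$ itself; the proposal is circular at this point. The paper's \textsc{Subset Treewidth} asks for a \emph{torso tree decomposition} $(X,(T',\bag'))$ covering a prescribed small set $W$, i.e.\ a tree decomposition in which only the non-leaf bags need to be small and everything irrelevant to connecting $W$ may be dumped into unconstrained leaf bags. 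This relaxation is precisely what makes the subproblem amenable to important-separator branching: the algorithm only has to build connectivity structure around $W$, not decompose the whole graph. Correspondingly, the local-improvement step is not a ``splice at an adhesion''; it takes the single \emph{largest bag} $W$ of the current decomposition, calls \textsc{Subset Treewidth} on $(G,W)$, and then uses the returned torso tree decomposition plus a ``pulling lemma'' and a carefully chosen potential function to strictly decrease the number of maximum-size bags. Getting this improvement step to actually make progress is a nontrivial argument (the paper's \Cref{lem:improveTDSimplified} and its proof through $d$-linkedness); your sketch takes it for granted.

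Second, your complexity analysis of the subroutine does not work. You propose guessing a balanced separator of the hypothetical decomposition, argue that the branching factor is $n^{\OO(1)}\cdot 2^{\OO(k^2)}$, and bound the depth by $\OO(\log n)$ via balancedness. But a branching factor of $n^{\OO(1)}$ compounded over $\Theta(\log n)$ levels yields $n^{\OO(\log n)}$ leaves, which is superpolynomial; there is no way to extract ``the $n^4$ factor'' from this recursion tree. The paper sidesteps this entirely: it never guesses arbitrary separators. It branches over \emph{important separators}, whose number is bounded by $4^k$ independently of $n$ (\Cref{lem:impsep4k}), and bounds the recursion depth by a measure depending only on $k$ (flow potentials, and in the exact algorithm a parameter $q$ bounding the order of internal separations), so the branching tree has size $2^{\OO(k^2)}$ with only a polynomial overhead in $n$ per node. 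There is also a hitting-set lemma (\Cref{lem:imp_sep_hit}) that lets the approximation variant guess a single vertex rather than a whole separator. Your intuition that the advance over Bodlaender--Kloks comes from tracking only $\OO(k^2)$ bits rather than $\OO(k^3)$ is pointing in the right direction informally, but the actual mechanism is the switch from characteristic-table DP to important-separator branching over torso tree decompositions, not from a finer encoding of separators inside a balanced divide-and-conquer.
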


An interesting feature of our algorithm is that it runs in polynomial space, and in particular that it is not based on dynamic programming. All previously known parameterized algorithms for computing treewidth exactly~\cite{ArnborgCP87,DBLP:conf/icalp/LagergrenA91, RobertsonS-GMXIII, BodlaenderK96,Bodlaender96} are based on dynamic programming and use space exponential in $k$.
The running time dependence on $n$ of the algorithm of \Cref{the:mainexact} is significantly worse than that of Bodlaender~\cite{Bodlaender96}.  The dependence on $n$ of our algorithm can probably be improved, nevertheless we believe that  an algorithm with running time  $2^{\OO(k^2)} n^{2}$ or better should require new and interesting ideas. 

Our second contribution is a new parameterized approximation algorithm for treewidth. 


\begin{theorem}\label{the:mainapx}
There is an algorithm that takes as input an $n$-vertex graph $G$, an integer $k$, and a rational $\varepsilon \in (0,1)$, and in time $k^{\OO(k/\varepsilon)} n^{4}$ either outputs a tree decomposition of $G$ of width at most $(1+\varepsilon)k$ or concludes that the treewidth of $G$ is larger than $k$.
Moreover, the algorithm works in space polynomial in $n$.
\end{theorem}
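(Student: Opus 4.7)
The plan is to recycle the divide-and-conquer framework that underlies \Cref{the:mainexact}, replacing its exact-search subroutines with $(1+\varepsilon)$-approximate counterparts whose enumeration is faster. At a high level, I would maintain an ``interface'' set $W \subseteq V(G)$ of size $\OO(k)$, search for a balanced separator $S$ of $W$ in $G$ of size bounded by the target width $(1+\varepsilon)k + 1$, and recurse on each connected component of $G - S$ together with its neighborhood in $S$. The separators accumulated along the recursion tree form the bags of the output tree decomposition. Balancedness of $S$ keeps the recursion depth $\OO(\log n)$, so the overall running time is dominated by the per-call separator-finding cost, and the algorithm is polynomial-space because it is purely recursive rather than dynamic-programming based.

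The central technical step is the approximate separator-finding routine. In the exact setting, one knows that if $tw(G) \le k$ then $W$ admits a balanced separator of size at most $k+1$, and one would enumerate candidates for it at a cost that is exponential in $k^2$. For the approximation, I would allow the separator to have size up to $(1+\varepsilon)k+1$ and exploit this slack as follows: rather than enumerating separators vertex-by-vertex, enumerate ``$\varepsilon k$-coarsened sketches'' obtained by partitioning $W$ (or an associated pool of $\OO(k)$ candidate vertices produced by the \pstw machinery suggested by the paper's macros) into blocks of size $\approx \varepsilon k$ and recording a constant-size summary per block. The number of such sketches is $k^{\OO(k/\varepsilon)}$, matching the target running time, and any exact balanced separator of size $k+1$ can be ``rounded up'' into a valid sketch of size at most $(1+\varepsilon)k+1$ by including at most one extra block per affected group.

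To justify the above, I would prove a structural rounding lemma: if $(G, W)$ admits a balanced separator of size $\le k+1$ then it admits an $\varepsilon$-coarsened balanced separator of size $\le (1+\varepsilon)k+1$ describable by a sketch of size $\OO(k/\varepsilon)$. I expect this to follow from standard min-cut/uncrossing arguments on the side of the separator, together with a simple counting step trading precision for enumeration cost. Plugging this lemma into the recursion immediately produces a tree decomposition of width at most $(1+\varepsilon)k$; the $n^4$ factor arises from the polynomial-time overhead per recursive call (flow computations, maintenance of the interface, etc.), matching \Cref{the:mainexact}.

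The hard part will be precisely this sketch-based enumeration together with the rounding lemma: one must simultaneously ensure that the sketch space is small enough ($k^{\OO(k/\varepsilon)}$ and no more) and expressive enough to always contain a valid $(1+\varepsilon)k$-separator when $tw(G) \le k$. A secondary difficulty is preserving the approximation guarantee across recursion levels: each level can only afford to ``spend'' a bounded additive blow-up on the width, so the separator rounding must be done with respect to the global target $(1+\varepsilon)k$ rather than relative to the local $k$, and the interface set $W$ must be engineered so that its size never exceeds the budget that the recursion can tolerate.
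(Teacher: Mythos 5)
Your proposal diverges substantially from the paper's proof, and the central divide-and-conquer idea has a gap that I believe is fatal.

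The recursive balanced-separator framework you describe is essentially the classical approach of Robertson and Seymour, and it inherently loses a constant factor on the width, not a $(1+\varepsilon)$ factor. Concretely, each recursive call produces a bag of the form $W \cup S$, where $W$ is the current interface and $S$ the separator found. Even if $|S| \le k+1$, the child interfaces have the form $(W \cap C) \cup N(C) \cup (\text{portion of } S)$, and balancedness of $S$ with respect to $W$ only guarantees $|W \cap C| \le \frac{2}{3}|W|$; since $S$ need not intersect $W$ at all, the fixed point of the recurrence $|W| \mapsto \frac{2}{3}|W| + (k+1)$ sits near $3(k+1)$, not near $(1+\varepsilon)(k+1)$. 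This is exactly why the classical recursive approach produces $4$-approximations (or $3$- or $2$-approximations with refinements) and why nobody has pushed it below $2$. Your last paragraph acknowledges that ``the interface set $W$ must be engineered so that its size never exceeds the budget,'' but this is the entire difficulty, and your sketch-rounding lemma addresses a different (enumeration-cost) problem rather than this interface-growth problem. The ``$\varepsilon k$-coarsened sketch'' notion is also never pinned down; a balanced separator $S$ of $W$ need not be a subset of $W$ or of any predetermined pool of $\OO(k)$ candidates, so it is unclear what a sketch of $S$ records or why every valid $S$ of size $k+1$ would round to some size-$(1+\varepsilon)k+1$ object in the sketch space.

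The paper sidesteps the interface-growth problem entirely by abandoning top-down recursion in favor of local improvement (iterative compression). Starting from a $4$-approximate decomposition $(T,\bag)$, it repeatedly takes a maximum bag $W$ and seeks a \emph{torso tree decomposition} covering $W$ of width $|W|-2$ (the \stw problem), and shows via a pulling-lemma argument (\Cref{lem:main_impr_lem}) that such a torso decomposition can always be used to strictly reduce the number of maximum-size bags. To get the $k^{\OO(k/\varepsilon)}$ running time, the approximation-specific idea is \Cref{lem:wpartapx}: for a graph of treewidth $\le k$ and a set $W$ of size $\OO(k)$, there is a partition of $W$ into $t = \OO(1/\varepsilon)$ parts such that making each part a clique increases the treewidth by at most $\varepsilon k$. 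This reduces the task to \pstw with only $t = \OO(1/\varepsilon)$ terminal cliques, which is then solved by a branching algorithm over safe separations and important separators in time $k^{\OO(kt)} nm$. So the ``budget'' you need for the approximation is spent on a clique-partition of $W$ inside the \pstw reduction, not on rounding a separator; and the recursion that actually carries the approximation ratio is not a balanced-separator recursion at all. If you want to salvage your approach, you would need a genuinely new idea to prevent interface blow-up, and I do not think the sketch/rounding lemma you outline can supply it.
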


There is a rich history of approximation algorithms for treewidth. In terms of {\em polynomial time} approximation algorithms, the best known approximation algorithm~\cite{FeigeHL08} by Feige, Hajiaghayi and Lee has approximation factor $\OO(\sqrt{\log{k}})$, improving upon a $\OO(\log n)$-approximation algorithm~\cite{BodlaenderGHK95} and a $\OO(\log k)$-approximation algorithm~\cite{Amir01}. On the other hand, Wu, Austrin, Pitassi and Liu~\cite{WuAPL14} showed that assuming the Small Set Expansion Conjecture (and \textsf{P $\neq$ NP}), there is no constant factor approximation algorithm for treewidth. 

Treewidth is one of the unusual cases where the first FPT-approximation algorithm (an approximation algorithm with running time $f(k)n^{O(1)})$ pre-dates the first polynomial time approximation algorithm.  The first such algorithm, a $4$-approximation algorithm running in time  $\OO(3^{3k}n^2)$,  is the ``first step'' of the $f(k)n^2$ time non-constructive algorithm by Robertson and Seymour for exactly computing treewidth~\cite{RobertsonS-GMXIII}. Subsequent research, summarized in \Cref{tab:history} attained different trade-offs between the running time dependence on $n$, the running time dependence $f(k)$  on $k$, and the approximation factor. The algorithm of \Cref{the:mainapx} is the first FPT-approximation algorithm for treewidth with approximation ratio below $2$ and running time $2^{o(k^2)}n^{O(1)}$ (or even  $2^{o(k^3)}n^{O(1)}$, discounting \Cref{the:mainexact}).
Note that by setting $\varepsilon = \frac{1}{k+1}$, the algorithm of \Cref{the:mainapx} gives an exact algorithm with only a slightly slower ($k^{\OO(k^2)} n^4$) running time than the algorithm of \Cref{the:mainexact}, in particular, being sufficient for resolving the open problem of obtaining a $2^{o(k^3)} n^{\OO(1)}$ time algorithm for treewidth.
This is worth noting since the algorithm of \Cref{the:mainapx} is in fact considerably simpler than the algorithm of \Cref{the:mainexact}.


\begin{table*}[t]
\centering
\begin{tabular}{|c | c | c | c|}
\hline
Reference & Appx. $\alpha(k)$ & $f(k)$ & $g(n)$\\
\hline
Arnborg, Corneil, and Proskurowski~\cite{ArnborgCP87} & exact & $\OO(1)$ & $n^{k+2}$\\
Robertson and Seymour~\cite{RobertsonS-GMXIII} & $4k + 3$ & $\OO(3^{3k})$ & $n^2$\\
Matou\v{s}ek and Thomas~\cite{DBLP:journals/jal/MatousekT91} & $6k+5$ & $k^{\OO(k)}$ & $n \log^2 n$\\
Lagergren~\cite{DBLP:journals/jal/Lagergren96} & $8k+7$ & $k^{\OO(k)}$ & $n \log^2 n$\\
Reed~\cite{DBLP:conf/stoc/Reed92} & $8k + \OO(1)$ & $k^{\OO(k)}$ & $n \log n$\\
Bodlaender~\cite{Bodlaender96} & exact & $2^{\OO(k^3)}$ & n\\
Amir~\cite{DBLP:journals/algorithmica/Amir10} & $4.5k$ & $\OO(2^{3k} k^{3/2})$ & $n^2$\\
Amir~\cite{DBLP:journals/algorithmica/Amir10} & $(3 + 2/3)k$ & $\OO(2^{3.6982k} k^3)$ & $n^2$\\
Amir~\cite{DBLP:journals/algorithmica/Amir10} & $\OO(k \log k)$ & $\OO(k \log k)$ & $n^4$\\
Feige, Hajiaghayi, and Lee~\cite{FeigeHL08} & $\OO(k \sqrt{\log k})$ & $\OO(1)$ & $n^{\OO(1)}$\\
Fomin, Todinca, and Villanger~\cite{DBLP:journals/siamcomp/FominTV15} & exact & $\OO(1)$ & $1.7347^n$\\
Fomin et al.~\cite{DBLP:journals/talg/FominLSPW18} & $\OO(k^2)$ & $\OO(k^7)$ & $n \log n$\\
Bodlaender et al.~\cite{BodlanderDDFLP13} & $3k +4$ & $2^{\OO(k)}$ & $n \log n$\\
Bodlaender et al.~\cite{BodlanderDDFLP13} & $5k+4$ & $2^{\OO(k)}$ & $n$\\
Korhonen~\cite{Korhonen21} & $2k+1$ & $2^{\OO(k)}$ & $n$\\
Belbasi and F\"{u}rer~\cite{BelbasiF22}  & $5k+4$ & $2^{7.61k}$ & $n\log n$\\
Belbasi and F\"{u}rer~\cite{BelbasiF21} & $5k+4$ & $2^{6.755k}$ & $n\log n$\\
This paper & exact & $2^{\OO(k^2)}$ & $n^4$\\
This paper & $(1 + \varepsilon)k$ & $k^{\OO(k/\varepsilon)}$ & $n^4$\\
\hline
\end{tabular}
\caption{Overview of treewidth algorithms with running time $f(k) \cdot g(n)$, each either outputting a tree decomposition of width at most $\alpha(k)$ or determining that the treewidth of the input graph is larger than $k$. Most of the rows are based on a similar tables in~\cite{BodlanderDDFLP13} and~\cite{Korhonen21}.}
\label{tab:history}
\end{table*}

\paragraph{Methods.}
\smallskip
Both the exact algorithm of \Cref{the:mainexact} and the approximation algorithm of \Cref{the:mainapx} are based on a generalization of the local improvement method introduced by Korhonen~\cite{Korhonen21}, which in turn was inspired by a proof of Bellenbaum and Diestel~\cite{bellenbaum2002two}. 
In each local improvement step we are given a tree decomposition $(T,\bag)$ of $G$ of width more than $k$, and the goal is to either
conclude that the treewidth of $G$ is more than $k$, 
or to find a ``better'' tree decomposition of $G$. Here {\em better} means that either the width of the output tree decomposition is strictly smaller than that of  $(T,\bag)$, or that the width of the output decomposition is the same as the width of  $(T,\bag)$, but there are fewer bags of maximum size. 
%

We show that the local improvement step is in fact {\em equivalent} to solving the following problem, which we call {\sc Subset Treewidth}: given as input a graph $G$ and a set $W$ of vertices, 
conclude that the treewidth of $G$ is at least $|W|-1$, 
or find a tree decomposition $(T', \bag')$ such that $W$ is contained in the union of the non-leaf bags of $(T', \bag')$ and all non-leaf bags have size at most $|W|-1$ (the formal definition of this problem in~\Cref{sec:overview} is worded differently, but can easily be seen to be equivalent).
Observe here that if the treewidth of $G$ is strictly less than $|W|-1$ then every tree decomposition  $(T', \bag')$ of $G$ of width at most $|W|-2$ is a valid output for {\sc Subset Treewidth} (after possibly adding empty dummy leaf bags).

The first key insight behind our algorithms is that if $W$ is a maximum size bag of the tree decomposition $(T,\bag)$, and an algorithm for {\sc Subset Treewidth} on input $(G,W)$ outputs the decomposition $(T', \bag')$, then a tree decomposition better than $(T,\bag)$ (in the sense above) can be computed from $(T,\bag)$ and $(T', \bag')$ in polynomial time. 
The proof of this statement is given in~\Cref{sec:redu} and is a non-trivial generalization of corresponding improvement arguments by Bellenbaum and Diestel~\cite{bellenbaum2002two} and Korhonen~\cite{Korhonen21}.
Indeed, in retrospect, the $2$-approximation algorithm of Korhonen~\cite{Korhonen21} can be thought of as using this approach
with the additional assumption that $|W| \geq 2k+3$, where $k$ is the treewidth of $G$, and in this case a solution to {\sc Subset Treewidth}  whose non-leaf bags form the star $K_{1,3}$ exists.
The exact algorithm of \Cref{the:mainexact} is based on solving  {\sc Subset Treewidth} without any additional assumptions, while the approximation algorithm of~\Cref{the:mainapx} is based on solving {\sc Subset Treewidth} with the additional assumption that $|W| \ge k(1+\varepsilon) + 2$.

The second key insight is that the {\sc Subset Treewidth} problem is more approachable than the treewidth problem, because the problem formulation allows us to focus on one small set $W$ and ``discard'' all parts of the graph (by placing them into leaves of $(T', \bag')$) that are not relevant for providing connectivity between vertices of $W$. 
Both the algorithm of \Cref{the:mainexact} and of \Cref{the:mainapx} are based on branching on important separators (see e.g. \cite[Chapter 8]{cygan2015parameterized}),
a carefully chosen measure to quantify the progress made by the algorithms, 
and a ``safe separation'' reduction rule for the  {\sc Subset Treewidth} problem.
This rule states that if the algorithm has identified two vertex sets $B_1$ and $B_2$ that can be chosen as bags of $(T', \bag')$, 
and $S$ is a minimum size $(B_1,B_2)$-separator, then it is safe to also make $S$ a bag of $(T', \bag')$ and recurse on the connected components of $G \setminus S$.
A generalization of this reduction rule was formulated for the treewidth problem by Bodlaender and Koster~\cite[Lemma 11]{BodlaenderK06}. However it is not clear how to utilize this reduction rule to directly obtain efficient algorithms for treewidth. On the other hand, for {\sc Subset Treewidth}, this reduction rule is the main engine of our algorithms.

\paragraph{Organization.}
\smallskip
The rest of the paper is organized as follows.
In \Cref{sec:overview} we formally define the \stw problem, give statements of intermediate theorems on how \Cref{the:mainexact,the:mainapx} follow from algorithms for \stw, and then present an overview of the proofs.
In \Cref{sec:preli} we present notation and preliminary results.
In \Cref{sec:toolbox} we give results about important separators and a ``pulling lemma'' for tree decompositions, which will be used for our algorithms.
In \Cref{sec:redu} we show that algorithms for \stw imply algorithms for treewidth.
Then, in \Cref{sec:algopstw} we give the algorithm for \stw that implies \Cref{the:mainapx}, and in \Cref{sec:fasttw} we give the algorithm that implies \Cref{the:mainexact}.
The algorithms of \Cref{sec:algopstw,sec:fasttw} are presented in this order because the algorithm of \Cref{sec:fasttw} builds upon the algorithm of \Cref{sec:algopstw} and is more involved.
Finally, we conclude in \Cref{sec:conclusion}.

\section{Overview}\label{sec:overview}
In this section we state the main intermediate theorems leading into \Cref{the:mainexact,the:mainapx} and overview the proofs of them.
The proofs of \Cref{the:mainexact} and of \Cref{the:mainapx} neatly split in two parts. The first part is common to the proofs of  \Cref{the:mainexact} and \Cref{the:mainapx}, while the second part requires separate proofs.
The first and common part is the overall scheme of the algorithms, namely that we proceed by ``local improvement''. Each local improvement step is reduced to another problem, which we call {\sc Subset Treewidth}.
In the second part we give two different algorithms for the {\sc Subset Treewidth} problem, one exact, leading to a proof of \Cref{the:mainexact}, and one approximate, leading to a proof of \Cref{the:mainapx}. We start by discussing the first part.

\subsection{Reduction to Subset Treewidth}
\label{subsec:overredustw}
Suppose that we are given as input the graph $G$ and integer $\tau$, and the task is to either return that the treewidth of $G$ is more than $\tau$, or find a ``good enough'' tree decomposition of $G$. For an exact algorithm this simply means a tree decomposition of width at most $\tau$, for a $(1+\varepsilon)$-approximation algorithm this means a tree decomposition of width at most $\tau(1+\varepsilon)$.
Assume now that we are also given as input a tree decomposition $(T, \bag)$ of $G$ of width at most $\OO(\tau)$. Initially such a tree decomposition can be obtained by an approximation algorithm, such as the $4$-approximation algorithm of Robertson and Seymour~\cite{RobertsonS-GMXIII} with running time $\OO(3^{3\tau}n^2)$.
If the tree decomposition $(T, \bag)$  is already good enough, then we can output it and halt. Otherwise, a largest bag $W$ of  $(T, \bag)$ is too large. We would like to make $(T, \bag)$  better by getting rid of this bag $W$ that is too large. More formally we want to find a tree decomposition $(T'', \bag'')$ of $G$ of width at most $|W|-1$ and with strictly fewer bags of size $|W|$ than $(T, \bag)$ has. On the surface this does not really look any easier than trying to find a tree decomposition of width at most $|W|-2$. Somewhat miraculously it turns out that it is in fact easier, because this problem is equivalent to the {\sc Subset Treewidth} problem, which we will define shortly. To define the {\sc Subset Treewidth} problem we first need to introduce some notation.

Let $G$ be a graph and $X \subseteq V(G)$.
The graph $\torso_G(X)$ has vertices $V(\torso_G(X)) = X$ and has $uv \in E(\torso_G(X))$ if $u,v \in X$ and there is a path from $u$ to $v$ whose all internal vertices (if any) are in $V(G) \setminus X$.
In particular, note that $E(\torso_G(X)) \supseteq E(G[X])$.
An equivalent definition of $\torso_G(X)$ is that it is the graph obtained from $G[X]$ by making $N_G(C)$ a clique for every connected component $C$ of $G \setminus X$.
A \emph{torso tree decomposition} in a graph $G$ is a pair $(X, (T, \bag))$, where $X \subseteq V(G)$ and $(T, \bag)$ is a tree decomposition of $\torso_G(X)$. The {\em width} of the torso tree decomposition $(X, (T, \bag))$ is simply the width of $(T, \bag)$. For a set $W \subseteq V(G)$, we say that $(X, (T, \bag))$ {\em covers} $W$ if $W \subseteq X$. We are now ready to define the {\sc Subset Treewidth} problem.

\smallskip
\defparproblem{\textsc{Subset Treewidth}}{Graph $G$, integer $k$, and a set of vertices $W$ of size $|W|=k+2$.}{$k$}{Return a torso tree decomposition of width at most $k$ in $G$ that covers $W$ or conclude that the treewidth of $G$ is at least $k+1$.}

Note that at least one of the two cases in the definition of {\sc Subset Treewidth} must apply. In particular, if $G$ has a tree decomposition $(T', \bag')$ of width at most $k$ then $(V(G), (T', \bag'))$ is a torso tree decomposition of width at most $k$ in $G$ that covers $W$.
The two cases need not be mutually exclusive: there exists graphs $G$ with treewidth at least $k+1$ and sets $W$ of size $k+2$ that nevertheless can be covered by a torso tree decomposition of width $k$.
In such a case an algorithm for \textsc{Subset treewidth} may output either one of the two options. 

The {\sc Subset Treewidth} problem directly reduces to treewidth: using a hypothetical treewidth algorithm we can determine whether the treewidth of $G$ is at most $k$. If no, then report that the treewidth of $G$ is at least $k+1$. Otherwise output $(V(G), (T', \bag'))$ where $(T', \bag')$ is the width $k$ tree decomposition returned by the treewidth algorithm. 
Our algorithms for treewidth are based on the result that we can reduce in the other direction as well. We encapsulate this insight in the following lemma.

\begin{lemma}\label{lem:improveTDSimplified}
Let $(T, \bag)$ be a tree decomposition of $G$ and $W$ be a largest bag of $(T, \bag)$.
If there exists a torso tree decomposition $(X, (T', \bag'))$ in $G$ that covers $W$ and has width at most $|W|-2$,
then there exists a tree decomposition $(T'', \bag'')$ of $G$ of width at most $|W|-1$ with strictly fewer bags of size $|W|$. Moreover, given $G$,  $(T, \bag)$ and $(X, (T', \bag'))$ we can compute $(T'', \bag'')$ in polynomial time. 
\end{lemma}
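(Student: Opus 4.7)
The plan is to construct $(T'',\bag'')$ by using $(T',\bag')$ as a backbone that handles $X$---and hence $G[X]$, since $E(G[X]) \subseteq E(\torso_G(X))$---and, for each connected component $C$ of $G \setminus X$, grafting onto it a piece extracted from $(T,\bag)$ that covers $G[C \cup N_G(C)]$. Two preliminary observations will be used for each such $C$. First, $N_G(C)$ is a clique in $\torso_G(X)$, so by the clique-containment property of tree decompositions there is a bag $B_C$ of $T'$ with $N_G(C) \subseteq \bag'(B_C)$. Second, let $t_W$ be the node of $T$ with $\bag(t_W)=W$, and for each neighbor $r_j$ of $t_W$ in $T$ let $T_j$ denote the subtree of $T - t_W$ containing $r_j$ and $V_j := \bigcup_{t \in T_j} \bag(t)$. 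Since $C \cap W = \emptyset$ and the bag-set of every vertex of $C$ is a connected subtree of $T$ avoiding $t_W$, the subtree $T_C := \{t : \bag(t) \cap C \neq \emptyset\}$ lies in a unique $T_{j(C)}$, and any $v \in N_G(C)$ shares a bag with a vertex of $C$, hence $N_G(C) \subseteq V_{j(C)}$.

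For each $C$ I would construct a tree decomposition $(S_C,\beta_C)$ of $G[C \cup N_G(C)]$ of width at most $|W|-1$ with a distinguished root bag $r_C$ satisfying $N_G(C) \subseteq \beta_C(r_C)$. The starting point is $(T,\bag)$ restricted to $T_{j(C)}$ with every bag intersected with $C \cup N_G(C)$; I then attach a fresh node $r_C$ adjacent to $r_{j(C)}$ and set $\beta_C(r_C) := N_G(C)$. This decomposition covers the edges inside $C$ and between $C$ and $N_G(C)$ (via the restricted bags, where both endpoints appear) and the edges inside $N_G(C)$ (via $r_C$). The only nontrivial obligation is the subtree axiom for $v \in N_G(C) \setminus W$, whose bag-set in $T_{j(C)}$ need not touch $r_{j(C)}$. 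I would resolve this by invoking the pulling lemma of \Cref{sec:toolbox} to pull every such $v$ along the path from its subtree up to $r_{j(C)}$; the point of using that lemma, rather than a naive pull, is that it enlarges bags in a controlled way so that no $(|W|-1)$-sized bag is turned into a $|W|$-sized one.

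Assembling $(T'',\bag'')$ as the disjoint union of $(T',\bag')$ and the $(S_C,\beta_C)$'s, together with an extra edge $\{B_C,r_C\}$ for each $C$, then yields a tree decomposition of $G$: vertex and edge coverage are inherited from the two sources, and the subtree axiom across each seam holds because $N_G(C) \subseteq \bag'(B_C) \cap \beta_C(r_C)$. The width is at most $|W|-1$, since bags of $(T',\bag')$ have size at most $|W|-1$ and bags of every $(S_C,\beta_C)$ have size at most $|W|$. For the bag count, $(T',\bag')$ contributes no size-$|W|$ bag, and every size-$|W|$ bag of some $(S_C,\beta_C)$ witnesses a size-$|W|$ bag $t \in T_C$ of $(T,\bag)$ with $\bag(t) \subseteq C \cup N_G(C)$; because $\bag(t) \cap C \neq \emptyset$ for $t \in T_C$ and $C$ is disjoint from $C' \cup N_G(C')$ for any distinct component $C'$, these witnesses are pairwise distinct across components, giving an injection into $\{t : |\bag(t)|=|W|\} \setminus \{t_W\}$ and hence the required strict decrease. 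The main obstacle I foresee is executing the pulling step so that the width bound and the size-$|W|$ bag-count bound hold simultaneously; this is the technical heart of the lemma and the place where the generalization of Korhonen's local improvement from~\cite{Korhonen21} to an arbitrary backbone $(T',\bag')$ does the real work.
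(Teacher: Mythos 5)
Your high-level construction — use $(T',\bag')$ as a backbone, graft onto it a piece $(S_C,\beta_C)$ extracted from $(T,\bag)$ for each component $C$ of $G\setminus X$, attaching $r_C$ to a bag of $T'$ containing the clique $N_G(C)$ — is precisely the paper's construction of $(T'',\bag'')$, and your injection argument for the strict decrease in the count of size-$|W|$ bags is also the same as the paper's (there, Claim 5.7 in Lemma 5.3). But there is a genuine gap in the middle, and it is the one you yourself flag at the end: you have no proof that the grafted pieces have bags of size at most $|W|$, and without it both the width bound and the injection break.

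Concretely, to make $(S_C,\beta_C)$ a tree decomposition with $N_G(C)$ in the root bag, you must add every $v \in N_G(C)$ to all bags on the path from its occurrence to $r_C$. This is not the pulling lemma of \Cref{sec:toolbox}. That lemma (\Cref{lem:pullSimple}/\Cref{lem:pull}) modifies a \emph{torso tree decomposition} by pulling a separator $S$ along vertex-disjoint paths into a single bag, and crucially its guarantee is $|\bag'(t)|\le|\bag(t)|$; it does not speak about extending the occurrence of one vertex $v$ up a subtree of $(T,\bag)$, and there is nothing "controlled" about that extension. Indeed the paper explicitly observes that the resulting bags $\bag_C(t)=(\bag(t)\cap N[C])\cup t^{N(C)}$ can be \emph{larger} than $\bag(t)$: one adds $|t^{N(C)}|$ vertices while removing only $|\bag(t)\setminus N[C]|$, and these are in general incomparable. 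When that happens, your injection argument collapses (a size-$|W|$ bag of $\beta_C$ need not come from a size-$|W|$ bag of $\bag$ contained in $C\cup N(C)$), and the width bound $|W|-1$ also fails.

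The paper's resolution is structurally different from anything in your proposal: it does \emph{not} try to make the construction of $(T'',\bag'')$ succeed in one shot. It builds $(T'',\bag'')$ unconditionally, and if some bag $\bag_C(t_C)$ is too large (or two components tie at size $|W|$ at the same node), it extracts from the offending bag a witness separation and applies the pulling lemma to improve the \emph{other} object, the torso tree decomposition $(X,(T',\bag'))$, producing $(X^\star,(T^\star,\bag^\star))$ with $X^\star$ strictly smaller (or equal with smaller potential $\phi_d$) while still covering $W$ and keeping width $\le|W|-2$. It then restarts. Termination is by the decreasing potential $\phi_d(X)$. So the pulling lemma is used to shrink $X$ between iterations, not to patch the subtree axiom inside a grafted piece; and the non-enlargement property your injection needs is obtained only in the limit, once $(X,(T',\bag'))$ has become $d$-linked into $W$ for the weight function $d_{(T,\bag,r)}$. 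This iterative-improvement mechanism, together with the right weight function, is the actual content of the lemma that is missing from your proposal.
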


\Cref{lem:improveTDSimplified} is more carefully stated and proved as \Cref{lem:main_impr_lem} in \Cref{sec:redu}. Before giving a proof sketch of \Cref{lem:improveTDSimplified} in \Cref{sec:improveSketch}, we show how \Cref{lem:improveTDSimplified}, together with an algorithm (or approximation algorithm) for {\sc Subset Treewidth} yields an algorithm (or approximation algorithm) for treewidth. 

Indeed, starting with a tree decomposition $(T, \bag)$ of width $\OO(\tau)$ but more than $\tau$, we can call an algorithm for {\sc Subset Treewidth} on a largest bag $W$ of $(T, \bag)$, and either conclude that the treewidth of $G$ is more than $\tau$ or obtain a torso tree decomposition $(X, (T', \bag'))$ that covers $W$ and has width at most $|W|-2$. \Cref{lem:improveTDSimplified} now yields a tree decomposition  $(T'', \bag'')$ with no larger width and strictly fewer bags of size $|W|$. We now repeat the process with $(T'', \bag'')$ as the new $(T, \bag)$. After at most $\OO(\tau n)$ iterations we will either have obtained a tree decomposition of $G$ of width at most $\tau$ or concluded that the treewidth of $G$ is more than $\tau$. We now state this as a theorem. 
For the running times, we use $m = |V(G)|+|E(G)|$ to denote the size of the graph and we assume that the function $T(k)$ is increasing.

\begin{restatable}{theorem}{eximpltheorem}
\label{the:stweximpl}
Given an algorithm for \stw with running time $T(k) \cdot m^c$, an algorithm for treewidth with running time $T(\OO(k)) \cdot \OO((nk)^{c+1}) + k^{\OO(1)} n^4 + 2^{\OO(k)} n^2$ can be constructed.
Moreover, if the algorithm for \stw works in polynomial space, then the algorithm for treewidth works in polynomial space.
\end{restatable}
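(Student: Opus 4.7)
The plan is to repeatedly invoke \Cref{lem:improveTDSimplified}, each time reducing one local improvement step to a single call of the hypothesized \stw algorithm, starting from an initial constant-factor approximate tree decomposition. Concretely, on input $(G,k)$ I would first run a known FPT constant-factor approximation for treewidth (for instance the classical $(4k+3)$-approximation of Robertson and Seymour, which runs in time $\OO(3^{3k} n^2) \subseteq 2^{\OO(k)} n^2$) to either certify $\operatorname{tw}(G) > k$, or obtain a tree decomposition $(T,\bag)$ of width $h = \OO(k)$. Before entering the main loop I would verify $|E(G)| \le k\cdot |V(G)|$, which holds for any graph of treewidth at most $k$; otherwise I immediately return $\operatorname{tw}(G) > k$. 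This ensures $m = \OO(kn)$ throughout the remainder of the algorithm.

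The main loop iterates the following while $(T,\bag)$ has width larger than $k$. Pick a largest bag $W$ of $(T,\bag)$; then $|W| \ge k+2$, so setting $k' = |W|-2 \ge k$ makes $(G,k',W)$ a valid instance of \stw. Invoke the given \stw algorithm on it in time $T(k')\cdot m^c \le T(\OO(k)) \cdot \OO((kn)^c)$. If \stw answers that the treewidth is at least $k'+1$, then $\operatorname{tw}(G) \ge |W|-1 > k$ and I halt reporting $\operatorname{tw}(G)>k$. Otherwise \stw returns a torso tree decomposition covering $W$ of width at most $|W|-2$, so \Cref{lem:improveTDSimplified} produces in polynomial time a new tree decomposition of width at most $|W|-1$ with strictly fewer bags of size $|W|$. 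I would then prune this decomposition by contracting every tree-edge $uv$ with $\bag(u)\subseteq\bag(v)$, so that it has at most $n$ bags without increasing its width, and continue.

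For termination and running time, I would track the lexicographic measure $(\text{width}, \#\text{max-size bags})$ on the pruned decomposition: each iteration strictly decreases this measure; the width always lies in $[k+1,h]$; and after pruning the number of max-size bags is at most $n$. Hence the loop runs at most $\OO(hn) = \OO(kn)$ times. Each iteration costs $T(\OO(k))\cdot \OO((kn)^c)$ for the \stw call plus $k^{\OO(1)}\,n^3$ for \Cref{lem:improveTDSimplified} and the pruning; summing over iterations yields $T(\OO(k))\cdot \OO((kn)^{c+1}) + k^{\OO(1)}\,n^4$, and adding the initialization cost $2^{\OO(k)} n^2$ matches the claimed bound. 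Polynomial space follows since only the current tree decomposition (polynomial size) is stored between iterations, and \stw is assumed to run in polynomial space.

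The main obstacle I anticipate is making the termination measure argument fully airtight through the pruning step: I must verify that repeatedly contracting tree-edges $uv$ with $\bag(u)\subseteq\bag(v)$ preserves validity and width, and crucially does not inflate the count of max-size bags, so the lexicographic measure genuinely decreases strictly. A secondary delicate point is confirming that the \stw call is well-posed, namely that $|W| = k'+2$ and $k' \ge k$, so that a negative answer from \stw rigorously certifies $\operatorname{tw}(G) > k$ for the original instance rather than merely $\operatorname{tw}(G) \ge k'+1$ with $k' < k$.
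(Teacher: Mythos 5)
Your proposal is correct and is essentially the same argument as the paper's: initialize with the Robertson--Seymour $4$-approximation, verify $m=\OO(kn)$, then iteratively pick a largest bag $W$, call the \stw algorithm, and use the improvement lemma to strictly lexicographically decrease $(\text{width},\#\text{max-size bags})$, yielding $\OO(kn)$ iterations. The paper packages your pruning and per-iteration improvement into \Cref{lem:tdlinear} and \Cref{lem:main_impr_lem} (the formal version of \Cref{lem:improveTDSimplified}, which in particular already returns a decomposition with at most $n$ nodes and runs in $k^{\OO(1)}(|V(T_X)|+n^3)$ time), so the concern you flag about contraction not inflating the max-size-bag count and about bounding the per-iteration polynomial is exactly what those lemmas settle.
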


\Cref{the:stweximpl} is proved in \Cref{sec:redu}.
We remark that the additive $2^{\OO(k)} n^2$ term comes from starting by applying the factor $4$-approximation algorithm of Robertson and Seymour~\cite{RobertsonS-GMXIII}. This additive term could be avoided by replacing this approximation algorithm by Bodlaender's recursive compression technique~\cite{Bodlaender96}, at the expense of a $k^{\OO(1)}$ multiplicative factor in the running time. 
In light of \Cref{the:stweximpl} it is natural to focus on parameterized algorithms for {\sc Subset Treewidth}, which is precisely our line of attack. In \Cref{sec:fasttw} we give a $2^{\OO(k^2)} nm$ time polynomial space algorithm for {\sc Subset Treewidth}.

\begin{theorem}
\label{the:stwexalg}
There is a $2^{\OO(k^2)} nm$ time polynomial space algorithm for \stw.
\end{theorem}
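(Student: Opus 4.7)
The plan is to solve \stw with a recursive branching algorithm that, given $(G, k, W)$ with $|W|=k+2$, either returns a torso tree decomposition of width at most $k$ covering $W$ or certifies that the treewidth of $G$ exceeds $k$. The algorithm builds on the approximation algorithm of \Cref{sec:algopstw} and combines the three ingredients mentioned in the introduction: enumeration of important separators, a safe-separation reduction that allows committing to a separator as a bag of the output, and a potential function controlling the recursion depth.

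At each recursive call on $(G,k,W)$ the algorithm searches for a candidate bag $B\subseteq V(G)$ with $|B|\le k+1$ that separates $W$ in $G$. Since $|W|=k+2>k+1$, any torso tree decomposition of width at most $k$ covering $W$ must split $W$ across at least two bags, so such a separating bag is guaranteed to exist whenever a solution exists. To find it, we pick a pivot $v\in W$ and enumerate the important $(\{v\},W\setminus\{v\})$-separators of size at most $k+1$; by the classical bound there are at most $4^{k+1}=2^{\OO(k)}$ of them, and they can be listed in polynomial time. For each such separator $S$ we invoke the safe-separation rule: we commit to $S$ as a bag, and on each connected component $C$ of $G-S$ we recurse on the subinstance with graph $G[C\cup N_G(C)]$ and terminal set $W_C=(W\cap C)\cup N_G(C)$. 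Components with $|W_C|\le k+1$ are closed off with a single bag; the remaining ones spawn sub-calls on strictly simpler instances. The final decomposition is assembled by gluing $S$ to the decompositions returned by the sub-calls.

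For the running time, two ingredients are needed: (a) each node of the recursion tree has branching factor at most $2^{\OO(k)}$ from the important separator enumeration, and (b) the depth of the recursion tree is $\OO(k)$ thanks to a potential function $\mu(G,W)$ tailored so that every recursive subinstance has $\mu$ strictly smaller than the parent and so that $\mu$ is initially $\OO(k)$. Multiplying branching factor by depth gives $2^{\OO(k^2)}$ leaves, each performing $\OO(m)$ local work for separator enumeration and torso construction, for a total of $2^{\OO(k^2)} nm$. Polynomial space follows because the algorithm performs depth-first branching without memoization and stores only the current root-to-leaf path, whose length is $\OO(k)$ and whose per-node data is polynomial in $n$.

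The hard part I expect is step (b). In the approximation regime of \Cref{sec:algopstw} one has the slack $|W|\ge k(1+\varepsilon)+2$, and the potential can essentially measure how far $|W|$ is from the bag-size bound, giving a depth of $\OO(1/\varepsilon)$ easily. With the tight inequality $|W|=k+2$ this slack is gone, so one needs a finer invariant that exploits the \emph{importance} of the chosen separator: because $S$ is pushed as far as possible toward $W\setminus\{v\}$, the sub-instance for each component $C$ either shrinks in the number of $W$-vertices outside the separator or inherits a structural feature (encoded in $\mu$) that records irreversible progress. Making this precise, in a way that is compatible with the reduction of \Cref{lem:improveTDSimplified} and survives the gluing of sub-decompositions, is the main technical burden and the place where the algorithm of \Cref{sec:fasttw} genuinely goes beyond that of \Cref{sec:algopstw}.
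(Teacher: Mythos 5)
The arithmetic of ``branching degree $2^{\OO(k)}$ times depth $\OO(k)$'' would indeed give $2^{\OO(k^2)}$ leaves, but the claimed depth $\OO(k)$ is precisely what your proposal does not establish, and as you acknowledge this is the crux. No measure $\mu(G,W)$ is exhibited that starts at $\OO(k)$ and strictly decreases at every call. In the paper's actual proof the potential $\potl(\ins)$ defined in \Cref{subsec:alg2tcana} is of order $\Theta(k^2)$, not $\OO(k)$, and the $2^{\OO(k^2)}$ bound is \emph{not} obtained by multiplying a uniform branching degree by a depth. Instead the key recursion step (\Cref{alg:exsubtw:leafpush2rec} with $|t_L|=1$) may decrease $\potl$ by as little as $(q-|W_i|)\log_2 k$, which can be as small as $\log_2 k$; the argument survives only because \Cref{lem:impsepfk} bounds the number of important $(W_i,\cdot)$-separators of size $q$ by $k^{\,q-|W_i|}$, so the branching degree at that step is bounded by roughly $4^{\min(k,(q-|W_i|)\log_2 k)}$, and the ratio of branching degree to potential drop is a fixed constant $4$. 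Summing $16^{\potl}$ over the recursion tree then gives $2^{\OO(k^2)}$. Your uniform $4^{k+1}$ degree throws this coupling away; combined with the $\Theta(k^2)$ potential that is actually available, a uniform-degree analysis only gives $2^{\OO(k^2\log k)}$ or worse, and there is no argument in the proposal that the depth can be brought down to $\OO(k)$.

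Beyond the missing potential, the algorithm itself is too lean to support one. Your subinstances have terminal set $W_C=(W\cap C)\cup N_G(C)$, which can have up to $2k+2$ vertices and is not a clique, so they are not \stw instances and there is no invariant controlling their structure. The paper handles this by working in \pstw throughout, tracking a family of terminal cliques (with $t$ and $k$ never increasing), maintaining the mapping $\tc$ of original terminals to current cliques so that degenerate internal separations can be defined, running the pre-branching phase of \Cref{lem:prebranching} to eliminate solutions containing degenerate internal separations, and carrying the explicit threshold $q$ asserting that no solution has an internal separation of order $<q$. Without these invariants the safe-separation step can undo the progress made by a leaf push (the failure mode discussed in \Cref{subsec:branchoverview}), which is exactly why no monotone measure is available in your formulation. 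So the gap is not merely a missing clever $\mu$ for your algorithm as stated; the algorithm needs the \pstw/terminal-cover/$q$ scaffolding of \Cref{sec:fasttw} before a working potential can even be written down.
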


A proof sketch for \Cref{the:stwexalg} is given in \Cref{subsec:branchoverview}. Putting \Cref{the:stweximpl,the:stwexalg} together implies \Cref{the:mainexact}.
The argument proving \Cref{the:stweximpl} (assuming \Cref{lem:improveTDSimplified}) also works for approximation algorithms. In particular the same argument shows that in order to obtain a $(1+\varepsilon)$-approximation algorithm for treewidth, it is sufficient to design an algorithm for {\sc Subset Treewidth} that is only required to work correctly on instances where $|W| \geq (1+\varepsilon)\tau + 2$ (and $\tau$ is the treewidth of $G$).
Towards designing such an algorithm we define an intermediate problem, called \textsc{Partitioned Subset Treewidth}. 

\defparsproblem{\textsc{Partitioned Subset Treewidth}}{Graph $G$, integer $k$, set of vertices $W$ of size $|W|=k+2$, and $t$ cliques $W_1,\ldots,W_t$ of $G$ such that $\bigcup_{i=1}^t W_t = W$.}{$k$, $t$}{Return a torso tree decomposition of width at most $k$ in $G$ that covers $W$ or conclude that the treewidth of $G$ is at least $k+1$.}

We remark that despite being called {\sc Partitioned Subset Treewidth}, the $t$ cliques $W_1,\ldots,W_t$ are not required to form a partition of $W$, but are allowed to overlap.
The {\sc Partitioned Subset Treewidth} problem arises naturally when designing a recursive branching algorithm for  {\sc Subset Treewidth}. Every instance of {\sc Subset Treewidth} is an instance of {\sc Partitioned Subset Treewidth} with $t=|W|=k+2$ and $W_i = \{w_i\}$ (where $W = \{w_1, w_2, \ldots, w_{|W|}\}$). However,  {\sc Partitioned Subset Treewidth} appears substantially easier when $t$ is much smaller than $k$. The following theorem, proved in \Cref{sec:algopstw}, formalizes this intuition.

\begin{theorem}
\label{the:stwpalg}
There is a $k^{\OO(kt)} nm$ time polynomial space algorithm for \pstw.
\end{theorem}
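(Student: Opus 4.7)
\emph{Overall approach.} The plan is a recursive branching algorithm for \pstw with branching factor $k^{\OO(k)}$ per level, coming from enumerating important separators between pairs of cliques, and recursion depth $\OO(t)$, coming from a measure that decreases by one per call. The recursion is driven by the safe separation reduction rule sketched in \Cref{subsec:overredustw}: whenever we identify a small separator $S$ between two of the clique-bags of the (unknown) target torso tree decomposition, we commit to placing $S$ as a bag and recurse independently on the two sides.

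\emph{Base cases and inductive step.} If $|W| \le k + 1$ we return the trivial single-bag torso tree decomposition; if $t = 1$ and $|W| = k + 2$, then $W = W_1$ is a clique of size $k + 2$ in $G$, so the treewidth of $G$ is at least $k + 1$ and we return \no. For $t \ge 2$ we pick two cliques $W_i, W_j$ and compute a minimum $(W_i, W_j)$-separator by flow. If this separator has size more than $k + 1$ we return \no: in any torso tree decomposition $(X, (T', \bag'))$ of $\torso_G(X)$ of width at most $k$ covering $W$, both $W_i$ and $W_j$ must sit inside bags (being cliques of $\torso_G(X)$), and the smallest bag on the path between these two bags in $T'$ is a $(W_i, W_j)$-separator of $\torso_G(X)$ of size at most $k + 1$, which directly lifts to a $(W_i, W_j)$-separator of $G$ of the same size. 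Otherwise we enumerate all important $(W_i, W_j)$-separators of size at most $k + 1$ using the tools of \Cref{sec:toolbox} (there are at most $4^{k+1} \le k^{\OO(k)}$ of them), and for each such $S$ we form two sub-instances of \pstw, one on each side of $S$, applying the torso operator so that $S$ becomes a clique of the sub-instance graph. If both recursive calls succeed we glue their outputs along the common bag $S$. Correctness of restricting the branching to important separators follows from a standard exchange argument: the $(W_i, W_j)$-separating bag appearing in any valid decomposition may be replaced by an important separator of no larger size without destroying validity.

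\emph{Measure and main obstacle.} The main obstacle is arguing that the number of cliques $t$ strictly drops along every branch of the recursion, since the newly introduced clique $S$ could, a priori, offset the splitting of the original cliques. The plan is to choose $(W_i, W_j)$ so that $W_i$ is a leaf of the subtree of the optimal decomposition spanned by the bags containing the input cliques; combined with the maximality property of important separators, this ensures that on the $W_i$-side the new clique $S$ absorbs $W_i$, leaving a sub-instance with at most two cliques that is resolved in a shallow recursion, while on the $W_j$-side the count of distinct cliques drops to $t - 1$. Since the algorithm does not know which pair is good, it branches over all $\binom{t}{2}$ choices, an extra $k^{\OO(1)}$ factor that is absorbed into the branching factor. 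The resulting recurrence $T(t) \le k^{\OO(k)}\,(T(t-1) + T(2))$ with $\OO(nm)$ per-call cost solves to $T(t) = k^{\OO(kt)} \cdot nm$. Polynomial space follows from executing the recursion depth-first: the depth is at most $t \le k + 2$, each frame stores only a polynomial amount of data (the current important separator and flow witnesses), and no dynamic-programming tables are retained across calls.
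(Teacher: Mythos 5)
Your proposal takes a genuinely different route from the paper's, but its measure argument has a gap that breaks the claimed recurrence. The paper treats the safe-separation step as a pure reduction rule (applied with a \emph{minimum} separator, no branching); it gets its branching from two separate operations---merging two terminal cliques, and a ``leaf pushing'' step that adds a single vertex of an important separator to a terminal clique---and the quantity that decreases is not the number of terminal cliques $t$ but a sum of flow potentials $\fld(\ins)=\sum_i(3k+3-\flp_{\ins}(W_i))$, which is $\Theta(kt)$. Each branching step decreases $\fld$ by at least one with branching factor $k^{\OO(1)}$, giving depth $\OO(kt)$ and hence $k^{\OO(kt)}$. You instead branch over important separators between pairs of terminal cliques with branching factor $4^{\OO(k)}$ and a claimed recursion depth of $\OO(t)$.

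The concrete gap is the claim that the $W_j$-side ``drops to $t-1$ cliques.'' After committing to a separator $S$ between $W_i$ and $W_j$, the $W_j$-side retains all $t-1$ terminal cliques other than $W_i$ \emph{and} acquires the new terminal clique $S$, for a total of $t$. This shrinks to $t-1$ only if $S$ happens to be subsumed by some surviving $W_\ell$, which is not guaranteed---indeed, this stalling is precisely what forced the paper to abandon the clique count as a measure and introduce flow potential instead (they flag it explicitly in the two-terminal-clique overview: ``we might immediately lose most of the progress\ldots''). Without a strictly decreasing measure, your recurrence $T(t)\le k^{\OO(k)}\bigl(T(t-1)+T(2)\bigr)$ may in fact call $T(t)$ again and does not terminate. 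A secondary issue is that $T(2)$ is treated as a base case, but with two terminal cliques the algorithm must still recurse (the paper needs depth $\Theta(k)$ even when $t=2$, since terminal clique sizes grow one unit at a time), so $T(2)$ needs its own analysis, which you omit. Finally, the assertion that ``the maximality property of important separators ensures $S$ absorbs $W_i$'' is unjustified: an important $(W_i,W_j)$-separator is pushed \emph{away} from $W_i$ by definition, so there is no reason to expect $W_i\subseteq S$ or for the $W_i$-side to contain only two terminal cliques when others may also lie on that side.
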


We give a proof sketch of \Cref{the:stwpalg} in \Cref{subsec:branchoverview}. In light of  \Cref{the:stwpalg} it is natural to ask whether it is possible to reduce {\sc Subset Treewidth} to {\sc Partitioned Subset Treewidth}  with $t$ much smaller than $k$. While we do not know of a way to do this for exact algorithms, we obtain such a reduction for the variant of {\sc Subset Treewidth} which is sufficient for $(1+\varepsilon)$-approximating treewidth. 
In particular, it is possible to show, using standard methods, that for every graph $G$, vertex set $W$ and positive integer $t$ there exists a partition of $W$ into $t$ sets $W_1, \ldots, W_t$ such that making all of $W_1, \ldots, W_t$ into cliques increases the treewidth of $G$ by at most $3\lceil|W|/t\rceil$ (we give a proof of essentially this fact with parameter values relevant to our applications in \Cref{lem:wpartapx}).
Thus, for instances of {\sc Subset Treewidth} where $\tau(1+\varepsilon) +2 \leq |W| = O(\tau)$ (and $\tau$ is the treewidth of $G$), setting $t = \OO(1/\varepsilon)$ shows that there exists a partition of $W$ into at most $t = \OO(1/\varepsilon)$ sets such that making all of $W_1, \ldots, W_t$ into cliques increases the treewidth of $G$ to at most $\tau(1 + \varepsilon)$. The proof of \Cref{the:stweximpl} (assuming \Cref{lem:improveTDSimplified}) coupled with this partitioning argument yields a reduction from $(1+\varepsilon)$-approximating treewidth to solving {\sc Partitioned Subset Treewidth} exactly with $t = \OO(1/\varepsilon)$.

\begin{restatable}{theorem}{apximpltheorem}
\label{the:stwapximpl}
Given an algorithm for \pstw with running time $T(k, t) \cdot m^c$, we can construct an $(1+\varepsilon)$-approximation algorithm for treewidth with running time $T(\OO(k), \OO(1/\varepsilon)) \cdot \OO((nk)^{c+1}) \cdot (1+1/\varepsilon)^{\OO(k)} + k^{\OO(1)} n^4 + 2^{\OO(k)} n^2$.
Moreover, if the algorithm for \pstw works in polynomial space, then the algorithm for treewidth works in polynomial space.
\end{restatable}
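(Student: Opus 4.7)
The plan is to mirror the proof of \Cref{the:stweximpl}, replacing the invocation of \stw by a combination of \pstw and the partitioning lemma alluded to in the overview (the result referred to there as \Cref{lem:wpartapx}). First I would run the Robertson--Seymour factor-$4$ approximation with parameter $k$ to either produce a tree decomposition $(T,\bag)$ of $G$ of width at most $4k+3$ or correctly reject; this contributes the additive $2^{\OO(k)} n^2$ term. The main loop then iteratively improves $(T,\bag)$: while its width exceeds $(1+\varepsilon)k$, pick a largest bag $W$ (so $|W| \ge (1+\varepsilon)k + 2$, and $|W| = \OO(k)$ is maintained throughout since local improvement never increases width) and attempt to produce a torso tree decomposition in $G$ that covers $W$ and has width at most $|W|-2$, which by \Cref{lem:improveTDSimplified} suffices to improve $(T,\bag)$.

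To attempt this, fix $t = \Theta(1/\varepsilon)$ chosen so that $3\lceil |W|/t\rceil \le |W| - k - 2$ (possible since $|W|-k-2 \ge \varepsilon k$ and $|W| = \OO(k)$), and enumerate every ordered partition $W_1,\ldots,W_t$ of $W$ into $t$ (possibly empty) parts; there are at most $t^{|W|} = (1+1/\varepsilon)^{\OO(k)}$ such partitions. For each one, let $G'$ be $G$ augmented with all edges needed to make each $W_i$ a clique in $G'$, and call the \pstw algorithm on $(G', k'=|W|-2, W, W_1,\ldots,W_t)$; this is a well-formed instance because $|W| = k'+2$. If the call returns a torso tree decomposition $(X,(T',\bag'))$ of width at most $|W|-2$ in $G'$ covering $W$, then since the added edges all lie inside $X \supseteq W$ we have $\torso_G(X) \subseteq \torso_{G'}(X)$, so $(X,(T',\bag'))$ is also a valid torso tree decomposition in $G$, and \Cref{lem:improveTDSimplified} converts it into a strictly better $(T'',\bag'')$. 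If no partition produces a torso tree decomposition, halt and report $\mathrm{tw}(G) > k$: indeed, if $\mathrm{tw}(G) \le k$, then the partitioning lemma supplies a partition of $W$ with $\mathrm{tw}(G') \le k + 3\lceil |W|/t\rceil \le |W|-2$, and on that partition \pstw could not output the ``$\mathrm{tw} \ge k'+1$'' branch, contradicting the assumption.

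The lexicographic potential (width, number of maximum-size bags) strictly decreases at each iteration, so at most $\OO(kn)$ iterations occur; each iteration issues $(1+1/\varepsilon)^{\OO(k)}$ \pstw calls, each costing $T(\OO(k), \OO(1/\varepsilon)) \cdot \OO(m^c)$, plus polynomial-time reconstruction from \Cref{lem:improveTDSimplified}. Summing and absorbing polynomial factors yields the claimed running time. Polynomial space is preserved because the partitions can be streamed via a depth-$|W|$ recursion that stores only the current one. The main obstacle I expect is justifying the partitioning lemma itself, since the surrounding algorithmic framework is a direct adaptation of \Cref{the:stweximpl}; a secondary subtlety is making the choice of $t = \Theta(1/\varepsilon)$ uniform across all iterations, which is handled by the invariant $|W| = \OO(k)$ maintained by the improvement scheme.
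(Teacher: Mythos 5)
Your proposal is correct and takes essentially the same route as the paper's proof: run the Robertson--Seymour $4$-approximation, then iteratively improve via \Cref{lem:main_impr_lem} (the formal version of \Cref{lem:improveTDSimplified}), where each improvement step enumerates all $\OO(1/\varepsilon)^{\OO(k)}$ partitions of the largest bag $W$, makes the parts cliques, calls \pstw with parameter $|W|-2$, and uses \Cref{lem:wpartapx} to argue that one partition must succeed whenever $\mathrm{tw}(G) \le k$. One small point you handle slightly more explicitly than the paper: you note that a torso tree decomposition returned by \pstw on the augmented graph $G'$ remains a valid torso tree decomposition in $G$ because $E(\torso_G(X)) \subseteq E(\torso_{G'}(X))$, which is needed before invoking \Cref{lem:improveTDSimplified}; the paper leaves this implicit.
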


Putting \Cref{the:stwapximpl,the:stwpalg} together implies \Cref{the:mainapx}. We now give a proof sketch of the main engine behind the proofs of \Cref{the:stweximpl,the:stwapximpl}, namely \Cref{lem:improveTDSimplified}.


\subsection{Proof sketch of \Cref{lem:improveTDSimplified}}\label{sec:improveSketch}
The proof of \Cref{lem:improveTDSimplified} proceeds as follows.
Given $G$, $(T, \bag)$, $W$ and $(X, (T', \bag'))$ as in the premise of \Cref{lem:improveTDSimplified} we will construct a tree decomposition $(T'', \bag'')$ of $G$ in polynomial time. 
We will always succeed in making $(T'', \bag'')$, but  $(T'', \bag'')$ might not be ``better'' than  $(T, \bag)$, in the sense that its width might be more than $|W|-1$, or it may have at least as many bags of size $|W|$ as $(T, \bag)$.
We will show that in this case we can ``improve'' $(X, (T', \bag'))$ instead! In particular we will find a torso tree decomposition $(X^\star, (T^\star, \bag^\star))$ in $G$ that covers $W$, with $|X^\star| < |X|$ and no larger width.
Since $G$, $(T, \bag)$, $W$ together with $(X^\star, (T^\star, \bag^\star))$ again satisfy the premise of \Cref{lem:improveTDSimplified} we can repeat the process with 
$(X^\star, (T^\star, \bag^\star))$ as the new $(X, (T', \bag'))$. Since $|X|$ cannot keep decreasing forever, eventually the tree decomposition  $(T'', \bag'')$ will satisfy the conclusion of the Lemma. 
Thus it remains to sketch {\em (i)} how we construct $(T'', \bag'')$ and {\em (ii)} how to improve $(X, (T', \bag'))$ when $(T'', \bag'')$ is not better than $(T, \bag)$. We start by describing the construction of $(T'', \bag'')$.

\paragraph{Constructing $(T'', \bag'')$.}
We root $(T, \bag)$ at the node $r$ of $T$ corresponding to $W$.
For each connected component $C$ of $G \setminus X$ we make a tree decomposition $(T_C, \bag_C)$ of $G[N[C]]$ as follows.
The decomposition tree $T_C$ is simply a copy of $T$. 
For each node $t$ of $T$ let $t_C$ be the copy of $t$ in $T_C$. Thus $r_C$ is the copy of the root $r$ of $T$ in $T_C$.
For every node $t_C$ of $T_C$ we set $\bag_C(t_C) = \bag(t) \cap N[C]$ {\em plus all vertices of $N(C)$ that appear in at least one bag below $t$ in $T$}.
In other words $(T_C, \bag_C)$ is simply the restriction of the tree decomposition $(T, \bag)$ to the vertex set $N[C]$, but additionally for every vertex $v$ of $N(C)$ we add $v$ to all bags on the path from $r_C$ to the subtree of $T_C$ where this vertex already occurs.
Observe that $C$ is disjoint from $X$, which contains $W$, and therefore $\bag_C(r_C) = N(C)$.

Now the tree decomposition $(T'', \bag'')$ consists of a copy of
$(T', \bag')$
together with the tree decomposition $(T_C, \bag_C)$ of $G[N[C]]$ for every connected component $C$ of $G \setminus X$. For each component $C$ of $G \setminus X$ we have that $N(C)$ is a clique in $\torso_G(X)$, and that therefore (see e.g.~\cite[Chapter 6]{cygan2015parameterized}) at least one bag of $(T', \bag')$ contains $N(C)$. We add an edge from $r_C$ to this bag. See \Cref{fig:construction} for a visualization of this construction.

It is quite easy to verify that $(T'', \bag'')$ is indeed a tree decomposition of $G$, and that it can be constructed in polynomial time.  However it is not at all obvious that it should be better than $(T, \bag)$ - it could even be worse, because we added vertices to the bags $\bag_C(t_C)$ that were not there in the corresponding bag $\bag(t)$ of  $(T, \bag)$. Thus, all that remains is to show how to improve $(X, (T', \bag'))$ when $(T'', \bag'')$ is not better than $(T, \bag)$. Working towards this goal we first state the main tool that we will use to improve $(X, (T', \bag'))$.

\begin{figure}[htb]
\begin{center}
\includegraphics[width=0.9\textwidth]{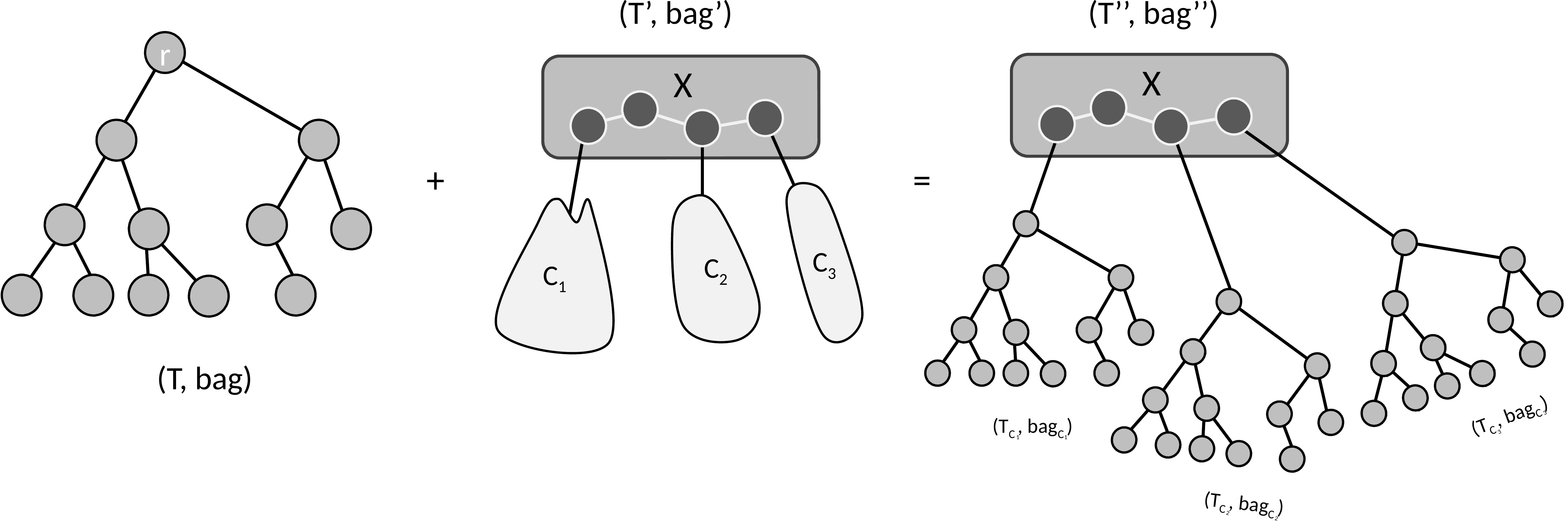}
\caption{Construction of $(T'', \bag'')$.}\label{fig:construction}
\end{center}
\end{figure}


\paragraph{A pulling lemma.}
The following lemma is very useful to improve $(X, (T', \bag'))$, and also in many other arguments in this paper.  We call it the ``pulling lemma'' because in the proof the separator $S$ will be ``pulled'' along disjoint paths into a bag of the (torso) tree decomposition. To state the pulling lemma we need to define the notions of separations and linkedness. A separation in a graph $G$ is a partition of $V(G)$ into three parts $(A,S,B)$ such that no edge of $G$ has one endpoint in $A$ and the other in $B$. We call $S$ the {\em separator} of the separation $(A,S,B)$ and $|S|$ is the \emph{order} of the separation.
A vertex set $A$ is {\em linked} into a vertex set $B$ in $G$ if there exist $|A|$ vertex disjoint paths with one endpoint in $A$ and one in $B$ (paths on a single vertex that start and end in $A \cap B$ count).

\begin{lemma}[Pulling lemma, simplified variant]
\label{lem:pullSimple}
Let $G$ be a graph, $k$ be an integer, $(X, (T,\bag))$ be a torso tree decomposition in $G$ of width $k$,
$Z \subseteq X$ be a vertex set in $G$ that is a subset of at least one bag of $(T,\bag)$,
and $(A, S, B)$ be a separation of $G$ so that $Z \subseteq S \cup B$ and $S$ is linked into $Z$.
There exists a torso tree decomposition $((X \cap A) \cup S, (T', \bag'))$ of width at most $k$.
Moreover, $S$ is a bag of $(T', \bag')$.
Furthermore, when $G$, $(X, (T,\bag))$, $(A,S,B)$, and $Z$ are given as inputs, the torso tree decomposition $((X \cap A) \cup S, (T', \bag'))$ can be constructed in polynomial time.
\end{lemma}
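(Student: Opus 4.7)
The plan is to construct $(T', \bag')$ by attaching one new node $b^*$ (with $\bag'(b^*)=S$) to the bag $b$ of $T$ that contains $Z$, and simultaneously trimming each old bag to the $A$-side while ``pulling'' the $S$-endpoints of the linkage into the bags they traverse. Concretely, let $P_1,\dots,P_{|S|}$ be the disjoint paths from $S$ to $Z$ guaranteed by the linkedness hypothesis. A crucial preliminary step is to choose these paths so that they lie inside $G[S\cup B]$; this is possible because $Z\subseteq S\cup B$ implies, via a standard shortcutting argument, that any $(S,Z)$-separator of $G[S\cup B]$ is also an $(S,Z)$-separator of $G$, so by Menger's theorem $G[S\cup B]$ already admits $|S|$ disjoint $(S,Z)$-paths, which can be computed by max-flow.

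Writing $s_i$ for the $S$-endpoint and $z_i$ for the $Z$-endpoint of $P_i$, I would then set, for every $t\neq b^*$,
\[
\bag'(t) \;=\; \bigl(\bag(t)\cap X\cap A\bigr)\;\cup\;\bigl\{\,s_i : V(P_i)\cap X\cap \bag(t)\neq\emptyset\,\bigr\}.
\]
The width bound is now almost automatic: the two pieces of this union are disjoint (one sits in $A$, the other in $S$), and because the paths are vertex-disjoint and $V(P_i)\subseteq S\cup B$, the number of indices $i$ contributing is at most $|\bag(t)\cap X\cap (S\cup B)|$. Hence $|\bag'(t)|\le|\bag(t)\cap X|\le k+1$, while $|\bag'(b^*)|=|S|\le|Z|\le k+1$.

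Next I would check the two tree-decomposition axioms. For connectivity, vertices of $X\cap A$ occur in exactly the same bags as before. A vertex $s_i\in S$ occurs in $b^*$ together with the set $T_{P_i}=\{t : V(P_i)\cap X\cap \bag(t)\neq\emptyset\}$, which is the union of the $T$-subtrees of the vertices of the projection $V(P_i)\cap X$ of $P_i$ onto $\torso_G(X)$; successive $X$-vertices of $P_i$ are adjacent in $\torso_G(X)$ (either by a direct edge or through the $G\setminus X$-component between them), so this union is connected, and it contains $b$ because $z_i\in Z\subseteq\bag(b)$, which also glues $b^*$ in. Edge coverage decomposes into (i) edges of $G[X']$ with both endpoints in $X\cap A$, inherited from $\bag$; (ii) edges with both endpoints in $S$, covered by $\bag'(b^*)$; (iii) torso cliques $N_G(C)$ of components $C$ of $G\setminus X'$ on the $B$-side, which lie in $S$ and so in $\bag'(b^*)$; (iv) torso cliques of $A$-side components $C$, which are also components of $G\setminus X$, so $N_G(C)$ is a clique in $\torso_G(X)$ living inside some $\bag(t)$, and a per-vertex check shows each element survives into $\bag'(t)$; and (v) edges $uv\in E(G)$ with $u\in X\cap A$ and $v=s_i\in S$, which are the subtle case.

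The subtle case is handled as follows: if $v\in X$ then $uv\in E(\torso_G(X))$ so $u,v$ share a $\bag(t)$, and $v$ itself witnesses $V(P_i)\cap X\cap\bag(t)\neq\emptyset$; if $v\notin X$ then $v$ lies in some component $C_v$ of $G\setminus X$ with $u\in N_G(C_v)$, and the first $X$-vertex $u'$ of $P_i$ also lies in $N_G(C_v)$, so the clique $N_G(C_v)\subseteq \bag(t)$ contains both $u$ and the witness $u'$, forcing $s_i\in\bag'(t)$. The main obstacle I anticipate is precisely this bag-size bookkeeping: a naive construction that adds $s_i$ to every bag touched by $P_i$ fails as soon as $P_i$ re-enters $A$, and the restriction to $G[S\cup B]$ at the very start is exactly what makes the whole argument go through while keeping the construction polynomial-time.
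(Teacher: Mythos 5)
Your construction is essentially the paper's: same bag formula on the same tree (with the extra leaf $b^*$ carrying bag exactly $S$ being a cosmetic difference), paths restricted to lie in $G[S\cup B]$ via the shortcutting argument, the same width accounting, and the same connectivity argument. The preliminary step, the width bound, the connectivity check, and your cases (i), (ii), (iii), (v) are all sound.

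However, case (iv) contains a genuine gap. A component $C$ of $G\setminus X'$ with $C\subseteq A$ is \emph{not} in general a component of $G\setminus X$. Since $V(G)\setminus X' = (A\setminus X)\cup B$ while $V(G)\setminus X$ additionally contains $S\setminus X$, such a $C$ is only a (possibly proper) subset of a component $\tilde C$ of $G\setminus X$, and $\tilde C$ may pick up vertices of $S\setminus X$. Consequently $N_G(C)$ can contain vertices of $S\setminus X$, so it need not be a subset of $X$, and the assertion that $N_G(C)$ is a clique of $\torso_G(X)$ contained in some $\bag(t)$ fails. A concrete counterexample: $G$ on $\{x,c,s_1,s_2,z_1,z_2\}$ with edges $xc$, $cs_1$, $cs_2$, $s_1z_1$, $s_2z_2$; $X=\{x,z_1,z_2\}$, $A=\{x,c\}$, $S=\{s_1,s_2\}$, $B=Z=\{z_1,z_2\}$. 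Here $X'=\{x,s_1,s_2\}$, $C=\{c\}$ is an $A$-side component of $G\setminus X'$, but the component of $G\setminus X$ containing $c$ is $\{c,s_1,s_2\}$, and $N_G(C)=\{x,s_1,s_2\}\not\subseteq X$. Moreover the torso edges $xs_1$, $xs_2\in E(\torso_G(X'))$ arising from $C$ are not $G$-edges, so they are not captured by your case (v) either; as written, they fall through every case. The fix is to argue per torso-edge rather than per clique, as the paper does: for an edge $u s_i$ with $u\in X$, $s_i\in S\setminus X$ realized by an $A$-side path, let $\tilde C$ be the component of $G\setminus X$ containing both $s_i$ and the intermediate $A\setminus X$ vertices; then $u\in N_G(\tilde C)\subseteq X$, $N_G(\tilde C)$ is a clique of $\torso_G(X)$ in some $\bag(t)$, and $P_i$ exits $\tilde C$ through an $X$-vertex in $N_G(\tilde C)\subseteq\bag(t)$, placing $s_i$ in $\bag'(t)$ alongside $u$. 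This subsumes your case (iv) and also generalizes your case (v) to torso edges that are not $G$-edges.
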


The formal version of \Cref{lem:pullSimple} is stated and proved as \Cref{lem:pull} in \Cref{subsec:pull}. Analogous lemmas with similar proofs have been used in the context of tree decompositions, for example by~\cite{bellenbaum2002two,BodlaenderK06}, and so we do not give a sketch of \Cref{lem:pullSimple} in this overview. 

\paragraph{Improving $X$ when $(T'', \bag'')$ is not better than $(T, \bag)$.}
We now return our focus to the setting where we started with $(T, \bag)$, a maximum size bag $W$, and $(X, (T', \bag'))$, and we used them to make the new tree decomposition $(T'', \bag'')$ of $G$. If $(T'', \bag'')$ is better than $(T, \bag)$ (in the sense of having strictly fewer bags of size $|W|$ and no larger bags) then $(T, \bag)$ already satisfies the conclusion of \Cref{lem:improveTDSimplified}. Hence, assume that $(T'', \bag'')$ is not better than $(T, \bag)$.
Our goal is to find a component $C$ of $G \setminus X$ and a separation $(P', S', Q')$ such that
$N(C) \subseteq S' \cup Q'$,
$W \subseteq S' \cup P'$,
$S'$ is linked to $N(C)$,
and $|S'| < |N(C)|$.
Then $(X, (T', \bag'))$, $Z = N(C)$ and the separation $(P', S', Q')$ will satisfy the premise of \Cref{lem:pullSimple}.
Setting $X^\star = (X \cap P') \cup S'$, \Cref{lem:pullSimple} implies that there exists a torso tree decomposition $(X^\star, (T^\star, \bag^\star))$ of width at most that of $(X, (T', \bag'))$.
Moreover, because $N(C) \subseteq X$ and $N(C)$ is disjoint from $P'$, it holds that
$$|X^\star| = |X| - |X \setminus P'| + |S'| \leq |X| - |N(C)| + |S'| < |X|\mbox{.}$$
Since $W \subseteq S' \cup P'$ and $W \subseteq X$, we have that $X^\star$ covers $W$, and we have found our improved torso tree decomposition  $(X^\star, (T^\star, \bag^\star))$ that covers $W$. 
We now show how such a component $C$ and separation $(P', S', Q')$ can be identified. 

Note that $(T', \bag')$ has no bags of size at least $|W|$. Therefore every bag of size at least $|W|$ in $(T'', \bag'')$ appears in $(T_C, \bag_C)$ for some component $C$ of $G \setminus X$. Observe also that for the root $r$ of $T$ and every component $C$ of $G \setminus X$ we have $|\bag_C(r_C)| < |W| = |\bag(r)|$. Indeed, for every copy $r_C$ of the root we have that $C$ is disjoint from $X$, and that therefore $\bag_C(r_C) = N(C)$. But $N(C)$ is a subset of some bag of  $(T', \bag')$, all of which have size at most $|W|-1$. Therefore, since $(T'', \bag'')$ is not better than $(T, \bag)$ at least one of the two following statements must hold.
{\em (i)} There exists a node $t$ in $V(T)$ and component $C$ of $G \setminus X$ such that $|\bag_C(t_C)| > |\bag(t)|$, or
{\em (ii)} There exists a node $t$ in $V(T)$ and two distinct components $C_1, C_2$ of $G \setminus X$ such that $|\bag_{C_1}(t_{C_1})| = |\bag_{C_2}(t_{C_2})| = |\bag(t)| = |W|$.

We show how to improve $X$ in the first case. To this end, let $t$ be a node in $V(T)$ and $C$ be a component $C$ of $G \setminus X$ such that $|\bag_C(t_C)| > |\bag(t)|$. We consider two separations of $G$: $(C, N(C), R)$ (where $R = V(G) \setminus N[C])$ is the ``rest'' and $(U, B, L)$ where $B = \bag(t)$, $L$ (the ``lower'' set) is the set of all non-$B$ vertices appearing in bags of $T$ below $t$, and $U$ (the ``upper'') set is defined as $U = V(G) \setminus (B \cup L)$ (consult \Cref{fig:separation} for a visualization of these separations and how they are used in the remainder of the argument). 

We have that $W$ is a subset of $X$ and therefore disjoint from $C$.
Similarly, all vertices of $W$ appear in at least one bag above $t$ (namely $r$), and therefore $W$ is disjoint from $L$.
It follows that $S$ defined as
$$S = (N(C) \setminus L) \cup (B \cap R)$$ 
separates $N(C)$ from $W$.
Furthermore, by choice of $t$ and $B = \bag(t)$ we have that
\begin{align}\label{eqn:bagCompare}
|B \cap N[C]| + |B \setminus N[C]| = |B| < |\bag_C(t_C)| = |B \cap N[C]| + |N(C) \cap L|\mbox{.}
\end{align}
Here $|\bag_C(t_C)| = |B \cap N[C]| + |N(C) \cap L|$ follows from the construction of the function $\bag_C$. 
From \Cref{eqn:bagCompare} we have that $|B \cap R| = |B \setminus N[C]| < |N(C) \cap L|$. But then we have that 
$$|N(C)| = |N(C) \setminus L| +  |N(C) \cap  L|  > |N(C) \setminus L| + |B \cap R| \geq |S|\mbox{.}$$

\begin{figure}[htb]
\begin{center}
\includegraphics[width=0.5\textwidth]{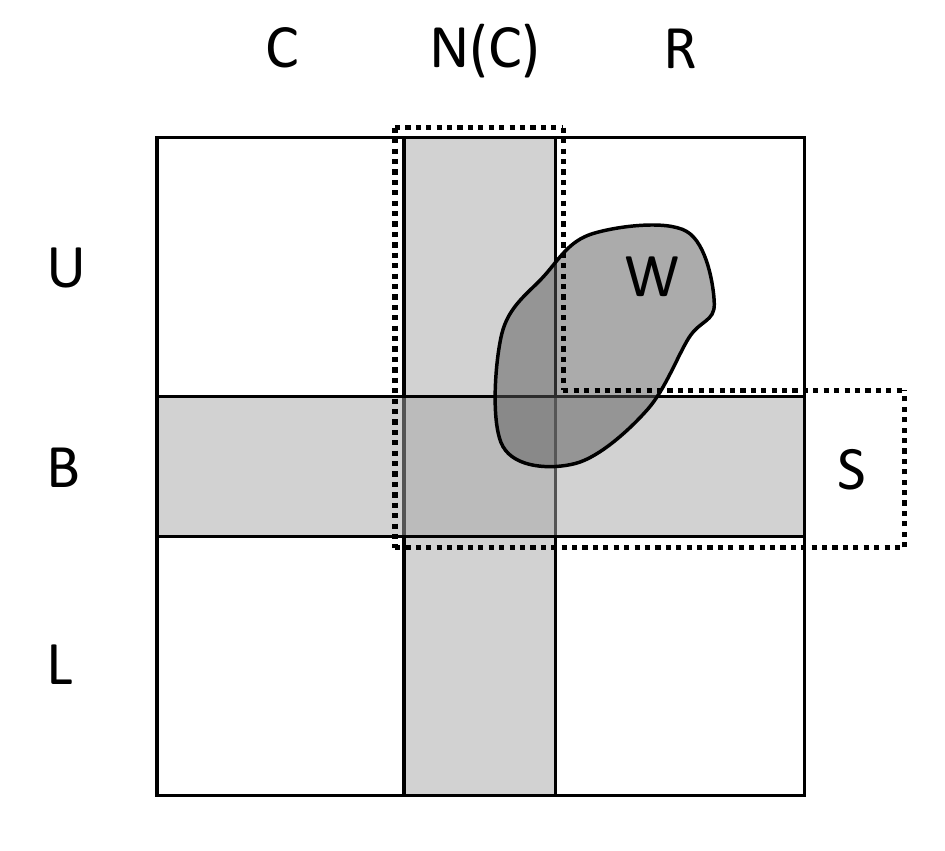}
\caption{$S$ separates $N(C)$ from $W$ and $|S| < |N(C)|$ because $|B \cap R| < |N(C) \cap L|$.}\label{fig:separation}
\end{center}
\end{figure}

Since $S$ separates $N(C)$ from $W$, there exists a separation $(P, S, Q)$ with $W \subseteq S \cup P$ and $N(C) \subseteq S \cup Q$, and $|S| < |N(C)|$.
Let $(P', S', Q')$ be a separation with $W \subseteq S' \cup P'$ and $N(C) \subseteq S' \cup Q'$ and $|S'|$ being of minimum size. Then $|S'| \leq |S| < |N(C)|$ and (by Menger's Theorem) the set $S'$ is linked into $N(C)$.
Now the component $C$ and separation  $(P', S', Q')$ satisfy all of the properties necessary to use  \Cref{lem:pullSimple} to improve $(X, (T, \bag))$.
This concludes case {\em (i)} (that there exists a node $t$ in $V(T)$ and component $C$ of $G \setminus X$ such that $|\bag_C(t_C)| > |\bag(t)|$).

The second case (when there exists a node $t$ in $V(T)$ and two distinct components $C_1, C_2$ of $G \setminus X$ such that $|\bag_{C_1}(t_{C_1})| = |\bag_{C_2}(t_{C_2})| = |\bag(t)| = |W|$) is handled in an analogous, but even more technical way.
In particular in this case we are not able to necessarily obtain an $X^\star$ with $|X^\star| < |X|$, but instead we obtain an  $X^\star$ with $|X^\star| = |X|$ and a lower value of a carefully chosen potential function. This concludes the proof sketch of \Cref{lem:improveTDSimplified}.

\subsection{Overview of \Cref{the:stwexalg,the:stwpalg}}
\label{subsec:branchoverview}
We now overview our algorithms for \stw and \pstw.
Recall that every instance of \stw is also an instance of \pstw, so we will only work on instances of \pstw.
This will be useful also in the algorithm for \stw, since the recursive subproblems turn out to naturally correspond to \pstw.

We denote an instance of \pstw by $\ins = (G, \{W_1, \ldots, W_t\}, k)$.
We call the cliques $W_1, \ldots, W_t$ the \emph{terminal cliques} of the instance.
We say that a torso tree decomposition $(X, (T, \bag))$ in $G$ is a solution of $\ins$ if $(X, (T, \bag))$ covers $\bigcup_{i=1}^t W_i$ and has width at most $k$.
Here we do not anymore enforce that $|\bigcup_{i=1}^t W_i| \le k+2$, and it will in fact grow larger in the recursive subproblems (but $k$ or $t$ will not increase).
Both of our algorithms will either find a solution or conclude that no solution exists.
In particular, we do not use the freedom in the definitions of the problems that we could also determine that the treewidth of $G$ is more than $k$ without determining that no solution exists.

We will first sketch a $k^{\OO(k)} n^{\OO(1)}$ time algorithm for \pstw in the case when there are only two terminal cliques $W_1$ and $W_2$.
This algorithm showcases the most important concepts behind both the $k^{\OO(kt)} nm$ time algorithm of \Cref{the:stwpalg} and the $2^{\OO(k^2)} nm$ time algorithm of \Cref{the:stwexalg}, and in fact generalizing this to the $k^{\OO(kt)} nm$ algorithm does not require substantial new ideas but is rather a technical step.

\paragraph{Reduction rule.} Let $W_1$,$W_2$ be the two terminal cliques and $S$ be a minimum size $(W_1,W_2)$-separator, and $(A,S,B)$ the corresponding separation with $W_1 \subseteq A \cup S$ and $W_2 \subseteq B \cup S$.
We will argue that we can make $S$ into a new terminal clique and recursively solve the problem on the graphs $G[A \cup S]$ and $G[B \cup S]$.
More formally, we denote by $G \mcliq S$ the graph obtained from $G$ by making $S$ a clique, and then denote by $\ins \rescliqs (A, S)$ the instance $(G[A \cup S] \mcliq S, \{W_1, S\}, k)$ and by $\ins \rescliqs (B, S)$ the instance $(G[B \cup S] \mcliq S, \{W_2, S\}, k)$.
We argue that there exists a solution of $\ins$ if and only if there exists solutions of both $\ins \rescliqs (A, S)$ and $\ins \rescliqs (B, S)$.

Observe that because both $\ins \rescliqs (A, S)$ and $\ins \rescliqs (B, S)$ contain the separator $S$ as a terminal clique but their graphs are disjoint otherwise, any solution of $\ins \rescliqs (A, S)$ can be combined with any solution of $\ins \rescliqs (B, S)$ into a solution of $\ins$ by simply connecting the tree decompositions by an edge between bags containing $S$.
To argue that if there exists a solution of $\ins$ then there exists solutions of both $\ins \rescliqs (A, S)$ and $\ins \rescliqs (B, S)$, we apply the pulling lemma (\Cref{lem:pullSimple}).
Because $S$ is a minimum size $(W_1,W_2)$-separator, by Menger's theorem $S$ is linked into $W_1$ and into $W_2$.
Therefore, in order to show that a solution of $\ins \rescliqs (A,S)$ exists, we consider a hypothetical solution $(X, (T, \bag))$ of $\ins$, and apply the pulling lemma with the separation $(A,S,B)$ and $Z = W_2$ as the subset of a bag with $Z \subseteq S \cup B$ into which $S$ is linked.
This constructs a torso tree decomposition $((X \cap A) \cup S, (T', \bag'))$ of width at most $k$ where $S$ is a bag, which can be observed to be a torso tree decomposition also in $G[A \cup S] \mcliq S$ because $S$ is a bag of $(T', \bag')$, and to cover $W_1 \cup S$ because $W_1 \subseteq A \cup S$ and $W_1 \subseteq X$, and therefore is a solution of $\ins \rescliqs (A,S)$.
The existence of a solution of $\ins \rescliqs (B,S)$ is proven in a symmetric way.

Observe that this reduction rule makes progress as long as $S \neq W_1$ and $S \neq W_2$, and thus we apply the rule as long as there exists any such minimum size $(W_1,W_2)$-separator $S$.
Motivated by this, we say that $W_1$ is \emph{strictly linked} into $W_2$ if $W_1$ is linked into $W_2$ and the only minimum size $(W_1,W_2)$-separators are $W_1$ and perhaps $W_2$ (if $|W_2| = |W_1|$).


\paragraph{Leaf pushing.}
Assume now that we cannot make any more progress by the reduction rule, and let $|W_1| \le |W_2|$, implying that $W_1$ is strictly linked into $W_2$.
Our goal is to now make progress by increasing the size of $W_1$.
We observe that for any solution $(X, (T, \bag))$ that minimizes $|X|$, it holds that if $l$ is a leaf node of $T$ and $p$ is the parent of $l$, then $\bag(l) \setminus \bag(p) \subseteq W_1 \cup W_2$.
Furthermore, we can assume that $\bag(p) = \bag(l) \setminus \{w\}$, where $w$ is a ``forget-vertex'' of $l$, and therefore $\bag(l) \setminus \bag(p) \subseteq W_1$ or $\bag(l) \setminus \bag(p) \subseteq W_2$.
Then, observe that if $\bag(l) \setminus \bag(p)$ intersects $W_i$, it must hold that $W_i \subseteq \bag(l)$ because $W_i$ is a clique.
Therefore, $(T, \bag)$ either contains a bag that contains both $W_1$ and $W_2$, in which case $|W_1 \cup W_2| \le k+1$ and there is a trivial single-bag solution, or $(T, \bag)$ has exactly two leaves and for one of them it holds that $W_1 \subseteq \bag(l)$ and $\bag(l) \setminus \bag(p) \subseteq W_1 \setminus W_2$.

Now, our goal will be, informally, to increase the size of $W_1$ by guessing a vertex in $\bag(l) \setminus W_1$ and adding it to $W_1$.
We let $w$ be the forget-vertex of $l$, and observe that the parent bag $\bag(p) = \bag(l) \setminus \{w\}$ is a $(W_1, W_2)$-separator.
This shows that $\bag(l) \setminus W_1$ must be non-empty, because otherwise $\bag(p)$ would be a $(W_1, W_2)$-separator of size $|W_1|-1$, contradicting that $W_1$ is linked into $W_2$.
Denote $G' = G \setminus (W_1 \setminus \{w\})$, and observe that in the graph $G'$ the set $\bag(l) \setminus W_1 = \bag(p) \setminus W_1$ is a $(\{w\}, W_2 \setminus W_1)$-separator.
We will then show that the subset $\bag(l) \setminus W_1$ of $\bag(l)$ can be replaced by an important $(\{w\}, W_2 \setminus W_1)$-separator (see \Cref{subsec:impsep} or~\cite[Chapter 8]{cygan2015parameterized} for definitions of important separators).
In particular, we will argue that there is an important $(\{w\}, W_2 \setminus W_1)$-separator $S \neq \{w\}$ in the graph $G'$ so that there exists a solution containing a bag $W_1 \cup S$.

Let $S$ be an important $(\{w\}, W_2 \setminus W_1)$-separator in the graph $G'$ so that it dominates $\bag(l) \setminus W_1$ and minimizes $|S|$ among all such important separators.
Denote the separation corresponding to $S$ by $(A,S,B) = (\reach_{G'}(\{w\}, S), S, V(G) \setminus (S \cup \reach_{G'}(\{w\}, S)))$, where $\reach_{G'}(\{w\}, S)$ denotes the vertices reachable from $\{w\}$ in the graph $G' \setminus S$.
It can be shown that $S$ is linked into $(A \cup S) \cap (\bag(l) \setminus W_1)$ (\Cref{lem:imp_sep_dom}).
Then, by adding $W_1 \setminus \{w\}$ back to the graph and to the separation, we get that $(A, S \cup W_1 \setminus \{w\}, B)$ is a separation of $G$ and $S \cup W_1 \setminus \{w\}$ is linked into $(A \cup S \cup W_1 \setminus \{w\}) \cap \bag(l)$ (the vertices in $W_1 \setminus \{w\}$ are linked by trivial one-vertex paths).
We then apply the pulling lemma (\Cref{lem:pullSimple}) with the hypothetical solution $(X, (T, \bag))$, the separation $(B, S \cup W_1 \setminus \{w\}, A)$, and the subset of a bag $Z = (A \cup S \cup W_1 \setminus \{w\}) \cap \bag(l)$, to argue that there exists a torso tree decomposition $((X \cap B) \cup S \cup W_1 \setminus \{w\}, (T', \bag'))$ of width at most $k$, containing a bag $S \cup W_1 \setminus \{w\}$.
As $|S| \le |\bag(l) \setminus W_1|$, this can be turned into a solution of $\ins$ by inserting $w$ into the bag $S \cup W_1 \setminus \{w\}$.
Therefore there exists a solution of $\ins$ with a bag $W_1 \cup S$, and in particular it is safe to replace the terminal clique $W_1$ by $W_1 \cup S$, also replacing $G$ by $G \mcliq (W_1 \cup S)$.

Now, we are able to increase the size of $W_1$ by guessing the forget-vertex $w \in W_1$ and an important separator $S$ and branching to $(G \mcliq (W_1 \cup S), \{W_1 \cup S, W_2\}, k)$.
However, by applying the reduction rule we might immediately lose most of the progress by finding a $(W_1 \cup S, W_2)$-separator $S'$ of size $|S'|<|W_1 \cup S|$ and ending up with an instance with terminal cliques $\{S', W_2\}$.
Nevertheless, we can ensure that such $S'$ must have size $|S'|>|W_1|$ by using the facts that $W_1$ is strictly linked into $W_2$ and the way $S$ was selected.
In particular, in the end, after applying the reduction rule possibly several times, we can guarantee that if initially $|W_1| = |W_2|$, then each resulting instance has terminal cliques of sizes at least $|W_1|+1$ and $|W_2|$, and if initially $|W_1| < |W_2|$, then each resulting instance has terminal cliques of sizes at least $|W_1|+1$ and $|W_1|+1$.
Therefore, if we consider $\min(|W_1|, |W_2|)+\min(\min(|W_1|, |W_2|)+1, \max(|W_1|, |W_2|))$ as our measure of progress, we are guaranteed to increase it by one by the branching.

As the sizes of terminal cliques are bounded by $k+1$, it is possible to increase this measure by at most $2k$ times.
Then, as the number of important separators of size at most $k$ is bounded by $4^k$~\cite{DBLP:journals/algorithmica/ChenLL09}, this results in a branching tree of degree $k 4^k$ and depth $2k$, resulting in a $(k 4^k)^{2k} n^{\OO(1)} = 2^{\OO(k^2)} n^{\OO(1)}$ time algorithm.
To improve this to $k^{\OO(k)} n^{\OO(1)}$ time, we observe that in order to make progress, it is sufficient to guess only one vertex of the important separator $S$ and add it to $W_1$, instead of guessing the whole important separator $S$.
To this end, we prove an ``important separator hitting set lemma'' (\Cref{lem:imp_sep_hit}) that gives a set of size $k$ that intersects all important separators of size at most $k$, and therefore allows to guess one vertex in an important separator of size at most $k$ by a branching degree of $k$ instead of $4^k$, resulting in a $(k^2)^{2k} n^{\OO(1)} = k^{\OO(k)} n^{\OO(1)}$ time algorithm.

\paragraph{More than two terminal cliques.}
Generalizing the just sketched $k^{\OO(k)} n^{\OO(1)}$ time algorithm for two terminal cliques into the $k^{\OO(kt)} n^{\OO(1)}$ algorithm for $t$ terminal cliques of \Cref{the:stwpalg} does not require major new ideas, but requires several technical considerations.
In the algorithm for $t$ terminal cliques, we will in addition to the leaf pushing branching do branching on merging two different terminal cliques into one, which should be done whenever we guess that there exists a solution where the two terminal cliques are in a same bag.
The ``real'' definition of the measure of the instance will also be more involved, in particular, instead of depending on the sizes of terminal cliques, the measure depends on a notion of ``flow potential'' of a terminal clique.
The flow potential has a technical definition, but for all terminal cliques $W_i$ except for a uniquely largest one it will be equal to the flow from $W_i$ into the union of the other terminal cliques.
The measure of a uniquely largest terminal clique must be special to encode that we make progress, for example, in the case when there are two terminal cliques $W_1$ and $W_2$ with $|W_1| = |W_2|$ and after branching we end up with two terminal cliques of sizes $|W_1| + 1$ and $|W_2|$.
The measure will also take into account the number of terminal cliques, in particular, it will ``encode'' that decreasing the number of terminal cliques with the expense of making the flow potential of one terminal clique worse still means making overall progress.


\paragraph{The $2^{\OO(k^2)} n^{\OO(1)}$ time algorithm.}
The $2^{\OO(k^2)} n^{\OO(1)}$ time algorithm for \stw of \Cref{the:stwexalg} also uses the same reduction rule and leaf pushing arguments.
In particular, even though the problem is originally \stw, applications of the reduction rule and leaf pushing will naturally turn the problem into \pstw.

For this algorithm, the main measure of progress will be a parameter $q$ that states that there are no solutions that contain ``internal separations'' of order $<q$.
Here, an internal separation of a solution $(X, (T,\bag))$ means a separation $(A,S,B)$ so that $S$ is a subset of some bag of $(T,\bag)$, and the terminal cliques intersect both $A$ and $B$.
The goal will be to increase $q$, by first pushing two terminal cliques to be of size at least $\ge q$ by using a version of leaf pushing that guesses the whole important separator instead of only one vertex, and then guessing how a hypothetical internal separation of order $q$ would split the terminal cliques and breaking the instance by an important separator of size $q$ pushed towards the side with two terminal cliques of size $\ge q$.
We will also argue about internal separations that contain only a small number of ``original'' terminal vertices behind them, in particular, we will use an observation that if a solution has an internal separation $(A,S,B)$ so that at most $k+1-|S|$ original terminal vertices are ``behind'' terminal cliques intersecting $A$, then the $A$-side of the solution can be replaced by just a single bag containing $S$ and the original terminal vertices behind it.

\section{Preliminaries}
\label{sec:preli}
We present definitions and preliminary results.

For a positive integer $n$ we denote $[n] = \{1, 2, \ldots, n\}$ and for two integers $a,b$ with $a\le b$ we denote $[a,b] = \{a, a+1, \ldots, b\}$.

\subsection{Graphs}
\label{subsec:preligraph}
We denote the set of vertices of a graph $G$ by $V(G)$ and the set of edges by $E(G)$.
When the graph $G$ is clear from the context, we use $n = |V(G)|$ and $m = |V(G)|+|E(G)|$.
For a vertex $v \in V(G)$ we denote its neighborhood in $G$ by $N_G(v)$ and closed neighborhood by $N_G[v] = N_G(v) \cup \{v\}$.
For a set of vertices $S \subseteq V(G)$ their neighborhood is $N_G(S) = \bigcup_{v \in S} N(v) \setminus S$ and the closed neighborhood $N_G[S] = N_G(S) \cup S$.
We drop the subscript if the graph is clear from the context.
We denote the subgraph of $G$ induced by $S \subseteq V(G)$ by $G[S]$, and we also use the notation $G \setminus S = G[V(G) \setminus S]$.
We denote by $G \mcliq S$ the graph obtained from $G$ by making $S$ a clique.

A tripartition $(A, S, B)$ of $V(G)$ (with possibly empty parts) is a \emph{separation} of $G$ if there are no edges between $A$ and $B$.
The \emph{order} of the separation is $|S|$.
A separation of $G$ is a \emph{strict separation} if both $A$ and $B$ are non-empty.
For two sets $X,Y \subseteq V(G)$, an $(X,Y)$-separator is a set $S$ so that in the graph $G \setminus S$ there are no paths from $X \setminus S$ to $Y \setminus S$.
An $(X,Y)$-separator $S$ is a minimal $(X,Y)$-separator if no proper subset of $S$ is an $(X,Y)$-separator.
Note that $S$ is an $(X,Y)$-separator if and only if there exists a separation $(A,S,B)$ of $G$ with $X \subseteq A \cup S$ and $Y \subseteq B \cup S$.

For two sets of vertices $X,S \subseteq V(G)$, we denote by $\reach_G(X, S)$ the set of vertices in $G \setminus S$ reachable from $X \setminus S$.
We define $\reachn_G(X,S) = (X \cap S) \cup N(\reach_G(X,S)) \subseteq S$ to denote the subset of $S$ that can be seen from $X$.
Note that if $S$ is an $(X,Y)$-separator then $\reachn_G(X,S)$ is also an $(X,Y)$-separator and $\reach_G(X,\reachn_G(X,S)) = \reach_G(X,S)$.
It follows that if $S$ is a minimal $(X,Y)$-separator, then $S = \reachn_G(X, S)$.

For two sets of vertices $X,Y \subseteq V(G)$, we denote by $\flow_G(X,Y)$ the maximum number of vertex-disjoint paths in $G$ starting in $X$ and ending in $Y$.
We may omit the subscript if the graph is clear from the context.
By Menger's theorem, $\flow(X,Y)$ is equal to the size of a minimum size $(X,Y)$-separator.

We say that a set $X \subseteq V(G)$ is \emph{linked} into a set $Y \subseteq V(G)$ if $\flow(X,Y) = |X|$.
Note that here the definition of linked is asymmetric, in particular, the fact that $X$ is linked into $Y$ does not imply that $Y$ is linked into $X$.
We say that $X$ is \emph{strictly linked} into $Y$ if it is linked into $Y$ and for all $(X,Y)$-separators $S$ of size $|S| = |X|$ it holds that $S = X$ or $S = Y$.

\subsection{Tree decompositions}
\label{subsec:prelitd}
A tree decomposition of a graph $G$ is a pair $(T, \bag)$, where $T$ is a tree and $\bag$ is a function $\bag : V(T) \rightarrow 2^{V(G)}$ that satisfies
\begin{enumerate}
\item \label{def:td:vc} $V(G) = \bigcup_{t \in V(T)} \bag(t)$,
\item \label{def:td:ec} for every $uv \in E(G)$, there exists $t \in V(T)$ with $\{u,v\} \subseteq \bag(t)$, and
\item \label{def:td:cc} for every $v \in V(G)$, the set $\{t \in V(T) \mid v \in \bag(t)\}$ forms a connected subtree of $T$.
\end{enumerate}

We will call \Cref{def:td:vc} of the definition the \emph{vertex condition}, \Cref{def:td:ec} the \emph{edge condition}, and \Cref{def:td:cc} the \emph{connectedness condition}.
The width of a tree decomposition $(T, \bag)$ is $\max_{t \in V(T)} |\bag(t)|-1$ and the treewidth of a graph is the minimum width of a tree decomposition of it.
We usually call the vertices of the tree $T$ \emph{nodes} to distinguish them from the vertices of the graph $G$.

We will need the following standard utility lemma that transforms a tree decomposition into a no worse tree decomposition with at most $n$ nodes.

\begin{lemma}
\label{lem:tdlinear}
Given a tree decomposition $(T,\bag)$ of $G$ of width $k$ that has $h$ bags of size $k+1$, we can in time $k^{\OO(1)} |V(T)|$ construct a tree decomposition of $G$ of width $k$ that has at most $h$ bags of size $k+1$ and has at most $n$ nodes.
\end{lemma}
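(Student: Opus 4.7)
My plan is to apply a \emph{neighbour-contraction} operation exhaustively: while there exists an edge $uv\in E(T)$ with $\bag(u)\subseteq\bag(v)$, delete $u$ from $T$ and reattach every former neighbour of $u$ to $v$, leaving every remaining bag (including $\bag(v)$) unchanged. A direct check confirms that the three tree decomposition axioms survive; connectedness in particular is preserved because every vertex of $\bag(u)$ already lies in $\bag(v)$, so the subtree of bags containing any vertex either avoided $u$ already or contained both $u$ and $v$. No bag is enlarged, so the width stays $k$, and the number of size-$(k+1)$ bags cannot grow either: if $|\bag(u)|=k+1$ and $\bag(u)\subseteq\bag(v)$ then $\bag(u)=\bag(v)$ because $|\bag(v)|\le k+1$, so each such contraction merely deletes one of two identical maximum bags. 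Each contraction strictly decreases $|V(T)|$, so the process terminates in at most $|V(T)|$ steps.

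\textbf{Running time.} To reach $k^{\OO(1)}|V(T)|$ time I would sort each bag once in a $k^{\OO(1)}|V(T)|$ preprocessing pass and then maintain a work-list of candidate edges, each containment test taking $\OO(k)$ time on sorted bags. Contracting $u$ into $v$ does not alter $\bag(v)$, so the only edges whose status could have flipped are the $\deg(u)-1$ edges from $v$ to the former neighbours of $u$; these are pushed back on the work-list. Since $T$ remains a tree throughout and each contraction deletes one node, the total number of edge inspections is $\OO(|V(T)|)$, yielding the claimed bound.

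\textbf{Node bound and main obstacle.} Let $(T',\bag')$ be the output, in which no bag is a subset of any neighbouring bag. We may assume $n\ge 1$, and then if $|V(T')|\ge 2$ every bag is non-empty, since an empty bag would be a subset of any non-empty neighbour and would have been contracted. Root $T'$ arbitrarily at $r$ and for every non-root $t$ pick $v_t\in\bag'(t)\setminus\bag'(\mathrm{parent}(t))$, which exists precisely because contractions are exhausted; extend by choosing any $v_r\in\bag'(r)$. I claim $t\mapsto v_t$ is injective: if $v_t=v_{t'}$ for distinct $t,t'$, pick $t^\star\in\{t,t'\}$ that is not an ancestor of the other (such a choice exists and satisfies $t^\star\ne r$, since $r$ is an ancestor of every node); then $\mathrm{parent}(t^\star)$ lies on the $t$-to-$t'$ path in $T'$, and the connectedness axiom forces the common value of $v_t,v_{t'}$ into $\bag'(\mathrm{parent}(t^\star))$, contradicting the choice of $v_{t^\star}$. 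Hence $|V(T')|\le n$. The only conceptually delicate step is this injectivity argument via connectedness; the contraction phase and its linear-time implementation are routine bookkeeping once the work-list view is adopted.
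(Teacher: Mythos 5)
Your plan follows the paper's exactly — exhaustively contract any edge $uv$ with $\bag(u)\subseteq\bag(v)$, keeping $\bag(v)$ — and the correctness reasoning is sound. The observation that contraction preserves the three axioms, never enlarges a bag, and never increases the count of size-$(k+1)$ bags (because a contracted size-$(k+1)$ bag must equal its superset) is exactly what is needed. Your injectivity argument $t\mapsto v_t$ is a correct direct proof of the at-most-$n$ node bound, a folklore fact that the paper simply delegates to a textbook reference, so on the substance your write-up matches and in places fleshes out the paper's.

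However, the running-time analysis contains a genuine gap. The inference that \emph{``$T$ remains a tree throughout and each contraction deletes one node, so the total number of edge inspections is $\OO(|V(T)|)$''} does not follow: each contraction of $u$ into $v$ pushes $\deg(u)-1$ edges, and the sum of these can be $\Theta(|V(T)|^2)$ even for constant $k$. Concretely, take a spine $v_1,\dots,v_m$ all carrying the bag $\{1,2\}$, and attach to each $v_i$ a pendant leaf $\ell_i$ with bag $\{1,2+i\}$; this is a valid width-$1$ tree decomposition with $|V(T)|=2m$. If the work-list happens to serve the spine edges first, the $i$-th spine contraction re-attaches (and re-pushes) the $i$ pendant edges already accumulated at the surviving spine node, giving $\Theta(m^2)$ inspections, whereas the lemma asserts $k^{\OO(1)}|V(T)|=\OO(m)$. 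The repair is small — only re-push an edge $vw$ when the contraction strictly enlarged $\bag(v)$, so each edge is re-pushed at most $k+1$ times (giving $\OO(k\,|V(T)|)$ inspections), or root $T$ and perform a single post-order pass contracting children into parents — but the work-list argument as stated does not establish the claimed $k^{\OO(1)}|V(T)|$ bound.
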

\begin{proof}
As long as there exists an edge $uv \in E(T)$ with $\bag(u) \subseteq \bag(v)$, we contract $uv$ and let the bag of the resulting node be $\bag(v)$.
This can be implemented in $k^{\OO(1)} |V(T)|$ time by depth-first search, and clearly does not increase the width or the number of bags of size $k+1$.
This results in a tree decomposition with at most $n$ nodes~(see~e.g.~\cite[Chapter 14.2]{fomin2019kernelization}).
\end{proof}

We sometimes view a tree decomposition $(T, \bag)$ as rooted on some specific node $r \in V(T)$.
In this setting we use standard rooted tree terminology, i.e., $v \in V(T)$ is an \emph{ancestor} of $u \in V(T)$ if it is on the unique path from $u$ to $r$ and a \emph{strict ancestor} if also $v \neq u$, and conversely $u$ is a (strict) \emph{descendant} of $v$.
We say that a node $t \in V(T)$ is the forget-node of a vertex $v \in V(G)$ if $v \in \bag(t)$ and either $t = r$ or for the parent $p$ of $t$ it holds that $v \notin \bag(p)$.
Note that every $v \in V(G)$ has a unique forget-node.

\subsection{Torso tree decompositions}
\label{subsec:prelittd}
Let $G$ be a graph and $X \subseteq V(G)$.
The graph $\torso_G(X)$ has set of vertices $V(\torso_G(X)) = X$ and has $uv \in E(\torso_G(X))$ if $u,v \in X$ and there is a path from $u$ to $v$ whose internal vertices are in $V(G) \setminus X$.
In particular, note that $E(\torso_G(X)) \supseteq E(G[X])$.
An equivalent definition of $\torso_G(X)$ is that it is the graph obtained from $G[X]$ by making $N_G(C)$ a clique for every connected component $C$ of $G \setminus X$.

We will need a following lemma about the interplay of the $\torso$ operation and induced subgraphs.

\begin{lemma}
\label{lem:torsoindsub}
Let $X,Y$ be subsets of $V(G)$.
Then $E(\torso_{G[Y]}(X \cap Y)) \subseteq E(\torso_G(X))$.
\end{lemma}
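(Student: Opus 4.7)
The plan is to unpack the definition of the torso operation on both sides and transfer a witnessing path from the smaller graph to the larger one. Concretely, I would pick an arbitrary edge $uv \in E(\torso_{G[Y]}(X \cap Y))$ and show it lies in $E(\torso_G(X))$. By the definition of $\torso$ applied inside $G[Y]$, the endpoints satisfy $u,v \in X \cap Y$, and either $uv \in E(G[Y])$ (in which case $uv \in E(G)$ trivially gives $uv \in E(\torso_G(X))$ since $u,v \in X$), or there is a path $P$ from $u$ to $v$ in $G[Y]$ whose internal vertices lie in $V(G[Y]) \setminus (X \cap Y) = Y \setminus (X \cap Y)$.

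The key observation is then that any internal vertex $w$ of $P$ satisfies $w \in Y$ and $w \notin X \cap Y$, and since $w \in Y$ this forces $w \notin X$, i.e., $w \in V(G) \setminus X$. Because $P$ is a path in the induced subgraph $G[Y]$, it is also a path in $G$; together with the endpoint condition $u,v \in X \cap Y \subseteq X$ and the internal-vertex condition $V(P) \setminus \{u,v\} \subseteq V(G) \setminus X$, the definition of $\torso_G(X)$ directly yields $uv \in E(\torso_G(X))$.

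There is really no obstacle here: the argument is a straightforward unfolding of the two definitions, using only the elementary fact that a path in $G[Y]$ is also a path in $G$, and that a vertex of $Y$ not in $X \cap Y$ is automatically not in $X$. I would write the proof as a single short paragraph, with one inline display identifying the internal vertices of $P$, and no case analysis beyond the trivial split on whether $uv$ is already an edge of $G[Y]$ (which is subsumed by allowing the path $P$ to have no internal vertices).
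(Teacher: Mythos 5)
Your argument is correct and matches the paper's proof: both take the witnessing path in $G[Y]$ (possibly with no internal vertices), observe that its internal vertices lie in $Y \setminus X \subseteq V(G) \setminus X$, and note that a path in $G[Y]$ is also a path in $G$, yielding the edge in $\torso_G(X)$. The paper collapses your trivial case split into the single path argument, exactly as you anticipate.
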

\begin{proof}
If $uv \in E(\torso_{G[Y]}(X \cap Y))$, then there is a path from $u$ to $v$ in $G[Y]$ with intermediate vertices in $Y \setminus X$.
This path exists also in $G$, implying that $uv \in E(\torso_G(X))$.
\end{proof}

A \emph{torso tree decomposition} in a graph $G$ is a pair $(X, (T, \bag))$, where $X \subseteq V(G)$ and $(T, \bag)$ is a tree decomposition of $\torso_G(X)$.
For a set $W \subseteq V(G)$, we say that $(X, (T, \bag))$ covers $W$ if $W \subseteq X$.

We observe the following equivalent viewpoint of torso tree decompositions that might be useful for intuition about them.

\begin{observation}
\label{obs:bigleaf}
There exists a torso tree decomposition $(X, (T, \bag))$ in $G$ if and only if there exists a tree decomposition of $G$ whose non-leaf nodes induce the tree decomposition $(T,\bag)$.
\end{observation}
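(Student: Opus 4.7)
The plan is to prove both directions of the equivalence by an explicit construction relating a torso tree decomposition to a tree decomposition of $G$ whose ``inner skeleton'' is the torso decomposition, with the components of $G \setminus X$ and their neighborhoods pushed into added leaf bags.

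For the forward direction, I would start from a torso tree decomposition $(X, (T, \bag))$ and build a tree decomposition $(T^\star, \bag^\star)$ of $G$. For every connected component $C$ of $G \setminus X$, the set $N_G(C)$ is a clique in $\torso_G(X)$ by the second definition of $\torso$, so by the standard fact that a clique in a graph lies inside some bag of any tree decomposition of it, there is a node $t_C \in V(T)$ with $N_G(C) \subseteq \bag(t_C)$. I attach a fresh leaf $\ell_C$ adjacent to $t_C$ with $\bag^\star(\ell_C) = N_G[C]$, and (if needed for the ``induce'' part of the claim) hang an empty dummy leaf off every node of $T$ that would otherwise remain a leaf. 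Checking the tree-decomposition axioms is then straightforward: the vertex condition holds because $V(G) \setminus X$ is covered by the new leaves; the edge condition holds because an edge inside $G[X]$ is already covered by $(T,\bag)$ (using $E(G[X]) \subseteq E(\torso_G(X))$), while any other edge has at least one endpoint in some component $C$ and is covered by $\bag^\star(\ell_C) = N_G[C]$; and the connectedness condition follows because each vertex of $V(G) \setminus X$ appears only in its single leaf bag, while each $x \in X$ sits in the subtree $\{t \in V(T) : x \in \bag(t)\}$ together with the newly attached leaves whose bags contain $x$, all of which are adjacent to nodes already containing $x$.

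For the backward direction, I would take a tree decomposition $(T^\star, \bag^\star)$ of $G$ such that removing its leaves yields exactly $(T, \bag)$, set $X = \bigcup_{t \in V(T)} \bag(t)$, and verify that $(T, \bag)$ is a tree decomposition of $\torso_G(X)$. The vertex condition on $\torso_G(X)$ is immediate from the definition of $X$. The key lemma to establish is that every connected component $C$ of $G \setminus X$ lies entirely within one leaf bag of $(T^\star, \bag^\star)$: any vertex $c \in C$ appears only in leaf bags (as $c \notin X$), and by the connectedness axiom the set of bags containing $c$ must be a connected subset of leaves, hence a single leaf; then connectedness of $C$ in $G$ propagates to show all of $C$ shares one leaf $\ell_C$. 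The edge axiom of $(T^\star, \bag^\star)$ immediately gives $N_G(C) \cup C \subseteq \bag^\star(\ell_C)$, and then the connectedness axiom applied to each $u \in N_G(C) \subseteq X$ (which sits in some non-leaf bag as well) forces $N_G(C) \subseteq \bag^\star(p_C) = \bag(p_C)$, where $p_C$ is the parent of $\ell_C$.

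With this structural claim, the edge condition for $\torso_G(X)$ follows: an edge $uv$ of $\torso_G(X)$ is either an edge of $G[X]$ (handled by pulling $u,v$ from any shared bag of $(T^\star,\bag^\star)$ into its non-leaf parent via connectedness) or arises from a path through some component $C$ of $G \setminus X$, in which case $u,v \in N_G(C) \subseteq \bag(p_C)$. The connectedness condition for $(T,\bag)$ on $\torso_G(X)$ is inherited from $(T^\star,\bag^\star)$ by intersecting the subtree $\{t \in V(T^\star) : v \in \bag^\star(t)\}$ with $V(T)$, which is non-empty for every $v \in X$ by definition of $X$. The most delicate step I expect is the ``single leaf'' observation for components of $G \setminus X$; everything else is routine axiom chasing once that is in place.
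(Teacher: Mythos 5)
The paper does not give a proof of \Cref{obs:bigleaf}; it is stated as an observation and used only for intuition (and as an alternative route to \Cref{lem:indsub}). Your proposal therefore supplies a proof where none appears in the paper, and it is essentially the ``right'' one: build the large tree decomposition by hanging a leaf with bag $N_G[C]$ off a node whose bag contains the clique $N_G(C)$, and, conversely, show that each component of $G \setminus X$ lives in a single leaf and that its neighborhood then migrates into the parent bag. Both directions of your argument, including the key single-leaf lemma and the subtree-intersection argument for connectedness, are sound.

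One small loose end: your fix for nodes of $T$ that would remain leaves in $T^\star$ --- ``hang an empty dummy leaf off every node of $T$ that would otherwise remain a leaf'' --- fails in the degenerate case where $T$ is a single node and $X = V(G)$. Hanging one dummy leaf produces a two-node path in which \emph{both} endpoints are leaves, so $T^\star$ then has no non-leaf nodes and cannot induce $T$. You need to attach two dummy leaves to that node (or, more uniformly, attach two dummy leaves to every node of $T$ with degree at most one after the component leaves are added). A related boundary case worth flagging in the backward direction is $|V(T^\star)| \le 2$, where the non-leaf set may be empty or the ``two leaves are never adjacent'' step in the single-leaf argument does not literally apply; once $T^\star$ has at least three nodes, the non-leaf nodes form a non-empty connected subtree and your reasoning goes through verbatim. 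These are corner cases of the kind the paper silently absorbs into the word ``Observation,'' but a self-contained proof should mention them.
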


For tree decompositions it holds that for any connected induced subgraph $G[Y]$, the set of bags intersecting $Y$ forms a connected subtree of the decomposition (see~e.g.~\cite[Chapter~14.1]{fomin2019kernelization}).
We will use a corresponding property of torso tree decompositions.

\begin{lemma}
\label{lem:indsub}
Let $(X, (T, \bag))$ be a torso tree decomposition in $G$, and let $Y \subseteq V(G)$ so that $G[Y]$ is connected.
The nodes $\{t \in V(T) \mid \bag(t) \cap Y \neq \emptyset\}$ induce a (possibly empty) connected subtree of $T$.
\end{lemma}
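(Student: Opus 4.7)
The plan is to reduce this to the standard fact that in any tree decomposition, the bags intersecting a connected subgraph form a connected subtree. Since $(T, \bag)$ is by definition a tree decomposition of $\torso_G(X)$, it suffices to show that the vertex set $Y \cap X$ induces a connected subgraph of $\torso_G(X)$ (the case $Y \cap X = \emptyset$ being vacuous, giving an empty set of nodes which is a ``possibly empty connected subtree'').

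First I would reduce the claim to connectivity of $\torso_G(X)[Y \cap X]$. Because $\bag(t) \subseteq X$ for every $t \in V(T)$, we have $\bag(t) \cap Y = \bag(t) \cap (Y \cap X)$, so the set of nodes whose bag meets $Y$ equals the set of nodes whose bag meets $Y \cap X$. Assuming $\torso_G(X)[Y \cap X]$ is connected, the standard tree-decomposition fact (applied to the tree decomposition $(T, \bag)$ of $\torso_G(X)$) then yields that these nodes induce a connected subtree of $T$.

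Next I would prove that $\torso_G(X)[Y \cap X]$ is connected whenever $Y \cap X \neq \emptyset$. Take any two vertices $u, v \in Y \cap X$. Since $G[Y]$ is connected, there is a path $P = w_0 w_1 \cdots w_\ell$ in $G[Y]$ from $u = w_0$ to $v = w_\ell$. Let $w_{i_0} = u, w_{i_1}, \ldots, w_{i_s} = v$ be the subsequence of vertices of $P$ that lie in $X$. Between consecutive indices $i_j$ and $i_{j+1}$, all intermediate vertices of $P$ lie in $Y \setminus X \subseteq V(G) \setminus X$. Hence $P$ provides a path in $G$ from $w_{i_j}$ to $w_{i_{j+1}}$ whose internal vertices lie in $V(G) \setminus X$, so by definition of $\torso_G(X)$ the edge $w_{i_j} w_{i_{j+1}}$ belongs to $E(\torso_G(X))$. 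All $w_{i_j}$ lie in $Y \cap X$ since they are vertices of $P \subseteq G[Y]$ that are in $X$. Thus $w_{i_0} w_{i_1} \cdots w_{i_s}$ is a walk in $\torso_G(X)[Y \cap X]$ from $u$ to $v$, proving connectivity.

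There is no real obstacle here; the only thing to be a bit careful about is the edge case $Y \cap X = \emptyset$, which is handled by the ``possibly empty'' wording in the statement, and the observation that $\bag(t) \cap Y$ and $\bag(t) \cap (Y \cap X)$ coincide because bags live inside $X$. The heart of the argument is the ``contraction'' of a path in $G[Y]$ to a path in $\torso_G(X)[Y \cap X]$ by skipping the vertices outside $X$, which is exactly the mechanism built into the definition of $\torso_G(X)$.
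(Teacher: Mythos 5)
Your proof is correct and follows essentially the same route as the paper: reduce to connectivity of $\torso_G(X)[Y \cap X]$ by projecting a path in $G[Y]$ onto its vertices in $X$, then invoke the standard connectivity property of tree decompositions for $(T,\bag)$ as a tree decomposition of $\torso_G(X)$. Your write-up is just a more explicit version of the paper's one-line argument, and the remarks about $\bag(t) \cap Y = \bag(t) \cap (Y \cap X)$ and the $Y \cap X = \emptyset$ edge case are accurate, harmless additions.
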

\begin{proof}
By the definition of $\torso_G(X)$, any $u-v$-path in $G[Y]$ with $u,v \in X$ can be mapped into an $u-v$-path in $\torso_G(X)[Y \cap X]$, and therefore $\torso_G(X)[Y \cap X]$ is connected, and therefore the lemma follows from the corresponding property of tree decompositions.
\end{proof}

Alternatively, \Cref{lem:indsub} could be proven by using \Cref{obs:bigleaf} and the property of tree decompositions.
\Cref{lem:indsub} implies that if there are nodes $s,x,y \in V(T)$ so that $\{s\}$ is an $(\{x\}, \{y\})$-separator in $T$, then $\bag(s)$ is a $(\bag(x), \bag(y))$-separator in $G$.
This implication is proven by letting $Y$ to be any $\bag(x)-\bag(y)$-path and observing that by \Cref{lem:indsub}, $Y$ must now intersect $\bag(s)$.

\section{Toolbox}
\label{sec:toolbox}
In this section we provide two generic algorithmic tools that will be used in multiple sections of the paper.
First, in \Cref{subsec:impsep} we discuss known results about important separators and prove two observations about them.
In particular we believe that \Cref{lem:imp_sep_hit,lem:impsepfk} do not occur in the prior literature, while the rest of \Cref{subsec:impsep} is well-known material.
The results about important separators will be used in \Cref{sec:algopstw,sec:fasttw}.
Then, in \Cref{subsec:pull} we prove a ``pulling lemma'' about torso tree decompositions that will be crucial in many parts of our algorithm.
This lemma in the context of torso tree decompositions is novel, but prior analogues in the context of tree decompositions exist.
The pulling lemma will be used in \Cref{sec:redu,sec:algopstw,sec:fasttw}.

\subsection{Important separators}
\label{subsec:impsep}
The notion of important separator was introduced by Marx~\cite{DBLP:journals/tcs/Marx06}.

\begin{definition}[Important separator]
\label{def:impsep}
Let $A,B \subseteq V(G)$ be two sets of vertices.
A minimal $(A,B)$-separator $S$ is called an important $(A,B)$-separator if there exists no $(A,B)$-separator $S'$ with $|S'| \le |S|$ and $\reach_G(A, S) \subset \reach_{G}(A, S')$.
\end{definition}

We remark that \Cref{def:impsep} allows an important $(A,B)$-separator to be an empty set in the case when $B$ is not reachable from $A$ in $G$.

The following lemma is a straightforward consequence of \Cref{def:impsep}.

\begin{lemma}
\label{lem:impsepbasic}
For any $(A,B)$-separator $S$, there exists an important $(A,B)$-separator $S'$ so that $|S'| \le |S|$ and $\reach_G(A, S) \subseteq \reach_{G}(A, S')$.
\end{lemma}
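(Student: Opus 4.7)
The plan is to prove the lemma by an iterative improvement argument: starting from $S$, I will repeatedly replace the current separator by one that has strictly larger reach from $A$ without increasing size, and show that this process must terminate at an important separator.

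First, I would reduce to the minimal case. Given the arbitrary $(A,B)$-separator $S$, set $S_0 = \reachn_G(A, S)$. As noted in \Cref{subsec:preligraph}, $\reachn_G(A, S)$ is an $(A,B)$-separator with $\reach_G(A, S_0) = \reach_G(A, S)$ and $|S_0| \leq |S|$; moreover, because $S_0 = \reachn_G(A, S_0)$ by idempotence of $\reachn_G(A, \cdot)$, the separator $S_0$ is minimal.

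Next, I would iterate. Having constructed a minimal $(A,B)$-separator $S_i$ with $|S_i| \leq |S|$ and $\reach_G(A, S) \subseteq \reach_G(A, S_i)$, either $S_i$ is important and we are done by taking $S' = S_i$, or by \Cref{def:impsep} there exists an $(A,B)$-separator $T$ with $|T| \leq |S_i|$ and $\reach_G(A, S_i) \subsetneq \reach_G(A, T)$. Define $S_{i+1} = \reachn_G(A, T)$, which is a minimal $(A,B)$-separator satisfying $|S_{i+1}| \leq |T| \leq |S_i| \leq |S|$ and $\reach_G(A, S_{i+1}) = \reach_G(A, T) \supsetneq \reach_G(A, S_i) \supseteq \reach_G(A, S)$.

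Since $\reach_G(A, S_i) \subseteq V(G)$ and strictly grows at each iteration, the sequence must terminate after at most $|V(G)|$ steps. The final separator $S'$ in the sequence is important, satisfies $|S'| \leq |S|$, and satisfies $\reach_G(A, S) \subseteq \reach_G(A, S')$, as required. There is no real obstacle here beyond carefully invoking the preliminaries; the only thing to double-check is that $\reachn_G(A, \cdot)$ always produces a minimal separator with the same reach, which is exactly what is recorded in \Cref{subsec:preligraph}.
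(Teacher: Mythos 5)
Your approach is exactly what the paper intends---its own proof of \Cref{lem:impsepbasic} is the one-liner ``By iteration of the definition''---but there is one unjustified step that you use twice and attribute to \Cref{subsec:preligraph}. You claim that a fixed point of $\reachn_G(A,\cdot)$ is automatically a \emph{minimal} $(A,B)$-separator, both for $S_0 = \reachn_G(A,S)$ and for $S_{i+1} = \reachn_G(A,T)$. The paper records only the converse: if $S$ is a minimal separator then $S=\reachn_G(A,S)$. The implication you need is false. For instance, let $G$ have vertices $a,x,y,b,c$ and edges $ax,ay,xb,yc$, with $A=\{a\}$, $B=\{b\}$, $S=\{x,y\}$. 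Then $\reach_G(A,S)=\{a\}$ and $\reachn_G(A,S)=N(\{a\})=\{x,y\}=S$, yet $\{x\}\subsetneq S$ is still an $(A,B)$-separator, so $S$ is not minimal. Minimality matters because \Cref{def:impsep} declares a separator ``important'' only if it is minimal, so if $S_i$ is a non-minimal fixed point of $\reachn_G(A,\cdot)$, it is automatically not important, but the dichotomy you invoke (``either important, or there is $T$ with $|T|\le |S_i|$ and strictly larger reach'') does not follow from the definition.

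The gap is easy to close: if $S_i$ is a fixed point of $\reachn_G(A,\cdot)$ and $T'\subsetneq S_i$ is any $(A,B)$-separator, then $\reach_G(A,T')\supsetneq \reach_G(A,S_i)$. Indeed, pick $v\in S_i\setminus T'$; since $S_i=(A\cap S_i)\cup N(\reach_G(A,S_i))$, either $v\in A\setminus T'$ (hence $v\in\reach_G(A,T')$) or $v$ has a neighbor in $\reach_G(A,S_i)\subseteq\reach_G(A,T')$, so again $v\in\reach_G(A,T')$, while $v\in S_i$ implies $v\notin\reach_G(A,S_i)$. Thus whenever $S_i$ is not important---whether because it is not minimal or because it fails the reach-maximality condition---there is always a replacement with $|\cdot|\le|S_i|$ and strictly larger $\reach_G(A,\cdot)$, and your termination argument by growth of reach goes through unchanged.
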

\begin{proof}
By iteration of the definition.
\end{proof}

We say that an important $(A,B)$-separator $S'$ \emph{dominates} an $(A,B)$-separator $S$ if $S'$ satisfies the conditions of \Cref{lem:impsepbasic}.
For our algorithm, we need a property that a smallest important separator that dominates $S$ is linked into $S$ in a certain way.

\begin{lemma}
\label{lem:imp_sep_dom}
Let $S$ be an $(A,B)$-separator and $S'$ a smallest important $(A,B)$-separator that dominates $S$.
It holds that $S'$ is linked into $S \cap (R_G(A,S') \cup S')$.
\end{lemma}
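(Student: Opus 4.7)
The plan is to proceed by contradiction via Menger's theorem. Suppose $S'$ is not linked into $T := S \cap (\reach_G(A,S') \cup S')$; then there exists an $(S', T)$-separator $C$ in $G$ with $|C| < |S'|$. I will construct an $(A,B)$-separator of size strictly less than $|S'|$ that dominates $S$, and then apply \Cref{lem:impsepbasic} to obtain an important $(A,B)$-separator of size less than $|S'|$ dominating $S$, contradicting the minimality hypothesis on $S'$.

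First I reduce to the case $C \subseteq \reach_G(A, S') \cup S'$: since $T$ lies in this set and any $(S', T)$-path that exits must re-enter through $S'$, any such path can be shortened to remain inside, and so a minimum $(S', T)$-separator can be chosen inside $\reach_G(A, S') \cup S'$.

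Next I show that $C$ is itself an $(A, B)$-separator in $G$. Take any $A$-to-$B$ path $P$. The prefix of $P$ up to its first $S$-vertex lies inside $\reach_G(A, S)$, which by the domination hypothesis is contained in $\reach_G(A, S')$. Hence the first $S$-vertex of $P$, being in $S$ and adjacent to $\reach_G(A, S')$, must lie in $S \cap (\reach_G(A, S') \cup S') = T$ (it cannot lie in the $B$-side of $S'$, since no edge crosses between $\reach_G(A, S')$ and its complement in $V(G) \setminus S'$). Moreover $P$ crosses $S'$ at some vertex $v$, and the sub-path of $P$ between this first $S$-vertex and $v$ is an $(S', T)$-path and must therefore intersect $C$. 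So $C$ meets every $A$-to-$B$ path, as required.

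The remaining task is to produce an $(A, B)$-separator $C^\dagger$ with $|C^\dagger| \le |C| < |S'|$ and $\reach_G(A, S) \subseteq \reach_G(A, C^\dagger)$; \Cref{lem:impsepbasic} then yields an important $(A, B)$-separator $Z$ with $|Z| \le |C^\dagger| < |S'|$ and $\reach_G(A, S) \subseteq \reach_G(A, Z)$, contradicting the choice of $S'$. To obtain $C^\dagger$, I plan a rerouting/uncrossing argument: if $c \in C \cap \reach_G(A, S)$, then using the minimality of $C$ as an $(S', T)$-separator and the fact that $S$ separates $\reach_G(A, S)$ from $B$, I will exhibit a vertex in $S'$ or in the boundary of $\reach_G(A, S)$ (inside $\reach_G(A, S') \setminus \reach_G(A, S)$) that can replace $c$ in $C$ while preserving the $(S', T)$-separator property and not increasing size. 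Iterating yields a $(S', T)$-separator $C^\dagger$ of size $|C^\dagger| \le |C|$ that is disjoint from $\reach_G(A, S)$. Then any $A$-to-$u$ path in $G \setminus S$ for $u \in \reach_G(A, S)$ stays inside $\reach_G(A, S)$ and hence avoids $C^\dagger$, establishing $\reach_G(A, S) \subseteq \reach_G(A, C^\dagger)$, as needed.

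I expect the main obstacle to be the rerouting step itself, which is the technical heart of the argument. It amounts to a submodular-style uncrossing between the $(A,B)$-separator $S$ and the $(S',T)$-separator $C$ inside the graph $G[\reach_G(A, S') \cup S']$, exploiting both the minimality of $C$ and the fact that $S'$ is the smallest important separator dominating $S$; the remaining steps above are then routine consequences of \Cref{lem:impsepbasic} and the definitions.
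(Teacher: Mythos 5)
Your overall strategy coincides with the paper's: argue that $T := S \cap (\reach_G(A,S')\cup S')$ is an $(A,S')$-separator with $\reach_G(A,T)=\reach_G(A,S)$, extract an $(S',T)$-separator $C$ of size $<|S'|$ from the failure of linkedness, show $C$ yields an $(A,B)$-separator that dominates $S$, and contradict the minimality of $|S'|$ via \Cref{lem:impsepbasic}. Your verification that $C$ is an $(A,B)$-separator is correct, and the arithmetic at the end (so that the resulting important separator has size $<|S'|\le|S|$ and hence dominates $S$) is also fine.

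The gap is that the step you single out as ``the technical heart'' --- producing $C^\dagger$ disjoint from $\reach_G(A,S)$ --- is only announced, not carried out, and you plan far heavier machinery (submodular uncrossing, vertex-by-vertex replacement preserving $|C|$) than is actually needed. The step is elementary once you observe that $N(\reach_G(A,S))\subseteq \reachn_G(A,S)\subseteq \reach_G(A,S')\cup S'$, so in fact $N(\reach_G(A,S))\subseteq T$. Consequently, if $Q$ is any $T$--$S'$ path in $G\setminus(C\setminus\reach_G(A,S))$ and $t$ is its last $T$-vertex, then $Q$ after $t$ contains no further $T$-vertex and therefore can never enter $\reach_G(A,S)$ (since entering requires stepping from $N(\reach_G(A,S))\subseteq T$), so this suffix avoids $\reach_G(A,S)$ entirely and hence avoids all of $C$, a contradiction. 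Thus $C\setminus\reach_G(A,S)$ is already an $(S',T)$-separator of size at most $|C|$ --- equivalently, a minimum $(S',T)$-separator is automatically disjoint from $\reach_G(A,S)$ --- and $\reach_G(A,S)\subseteq\reach_G(A,C^\dagger)$ is then immediate. No vertex replacement or size-preservation argument is required; this is exactly the implicit step behind the paper's terse assertion that $\reach_G(A,S)\subseteq\reach_G(A,S'')$.
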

\begin{proof}
Note that $\reachn_G(A,S) \subseteq \reach_G(A,S') \cup S'$ which implies that the set $S \cap (R_G(A,S') \cup S')$ is an $(A,S')$-separator and $\reach_G(A, S \cap (R_G(A,S') \cup S')) = \reach_G(A, S)$.
Suppose that $S'$ is not linked into $S \cap (R_G(A,S') \cup S')$, and let $S''$ be a $(S \cap (R_G(A,S') \cup S'), S')$-separator of size $|S''| < |S'|$.
Now, $S''$ is an $(A,B)$-separator and $\reach_G(A, S) \subseteq \reach_G(A, S'')$.
Therefore, an important $(A,B)$-separator that dominates $S''$ would also dominate $S$ and be smaller than $S'$, which contradicts the choice of $S'$.
\end{proof}



We also need the following observation about important separators.

\begin{lemma}
\label{lem:mini_sep_impdom}
If $S$ is a minimal $(A,B)$-separator and $S'$ an important $(A,B)$-separator that dominates $S$, then $S'$ is an $(S,B)$-separator.
\end{lemma}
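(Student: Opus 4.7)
The goal is to show that every path from $S$ to $B$ must meet $S'$. My strategy is to argue by contradiction: suppose there is an $(S,B)$-path in $G \setminus S'$ starting at some $s \in S \setminus S'$ and ending at $b \in B \setminus S'$. I will show that $s$ must itself be reachable from $A$ in $G \setminus S'$, and then concatenate to get an $(A,B)$-path avoiding $S'$, contradicting that $S'$ is an $(A,B)$-separator.

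The first step is to exploit the minimality of $S$. Minimality of an $(A,B)$-separator is equivalent to $S = \reachn_G(A,S) = (A \cap S) \cup N(\reach_G(A,S))$, since if some $s \in S$ were not of this form, then $S \setminus \{s\}$ would remain an $(A,B)$-separator. Hence every $s \in S$ either lies in $A$ or has a neighbor in $\reach_G(A,S)$.

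The second step is to combine this with the dominance hypothesis $\reach_G(A,S) \subseteq \reach_G(A,S')$. For any $s \in S$, the neighbor witnessing minimality lies in $\reach_G(A,S) \subseteq \reach_G(A,S')$, so $s$ is adjacent in $G$ to a vertex reachable from $A$ in $G \setminus S'$. If $s \notin S'$ then $s$ itself is reachable from $A$ in $G \setminus S'$, i.e., $s \in \reach_G(A,S')$; the degenerate case $s \in A \cap S$ gives the same conclusion directly. Thus
\[
S \;\subseteq\; S' \;\cup\; \reach_G(A, S').
\]

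The third step is to close the contradiction. Suppose $S'$ is not an $(S,B)$-separator; then there exist $s \in S \setminus S'$ and $b \in B \setminus S'$ together with an $s$-to-$b$ path $P$ in $G \setminus S'$. By the containment just established, $s \in \reach_G(A, S')$, so there is a path $P'$ from some $a \in A \setminus S'$ to $s$ in $G \setminus S'$. The concatenation $P' \cdot P$ is a walk from $A \setminus S'$ to $B \setminus S'$ in $G \setminus S'$, contradicting the assumption that $S'$ is an $(A,B)$-separator.

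\textbf{Expected obstacle.} There is no real obstacle; the only subtle point is the first step, translating minimality of $S$ into the structural identity $S = (A \cap S) \cup N(\reach_G(A,S))$, which is the workhorse that lets dominance transfer reachability information from $S$ to $S'$. Notice that importance of $S'$ is never used beyond the packaging of the dominance hypothesis, so the statement could in principle be weakened to any $(A,B)$-separator $S'$ with $\reach_G(A,S) \subseteq \reach_G(A,S')$.
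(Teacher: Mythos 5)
Your proof is correct and takes essentially the same route as the paper's: both establish $S \subseteq \reach_G(A,S') \cup S'$ via the minimality identity $S = \reachn_G(A,S)$ combined with dominance $\reach_G(A,S) \subseteq \reach_G(A,S')$, and then conclude by noting any $S$-to-$B$ path would yield an $A$-to-$B$ path avoiding $S'$. Your write-up is somewhat more explicit, in particular spelling out the final path-concatenation step, which the paper dismisses as "clearly holds."
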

\begin{proof}
The lemma clearly holds if all vertices of $S$ are in $R_G(A, S') \cup S'$.
Suppose there is a vertex $v \in S \setminus (R_G(A, S') \cup S')$.
However, now because $\reachn_G(A, S) \subseteq R_G(A', S') \cup S'$, we would have that $\reachn_G(A, S) \subset S$ implying that $S$ is not a minimal $(A,B)$-separator.
\end{proof}

The following bound on the important separators was given implicitly by~\cite{DBLP:journals/algorithmica/ChenLL09} and explicitly in~\cite{marx-razgon-stoc2011-multicut}.

\begin{lemma}[\cite{DBLP:journals/algorithmica/ChenLL09,marx-razgon-stoc2011-multicut}]
\label{lem:impsep4k}
For any two sets $A,B \subseteq V(G)$, there are at most $4^k$ important $(A,B)$-separators of size at most $k$, and they can be enumerated in time $4^k k^{\OO(1)} m$ and polynomial space.
\end{lemma}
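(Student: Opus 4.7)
The plan is to use the standard branching algorithm of Chen, Liu, and Lu~\cite{DBLP:journals/algorithmica/ChenLL09}, whose analysis was made explicit in Marx and Razgon~\cite{marx-razgon-stoc2011-multicut} and appears in the textbook treatment of~\cite[Chapter~8]{cygan2015parameterized}. The recursion will enumerate important $(A,B)$-separators of size at most $k$ while maintaining the potential $\mu = 2k - \flow_G(A,B)$; since an important separator has size at least $\flow_G(A,B)$, we terminate with an empty output whenever $\flow_G(A,B) > k$, and in particular the potential is non-negative throughout.

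At each recursive call, I would first compute by max-flow the unique minimum $(A,B)$-separator $S_0$ closest to $A$ (the one minimizing $\reach_G(A,S_0)$), whose existence and uniqueness follow from submodularity of the min-cut function. I would then select a vertex $v \in S_0$ adjacent to $A \cup \reach_G(A,S_0)$, with appropriate base-case handling for degenerate situations (such as $S_0 = \emptyset$ or $S_0 \subseteq A$), and branch on whether $v$ belongs to the important separator being enumerated. In the \emph{include} branch, recurse on $(G \setminus \{v\}, A, B, k-1)$ and append $v$ to each returned separator; in the \emph{exclude} branch, recurse on $(G, A \cup \{v\}, B, k)$. Correctness requires showing that the important $(A,B)$-separators of size at most $k$ not containing $v$ are in bijection with the important $(A \cup \{v\}, B)$-separators of size at most $k$, which is a standard consequence of the closest-min-cut characterization of importance.

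The central step is to show both branches strictly decrease $\mu$ by at least one, bounding the recursion tree by $2^{2k} = 4^k$ leaves and hence the number of important separators by $4^k$. The include branch is straightforward: $k$ drops by one, and $\flow_{G \setminus \{v\}}(A,B) \ge \flow_G(A,B) - 1$ since removing a single vertex lowers the min cut by at most one, so $\mu$ decreases by at least one. The exclude branch is the main obstacle; here the adjacency condition on $v$ and the fact that $v$ lies on the closest-to-$A$ min cut $S_0$ are used to argue $\flow_G(A \cup \{v\}, B) > \flow_G(A,B)$ via a submodularity argument showing that any hypothetical $(A \cup \{v\}, B)$-min-cut of the old size would yield an $(A,B)$-min-cut strictly closer to $A$ than $S_0$, contradicting the choice of $S_0$.

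For the $4^k k^{\OO(1)} m$ running time, rather than computing each max-flow from scratch, the algorithm maintains the current flow across recursive calls and augments it by at most one augmenting path of length $\OO(m)$ per step, so each node of the recursion tree costs only $k^{\OO(1)} m$ work; polynomial space follows immediately from the $\OO(k)$ recursion depth.
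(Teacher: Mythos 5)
The paper does not actually prove \Cref{lem:impsep4k} (it cites~\cite{DBLP:journals/algorithmica/ChenLL09,marx-razgon-stoc2011-multicut}), but the two adjacent lemmas whose proofs it does give---\Cref{lem:imp_sep_hit} and \Cref{lem:impsepfk}---use exactly the branching scheme you are describing, so your reconstruction can be checked against those. The gap is in your choice of $S_0$: you take the minimum $(A,B)$-cut \emph{closest to $A$}, i.e.\ the one minimizing $\reach_G(A, S_0)$. That is backwards. The correct pivot, which the paper isolates in \Cref{lem:imp_sep_uniq} and uses throughout, is the unique minimum-size \emph{important} $(A,B)$-separator, i.e.\ the min-cut that \emph{maximizes} $\reach_G(A,\cdot)$ (closest to $B$). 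With your choice, the exclude-branch progress claim $\flow_G(A\cup\{v\},B)>\flow_G(A,B)$ simply fails, and the intended contradiction (``would yield an $(A,B)$-min-cut strictly closer to $A$ than $S_0$'') goes in the wrong direction.

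Concretely, take $A=\{a_1,a_2\}$, $B=\{b\}$, with edges $a_1u,\ a_2u,\ ub$. The size-$1$ $(A,B)$-min-cuts are $\{u\}$ and $\{b\}$, with $\reach(A,\{u\})=\{a_1,a_2\}\subsetneq\reach(A,\{b\})=\{a_1,a_2,u\}$, so your $S_0=\{u\}$ and $v=u$ (which is adjacent to $A$). But $\flow(A\cup\{u\},B)=1=\flow(A,B)$ because $\{b\}$ (and indeed $\{u\}$ itself) still separates, so the potential $\mu=2k-\flow$ does not drop in the exclude branch and the $4^k$ bound is unsupported. Moreover the hypothetical $(A\cup\{u\},B)$-min-cut $\{b\}$ is an $(A,B)$-min-cut \emph{farther} from $A$ than $S_0$, not closer, so no contradiction arises. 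To fix this, use $S_0$ as in \Cref{lem:imp_sep_uniq} and, following the paper's proofs of \Cref{lem:imp_sep_hit,lem:impsepfk}, recurse the exclude branch on a source set of the form $S_0\cup N(v)$ (or $\reach(A,S_0)\cup S_0\cup\{v\}$) rather than merely $A\cup\{v\}$; the fact that $v$ then has a neighbour on the $B$-side of $S_0$ is what forces the flow to strictly increase. The remaining ingredients of your write-up---the include branch, the flow-maintenance trick for the running time, and polynomial space from $\OO(k)$ recursion depth---are fine.
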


We will prove more bounds about important separators.
For this, the basic tool will be the following property of important separators of minimum size given by Marx~\cite{DBLP:journals/tcs/Marx06}.

\begin{lemma}[\cite{DBLP:journals/tcs/Marx06}]
\label{lem:imp_sep_uniq}
For any two sets $A,B \subseteq V(G)$, there exists exactly one important $(A,B)$-separator $S$ of size $|S| = \flow(A,B)$, and moreover for all important $(A,B)$-separators $S'$ it holds that $\reach_G(A,S) \subseteq \reach_G(A,S')$.
\end{lemma}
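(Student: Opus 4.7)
The plan is to prove both parts via the submodularity (in fact, posimodularity) of the vertex-cut function. For any set $X$ with $A \subseteq X \subseteq V(G) \setminus B$, define $d(X) = |N_G(X) \setminus X|$; note that the set $\reach_G(A, S) \cup S$ for an $(A,B)$-separator $S$ is exactly the kind of $X$ whose boundary $N_G(X) \setminus X$ is a minimal $(A,B)$-separator, and $d$ is submodular, i.e., $d(X_1) + d(X_2) \geq d(X_1 \cup X_2) + d(X_1 \cap X_2)$, by a standard counting/alternating-paths argument.

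First I would prove existence and uniqueness of the minimum-size important $(A,B)$-separator. Let $\lambda = \flow_G(A,B)$. By \Cref{lem:impsepbasic} applied to any minimum $(A,B)$-separator, at least one important $(A,B)$-separator of size $\lambda$ exists. Suppose $S_1$ and $S_2$ are two important $(A,B)$-separators of size $\lambda$, and set $R_i = \reach_G(A, S_i)$; then $d(R_1) = d(R_2) = \lambda$. Since $A \subseteq R_1 \cap R_2$ and $B$ is disjoint from $R_1 \cup R_2$, both $d(R_1 \cap R_2) \geq \lambda$ and $d(R_1 \cup R_2) \geq \lambda$ as min-cut lower bounds. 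Submodularity then forces equality, so in particular $d(R_1 \cup R_2) = \lambda$, giving a min $(A,B)$-separator $S^\cup$ with $\reach_G(A, S^\cup) \supseteq R_1$. Applying \Cref{lem:impsepbasic} to $S^\cup$ yields an important $(A,B)$-separator $S''$ with $|S''| \leq \lambda$ and $\reach_G(A, S'') \supseteq R_1 \cup R_2$. Since $S_1$ is important with $|S_1| = \lambda$, the inclusion $R_1 \subseteq \reach_G(A, S'')$ cannot be strict, so $R_1 \cup R_2 \subseteq R_1$, and symmetrically $R_1 = R_2$. Because a minimal $(A,B)$-separator is determined by its reach side (namely $S_i = N_G(R_i)$, using minimality obtained from the definition of important separator), we conclude $S_1 = S_2$.

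Next I would prove the ``moreover'' claim. Let $S$ be the unique important $(A,B)$-separator of size $\lambda$ with reach $R = \reach_G(A,S)$, and let $S'$ be any important $(A,B)$-separator with reach $R'$. Applying submodularity to $R$ and $R'$:
\[
\lambda + |S'| = d(R) + d(R') \geq d(R \cup R') + d(R \cap R') \geq d(R \cup R') + \lambda,
\]
using $d(R \cap R') \geq \lambda$ since $R \cap R'$ contains $A$ and is disjoint from $B$. Hence $d(R \cup R') \leq |S'|$, so the separator $N_G(R \cup R') \setminus (R \cup R')$ is an $(A,B)$-separator of size at most $|S'|$ whose reach contains $R'$. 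Invoking \Cref{lem:impsepbasic} once more gives an important $(A,B)$-separator $S''$ of size at most $|S'|$ with $\reach_G(A, S'') \supseteq R \cup R' \supseteq R'$; because $S'$ is important, $R'$ is maximal among reaches of separators of size $\leq |S'|$, so $R \cup R' = R'$, i.e., $R \subseteq R'$, as required.

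The main obstacle is the bookkeeping between a separator and its associated reach set: throughout the argument one must pass freely between $S$, the set $R = \reach_G(A,S)$, and the ``boundary'' $N_G(R) \setminus R$, and verify that submodularity of $d$ indeed applies because the required $X$'s contain $A$ and avoid $B$. Once this correspondence is set up, both parts of the lemma follow from two applications of submodularity followed by one invocation of \Cref{lem:impsepbasic} to promote the resulting min cut to an important separator.
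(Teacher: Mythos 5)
Your argument uses the standard submodularity of the vertex-boundary function, which is the same technique as the cited proof of Marx~\cite{DBLP:journals/tcs/Marx06}; the overall structure --- two applications of submodularity, each followed by \Cref{lem:impsepbasic} to promote the resulting minimum cut to an important separator and then invoking the importance of $S_1$ (resp.\ of $S'$) to force the reaches to coincide --- is correct. The bookkeeping issue you flag at the end is, however, genuine and should not be left hanging: the identity $d(R_i) = |S_i|$ requires $A \subseteq R_i$, which fails whenever $A \cap S_i \neq \emptyset$, a case that \Cref{def:impsep} (and the paper's notion of minimal separator) permits. The standard fix is to introduce a new super-source $s$ adjacent to every vertex of $A$ and apply submodularity to the sets $X_i = \{s\} \cup R_i$ in the augmented graph: for any minimal $(A,B)$-separator $S_i$ one then has $N(\{s\} \cup R_i) = (A \setminus R_i) \cup N_G(R_i) = (A \cap S_i) \cup N_G(R_i) = S_i$, the sets $X_i$ contain $s$ and are disjoint from $B$, and submodularity of $|N(\cdot)|$ on that lattice gives exactly the inequalities you use. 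Relatedly, your opening parenthetical ``the set $\reach_G(A,S) \cup S$ is exactly the kind of $X$ whose boundary is a minimal $(A,B)$-separator'' is off: the boundary of $\reach_G(A,S) \cup S$ lies on the $B$-side of $S$, not on $S$ itself; the intended set is $\reach_G(A,S)$ (augmented by $s$ after the fix above), and your later computations do correctly use $R_i$ rather than $R_i \cup S_i$, so this is a local slip rather than a structural flaw.
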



We will next show that there exists a set of size at most $k$ that intersects every important $(A,B)$-separator of size at most $k$, i.e., a \emph{hitting set} for important $(A,B)$-separators of size at most $k$, and that such a hitting set can be computed efficiently.
To the best of our knowledge this is a novel lemma about important separators, though its proof is only a small variant of the proof of \Cref{lem:impsep4k}.

\begin{lemma}[Important separator hitting set lemma]
\label{lem:imp_sep_hit}
There is an algorithm that given two sets $A,B \subseteq V(G)$ and an integer $k$, in time $k^{\OO(1)} m$ outputs a set $H$ of size at most $k$ so that $H$ intersects every non-empty important $(A,B)$-separator of size at most $k$.
\end{lemma}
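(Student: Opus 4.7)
The plan is to construct $H$ iteratively by a procedure closely analogous to the branching in the proof of \Cref{lem:impsep4k}. Initialize $H \leftarrow \emptyset$ and $A^* \leftarrow A$. In each iteration, first compute $\lambda \leftarrow \flow_G(A^*, B)$; if $\lambda > k$, return $H$ (so no important $(A^*,B)$-separator of size at most $k$ exists). Otherwise, by \Cref{lem:imp_sep_uniq} there is a unique important $(A^*, B)$-separator $S^*$ with $|S^*| = \lambda \le k$. Pick any vertex $v \in S^*$, add it to $H$, and update $A^* \leftarrow A^* \cup \reach_G(A^*, S^*) \cup \{v\}$.

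First I would bound $|H|$ by showing that each iteration strictly increases $\flow_G(A^*, B)$ by a maxflow--mincut augmenting-path argument. The update effectively moves $v$ from the separator $S^*$ into the source set $A^*$, and by considering the residual graph of a maximum flow from $A^*$ to $B$ one can exhibit an additional augmenting path starting at $v$, so the new maximum flow is at least $\lambda + 1$. Since $\flow_G(A^*, B)$ is bounded above by $k$ before termination, the algorithm halts within $k$ iterations, giving $|H| \le k$.

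Next I would prove correctness by induction: for every non-empty important $(A, B)$-separator $S'$ with $|S'| \le k$ and $S' \cap H = \emptyset$, the set $S'$ remains an important $(A^*, B)$-separator throughout the iterations. Applying \Cref{lem:imp_sep_uniq} to the current instance $(A^*_{\text{old}}, B)$ gives $\reach_G(A^*_{\text{old}}, S^*) \subseteq \reach_G(A^*_{\text{old}}, S')$; since $v \in S^*$ has a neighbor in this common reach and $v \notin S'$, we also get $v \in \reach_G(A^*_{\text{old}}, S')$. Consequently $A^*_{\text{new}} \setminus S' \subseteq \reach_G(A^*_{\text{old}}, S')$, so $S'$ still separates $A^*_{\text{new}}$ from $B$. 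Preservation of minimality is immediate from $A^*_{\text{old}} \subseteq A^*_{\text{new}}$, and preservation of importance follows because any candidate dominating separator for $S'$ in the new instance would, by the same containment, already have dominated $S'$ in the old instance, contradicting the inductive hypothesis. At termination $\flow_G(A^*, B) > k$, which precludes the existence of any unhit $(A^*,B)$-separator of size at most $k$, so $H$ is a valid hitting set.

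The main obstacle is establishing the strict flow increase: in degenerate cases where $v \in A^* \cup B$, naively adding $v$ to $A^*$ can leave the old cut $S^*$ valid and the flow unchanged, so the algorithm must be carefully tuned (for instance, by preferring $v \in S^* \setminus A^*$ whenever possible, and separately arguing that the remaining case immediately triggers termination because the same $S^*$ reappears in the next iteration already hit by $H$). The running-time bound $k^{\OO(1)} m$ then follows from performing one standard maximum-flow computation and one computation of the unique minimum important separator (via \Cref{lem:imp_sep_uniq}) per iteration, over at most $k$ iterations.
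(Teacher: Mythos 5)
Your high-level plan matches the paper's (iterate on the unique minimum-size important separator, add one vertex per round, argue via domination that unhit important separators persist), but your update rule breaks the claim that the flow strictly increases, which the whole argument hinges on. With $A^*_{\text{new}} = A^*_{\text{old}} \cup \reach_G(A^*_{\text{old}}, S^*) \cup \{v\}$, the new source is still entirely contained in $\reach_G(A^*_{\text{old}}, S^*) \cup S^*$. In this paper's convention an $(X,Y)$-separator is allowed to contain vertices of $X$, and since $v \in S^*$ you have $A^*_{\text{new}} \setminus S^* \subseteq \reach_G(A^*_{\text{old}}, S^*)$, which is already disconnected from $B \setminus S^*$ in $G \setminus S^*$. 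Hence $S^*$ itself remains an $(A^*_{\text{new}}, B)$-separator of the same size, so $\flow_G(A^*_{\text{new}}, B) = \flow_G(A^*_{\text{old}}, B)$: there is no new augmenting path from $v$, and the loop makes no progress. To make the flow increase one must push the source \emph{past} $v$; the paper does this by recursing on $(S^* \cup N(v), B)$, and it is exactly the neighbor of $v$ on the $B$-side of $S^*$ (which exists because $S^*$ is a minimal separator) that forces every cut of $(S^* \cup N(v), B)$ to have size $> |S^*|$. This is not a tuning detail to be deferred; it is the central combinatorial step.

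The case where $S^*$ intersects $B$ also needs its own short argument rather than the termination heuristic you sketch. When $w \in S^* \cap B$, every non-empty important $(A,B)$-separator $S'$ must contain $w$: otherwise $w$ has a neighbor in $\reach_G(A,S^*) \subseteq \reach_G(A,S')$ (\Cref{lem:imp_sep_uniq}), giving $w \in \reach_G(A,S')$ and contradicting that $S'$ separates $A$ from $B$. So one can output $\{w\}$ and stop. Merely observing that ``the same $S^*$ reappears already hit'' does not by itself certify that $H$ is a hitting set, since at that point $\flow_G(A^*, B)$ may still be at most $k$, so your stated termination argument (``flow $>k$ precludes unhit separators of size $\le k$'') does not apply.
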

\begin{proof}
When $B$ is not reachable from $A$, we can let $H$ be the empty set.
When $B$ is reachable from $A$, we show by induction that there exists such a set $H$ of size at most $\max(0, k - \flow(A,B)+1)$, which implies the lemma because in this case $\flow(A,B) \ge 1$.

This holds in the base case $k < \flow(A,B)$ because then there exists no $(A,B)$-separators of size at most $k$ so we can take $H$ as the empty set.
Now assume that $k \ge \flow(A,B)$ and that this holds when the difference of $k$ and $\flow(A,B)$ is smaller.

By \Cref{lem:imp_sep_uniq}, let $S$ be the unique important $(A,B)$-separator of size $\flow(A,B)$.
If $S$ intersects $B$, then all important $(A,B)$-separators intersect $S \cap B$ and we are done by outputting any vertex of $S \cap B$.
Otherwise, assume that $S$ does not intersect $B$ and let $v$ be any vertex $v \in S$.
All important $(A,B)$-separators of size $|S|$ intersect $v$, and all important $(A,B)$-separators $S'$ of size $|S'|>|S|$ that do not intersect $v$ have $v \in \reach_G(A,S')$ and $\reach_G(A,S') \supset \reach_G(A,S)$ and therefore are important $(S \cup N(v), B)$-separators.
We observe that $\flow(S \cup N(v), B) > \flow(A,B)$, and therefore by induction assumption we construct $H$ by taking the union of $v$ and the hitting set for important $(S \cup N(v), B)$-separators of size at most $k$.
\end{proof} 

We will also need the following bound on important separators, which is also proven by a slight variation of the proof of \Cref{lem:impsep4k}.

\begin{lemma}
\label{lem:impsepfk}
For any two sets $A,B \subseteq V(G)$, there are at most $k^{k-\flow(A,B)}$ important $(A,B)$-separators of size at most $k$.
\end{lemma}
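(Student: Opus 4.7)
The plan is to prove the bound by strong induction on the pair $(n, k-f)$ in lexicographic order, where $n = |V(G)|$ and $f = \flow(A,B)$, mirroring the branching structure used in the proof of \Cref{lem:imp_sep_hit}. The base case $k = f$ follows immediately from \Cref{lem:imp_sep_uniq}: there is a unique important $(A,B)$-separator of size $f$ and no strictly smaller $(A,B)$-separator, so the count equals $1 = k^0$. The degenerate case $f = 0$ with $A$ unable to reach $B$ similarly gives the empty set as the unique important separator.

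For the inductive step with $f \ge 1$ and $k > f$, I would let $S$ be the unique important $(A,B)$-separator of size $f$ from \Cref{lem:imp_sep_uniq} and fix an arbitrary $v \in S$. The important $(A,B)$-separators $S'$ of size at most $k$ split into two classes according to whether $v \in S'$. In the class $v \in S'$, I first verify that $S' \setminus \{v\}$ is an important $(A,B)$-separator in $G - v$ of size at most $k-1$: any putative witness $T'$ against the minimality or importance of $S' \setminus \{v\}$ in $G - v$ lifts to the witness $T' \cup \{v\}$ against the minimality or importance of $S'$ in $G$, using the identity $\reach_G(A, T' \cup \{v\}) = \reach_{G-v}(A, T')$. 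Since $\flow_{G-v}(A,B) \ge f - 1$, the outer induction hypothesis applied to $G - v$ bounds the count of such $S'$ by $(k-1)^{(k-1)-(f-1)} = (k-1)^{k-f}$. In the class $v \notin S'$, \Cref{lem:imp_sep_uniq} forces $|S'| > f$, and the argument from the proof of \Cref{lem:imp_sep_hit} shows that $S'$ is an important $(S \cup N(v), B)$-separator with $\flow(S \cup N(v), B) \ge f+1$. The inner induction hypothesis (same $n$, strictly smaller $k-f$) then bounds the count of such $S'$ by $k^{k-(f+1)} = k^{k-f-1}$.

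Summing over the two classes gives the bound $(k-1)^{k-f} + k^{k-f-1}$, and a short calculation closes the induction: using $(k-1)^{k-f} = (k-1)(k-1)^{k-f-1} \le (k-1) \cdot k^{k-f-1}$, we conclude
\begin{equation*}
(k-1)^{k-f} + k^{k-f-1} \le (k-1) \cdot k^{k-f-1} + k^{k-f-1} = k^{k-f}.
\end{equation*}
The main technical point I anticipate needing care is the $v \notin S'$ case: one must verify not only that $S'$ is a $(S \cup N(v), B)$-separator (which follows because every $A$-to-$B$ path in $G$ crosses $S \subseteq S \cup N(v)$) but also that $S'$ remains \emph{important} in the new instance and that the flow strictly increases. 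Both facts are implicitly invoked in the proof of \Cref{lem:imp_sep_hit}; the first amounts to lifting a hypothetical dominating separator of $(S \cup N(v), B)$ back to a dominating separator of $(A,B)$, while the second follows from \Cref{lem:imp_sep_uniq}, since any $(S \cup N(v), B)$-separator of size exactly $f$ would, as an $(A,B)$-separator of size $f$, be forced to equal $S$, yet $S$ does not separate $v$'s $B$-side neighbors from $B$.
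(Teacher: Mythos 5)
Your proof is correct and takes a genuinely different route from the paper's, even though both rest on the same two observations: the uniqueness of the minimum-size important separator from \Cref{lem:imp_sep_uniq}, and the fact that $S' \neq S$ becomes an important $(S \cup N(v),B)$-separator with strictly larger flow for a suitable $v \in S$, as in the proof of \Cref{lem:imp_sep_hit}. The paper inducts only on $k - \flow(A,B)$ and never modifies the graph: it charges each $S' \neq S$ to one of the $|S| \le k-1$ instances $(S \cup N(v),B)$, summing over all $v \in S$, which gives $1 + |S| \cdot k^{k-\flow(A,B)-1}$. You instead fix a single $v \in S$ in advance and split on whether $v \in S'$; the class $v \in S'$ is counted by deleting $v$ and recursing on $G-v$ with parameter $k-1$ (hence your lexicographic induction on $(n, k-\flow)$), and the class $v \notin S'$ is the paper's charging argument restricted to one witness vertex. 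Your intermediate quantity $(k-1)^{k-\flow(A,B)} + k^{k-\flow(A,B)-1}$ is in fact slightly tighter than the paper's $1 + |S| \cdot k^{k-\flow(A,B)-1}$ once $k - \flow(A,B) \ge 2$, though both land under $k^{k-\flow(A,B)}$. One minor wrinkle: in your last paragraph, the parenthetical reason you give for $S'$ being a $(S \cup N(v),B)$-separator (that $A$-to-$B$ paths cross $S$) is not the right justification; what is actually used is that $v \in \reach_G(A,S')$, together with $\reach_G(A,S) \subseteq \reach_G(A,S')$ and minimality of $S$, places $S \cup N(v)$ inside $\reach_G(A,S') \cup S'$, so any path from $S \cup N(v)$ to $B$ must cross $S'$. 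Since the paper also asserts these facts without proof, this is a cosmetic issue in your write-up rather than a gap in the argument.
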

\begin{proof}
Again, we prove the lemma by induction on $k-\flow(A,B)$.
By \Cref{lem:imp_sep_uniq}, it holds in the base case $k-\flow(A,B)=0$, so assume $k-\flow(A,B) \ge 1$ and that the lemma holds for smaller values of $k-\flow(A,B)$.

By \Cref{lem:imp_sep_uniq}, let $S$ be the unique important $(A,B)$-separator of size $|S| = \flow(A,B) < k$.
Because for any important $(A,B)$-separator $S'$ it holds that $\reach_G(A, S) \subseteq \reach_G(A, S')$, we observe that for every important $(A,B)$-separator $S' \neq S$ there exists a vertex $v \in S \setminus B$ so that $S'$ is an important $(S \cup N(v), B)$-separator.
We observe that $\flow(S \cup N(v), B) > \flow(A,B)$, and therefore by induction get that the total number of important $(A,B)$-separators of size at most $k$ is
\[1 + |S| \cdot k^{k-(\flow(A,B)+1)} \le 1 + (k-1) \cdot k^{k-\flow(A,B)} / k \le k^{k-\flow(A,B)}.\]
\end{proof}

\subsection{Pulling lemma}
\label{subsec:pull}
Next we prove a lemma that will be used throughout \Cref{sec:redu,sec:algopstw,sec:fasttw} to argue that a separator $S$ can be incorporated as a bag of a torso tree decomposition if it satisfies certain properties.
We call it the ``pulling lemma'' because the separator $S$ will be ``pulled'' along disjoint paths into a bag of the tree decomposition.
Lemmas analogous to this have been used in the context of tree decompositions for example by~\cite{bellenbaum2002two,BodlaenderK06}.

\begin{lemma}[Pulling lemma]
\label{lem:pull}
Let $G$ be a graph and $(X, (T,\bag))$ a torso tree decomposition in $G$.
Let $(A, S, B)$ be a separation of $G$ so that there exists a node $r \in V(T)$ so that $S$ is linked into $\bag(r) \cap (S \cup B)$.
There exists a torso tree decomposition $((X \cap A) \cup S, (T', \bag'))$ so that
\begin{enumerate}
\item $T' = T$
\item for all $t \in V(T)$, $|\bag'(t)| \le |\bag(t)|$, and
\item $S \subseteq \bag'(r)$.
\end{enumerate}
Moreover, when $G$, $(X, (T,\bag))$, $(A,S,B)$, and $r$ are given as inputs, the torso tree decomposition $((X \cap A) \cup S, (T', \bag'))$ can be constructed in $k^{\OO(1)} (|V(T)| + m)$ time, where $k$ is the width of $(X, (T,\bag))$.
\end{lemma}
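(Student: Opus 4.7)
The plan is to realize the new torso tree decomposition $((X \cap A) \cup S, (T', \bag'))$ by ``pulling'' the separator $S$ into the bag at $r$ along $|S|$ vertex-disjoint paths. Setting $Z := \bag(r) \cap (S \cup B)$ and invoking Menger's theorem on the linkedness hypothesis produces $|S|$ vertex-disjoint paths $P_1, \ldots, P_{|S|}$ from $S$ to $Z$, computable in $k^{\OO(1)} m$ time via max-flow. A short argument shows these paths must stay inside $S \cup B$: an excursion into $A$ would have to re-enter $S \cup B$ through some internal $S$-vertex $s'$ distinct from the path's start, but the $|S|$ paths must use all of $S$ as distinct starts, so $s'$ would also have to start another path, contradicting vertex-disjointness. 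Write $s_i \in S$ for the start of $P_i$ and $z_i \in Z$ for its terminal vertex.

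With $T' := T$, I would define
\[
\bag'(t) \;:=\; (\bag(t) \cap A) \;\cup\; \{\, s_i \mid V(P_i) \cap \bag(t) \neq \emptyset \,\}.
\]
The size bound $|\bag'(t)| \le |\bag(t)|$ follows from disjointness: since the $V(P_i)$ are pairwise disjoint and lie in $S \cup B$, the index set $\{i : V(P_i) \cap \bag(t) \neq \emptyset\}$ has size at most $|\bag(t) \cap (S \cup B)|$, and adding $|\bag(t) \cap A|$ gives at most $|\bag(t)|$. The containment $S \subseteq \bag'(r)$ is witnessed by $z_i \in \bag(r) \cap V(P_i)$ for each $i$. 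The vertex condition and the connectedness condition of the new tree decomposition are likewise straightforward: vertices of $X \cap A$ occupy exactly the same bags as in $(T,\bag)$, and for each $s_i \in S$ the subtree $\{t : s_i \in \bag'(t)\} = \{t : \bag(t) \cap V(P_i) \neq \emptyset\}$ is connected by \Cref{lem:indsub} applied to the connected subgraph $G[V(P_i)]$.

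The main obstacle is the edge condition for $\torso_G(X')$ with $X' := (X \cap A) \cup S$. Fix an edge $uv \in E(\torso_G(X'))$ realized by a path $Q$ in $G$ whose internal vertices lie in $V(G) \setminus X' = (A \setminus X) \cup B$. If $u, v \in X \cap A$, then $Q$ cannot reach $B$ (that would force an internal $S$-vertex, which is forbidden), so $Q$ stays in $A$ with internal vertices in $A \setminus X$, giving $uv \in E(\torso_G(X))$ and reducing to the original edge condition. If $u, v \in S$, both already lie in $\bag'(r)$. The remaining case $u \in X \cap A$, $v = s_i \in S$ is the technical heart of the proof: the same reasoning forces $Q$ to stay in $A \cup S$ with internal vertices in $A \setminus X$, and I would concatenate $Q$ with $P_i$ into a walk from $u$ to $z_i$ in $G$. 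Letting $x_1$ be the first vertex of $X$ encountered on this walk strictly after $u$, one has $x_1 \in V(P_i) \cap X$, and the sub-walk from $u$ to $x_1$ has all internal vertices outside $X$, whence $u x_1 \in E(\torso_G(X))$. The original edge condition then yields a bag $\bag(t)$ containing both $u$ and $x_1$, and by the definition of $\bag'$ this bag contains both $u$ and $s_i$, as required. Combining the max-flow with a single pass over $V(T)$ to assemble the bags gives the claimed $k^{\OO(1)}(|V(T)| + m)$ running time.
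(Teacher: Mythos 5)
Your proof is correct and follows essentially the same route as the paper's: Menger gives the disjoint paths, the bags are defined identically (your $\bag(t)\cap A$ equals the paper's $\bag(t)\setminus(S\cup B)$ since $\bag(t)\subseteq X$), and the size bound, vertex condition, connectedness (via \Cref{lem:indsub}), and the first two edge-condition cases all match. The only minor divergence is in the last edge-condition case ($u\in X\cap A$, $v=s_i\in S$): the paper identifies the component $C$ of $G\setminus X$ containing $s_i$ and uses that $N(C)$ is a clique in $\torso_G(X)$ to find a common bag, whereas you concatenate the realizing path $Q$ with $P_i$, take the first $X$-vertex $x_1$ after $u$, and deduce $ux_1\in E(\torso_G(X))$ directly; the two arguments are equivalent in substance. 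You also explicitly justify why the Menger paths can be taken inside $S\cup B$, a step the paper asserts without proof, which is a welcome addition.
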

\begin{proof}
Index the vertices of $S$ by $S = \{s_1, s_2, \ldots, s_{|S|}\}$.
Because $S$ is linked into $\bag(r) \cap (S \cup B)$, there are vertex-disjoint paths $P_1, \ldots, P_{|S|}$, so that for each $i \in [|S|]$, $P_i$ is a path from $s_i$ to $\bag(r) \cap (S \cup B)$, and all vertices of $P_i$ are contained in $S \cup B$.

To construct $(T', \bag')$, we set $T' = T$, and for each $t \in V(T)$ we set
\[\bag'(t) = (\bag(t) \setminus (S \cup B)) \cup \{s_i \mid P_i \cap \bag(t) \neq \emptyset\}.\]
We have that $|\bag'(t)| \le |\bag(t)|$, because for each inserted vertex $s_i$ we removed a vertex in $P_i$ (note that the inserted vertex and the removed vertex could both be the same vertex $s_i$).
By definition every $P_i$ intersects $\bag(r)$, and thus $S \subseteq \bag'(r)$.
Denote $X' = (X \cap A) \cup S$.
It remains to show that $(T', \bag')$ is a tree decomposition of $\torso(X')$.

First, the tree decomposition $(T', \bag')$ satisfies the vertex condition because no vertices in $X \cap A$ were removed, and as argued before $S \subseteq \bag'(r)$.
Second, $(T', \bag')$ satisfies the connectedness condition because the occurrences of vertices in $X \cap A$ were not altered, and by \Cref{lem:indsub} the sets $\{t \mid P_i \cap \bag(t) \neq \emptyset\}$ induce connected subtrees of $T$.

For the edge condition, consider an edge $uv \in E(\torso(X'))$.
There is a path between $u$ and $v$ whose intermediate vertices are contained in $V(G) \setminus X'$.
If there would be an intermediate vertex in $B$, then $u,v \in S$, implying $\{u,v\} \subseteq \bag'(r)$, so it remains to consider the cases where there are no intermediate vertices or all intermediate vertices are in $A \setminus X' = A \setminus X$.
It follows that if in this case $u,v \in X$, then $uv \in E(\torso(X))$, so the edge condition of $(T', \bag')$ in this case holds by the edge condition of $(T, \bag)$.
Also if $u,v \in S$, then again $\{u,v\} \subseteq \bag'(r)$, so the remaining case is $uv = s_i v$, where $s_i \in S \setminus X$ and $v \in X \setminus S$.
Now, $s_i$ and the intermediate vertices on the path between $s_i$ and $v$ are in a connected component $C$ of $G \setminus X$.
Because $v \in X$ and $\bag(r) \subseteq X$, this implies that $N(C)$ contains both $v$ and at least one vertex on the path $P_i$, and therefore as $N(C)$ is a clique in $\torso(X)$ there is a node $t \in V(T)$ with $N(C) \subseteq \bag(t)$ and it will hold that $\{s_i, v\} \subseteq \bag'(t)$.

Because $(T,\bag)$ has width $k$ and $|S| \le k+1$, the construction clearly can be implemented in $k^{\OO(1)} (|V(T)| + m)$ time.
\end{proof}

Note that the condition $|\bag'(t)| \le |\bag(t)|$ implies that the width of $(T', \bag')$ is at most the width of $(T,\bag)$.

\section{Computing treewidth by \stw}
\label{sec:redu}
In this section we show that in order to improve a tree decomposition, it is sufficient to solve \stw.
In particular, we prove \Cref{the:stweximpl,the:stwapximpl}.

\subsection{Improving a tree decomposition}
\label{subsec:itimprtd}
We will define a weighted version of linkedness.
For a weight function $d : V(G) \rightarrow \mathbb{Z}$ and a set $S \subseteq V(G)$, we denote $d(S) = \sum_{v \in S} d(v)$.

\begin{definition}[$d$-linked]
\label{def:dlinked}
Let $G$ be a graph, $A,B\subseteq V(G)$, and $d : V(G) \rightarrow \mathbb{Z}$ a weight function.
The set $A$ is $d$-linked into $B$ if for any $(A,B)$-separator $S$ it holds either that $|S|>|A|$, or that $|S|=|A|$ and $d(S) \ge d(A)$.
\end{definition}

Note that if $A$ is $d$-linked into $B$ then $A$ is linked into $B$.
We say that an $(A,B)$-separator $S$ with $|S|<|A|$, or with $|S|=|A|$ and $d(S) < d(A)$ \emph{witnesses} that $A$ is not $d$-linked into $B$.
Then, we say that a torso tree decomposition $(X,(T,\bag))$ is $d$-linked into a set of vertices $W \subseteq V(G)$ if for every node $t \in V(T)$ it holds that $\bag(t)$ is $d$-linked into $W$.
We say that a pair $(t, S)$, where $t \in V(T)$ and $S$ is a $(\bag(t),W)$-separator witnessing that $\bag(t)$ is not $d$-linked into $W$ witnesses that $(X,(T,\bag))$ is not $d$-linked into $W$.

Our goal is to show that any torso tree decomposition that covers $W$ can be made to be $d$-linked into $W$.
In particular, we will show that if $(X, (T, \bag))$ is a torso tree decomposition that covers $W$, then given a pair $(t,S)$ that witnesses that $(X,(T,\bag))$ is not $d$-linked into $W$, we can, in some sense, improve $(X, (T, \bag))$ while maintaining that it covers $W$ and not increasing its width.
We define $\phi_d(X) = |X| \cdot n (k+1) + d(X)$ as the measure in which sense we will improve $(X, (T, \bag))$.

\begin{lemma}
\label{lem:make_d_linkedv2}
There is an algorithm that takes as input a graph $G$, a set of vertices $W \subseteq V(G)$, a torso tree decomposition $(X, (T, \bag))$ in $G$ of width $k$ that covers $W$, a weight function $d : V(G) \rightarrow [n]$, and a pair $(t,S)$ that witnesses that $(X,(T,\bag))$ is not $d$-linked into $W$, and in time $k^{\OO(1)} (|V(T)| + m)$ returns a torso tree decomposition $(X', (T', \bag'))$ that covers $W$, has width at most $k$, has at most $|V(T)|$ nodes, and has $\phi_d(X') < \phi_d(X)$.
\end{lemma}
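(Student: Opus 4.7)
The plan is to apply the pulling lemma (\Cref{lem:pull}) with $r = t$ to a separator extracted from the witness, and read off the decrease of $\phi_d$ from the size/weight comparison between the old bag at $t$ and the new separator. Throughout, let $f := \flow_G(\bag(t), W)$; since $S$ is a $(\bag(t), W)$-separator with $|S| \le |\bag(t)|$, we have $f \le |\bag(t)|$.

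I split into two cases. If $f < |\bag(t)|$, compute by max-flow any minimum $(\bag(t), W)$-separator $S^\star$; then $|S^\star| = f < |\bag(t)|$. If $f = |\bag(t)|$, the given $S$ is itself a minimum $(\bag(t), W)$-separator (since $f \le |S| \le |\bag(t)| = f$), and the witness condition forces $d(S) < d(\bag(t))$; in this case set $S^\star := S$. In either case $S^\star$ is a minimum $(\bag(t), W)$-separator, so by Menger's theorem $S^\star$ is linked into $\bag(t)$. I then form the separation $(A, S^\star, B)$ with $A = \reach_G(W, S^\star)$ and $B = V(G) \setminus (A \cup S^\star)$, so that $W \subseteq A \cup S^\star$ and $\bag(t) \subseteq S^\star \cup B$. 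Applying \Cref{lem:pull} with this separation and $r = t$ yields a torso tree decomposition $(X', (T', \bag')) = ((X \cap A) \cup S^\star, (T, \bag'))$ of width at most $k$ with $|V(T')| = |V(T)| \le |V(T)|$; coverage $W \subseteq X'$ is immediate from $W \subseteq X \cap (A \cup S^\star) \subseteq X'$.

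For the strict drop of $\phi_d$: since $\bag(t) \subseteq X$ and $\bag(t) \cap A = \emptyset$, we have $\bag(t) \subseteq X \setminus A$, so $|X \setminus A| \ge |\bag(t)|$, while $|X' \setminus A| = |S^\star|$. In the first case $|S^\star| < |\bag(t)|$ gives $|X'| < |X|$, and since $|S^\star| \le k+1$ and the $d$-values lie in $[n]$, the coefficient $n(k+1)$ of the size term in $\phi_d$ dominates any change in the weight term, forcing $\phi_d(X') < \phi_d(X)$. In the second case $|X'| \le |X|$; if the inequality is strict we conclude as above. Otherwise $|X'| = |X|$ combined with $|X' \setminus A| = |S^\star| = |\bag(t)|$ and $\bag(t) \subseteq X \setminus A$ forces $X \setminus A = \bag(t)$, so $X = (X \cap A) \cup \bag(t)$ and $X' = (X \cap A) \cup S$ are disjoint unions, yielding $d(X') - d(X) = d(S) - d(\bag(t)) < 0$ and again $\phi_d(X') < \phi_d(X)$.

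For the running time, the max-flow in the first case runs in $k^{\OO(1)} m$ time (flow value at most $k+1$), the reachable set $A$ is computed in $\OO(m)$ time, and \Cref{lem:pull} itself runs in $k^{\OO(1)} (|V(T)| + m)$ time, giving the claimed bound. The main subtlety is the second case: the improvement comes from the weight rather than the size, and it is crucial to use that when $|X'| = |X|$ the symmetric difference $X \setminus X'$ and $X' \setminus X$ coincide exactly with $\bag(t)$ and $S$, which is what allows the witness inequality $d(S) < d(\bag(t))$ to translate directly into $d(X') < d(X)$.
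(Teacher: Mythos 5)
Your proof is correct and follows essentially the same approach as the paper's: replace the witness separator by a minimum $(\bag(t),W)$-separator, apply the pulling lemma with the separation $(A, S^\star, B)$ where $A = \reach_G(W, S^\star)$ and $r = t$, and read off the drop in $\phi_d$ from the comparison $|S^\star|$ vs $|\bag(t)|$ (and $d(S^\star)$ vs $d(\bag(t))$). The paper handles the second case slightly more directly via the bounds $|X'| \le |X| - |\bag(t)| + |S^\star|$ and $d(X') \le d(X) - d(\bag(t)) + d(S^\star)$, avoiding the explicit disjoint-union decomposition, but both arguments are sound.
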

\begin{proof}
After a $k^{\OO(1)} m$ time flow computation we may assume that $S$ is a minimum size $(\bag(t), W)$-separator, because if $S$ was not a minimum size $(\bag(t), W)$-separator then any minimum size $(\bag(t), W)$-separator also witnesses that $\bag(t)$ is not $d$-linked into $W$.
This implies that $S$ is linked into $\bag(t)$.

Let $A = \reach_G(W, S)$ and $B = V(G) \setminus (A \cup S)$.
Note that $W \subseteq A \cup S$ and $\bag(t) \subseteq B \cup S$.
Denote $X' = (X \cap A) \cup S$.
We apply the pulling lemma (\Cref{lem:pull}) with the torso tree decomposition $(X, (T, \bag))$, the separation $(A, S, B)$, and the node $t$ to construct a torso tree decomposition $(X', (T', \bag'))$ of width at most $k$ and at most $|V(T)|$ nodes.
As $W \subseteq X$ and $W \subseteq A \cup S$, we have that $W \subseteq X'$, so $(X', (T', \bag'))$ covers $W$.
It remains to prove that $\phi_d(X') < \phi_d(X)$.

Because $\bag(t) \subseteq S \cup B$ and $\bag(t) \subseteq X$, we have that $|X'| \le |X| - |\bag(t)| + |S|$ and $d(X') \le d(X) - d(\bag(t)) + d(S)$.
Therefore, if $|S|<|\bag(t)|$, then $|X'|<|X|$, implying $\phi_d(X') < \phi_d(X)$ because $d(S) < n(k+1)$.
If $|S|=|\bag(t)|$ and $d(S) < d(\bag(t))$, then $|X'| \le |X|$ and $d(X') < d(X)$,  implying $\phi_d(X') < \phi_d(X)$.
\end{proof}

Then, our goal is to show that either a torso tree decomposition $(X, (T_X,\bag_X))$ of width $k-1$ that covers a largest bag $W$ of a tree decomposition $(T, \bag)$ of width $k$ can be used to improve $(T, \bag)$, or we find a pair $(t,S)$ witnessing that $(X, (T_X,\bag_X))$ is not $d$-linked into $W$ for a certain function $d$, in which case we can improve $(X, (T_X,\bag_X))$ by applying \Cref{lem:make_d_linkedv2}.

Let $(T, \bag)$ be a tree decomposition of $G$ and $r \in V(T)$ a designated root-node of it.
For a vertex $v \in V(G)$, let $f_v \in V(T)$ be the node of $T$ with $v \in \bag(f_v)$ that has the smallest distance to the root $r$ in $T$ among all nodes whose bags contain $v$, that is, $f_v$ is the forget-node of $v$.
We define a weight function $d_{(T, \bag, r)} : V(G) \rightarrow [|V(T)|]$ for a vertex $v \in V(G)$ as the distance from $f_v$ to $r$ plus one.
Next we prove the main lemma of this section.

\begin{lemma}
\label{lem:imprmainv2}
Let $(T, \bag)$ be a tree decomposition of $G$ of width $k$, and $r$ a node of $(T, \bag)$ with $\bag(r) = W$ with $|W| = k+1$.
There is an algorithm that given a torso tree decomposition $(X, (T_X, \bag_X))$ that covers $W$ and has width at most $k-1$, in time $k^{\OO(1)} (|V(T)|+|V(T_X)|+m)$ either
\begin{enumerate}
\item \label{lem:imprmainv2:good} constructs a tree decomposition of $G$ of width at most $k$, having strictly less bags of size $k+1$ than $(T,\bag)$, and having at most $n$ nodes, or
\item \label{lem:imprmainv2:bad} returns a pair $(t, S)$ where $t \in V(T_X)$ and $S \subseteq V(G)$ that witnesses that $(X, (T_X, \bag_X))$ is not $d_{(T, \bag, r)}$-linked into $W$.
\end{enumerate}
\end{lemma}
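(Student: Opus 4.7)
The plan is to execute the construction of $(T'',\bag'')$ sketched in the overview, check whether $(T'',\bag'')$ already improves on $(T,\bag)$, and, if it does not, read off a witness of non-$d$-linkedness from the obstruction. Concretely, root $(T,\bag)$ at $r$ and, for every connected component $C$ of $G\setminus X$, build the tree decomposition $(T_C,\bag_C)$ of $G[N[C]]$ by copying $T$ and setting $\bag_C(t_C)=(\bag(t)\cap N[C])\cup\{v\in N(C)\mid v\text{ appears in some bag strictly below }t\text{ in }T\}$; then glue every $r_C$ into $(T_X,\bag_X)$ at a bag containing $N(C)$, which exists since $N(C)$ is a clique of $\torso_G(X)$. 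A routine check using the connectedness condition of $(T,\bag)$ makes $(T'',\bag'')$ a tree decomposition of $G$. With a careful implementation that only instantiates bags where they differ from $\bag(t)$, this construction together with the checks below fits in the claimed $k^{\OO(1)}(|V(T)|+|V(T_X)|+m)$ budget. If $(T'',\bag'')$ has width at most $k$ and strictly fewer bags of size $k+1$ than $(T,\bag)$, apply \Cref{lem:tdlinear} to trim it to at most $n$ nodes and return it as the first alternative of the lemma.

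Otherwise, because $(T_X,\bag_X)$ has width $k-1$ and every $\bag_C(r_C)=N(C)$ has size at most $k$, the offending bag lies in some $(T_C,\bag_C)$ at a node $t\neq r$. Counting bags of size $k+1$ while noting that $\bag(r)=W$ of size $k+1$ has been replaced by smaller bags $N(C)$ forces at least one of two cases: (A) some $t,C$ satisfy $|\bag_C(t_C)|>|\bag(t)|$, or (B) some $t$ with $|\bag(t)|=k+1$ has two distinct components $C_1,C_2$ with $|\bag_{C_1}(t_{C_1})|=|\bag_{C_2}(t_{C_2})|=k+1$. In either case, with $L_t$ the set of vertices appearing strictly below $t$ in $T$ and $R=V(G)\setminus N[C]$, the set $S=(N(C)\setminus L_t)\cup(\bag(t)\cap R)$ is a $(N(C),W)$-separator, exactly as sketched in the overview. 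Pick any $t^{*}\in V(T_X)$ with $N(C)\subseteq\bag_X(t^{*})$ and set $S^{*}=S\cup(\bag_X(t^{*})\setminus N(C))$; a direct verification shows that $S^{*}$ is a $(\bag_X(t^{*}),W)$-separator of size $|\bag_X(t^{*})|-|N(C)|+|S|$, and the pair $(t^{*},S^{*})$ will be the output.

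In Case~(A), the hypothesis $|\bag_C(t_C)|>|\bag(t)|$ rewrites as $|\bag(t)\setminus N[C]|<|N(C)\cap L_t|$, giving $|S|<|N(C)|$ and hence $|S^{*}|<|\bag_X(t^{*})|$: a size-based witness. In Case~(B), $|S|=|N(C)|$ and $|S^{*}|=|\bag_X(t^{*})|$, so I must produce a weight-based witness. The decisive observation is that every vertex of $\bag(t)\setminus N[C]$ has forget-node at or above $t$ (and thus $d$-weight at most $\mathrm{dist}_T(t,r)+1$), while every vertex of $N(C)\cap L_t$ has forget-node strictly below $t$ (and thus $d$-weight at least $\mathrm{dist}_T(t,r)+2$). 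As $|\bag(t)\setminus N[C]|=|N(C)\cap L_t|$ in Case~(B), provided this common value is at least $1$ we obtain $d(S)<d(N(C))$ and therefore $d(S^{*})<d(\bag_X(t^{*}))$.

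The main obstacle is the boundary subcase of Case~(B) in which $N(C_1)\cap L_t=N(C_2)\cap L_t=\emptyset$ for both components. I rule it out structurally: in that situation $\bag(t)\subseteq N[C_1]\cap N[C_2]$, and since $C_1\cap C_2=\emptyset$ while $N(C_i)\subseteq X$ and $C_j\cap X=\emptyset$ for $i\neq j$, this reduces to $\bag(t)\subseteq N(C_1)\cap N(C_2)$. But $N(C_1)$ is a clique of $\torso_G(X)$, so $\bag(t)$ is a clique of $\torso_G(X)$ and must fit into some bag of $(T_X,\bag_X)$, contradicting $|\bag(t)|=k+1>k$. This structural exclusion, together with the forget-node weight comparison, is the reason the lemma must be phrased in terms of $d$-linkedness rather than ordinary linkedness.
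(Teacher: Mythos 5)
Your proposal follows the same route as the paper (the construction of $(T'',\bag'')$, the detection of an obstructing bag, and the extraction of a $d$-linkedness witness all match; your cases (A) and (B) are exactly the trichotomy of the paper's \Cref{lem:width_constrv2}, and your exclusion of the boundary subcase of (B) via ``$\bag(t)$ would be a clique of $\torso_G(X)$'' is the same argument the paper runs via ``$\bag_{C_1}(t)=\bag_{C_2}(t)=\bag(t)$ contradicts $\bag(t)\cap C_2\neq\emptyset$''). Two points need to be tightened. First, your $L_t$ (``vertices appearing strictly below $t$'') is ambiguous, and it must mean vertices whose \emph{forget-node} is strictly below $t$, i.e.\ vertices appearing \emph{only} in bags strictly below $t$. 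Your weight argument in Case (B) (``every vertex of $N(C)\cap L_t$ has forget-node strictly below $t$'') tacitly uses this; but under the looser reading (appearing in \emph{some} bag strictly below $t$), a vertex $v\in N(C)\cap\bag(t)$ that reappears below $t$ would land in $N(C)\cap L_t$, the identity $|\bag_C(t_C)|=|\bag(t)\cap N[C]|+|N(C)\cap L_t|$ would fail, and $S^*$ would not separate $v$ from $W$ (indeed $v$ could itself lie in $W$). This is the same pitfall the paper avoids by defining $t^{N(C)}$ via forget-nodes. Second, you build $T_C$ as a full copy of $T$ for every component $C$ of $G\setminus X$, which gives $\Omega(n\cdot|V(T)|)$ nodes before pruning; the phrase ``only instantiate bags that differ from $\bag(t)$'' does not obviously recover the budget, because $\bag_C(t)$ can differ from $\bag(t)$ at many nodes $t$ with $\bag(t)\cap C=\emptyset$. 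The paper sidesteps this by taking $T_C = T[\{t:\bag(t)\cap C\neq\emptyset\}]$, which makes $\sum_C|V(T_C)|\le(k+1)|V(T)|$; your construction would need to observe that nodes with $\bag(t)\cap C=\emptyset$ have $\bag_C(t)\subseteq N(C)=\bag_C(r_C)$ and hence can be contracted away before accounting. (A minor point: $|S^*|=|\bag_X(t^*)|-|N(C)|+|S|$ should be $\le$, since $\bag(t)\setminus N[C]$ and $\bag_X(t^*)\setminus N(C)$ may overlap; this does not affect the conclusion, as only the relevant inequalities on $|S^*|$ and $d(S^*)$ are needed.)
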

\begin{proof}
We treat $(T,\bag)$ as rooted on the node $r$.
Our goal is to construct a tree decomposition $(T', \bag')$ of $G$, and then show that if it does not satisfy the conditions of \Cref{lem:imprmainv2:good}, then we find the pair $(t,S)$ of \Cref{lem:imprmainv2:bad}.

First, for every connected component $C$ of $G \setminus X$, we will construct a tree decomposition $(T_C, \bag_C)$ of $N[C]$, so that $N(C)$ is in the root bag of $(T_C, \bag_C)$.
We again use $f_v \in V(T)$ to denote the forget-node of $v$ in $(T, \bag)$.
For a node $t \in V(T)$, denote by $t^{N(C)}$ the vertices 
\[t^{N(C)} = \{v \in N(C) \mid f_v \text{ is a strict descendant of } t \text{ in } T\}.\]
To construct the tree decomposition $(T_C, \bag_C)$, we first set
\[T_C = T[\{t \in V(T) \mid C \cap \bag(t) \neq \emptyset\}],\]
i.e., $T_C$ is the subtree of $T$ induced by bags that intersect $C$.
Observe that $T_C$ is connected because $G[C]$ is connected.
Then for each $t \in V(T_C)$ we set
\[\bag_C(t) = (\bag(t) \cap N[C]) \cup t^{N(C)}.\]
We let the root node of $(T_C, \bag_C)$ to be the node $r_C \in V(T_C)$ that is the closest to $r$ in $T$.
Note that because $T_C$ is a connected subtree of $T$, the node $r_C$ is uniquely defined.

\begin{claim}
\label{lem:is_tdv2}
It holds that $(T_C, \bag_C)$ is a tree decomposition of $N[C]$ and $N(C) \subseteq \bag_C(r_C)$.
\end{claim}
\begin{proof}
First, for the vertices $C$ and edges in $G[C]$ the decomposition clearly satisfies the vertex and edge conditions because $(T, \bag)$ satisfied the conditions.
For edges between $C$ and $N(C)$ and vertices in $N(C)$, note that again each such edge must be in a bag that intersects $C$, and because for every vertex of $N(C)$ there exists such an edge we have that every vertex of $N(C)$ must occur in some bag that intersects $C$.
The decomposition satisfies the connectedness condition for vertices in $C$ directly by the connectedness condition of $(T,\bag)$.

For vertices $v \in N(C)$, either (1) $v \in \bag(r_C)$ and $v$ is not in $t^{N(C)}$ for any $t \in V(T_C)$, or (2) $f_v \in V(T_C) \setminus \{r_C\}$ and $v \in t^{N(C)}$ for all $t$ on the path from the parent of $f_v$ to the root $r_C$.
Therefore, the connectedness condition is maintained for vertices in $N(C)$.
This also shows that $N(C) \subseteq \bag_C(r_C)$, which finally implies the edge condition also for edges in $G[N(C)]$.
\end{proof}

Now, our complete construction of $(T', \bag')$ is to attach the tree decompositions $(T_C, \bag_C)$ for all components $C$ of $G \setminus X$ from their roots to the tree decomposition $(T_X, \bag_X)$.
Because $N(C)$ is a clique in $\torso(X)$, the decomposition $(T_X, \bag_X)$ contains a bag containing $N(C)$ to which $(T_C, \bag_C)$ can be attached.

Next we show that this construction can be implemented in $k^{\OO(1)} (|V(T)|+|V(T_X)|+m)$ time.
In particular, first, the connected components $C$ and their neighborhoods can be found in $k^{\OO(1)} m$ time.
Then, we observe that the sum of $|V(T_C)|$ over all components $C$ is at most $(k+1) |V(T)|$ because $(T, \bag)$ has width $k$ and the components $C$ are disjoint.
By first computing pointers from vertices of $G$ to bags containing them, and then using the fact that $|N(C)| \le k+1$, each tree decomposition $(T_C, \bag_C)$ can be constructed in $k^{\OO(1)} |V(T_C)|$ time, which sums up to $k^{\OO(1)} |V(T)|$.
Then, it remains to attach each tree decomposition $(T_C, \bag_C)$ to a node of $(T_X, \bag_X)$ whose bag contains $N(C)$.
For this, observe that if we consider $(T_X, \bag_X)$ rooted, and for $v \in N(C)$ denote by $f^X_v$ the forget-node of $v$ in $(T_X, \bag_X)$, then $N(C)$ is contained in the bag of the node $f^X_v$ for $v \in N(C)$ such that $f^X_v$ maximizes the distance from the root.

Next we give the main argument for extracting the witness of \Cref{lem:imprmainv2:bad} if $(T', \bag')$ does not satisfy \Cref{lem:imprmainv2:good}.

\begin{claim}
\label{lem:width_constrv2}
Let $C$ be a component of $G \setminus X$ and $x \in V(T_X)$ a node of $(T_X, \bag_X)$ with $N(C) \subseteq \bag_X(x)$.
For every node $t \in V(T_C)$ we have either that
\begin{enumerate}
\item \label{lem:width_constrv2:good} $|\bag_C(t)| < |\bag(t)|$ or $\bag_C(t) = \bag(t)$, or that
\item $(\bag_X(x) \setminus t^{N(C)}) \cup (\bag(t) \setminus N[C])$ witnesses that $\bag_X(x)$ is not $d_{(T, \bag, r)}$-linked into $W$.
\end{enumerate}
\end{claim}
\begin{proof}
\Cref{lem:width_constrv2:good} is true if $t^{N(C)}$ is empty, so suppose $t^{N(C)}$ is non-empty and $|\bag_C(t)| \ge |\bag(t)|$.
By the definition of $\bag_C(t)$ this implies that $|t^{N(C)}| \ge |\bag(t) \setminus N[C]|$.
Note that $t^{N(C)} \subseteq \bag_X(x)$.
We will show that in this case 
\[S = (\bag_X(x) \setminus t^{N(C)}) \cup (\bag(t) \setminus N[C])\] separates $\bag_X(x)$ from $W$.
Therefore $S$ witnesses that $\bag_X(x)$ is not $d_{(T, \bag, r)}$-linked into $W$, because by $|t^{N(C)}| \ge |\bag(t) \setminus N[C]|$ we have that $|S| \le |\bag_X(x)|$, and moreover we have $d_{(T, \bag, r)}(S) < d_{(T, \bag, r)}(\bag_X(x))$ because for every vertex $v_1 \in t^{N(C)}$ and $v_2 \in \bag(t)$, it holds that $d_{(T, \bag, r)}(v_1) > d_{(T, \bag, r)}(v_2)$ because $f_{v_1}$ is a strict descendant of $t$, and $f_{v_2}$ is an ancestor of $t$.

To show that $S$ separates $\bag_X(x)$ from $W$, it is sufficient to show that it separates $t^{N(C)}$ from $W$ because $\bag_X(x) \setminus S = t^{N(C)}$.
Consider a shortest path in $G \setminus S$ that starts in $t^{N(C)}$ and ends in $W$.
If this path would intersect $N[C]$ anywhere else than in its first vertex, then it would intersect $t^{N(C)}$ twice because $N(C) \setminus S = t^{N(C)}$ and  $W \cap C = \emptyset$, which would contradict that it is a shortest path.
Therefore, it intersects $N[C]$ only in its first vertex.
Then, because for each $v \in t^{N(C)}$ the node $t \in V(T)$ separates $f_v$ from $r$ in $T$, it holds that $\bag(t)$ separates $t^{N(C)}$ from $\bag(r) = W$.
Therefore, the path must intersect $\bag(t)$, and therefore as $\bag(t)$ and $t^{N(C)}$ are disjoint, it must intersect $\bag(t) \setminus N[C]$.
However, $\bag(t) \setminus N[C] \subseteq S$, and therefore no such path exists in $G \setminus S$.
\end{proof}

Now, for all nodes of the constructed decompositions $(T_C, \bag_C)$ we check if \Cref{lem:width_constrv2:good} of \Cref{lem:width_constrv2} holds, and if it does not hold we return the pair $(x, (\bag_X(x) \setminus t^{N(C)}) \cup (\bag(t) \setminus N[C]))$.
This can be done in $k^{\OO(1)} |V(T')| = k^{\OO(1)} |V(T)|$ time.

Then, it remains to prove that if \Cref{lem:width_constrv2:good} of \Cref{lem:width_constrv2} holds for all nodes of all decompositions $(T_C, \bag_C)$, then $(T', \bag')$ has width at most $k$ and has strictly less bags of size $k+1$ than $(T,\bag)$.
First, clearly $(T', \bag')$ has width at most $k$ as none of the decompositions $(T_C, \bag_C)$ have larger width than $(T,\bag)$ and $(T_X, \bag_X)$ has smaller width than $(T,\bag)$.
It remains to prove that $(T', \bag')$ has less bags of size $k+1$ than $(T, \bag)$.

Consider any node $t \in V(T)$, and suppose that there are two distinct components $C_1$ and $C_2$ of $G \setminus X$ so that both $C_1$ and $C_2$ intersect $\bag(t)$ and $|\bag_{C_1}(t)| = |\bag_{C_2}(t)| = |\bag(t)|$.
Now, by \Cref{lem:width_constrv2:good} of \Cref{lem:width_constrv2} it would hold that $\bag_{C_1}(t) = \bag_{C_2}(t) = \bag(t)$.
However, as $\bag_{C_1}(t) \subseteq N[C_1]$, this would contradict that $\bag(t)$ intersects $C_2$.
Therefore, for any node $t \in V(T)$
there is at most one corresponding node $t$ in the decompositions $(T_C, \bag_C)$ across all components $C$ with a bag of size $|\bag_C(t)| = |\bag(t)|$.
For the root node $r$, as $\bag(r) \subseteq X$, none of the components $C$ intersect $\bag(r)$, and therefore no decomposition $(T_C, \bag_C)$ contains a node corresponding to it.
All other bags of $(T', \bag')$ come from $(T_X, \bag_X)$ and have size at most $k$, so as $|\bag(r)| = k+1$, it follows that $(T', \bag')$ has strictly less bags of size $k+1$ than $(T, \bag)$.

Finally, by \Cref{lem:tdlinear} we can reduce the number of nodes of $(T', \bag')$ to at most $n$ within the same time.
\end{proof}

Then, we combine \Cref{lem:make_d_linkedv2,lem:imprmainv2} into a single lemma showing that to improve $(T,\bag)$ it is sufficient to find a torso tree decomposition in $G$ that covers a largest bag of $(T,\bag)$ and has width smaller than $(T,\bag)$.

\begin{lemma}
\label{lem:main_impr_lem}
Let $(T, \bag)$ be a tree decomposition of $G$ of width $k$ and $|V(T)| \le n$, and $r$ a node of $(T, \bag)$ with $\bag(r) = W$ with $|W| = k+1$.
There is an algorithm that given a torso tree decomposition $(X, (T_X, \bag_X))$ that covers $W$ and has width at most $k-1$, in time $k^{\OO(1)}(|V(T_X)|+n^3)$ constructs a tree decomposition of $G$ of width at most $k$, having strictly less bags of size $k+1$ than $(T, \bag)$, and having at most $n$ nodes.
\end{lemma}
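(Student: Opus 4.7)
The plan is to iteratively refine the torso tree decomposition $(X, (T_X, \bag_X))$ by alternating \Cref{lem:imprmainv2} and \Cref{lem:make_d_linkedv2}, until the former succeeds in case~\ref{lem:imprmainv2:good}. As a one-off preprocessing step I would apply \Cref{lem:tdlinear} to $(T_X, \bag_X)$ so that throughout the loop we may assume $|V(T_X)| \le n$; this costs $k^{\OO(1)} |V(T_X)|$ time and alters neither the width nor the vertex set $X$. The loop body is then: call \Cref{lem:imprmainv2} on $(T,\bag)$, $r$, $W$, and the current $(X, (T_X, \bag_X))$. If it returns case~\ref{lem:imprmainv2:good}, output the tree decomposition it produces and halt. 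Otherwise it returns a pair $(t, S)$ witnessing that $(X, (T_X, \bag_X))$ is not $d_{(T,\bag,r)}$-linked into $W$; feed this witness into \Cref{lem:make_d_linkedv2} to obtain a new torso tree decomposition of width $\le k-1$ that covers $W$, has at most $|V(T_X)| \le n$ nodes, and has strictly smaller $\phi_{d_{(T,\bag,r)}}$. Replace $(X, (T_X, \bag_X))$ with this new torso tree decomposition and iterate.

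Correctness and termination follow essentially for free from the specifications of the two lemmas. The invariant that $(X, (T_X, \bag_X))$ is a width-$(\le k-1)$ torso tree decomposition covering $W$ with $|V(T_X)| \le n$ is preserved by \Cref{lem:make_d_linkedv2}, while the integer potential $\phi_{d_{(T,\bag,r)}}(X)$ is bounded below by $0$ and strictly decreases each iteration, so the loop must exit through a case~\ref{lem:imprmainv2:good} output. By \Cref{lem:imprmainv2} that output is a tree decomposition of $G$ of width $\le k$, has strictly fewer bags of size $k+1$ than $(T,\bag)$, and has at most $n$ nodes, which is exactly what the lemma asks for.

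The main obstacle is the running time analysis. Each iteration costs $k^{\OO(1)}(n+m)$ because both $|V(T)|$ and $|V(T_X)|$ are at most $n$. Since $d_{(T,\bag,r)}(v) \le |V(T)| \le n$ we have $\phi_{d_{(T,\bag,r)}}(X) \le \OO(kn^2)$, which by itself only yields an $\OO(kn^2)$ iteration bound and a total of $k^{\OO(1)}(n^2 m)$ — too weak. To reach $k^{\OO(1)}(|V(T_X)| + n^3)$ I would exploit the fact, explicit in the proof of \Cref{lem:make_d_linkedv2}, that $|X|$ is non-increasing throughout the loop: the iterations split into at most $n$ steps in which $|X|$ strictly decreases, separated by stretches in which $|X|$ is constant and $d(X)$ strictly decreases. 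Amortizing the $\OO(kn)$ units of $d$-value that a single $|X|$-decrease can reintroduce against the subsequent $|X|$-constant steps caps the total iteration count at $\OO(n^2)$, and charging each iteration's $k^{\OO(1)}(n+m)$ cost to removals of vertex-weight units from $X$ then matches the claimed $k^{\OO(1)}(|V(T_X)|+n^3)$ bound.
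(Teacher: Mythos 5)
Your overall scheme is the same as the paper's: preprocess $(T_X,\bag_X)$ with \Cref{lem:tdlinear}, then alternate \Cref{lem:imprmainv2} and \Cref{lem:make_d_linkedv2} until the former succeeds, using $\phi_{d_{(T,\bag,r)}}$ as a decreasing integer potential. The correctness and termination argument is fine.

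The running-time analysis, however, contains a genuine misstep. You declare the direct bound of $\OO(kn^2)$ iterations at $k^{\OO(1)}(n+m)$ per iteration ``too weak'' because you treat $m$ as potentially quadratic in $n$. But the hypotheses of the lemma already rule this out: $(T,\bag)$ is a tree decomposition of $G$ of width $k$, so the treewidth of $G$ is at most $k$, hence $|E(G)|\le kn$ and $m=k^{\OO(1)}n$. With that, the direct bound immediately gives $\OO(kn^2)\cdot k^{\OO(1)}(n+m)=k^{\OO(1)}n^3$, and no amortization is needed — this is exactly what the paper does.

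The amortization you propose also does not deliver what you claim. Tracking $|X|$ as non-increasing is correct, but the budget one obtains is: initial $d(X)\le n^2$, plus at most $n$ many $|X|$-decrease steps each of which can raise $d(X)$ by up to $d(S)\le(k+1)n$, for a total of $n^2+n\cdot\OO(kn)=\OO(kn^2)$ units of $d$-value to be consumed. So the amortization caps the iteration count at $\OO(kn^2)$, the same figure as the naive potential bound, not $\OO(n^2)$. And even if it did give $\OO(n^2)$ iterations, the per-iteration cost still involves $m$, so the perceived $n^2m$ problem would persist; the only thing that resolves it is the observation $m=k^{\OO(1)}n$, which you should make explicitly.
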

\begin{proof}
First, we apply \Cref{lem:tdlinear} to reduce the number of nodes of $(T_X, \bag_X)$ to at most $n$.
Then, we repeatedly apply \Cref{lem:imprmainv2} together with \Cref{lem:make_d_linkedv2}, in particular, if \Cref{lem:imprmainv2} returns the tree decomposition of \Cref{lem:width_constrv2:good} we are done, and if it returns a pair $(t,S)$ that witnesses that $(X,(T_X,\bag_X))$ is not $d_{(T,\bag,r)}$-linked into $W$ then we apply \Cref{lem:make_d_linkedv2}, which decreases $\phi_{d_{(T,\bag,r)}}(X)$ by at least one.
Because $\phi_{d_{(T,\bag,r)}}(X)$ is initially $\OO(kn^2)$ and $\phi_{d_{(T,\bag,r)}}(X)$ must be non-negative, the total number of iterations is at most $\OO(kn^2)$, giving a total running time of $k^{\OO(1)} n^3$, plus $k^{\OO(1)}|V(T_X)|$ from the application of \Cref{lem:tdlinear}.
\end{proof}

\subsection{Reducing treewidth to \stw}
Now we can prove \Cref{the:stweximpl}, in particular that algorithms for \stw imply algorithms for treewidth (for definition of \stw see \Cref{subsec:overredustw}).
Recall that the running time function $T(k)$ is assumed to be increasing on $k$ and we denote $m = |V(G)|+|E(G)|$.

\eximpltheorem*
\begin{proof}
Let $G$ denote the input graph.
First, we use the 4-approximation algorithm of \cite{RobertsonS-GMXIII} to obtain a tree decomposition $(T, \bag)$ of $G$ of width at most $4k+3$ in time $2^{\OO(k)} n^2$ and polynomial space or to return that the treewidth of $G$ is larger than $k$.
By \Cref{lem:tdlinear}, within the same running time we assume that $|V(T)| \le n$, and we can also assume that $m \le \OO(kn)$ because otherwise the treewidth of $G$ would be larger than $k$.

Then, we repeat the following process as long as the width of $(T, \bag)$ is larger than $k$.
Let $W$ be a largest bag of $(T,\bag)$, and note that in this case $|W| \ge k+2$.
We use the algorithm for \stw to either get a torso tree decomposition that covers $W$ and has width at most $|W|-2$ or to 
conclude that the treewidth of $G$ is larger than $|W|-2 \ge k$.
If we conclude that the treewidth of $G$ is larger than $k$ we are ready and can immediately return.
If the algorithm returns such a torso tree decomposition, we apply \Cref{lem:main_impr_lem} to improve $(T,\bag)$, in particular to decrease the number of bags of size $|W|$ and not increase the width.

We can decrease the number of largest bags while not increasing the width at most $\OO(kn)$ times before the width decreases from $4k+3$ to $k$, and therefore the algorithm works with $\OO(kn)$ applications of the algorithm for \stw and \Cref{lem:main_impr_lem}.
In all of the applications, the parameter $k$ for \stw is at most $4k+2$, where $k$ is the original parameter for treewidth.
This results in a total running time of $T(\OO(k)) \cdot \OO((nk)^{c+1}) + k^{\OO(1)} n^4 + 2^{\OO(k)} n^2$.
\end{proof}

We then turn to \Cref{the:stwapximpl}, in particular, to proving that algorithms for \pstw imply approximation algorithms for treewidth (for the definition of \pstw see \Cref{subsec:overredustw}).
The crucial lemma for this will the following.

\begin{lemma}
\label{lem:wpartapx}
Let $G$ be a graph of treewidth at most $k$, $\varepsilon \in (0,1)$ a rational, and $W \subseteq V(G)$ a set of vertices of size $|W| \le 4k+4$.
There exists a partition of $W$ into $t = \OO(1/\varepsilon)$ parts $W_1, \ldots, W_t$, so that after making each part into a clique the treewidth of $G$ is at most $k+\varepsilon k$.
\end{lemma}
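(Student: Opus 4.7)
The plan is to start from a tree decomposition $(T,\bag)$ of $G$ of width at most $k$, partition $W$ according to a DFS ordering of forget-nodes in $T$, and then construct an augmented tree decomposition showing that making each part into a clique costs only $\OO(\varepsilon k)$ additional width. By standard transformations we may assume $T$ is rooted at some node and has maximum degree at most $3$. For each $v\in V(G)$, let $f_v\in V(T)$ be its forget-node. Fix a DFS order of $V(T)$ from the root and list the vertices of $W$ as $v_1,v_2,\ldots,v_{|W|}$ by DFS order of $f_{v_i}$, breaking ties arbitrarily among vertices sharing a forget-node. Choose $s=\max(1,\lfloor \varepsilon k/c\rfloor)$ for a sufficiently large absolute constant $c$, and partition $W$ into consecutive groups $W_i=\{v_{(i-1)s+1},\ldots,v_{\min(is,|W|)}\}$ for $i\in[t]$ where $t=\lceil|W|/s\rceil$. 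Since $|W|\le 4k+4$, a direct computation yields $t=\OO(1/\varepsilon)$.

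For each $W_i$, let $T_i$ be the minimal subtree of $T$ containing the forget-nodes $\{f_v:v\in W_i\}$, and define a new bag function by $\bag'(r)=\bag(r)\cup\bigcup_{i:\,r\in T_i}W_i$. The next step is to verify that $(T,\bag')$ is a valid tree decomposition of the graph $G'$ obtained from $G$ by making every $W_i$ into a clique. The vertex and edge conditions follow immediately because $\bag'(r)\supseteq\bag(r)$ and because $W_i\subseteq\bag'(r)$ for every $r\in T_i$, while the connectedness condition for each $v\in W_i$ follows because the original occurrences of $v$ in $(T,\bag)$ and the connected subtree $T_i$ both contain $f_v$, so their union is a connected subtree of $T$.

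The main obstacle, and the heart of the proof, is to bound $|\bag'(r)|\le(1+\varepsilon)k+1$ for every $r\in V(T)$. Let $A(r)=|\{i:r\in T_i\}|$ and split these $i$ into \emph{Case A}, when $r$ is the forget-node of some vertex of $W_i$, and \emph{Case B}, when $r$ is a pure Steiner point on $T_i$. Letting $q=|\{v\in W:f_v=r\}|$, the $q$ relevant vertices are DFS-consecutive by our tie-breaking rule, so they span at most $\lceil q/s\rceil+1$ groups, which bounds the Case A contribution. For Case B, the forget-nodes of $W_i$ must lie in at least two different components of $T\setminus\{r\}$, which forces the contiguous DFS-range of $W_i$ to straddle a boundary between two such components; since $T$ has maximum degree at most $3$, there are only $\OO(1)$ such boundaries at $r$, and each lies in the interior of the range of at most one $W_i$. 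Thus $A(r)\le \lceil q/s\rceil+\OO(1)$. Using that $W_1,\ldots,W_t$ partition $W$, so $\sum_i|W_i\cap\bag(r)|=|W\cap\bag(r)|$, and that $q\le|W\cap\bag(r)|$ since vertices with $f_v=r$ all lie in $\bag(r)$, we obtain
\[
|\bag'(r)|-|\bag(r)|\le\sum_{i:\,r\in T_i}|W_i\setminus\bag(r)|\le A(r)\cdot s-|W\cap\bag(r)|\le q+\OO(s)-|W\cap\bag(r)|\le\OO(s).
\]
Choosing $c$ sufficiently large makes $\OO(s)\le \varepsilon k$, yielding $|\bag'(r)|\le(k+1)+\varepsilon k$ so that the width of $(T,\bag')$ is at most $(1+\varepsilon)k$, which completes the plan. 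The critical cancellation is the term $q-|W\cap\bag(r)|\le 0$, which absorbs the potentially large contribution of Case A and is the one nontrivial step to get right.
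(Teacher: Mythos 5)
Your proof takes a genuinely different route from the paper's. The paper converts $(T,\bag)$ into a (modified) nice tree decomposition, processes it in post-order, and greedily cuts off a chunk $D$ each time the pending set of $W$-forget-nodes reaches size $\varepsilon k/2$, adding the corresponding part $W'$ to every bag in $D$; since each bag lies in exactly one chunk $D$, the width bound $|W'|\le\varepsilon k$ per bag is immediate, and the chunking also directly gives the $\OO(1/\varepsilon)$ bound on the number of parts. You instead order $W$ by a DFS traversal of forget-nodes, cut the ordered list into consecutive blocks of size $\approx\varepsilon k$, and for each block add it to the bags along its Steiner tree $T_i$; the $\OO(1/\varepsilon)$ bound on $t$ is then trivial, but the width bound needs the more delicate counting argument (Case A/B split plus the cancellation against $q$). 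Both routes are valid; the paper's construction avoids the Steiner-tree bookkeeping entirely, while yours needs less preprocessing (only degree $\le 3$, not a full nice tree decomposition) and makes the partition explicit up front.

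There is, however, a genuine error in the middle of your width estimate. You claim
$\sum_{i:\,r\in T_i}|W_i\setminus\bag(r)|\le A(r)\cdot s-|W\cap\bag(r)|$,
which is equivalent to $\sum_{i:\,r\in T_i}|W_i\cap\bag(r)|\ge|W\cap\bag(r)|$, and this is false: a vertex $v\in W\cap\bag(r)$ whose group's Steiner tree does not pass through $r$ (for instance when $f_v$ is a strict ancestor of $r$ that lies off $T_i$ together with the rest of that group's terminals) contributes to the right-hand side but not the left, so the displayed right-hand side can even be negative while the left-hand side is $0$. What you can actually guarantee is the weaker bound $\sum_{i:\,r\in T_i}|W_i\cap\bag(r)|\ge q$: each of the $q$ vertices with $f_v=r$ lies in $\bag(r)$ and has $f_v=r$ a terminal of its own group's Steiner tree, so $r\in T_i$ for that group. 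Replacing $|W\cap\bag(r)|$ by $q$ throughout your chain yields $A(r)\cdot s-q\le\lceil q/s\rceil\cdot s+\OO(s)-q\le\OO(s)$, which is exactly the cancellation you intended, so the proof goes through once this step is corrected.
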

\begin{proof}
If $\varepsilon < 1/k$ we can return the trivial partition of $W$ into single vertices.
Therefore we can assume that $\varepsilon k \ge 1$.

Consider a rooted tree decomposition $(T, \bag)$ of $G$ of width $k$.
By turning $(T, \bag)$ into a ``nice tree decomposition'', we can assume that the root bag of $(T,\bag)$ is empty, each node of $T$ has at most two children, and that $|\bag(u) \setminus \bag(v)| + |\bag(v) \setminus \bag(u)| \le 1$ holds for any two adjacent nodes $u,v \in V(T)$ (see e.g.~\cite[Chapter~7]{cygan2015parameterized}).
Recall that a node $t \in V(T)$ with a parent $p \in V(T)$ is a forget-node of a vertex $v \in V(T)$ if $v \in \bag(t) \setminus \bag(p)$.
Respectively, such $v$ is a forget-vertex of $t$.
Note that each node of $T$ has at most one forget-vertex and each vertex of $G$ has exactly one forget-node.
By further stretching $(T,\bag)$ we can also assume that each forget-node has exactly one child.
We say that a node is a $W$-forget node if it is a forget-node of a vertex $w \in W$.

Let us process $(T, \bag)$ from the leaves towards the root, i.e., in an order of a post-order traversal, and maintain a set of ``removed'' nodes $R \subseteq V(T)$.
Suppose we are processing a node $t$ and let $D \subseteq V(T)$ be the nodes of $T$ that are descendants of $t$ and reachable from $t$ in $T \setminus R$.
Note that $t \in D$ and $D \subseteq V(T) \setminus R$.
Now, if $D$ contains at least $\varepsilon k/2$ $W$-forget-nodes or $t$ is the root we add a part to the partition of $W$ and modify the tree decomposition as follows.
We let $W' \subseteq W$ be the vertices in $W$ whose forget-nodes are in $D$.
We add $W'$ as a part of the partition, and add $W'$ to the bags of all nodes in $D$.
Then, we add all nodes in $D$ to $R$.

Observe that $|W'| \le \varepsilon k$ follows from the facts that we process the tree in post-order, each node can have at most two children, each node can be a forget-node of at most one vertex, each forget-node has one child, and $\varepsilon k \ge 1$.
Therefore, the sizes of the bags of nodes in $D$ increased by at most $\varepsilon k$, and moreover they will not increase again because they were added to $R$.
Therefore, the resulting tree decomposition has width at most $k+\varepsilon k$.
We also observe that the resulting tree decomposition is indeed a tree decomposition after making such $W'$ into a clique: All the new edges are contained in the bags of all nodes in $D$, and the subtree condition is maintained because the forget-nodes of vertices in $W'$ are in $D$.

Now, each created part of the partition except the part corresponding to the root has size at least $\varepsilon k/2$, so in total the number of parts is at most $|W|/(\varepsilon k/2)+1 \le \frac{16}{\varepsilon}+1 = \OO(1/\varepsilon)$.
\end{proof}

Now, by using \Cref{lem:wpartapx} we can prove \Cref{the:stwapximpl} similarly to \Cref{the:stweximpl}.

\apximpltheorem*
\begin{proof}
Let $G$ denote the input graph.
First, we use the 4-approximation algorithm of \cite{RobertsonS-GMXIII} to obtain a tree decomposition $(T, \bag)$ of $G$ of width at most $4k+3$ in time $2^{\OO(k)} n^2$ and polynomial space or to return that the treewidth of $G$ is larger than $k$.
By \Cref{lem:tdlinear}, within the same running time we assume that $|V(T)| \le n$, and we can also assume that $m \le \OO(kn)$ because otherwise the treewidth of $G$ would be larger than $k$.

Then, we repeat the following process as long as the width of $(T, \bag)$ is larger than $k+\varepsilon k$.
Let $W$ be a largest bag of $(T,\bag)$, and note that in this case $|W| \ge k+\varepsilon k+2$ and $|W| \le 4k+4$.
We try all partitions of $W$ into $t = \OO(1/\varepsilon)$ parts (where the bound for $t$ is from \Cref{lem:wpartapx}).
For each partition $W_1, \ldots, W_t$, we make the parts $W_1, \ldots, W_t$ into cliques in $G$, and then use the algorithm for \pstw with this partition of $W$.
By \Cref{lem:wpartapx}, there exists such a partition so that after making $W_1, \ldots, W_t$ into cliques the treewidth of $G$ is at most $k+\varepsilon k$, and therefore if the algorithm for \pstw returns for every partition that the treewidth of $G$ is larger than $|W|-2 \ge k+\varepsilon k$, we can return that the treewidth of $G$ is larger than $k$.
Otherwise, the algorithm for \pstw returned a torso tree decomposition that covers $W$ and has width at most $|W|-2$, and we proceed applying \Cref{lem:main_impr_lem} similarly as in the proof of \Cref{the:stweximpl}.

The running time follows from the fact that there are at most $t^{\OO(k)} = (1+1/\varepsilon)^{\OO(k)}$ partitions of $W$ into $t = \OO(1/\varepsilon)$ parts, and we can decrease the number of largest bags while not increasing the width at most $\OO(nk)$ times, and therefore there we use in total $\OO(nk) \cdot (1+1/\varepsilon)^{\OO(k)}$ applications of the algorithm for \pstw with $t = \OO(1/\varepsilon)$.
The parameter $k$ for \pstw is at most $4k+2$, where $k$ is the original parameter for treewidth.
\end{proof}

\section{Algorithm for \pstw}
\label{sec:algopstw}
This section is devoted to proving \Cref{the:stwpalg}, in particular to giving a $k^{\OO(kt)} nm$ time algorithm for \pstw.
We start by giving a slightly more general definition of \pstw than was given in \Cref{subsec:overredustw}.

An instance of \pstw is a triple $\ins = (G, \{W_1, \ldots, W_t\}, k)$, where $G$ is a graph, $t$ and $k$ positive integers with $t \le k+2$, and $\{W_1, \ldots, W_t\}$ is a set of $t$ terminal cliques, each $W_i \subseteq V(G)$ being a clique of size at most $k+1$ in $G$.
We denote the union of the terminal cliques by $\allw_{\ins} = \bigcup_{i=1}^t W_i$.
Note that unlike in the definition of \Cref{subsec:overredustw}, we do not require that $|\allw_{\ins}| = k+2$ here, and in particular while $|\allw_{\ins}| = k+2$ holds initially, the set $|\allw_{\ins}|$ can become larger in the recursive calls (but $t$ or $k$ will not increase).

A solution of an instance of \pstw is a torso tree decomposition $(X, (T, \bag))$ that covers $\allw_{\ins}$ and has width at most $k$.
We say that an instance $\ins$ is a \emph{yes-instance} if there exists a solution of it and a \emph{no-instance} otherwise.
Our algorithm will either return a solution or conclude that $\ins$ is a no-instance, in particular, it will not use the freedom in the definition to return that the treewidth of $G$ is larger than $k$ without concluding that $\ins$ is a no-instance.

The rest of this section is organized as follows.
In \Cref{subsec:algoinvas} we introduce a measure called \emph{flow potential} for quantifying the progress of the algorithm.
In \Cref{subsec:safesep} we give a reduction rule to simplify instances by safe separations.
In \Cref{subsec:branching} we give the two branching rules of the algorithm, and in \Cref{subsec:algotogether} we describe the algorithm and put together its correctness proof.
Finally, in \Cref{subsec:algo1tc} we analyze the running time.

\subsection{Flow potential}
\label{subsec:algoinvas}
We define the measure \emph{flow potential} that will be used for quantifying the progress of the algorithm.

Let $G$ be a graph and $X,Y \subseteq V(G)$ two sets of vertices.
If $X \subset V(G)$, then the flow potential from $X$ to $Y$, denoted by $\flp_G(X,Y)$ is the minimum order of a separation $(A,S,B)$, where $X \subseteq A \cup S$, $Y \subseteq B \cup S$, and $B \neq \emptyset$.
If $X = V(G)$, then $\flp_G(X,Y) = |X|$.
We observe that
\[\flow_G(X,Y) \le \flp_G(X,Y) \le |X|.\]
Note that $\flow_G(X,Y) < \flp_G(X,Y)$ if and only if $|X| > |Y|$, $Y$ is strictly linked into $X$, and $X$ intersects all connected components of $G \setminus Y$.
We will also use the property that if $Y_1 \subseteq Y_2$, then $\flp_G(X,Y_1) \le \flp_G(X,Y_2)$.

Let $\ins = (G, \{W_1, \ldots, W_t\}, k)$ be an instance.
For a terminal clique $W_i$, we denote by $\ow_\ins(W_i) = \bigcup_{j \in [t] \setminus \{i\}} W_j$ the union of the other terminal cliques in $\ins$.
We define the flow potential of $W_i$ in $\ins$ to be 
\[\flp_{\ins}(W_i) = \flp_G(W_i, \ow_\ins(W_i)).\]
Note that $\flp_{\ins}(W_i) \neq \flow_G(W_i, \ow_\ins(W_i))$ can hold only if $W_i$ is the unique largest terminal clique of $\ins$.

\subsection{Safe separations}
\label{subsec:safesep}
In this subsection we introduce a reduction rule based on identifying a strict separation $(A,S,B)$, making $S$ a clique and enforcing $S \subseteq X$, and then solving the different sides of $S$ independently of each other.
In particular, we show that this reduction is safe if $(A,S,B)$ satisfies certain conditions that we define next.

\begin{definition}[Safe separation]
\label{def:safesep}
Let $\ins = (G, \{W_1, \ldots, W_t\}, k)$ be an instance and $(A, S, B)$ a strict separation of $G$.
We say that $(A, S, B)$ is a safe separation of $\ins$ if there exists terminal cliques $W_a$ and $W_b$ (possibly $a=b$) so that $S$ is linked into $(A \cup S) \cap W_a$ and into $(B \cup S) \cap W_b$.
\end{definition}

We say that such terminal cliques $W_a$ and $W_b$ are the \emph{spanning terminal cliques} of the safe separation.
Note that safe separations can be classified into two types: those where $S$ is a subset of some terminal clique $W_i$ and we can take $W_a = W_b = W_i$, and those where $a \neq b$ and $S$ is a minimum size $(W_a,W_b)$-separator.
The purpose of our definition is to be a common generalization of these two types.

Next we introduce notation for reduction by safe separations.
In \Cref{sec:fasttw} this notation will also be used with separations that are not necessarily safe.

Let $\ins = (G, \{W_1, \ldots, W_t\}, k)$ be an instance and $(A, S, B)$ a separation with $|S|\le k+1$.
We denote by $\{W_1, \ldots, W_t\} \rescliqs (A,S)$ the set obtained from $\{W_1, \ldots, W_t\}$ by first removing each $W_i$ with $W_i \cap A = \emptyset$, and then inserting $S$ if no superset of $S$ is present.
Recall that $G \mcliq S$ denotes the graph obtained from $G$ by making $S$ a clique.
We define $G \rescliqs (A,S) = G[A \cup S] \mcliq S$, and then $\ins \rescliqs (A,S) = (G \rescliqs (A,S), \{W_1, \ldots, W_t\} \rescliqs (A, S), k)$.

Note that if $(A,S,B)$ is a safe separation, then $|S|\le k+1$ because $S$ is linked into a spanning terminal clique $W_a$ and $|W_a| \le k+1$.
Also, observe that if $(A,S,B)$ is a safe separation, then $\ins \rescliqs (A,S)$ has at most $t$ terminal cliques.
This is because if all terminal cliques of $\ins$ intersect $A$, then $(A,S,B)$ can be a safe separation only if the spanning terminal clique $W_b$ is a superset of $S$.

The reduction rule used in the algorithm will be that if there exists a safe separation $(A,S,B)$, then solve the instances $\ins \rescliqs (A,S)$ and $\ins \rescliqs (B,S)$ independently of each other.
If either of them returns \no, then return \no, and if both of them return a solution, denoted by $(X_A, (T_A, \bag_A))$ and $(X_B, (T_B, \bag_B))$, respectively, then return the solution obtained by combining these solutions on the separator $S$.
More formally, if $(T_A, \bag_A)$ and $(T_B, \bag_B)$ are tree decompositions that both contain a bag containing $S$, then we denote by $(T_A, \bag_A) \cup_S (T_B, \bag_B)$ the tree decomposition obtained by taking the disjoint union of $(T_A, \bag_A)$ and $(T_B, \bag_B)$ and connecting them by an edge between bags containing $S$.
Then, the combined solution is denoted by $(X_A \cup X_B, (T_A, \bag_A) \cup_S (T_B, \bag_B))$.

In the next two lemmas we show that this reduction is correct.
We start by proving that if both $\ins \rescliqs (A,S)$ and $\ins \rescliqs (B,S)$ return a solution, then the constructed solution is a solution of $\ins$.
This holds in fact for any separation $(A,S,B)$.

\begin{lemma}
\label{lem:sepredgen}
Let $\ins = (G, \{W_1, \ldots, W_t\}, k)$ be an instance and $(A,S,B)$ a separation.
If $(X_A, (T_A, \bag_A))$ is a solution of $\ins \rescliqs (A,S)$ and $(X_B, (T_B, \bag_B))$ a solution of $\ins \rescliqs (B,S)$, then $(X_A \cup X_B, (T_A, \bag_A) \cup_S (T_B, \bag_B))$ is a solution of $\ins$.
\end{lemma}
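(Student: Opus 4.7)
\medskip
\noindent\textbf{Proof plan.} The plan is to verify directly that the triple $(X_A \cup X_B,\, (T_A,\bag_A) \cup_S (T_B,\bag_B))$ meets each requirement of being a solution of $\ins$: namely, that $X_A \cup X_B$ covers $\allw_\ins$, that the combined pair is a valid tree decomposition of $\torso_G(X_A \cup X_B)$, and that its width is at most $k$. The width bound is immediate since the two input decompositions each have width at most $k$. For coverage, I would use that each $W_i$ is a clique and $(A,S,B)$ is a separation with no edges between $A$ and $B$, so $W_i \subseteq A \cup S$ or $W_i \subseteq B \cup S$; the former case means $W_i$ is retained in $\{W_1,\ldots,W_t\} \rescliqs (A,S)$, hence $W_i \subseteq X_A$, and the latter case is symmetric. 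Before the rest, I would observe that $S \subseteq X_A$ and $S \subseteq X_B$ (since in each reduced instance, either $S$ itself is inserted as a terminal clique, or a superset of $S$ already appears), and that since $S$ is a clique in $G \rescliqs (A,S)$, at least one bag of $(T_A,\bag_A)$ contains $S$; likewise for $(T_B,\bag_B)$. This ensures the $\cup_S$ operation is well-defined.

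\medskip
The connectedness condition is straightforward: every vertex $v$ lying in both $X_A$ and $X_B$ must lie in $(A \cup S) \cap (B \cup S) = S$, and the connecting bags of both decompositions contain all of $S$, so the union of the corresponding subtrees of $T_A$ and $T_B$ is connected in $T_A \cup_S T_B$. For vertices in only one side, the subtree is unchanged. The vertex condition is clear.

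\medskip
The main obstacle, and the heart of the argument, is the edge condition for $\torso_G(X_A \cup X_B)$, because this involves paths in $G$ that could in principle cross between $A$ and $B$. I would fix an edge $uv \in E(\torso_G(X_A \cup X_B))$, take a witnessing $u$--$v$ path in $G$ whose internal vertices lie in $V(G) \setminus (X_A \cup X_B)$, and observe that since $S \subseteq X_A \cap X_B$, no internal vertex lies in $S$; as $S$ separates $A$ from $B$, all internal vertices are either entirely in $A$ or entirely in $B$. Combined with $X_A \subseteq A \cup S$ and $X_B \subseteq B \cup S$, a short case check rules out the possibility $u \in X_A \setminus X_B$ and $v \in X_B \setminus X_A$ (such a path would need to cross $S$), so both endpoints lie in $X_A$ or both lie in $X_B$. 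In the first case, if the internal vertices are in $A$, the whole path sits inside $G[A \cup S] \subseteq G \rescliqs (A,S)$, giving $uv \in E(\torso_{G \rescliqs (A,S)}(X_A))$; if they are in $B$, then $u, v$ have neighbours in $B$ and therefore both lie in $S$, and since $S$ is made into a clique in $G \rescliqs (A,S)$ we again have $uv \in E(\torso_{G \rescliqs (A,S)}(X_A))$. Either way the edge sits in some bag of $(T_A,\bag_A)$. The case of both endpoints in $X_B$ is symmetric, completing the edge condition and the proof. The subtle point to get right is precisely the observation that $S \subseteq X_A \cap X_B$ forces the path to stay on one side of the separator, which is what makes the clique added by $\mcliq S$ sufficient to absorb cross-over edges.
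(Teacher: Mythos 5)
Your proposal is correct and follows essentially the same route as the paper: verify coverage, width, $X_A\cap X_B = S$, and the three tree-decomposition axioms, with the edge condition split on which side of $(A,S,B)$ the witnessing path lives. You are in fact slightly more careful than the paper's terse phrasing in the edge condition—you explicitly handle the subcase $u,v\in S$ with internal vertices on the opposite side, appealing to $S$ being a clique in $G \rescliqs(A,S)$, whereas the paper's line "the internal vertices must be in $A$" silently rolls that subcase in.
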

\begin{proof}
First, note that both $\{W_1, \ldots, W_t\} \rescliqs (A,S)$ and $\{W_1, \ldots, W_t\} \rescliqs (B,S)$ contain a superset of $S$, so both $X_A$ and $X_B$ are supersets $S$.
Also, any terminal clique in $\{W_1, \ldots, W_t\}$ is either in $\{W_1, \ldots, W_t\} \rescliqs (A,S)$, in $\{W_1, \ldots, W_t\} \rescliqs (B,S)$, or is a subset of $S$, so $X_A \cup X_B$ is a superset of $\allw_{\ins}$.

Because $S$ is a clique in both $G \rescliqs (A,S)$ and in $G \rescliqs (B,S)$, it is contained in a bag of both $(T_A, \bag_A)$ and $(T_B, \bag_B)$, and therefore the construction $(T_A, \bag_A) \cup_S (T_B, \bag_B)$ is well-defined.
Because $G \rescliqs (A,S)$ and $G \rescliqs (B,S)$ intersect only in $S$, it holds that $X_A \cap X_B = S$, which implies that $(T_A, \bag_A) \cup_S (T_B, \bag_B)$ satisfies the connectedness condition to be a tree decomposition of $\torso_G(X_A \cup X_B)$.
The vertex condition is trivially satisfied.
For the edge condition, consider an edge $uv \in E(\torso_G(X_A \cup X_B))$.
Because $(A,S,B)$ is a separation and $S \subseteq X_A \cup X_B$, it must hold that $u,v \in A \cup S$ or $u, v \in B \cup S$.
By symmetry, assume that $u,v \in A \cup S$.
The internal vertices of the path between $u$ and $v$ must be in $A$, and therefore $uv \in E(\torso_{G[A \cup S]}(X_A))$, implying that $uv$ is in some bag of $(T_A, \bag_A)$.
Therefore $(T_A, \bag_A) \cup_S (T_B, \bag_B)$ satisfies the edge condition.
\end{proof}

We then show the other direction of correctness by applying the pulling lemma (\Cref{lem:pull}).

\begin{lemma}
\label{lem:safesepyesyes}
Let $\ins = (G, \{W_1, \ldots, W_t\}, k)$ be an instance and $(A,S,B)$ a safe separation.
If $\ins$ is a yes-instance, then both $\ins \rescliqs (A,S)$ and $\ins \rescliqs (B,S)$ are yes-instances.
\end{lemma}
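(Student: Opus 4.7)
The plan is to prove the two conclusions symmetrically, so I will describe only how to extract a solution of $\ins \rescliqs (A, S)$ from a given solution $(X, (T, \bag))$ of $\ins$; the construction for $\ins \rescliqs (B, S)$ is identical after swapping the roles of $A, B$ and of the spanning terminal cliques $W_a, W_b$. The whole argument is the one already advertised in \Cref{subsec:overredustw}: use the pulling lemma (\Cref{lem:pull}) to pull $S$ into a single bag of the tree decomposition, then check that the pulled object is a valid solution of the reduced instance.

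First I would locate a suitable bag to pull $S$ into. By definition of safe separation there is a spanning terminal clique $W_b$ such that $S$ is linked into $(B \cup S) \cap W_b$. Since $(X, (T, \bag))$ covers $\allw_\ins$, we have $W_b \subseteq X$, and because $W_b$ is a clique of $G$ (and $E(G[X]) \subseteq E(\torso_G(X))$) it is a clique of $\torso_G(X)$. Standard tree decomposition theory then yields a node $r \in V(T)$ with $W_b \subseteq \bag(r)$. Since linkedness is monotone under enlarging the target set, $S$ is linked into $\bag(r) \cap (S \cup B)$ as well, which is exactly the hypothesis of \Cref{lem:pull} for the separation $(A, S, B)$ and the node $r$. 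Applying the pulling lemma produces a torso tree decomposition $(X', (T', \bag'))$ in $G$ with $X' = (X \cap A) \cup S$, width at most $k$, and $S \subseteq \bag'(r)$.

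Second I would verify that $(X', (T', \bag'))$ is actually a solution of $\ins \rescliqs (A, S) = (G[A \cup S] \mcliq S,\, \{W_1,\ldots,W_t\}\rescliqs(A,S),\, k)$. The width bound is immediate. For the torso condition, the only subtlety is that I need a tree decomposition of $\torso_{G[A \cup S] \mcliq S}(X')$, not of $\torso_G(X')$; but any edge in the former that is not in the latter is an edge inside $S$ coming from $\mcliq S$, and all such edges are already covered by the bag $\bag'(r) \supseteq S$ (combined with \Cref{lem:torsoindsub} to handle the restriction to $G[A\cup S]$). For coverage, every terminal clique $W_i$ of $\ins \rescliqs (A,S)$ that came from the original list satisfies $W_i \cap A \neq \emptyset$ and, being a clique of $G$, lies in $A \cup S$ by the separation property; combined with $W_i \subseteq X$ this gives $W_i \subseteq (X \cap A) \cup S = X'$. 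The new terminal clique $S$ is trivially in $X'$.

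The only genuinely nontrivial step is the first one, namely setting up the pulling lemma and identifying the bag $\bag(r)$ into which $S$ will be pulled; everything else is definition chasing. The slightly delicate point in step two is that changing the ambient graph from $G$ to $G[A \cup S] \mcliq S$ could in principle introduce uncovered edges, but the pulled-to bag $\bag'(r)$ absorbs exactly the new edges inside $S$, so the verification goes through without further work.
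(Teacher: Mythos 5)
Your proof is correct and follows essentially the same route as the paper: identify a bag $\bag(r)$ containing the spanning terminal clique $W_b$ (which exists because $W_b$ is a clique of $\torso_G(X)$), invoke the pulling lemma with the separation $(A,S,B)$ and node $r$, and then verify width, coverage, and the torso condition for the reduced instance using \Cref{lem:torsoindsub} plus the fact that $S \subseteq \bag'(r)$ absorbs the clique edges added by $\mcliq S$.
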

\begin{proof}
By the symmetry of the definition of safe separation, it suffices to show that $\ins \rescliqs (A,S)$ is a yes-instance.

Let $W_a$ and $W_b$ be the spanning terminal cliques of $(A,S,B)$ and $(X, (T, \bag))$ a solution of $\ins$.
Because $W_b$ is a clique in $G$ and is contained in $X$, there exists a node $r \in V(T)$ with $W_b \subseteq \bag(r)$, which implies that $S$ is linked into $\bag(r) \cap (B \cup S)$.
We use the pulling lemma (\Cref{lem:pull}) with the separation $(A, S, B)$ and the node $r$ to obtain a torso tree decomposition $((X \cap A) \cup S, (T', \bag'))$ of width at most $k$ containing a bag containing $S$.
By \Cref{lem:torsoindsub}, $((X \cap A) \cup S, (T', \bag'))$ is a torso tree decomposition also in $G[A \cup S]$, and because it covers $S$ and $(T', \bag')$ contains a bag containing $S$, it is also a torso tree decomposition in $G \rescliqs (A,S)$.
Because every terminal clique in $\{W_1, \ldots, W_t\} \rescliqs(A,S)$ is a subset of $(X \cap A) \cup S$, we have that $((X \cap A) \cup S, (T', \bag'))$ covers $\allw_{\ins \rescliqs (A,S)}$.
\end{proof}

Next we will give three lemmas arguing that the flow potentials ``behave well'' when breaking the instance by a safe separation.
In particular, our goal is to show that when breaking an instance $\ins$ by a safe separation $(A,S,B)$, the terminal cliques in $\ins \rescliqs (A,S)$ do not have smaller flow potentials than the corresponding terminal cliques in $\ins$, except in some limited cases.
Note that this is quite easy to see if we would measure flow instead of flow potential, and indeed we prove that the properties that naturally hold for flow also hold for flow potential.
We start by considering a situation where the breaking does not change the terminal cliques.

\begin{lemma}
\label{lem:sepsame}
Let $\ins = (G, \{W_1, \ldots, W_t\}, k)$ be an instance and $(A,S,B)$ a separation.
If $\{W_1, \ldots, W_t\} \rescliqs (A,S) = \{W_1, \ldots, W_t\}$, then $\flp_{\ins \rescliqs (A,S)}(W_i) \ge \flp_{\ins}(W_i)$ for all $W_i$.
\end{lemma}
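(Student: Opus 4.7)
The plan is to lift a witness separation for the smaller flow potential in $G \rescliqs (A,S)$ back to a witness separation in $G$ of the same order, thereby showing that flow potentials can only have grown.

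First, I will unpack what the hypothesis $\{W_1, \ldots, W_t\} \rescliqs (A,S) = \{W_1, \ldots, W_t\}$ gives us. It means no $W_j$ was removed, so every $W_j$ intersects $A$. Since each $W_j$ is a clique in $G$ and $(A,S,B)$ is a separation, this forces $W_j \subseteq A \cup S$ for every $j$. It also means no insertion of $S$ happened, so some $W_j$ is a superset of $S$. In particular, both $W_i$ and $\ow_\ins(W_i)$ are contained in $V(G') = A \cup S$, where I write $G' = G \rescliqs (A,S)$, and $\ow_{\ins \rescliqs(A,S)}(W_i) = \ow_\ins(W_i)$. Denote this common set by $Y$ and set $q' = \flp_{G'}(W_i, Y)$. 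The task reduces to showing $\flp_G(W_i, Y) \le q'$.

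Second, I will handle the degenerate case $W_i = V(G')$ separately: by definition $q' = |W_i|$, and taking $(V(G) \setminus W_i, W_i, \emptyset)$ (or invoking the $X=V(G)$ clause when $B = \emptyset$) immediately gives $\flp_G(W_i, Y) \le |W_i| = q'$. So I may assume $W_i \subsetneq V(G')$ and take a minimum-order separation $(A', S', B')$ of $G'$ witnessing $q'$, i.e.\ with $W_i \subseteq A' \cup S'$, $Y \subseteq B' \cup S'$, $B' \neq \emptyset$, and $|S'| = q'$.

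The key observation is that because $S$ is a clique in $G'$ (by construction of $G \rescliqs (A,S)$), and $(A',S',B')$ is a separation, $S$ cannot meet both $A'$ and $B'$; hence either $S \subseteq A' \cup S'$ or $S \subseteq B' \cup S'$. I now lift the separation to $G$ by placing the discarded set $B$ on the same side as $S$: in the first case take $(A'',S'',B'') = (A' \cup B, S', B')$, in the second case take $(A'', S'', B'') = (A', S', B' \cup B)$. Since $B$-vertices have neighbors in $G$ only in $S \cup B$, and $S$ has been placed entirely on the chosen side, no new edges cross $S''$, so $(A'',S'',B'')$ is genuinely a separation of $G$. By construction $|S''| = |S'| = q'$, $W_i \subseteq A'' \cup S''$, $Y \subseteq B'' \cup S''$, and $B'' \supseteq B' \neq \emptyset$. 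This yields $\flp_G(W_i, Y) \le q'$, completing the proof.

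The argument is essentially routine once the side-analysis of $S$ is spotted; the only mild subtlety is ensuring the lifted separation still has a nonempty $B''$-side and dealing with the $W_i = V(G')$ edge case, both of which I have addressed above.
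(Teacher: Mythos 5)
Your proof is correct and mirrors the paper's argument essentially line for line: unpack the hypothesis to see $\ow_{\ins\rescliqs(A,S)}(W_i)=\ow_\ins(W_i)$, dispose of the $W_i=A\cup S$ edge case, take a minimum-order witness separation in $G\rescliqs(A,S)$, use that $S$ is a clique there to place it entirely on one side, and lift by attaching $B$ to that same side. The extra details you supply (the hitting-set/clique check that every $W_j\subseteq A\cup S$, and the explicit verification that $B$-vertices only see $S\cup B$) are correct elaborations of steps the paper leaves implicit.
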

\begin{proof}
Observe that in this case, $\ow_{\ins \rescliqs (A,S)}(W_i) = \ow_{\ins}(W_i)$, and therefore it suffices to prove that $\flp_{G}(W_i, \ow_{\ins}(W_i)) \le \flp_{G \rescliqs (A, S)}(W_i, \ow_{\ins}(W_i))$.

We argue by the definition of flow potential.
First, if $W_i = A \cup S$, this holds trivially, so assume that $W_i \subset A \cup S$.
Let $(A',S',B')$ be a separation in $G \rescliqs (A,S)$ of minimum order so that $W_i \subseteq A' \cup S'$, $\ow_{\ins}(W_i) \subseteq B' \cup S'$, and $B'$ is non-empty.
Now, $S$ is a clique in $G \rescliqs (A,S)$, so either $S \subseteq A' \cup S'$ or $S \subseteq B' \cup S'$.
If $S \subseteq A' \cup S'$ then $(A' \cup B, S', B')$ is a separation in $G$, and if $S \subseteq B' \cup S'$ then $(A', S', B' \cup B)$ is a separation in $G$.
In either case, $\flp_G(W_i, \ow_{\ins}(W_i)) \le |S'| = \flp_{G \rescliqs (A,S)}(W_i, \ow_{\ins}(W_i))$.
\end{proof}

We then argue that for most of the terminal cliques the flow potential does not decrease when going from $\ins$ to $\ins \rescliqs (A,S)$.
In particular, all except one terminal cliques of $\ins \rescliqs (A,S)$ satisfy the conditions to be the terminal clique $W_i$ in the next lemma.

\begin{lemma}
\label{lem:sepflp}
Let $\ins = (G, \{W_1, \ldots, W_t\}, k)$ be an instance, $(A,S,B)$ a separation, and $W_i$ a terminal clique of both $\ins$ and $\ins \rescliqs (A,S)$ so that there exists some other terminal clique $W_j \neq W_i$ of $\ins \rescliqs (A,S)$ with $W_j \supseteq S$.
Then, $\flp_{\ins \rescliqs (A, S)}(W_i) \ge \flp_{\ins}(W_i)$.
\end{lemma}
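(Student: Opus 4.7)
The natural strategy is to take an optimal separation witnessing $\flp_{\ins \rescliqs (A, S)}(W_i)$ inside the reduced graph and lift it to a separation of $G$ of the same order witnessing $\flp_{\ins}(W_i)$. The extra terminal clique $W_j \supseteq S$ is exactly what lets the lift go through: it forces the separator $S$ of the breaking separation to land on the ``other side'' of the new separator, which is what we need to glue $B$ back in.

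\textbf{Steps.} First I would dispose of the degenerate case $W_i = V(G \rescliqs (A,S))$, in which $\flp_{\ins \rescliqs (A,S)}(W_i) = |W_i| \ge \flp_{\ins}(W_i)$ by the general bound $\flp \le |W_i|$. So assume $W_i \subsetneq V(G \rescliqs (A,S))$ and let $(A', S', B')$ be a separation in $G \rescliqs (A,S)$ achieving $\flp_{\ins \rescliqs (A,S)}(W_i)$, so that $W_i \subseteq A' \cup S'$, $\ow_{\ins \rescliqs (A,S)}(W_i) \subseteq B' \cup S'$, and $B' \neq \emptyset$. By hypothesis $S \subseteq W_j \subseteq \ow_{\ins \rescliqs (A,S)}(W_i)$, so $S \subseteq B' \cup S'$.

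Next I would claim that $(A', S', B' \cup B)$ is a separation of $G$. The three parts partition $V(G) = (A \cup S) \cup B$; there are no edges between $A'$ and $B'$ in $G \rescliqs (A,S)$ and hence none in $G$ (the $\mcliq S$ operation only adds edges inside $S$, and those cannot go between $A'$ and $B'$ because $S \subseteq B' \cup S'$); and there are no edges between $A' \subseteq A$ and $B$ because $(A,S,B)$ is a separation of $G$. Now I would verify the three defining conditions of flow potential in $G$: $W_i \subseteq A' \cup S'$ is inherited, $B' \cup B \supseteq B' \neq \emptyset$ is immediate, and for $\ow_\ins(W_i) \subseteq (B' \cup B) \cup S'$ I split terminal cliques $W_\ell$ with $\ell \neq i$ in two. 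If $W_\ell \cap A \neq \emptyset$, then since $W_\ell$ is a clique and $(A,S,B)$ is a separation, $W_\ell \subseteq A \cup S$; hence $W_\ell$ survives in $\ins \rescliqs (A,S)$ and lies in $\ow_{\ins \rescliqs (A,S)}(W_i) \subseteq B' \cup S'$. If $W_\ell \cap A = \emptyset$, then $W_\ell \subseteq S \cup B \subseteq B' \cup S' \cup B$ using $S \subseteq B' \cup S'$. In either case $W_\ell \subseteq (B' \cup B) \cup S'$, so the lifted separation witnesses $\flp_\ins(W_i) \le |S'| = \flp_{\ins \rescliqs (A,S)}(W_i)$.

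\textbf{Main obstacle.} The only subtle point is the containment $S \subseteq B' \cup S'$; without the hypothesis that some $W_j \supseteq S$ lies in $\ow_{\ins \rescliqs (A,S)}(W_i)$, the separator $S$ of the breaking separation could be split across $A'$ and $B'$ in $G \rescliqs (A,S)$ (the $\mcliq S$ operation does not preclude this in general unless $S$ is forced to one side), and the naive lift to $(A', S', B' \cup B)$ would fail to be a separation of $G$ because the clique $W_j$ is only a clique in $G \rescliqs (A,S)$, not in $G$. The hypothesis exactly removes this failure mode, and the rest is a routine verification.
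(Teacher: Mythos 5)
Your proof is correct and follows essentially the same route as the paper: dispose of the trivial case $W_i = A \cup S$, take an optimal separation $(A',S',B')$ in the reduced graph, use $S \subseteq W_j \subseteq \ow_{\ins \rescliqs(A,S)}(W_i) \subseteq B' \cup S'$ to justify that $(A', S', B' \cup B)$ is a separation of $G$ witnessing the bound. The paper first records the identity $\ow_{\ins\rescliqs(A,S)}(W_i) = (\ow_{\ins}(W_i)\cap A) \cup S$ and then reasons from it, whereas you verify $\ow_{\ins}(W_i) \subseteq (B' \cup B) \cup S'$ by a direct case split on whether each $W_\ell$ meets $A$; these are equivalent, and the rest of the argument coincides.
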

\begin{proof}
We observe that because $W_j \supseteq S$, it holds that $\ow_{\ins \rescliqs (A,S)}(W_i) = (\ow_{\ins}(W_i) \cap A) \cup S$.
Therefore, it suffices to prove that
\[\flp_{G}(W_i, \ow_{\ins}(W_i)) \le \flp_{G \rescliqs (A, S)}(W_i, (\ow_{\ins}(W_i) \cap A) \cup S).\]

We argue by the definition of flow potential.
First, if $W_i = A \cup S$, the lemma holds trivially, so assume that $W_i \subset A \cup S$.
Let $(A',S',B')$ be a separation in $G \rescliqs (A,S)$ of minimum order so that $W_i \subseteq A' \cup S'$, $(\ow_{\ins}(W_i) \cap A) \cup S \subseteq B' \cup S'$, and $B'$ is non-empty.
Now, because $S \subseteq B' \cup S'$, we have that $(A', S', B' \cup B)$ is a separation in $G$.
Note that $\ow_{\ins}(W_i) \subseteq S' \cup B' \cup B$, so it follows that 
\[\flp_G(W_i, \ow_{\ins}(W_i)) \le |S'| = \flp_{G \rescliqs (A,S)}(W_i, (\ow_{\ins}(W_i) \cap A) \cup S).\]
\end{proof}

Then, we reduce the task of analyzing the total flow potentials of $\ins \rescliqs (A,S)$ into three cases.

\begin{lemma}
\label{lem:ssflp2}
Let $\ins = (G, \{W_1, \ldots, W_t\}, k)$ be an instance and $(A,S,B)$ a safe separation.
Now, either
\begin{enumerate}
\item \label{lem:ssflp2:ltq} $\ins \rescliqs (A,S)$ has less terminal cliques than $\ins$, or
\item \label{lem:ssflp2:etq} $\{W_1, \ldots, W_t\} \rescliqs (A,S) = \{W_1, \ldots, W_t\}$, or
\item \label{lem:ssflp2:nt} $S$ is a terminal clique of $\ins \rescliqs (A,S)$ but not of $\ins$, and there exists a terminal clique $W_i$ of $\ins$ with $W_i \cap B \neq \emptyset$ and $\flp_{\ins \rescliqs (A,S)}(S) \ge \flp_{\ins}(W_i)$.
\end{enumerate}
\end{lemma}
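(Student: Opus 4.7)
The plan is to do a case analysis on how $\ins \rescliqs (A,S)$ differs from $\ins$. Let $r$ denote the number of terminal cliques $W_j$ of $\ins$ with $W_j \cap A = \emptyset$ (those that get removed), and let $\iota \in \{0,1\}$ indicate whether $S$ is newly inserted, so that $\ins \rescliqs (A,S)$ has exactly $t - r + \iota$ terminal cliques. The safeness observation recorded just above the lemma gives $t - r + \iota \le t$, hence $\iota \le r$. Negating case~\ref{lem:ssflp2:ltq} yields $t - r + \iota \ge t$, and negating case~\ref{lem:ssflp2:etq} gives $r + \iota \ge 1$; combined with $\iota \le r \le 1$ these bounds force $r = \iota = 1$. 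Let $W_i$ denote the unique removed clique. If $W_i = S$ then the removal of $S$ is immediately cancelled by the reinsertion, placing us back in case~\ref{lem:ssflp2:etq}, so $W_i \neq S$. This in turn forces $S \notin \{W_1, \ldots, W_t\}$: otherwise $S$ would also be removed (since $S \cap A = \emptyset$), making $r \ge 2$.

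Next I would show that this removed clique $W_i$ coincides with the spanning terminal clique $W_b$, and that $W_i \cap B \neq \emptyset$. If $W_b \neq W_i$, then $W_b$ survives the reduction, so $W_b \cap A \neq \emptyset$; since $W_b$ is a clique of $G$ while $(A,S,B)$ is a separation, this forces $W_b \subseteq A \cup S$. The safeness condition then gives $|W_b \cap S| = |(B \cup S) \cap W_b| \ge |S|$, hence $S \subseteq W_b$, contradicting the fact that $W_b$ is a remaining superset of $S$ while $\iota = 1$. Thus $W_i = W_b$, and an analogous application of safeness to $W_b = W_i$ gives $|W_i| \ge |(B \cup S) \cap W_i| \ge |S|$; if $W_i \cap B$ were empty we would have $W_i \subseteq S$ and therefore $W_i = S$, contradicting what was established above. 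Hence $W_i$ is a terminal clique of $\ins$ with $W_i \cap B \neq \emptyset$.

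Finally I would verify $\flp_{\ins \rescliqs (A,S)}(S) \ge \flp_{\ins}(W_i)$. The crucial identity is $\ow_{\ins \rescliqs (A, S)}(S) = \ow_{\ins}(W_i)$, which follows because $W_i$ is the only removed clique and $S$ is the only newly inserted one. Taking a minimum-order separation $(A', S', B')$ in $G \rescliqs (A, S)$ witnessing the left-hand flow potential (so $S \subseteq A' \cup S'$, $\ow_{\ins \rescliqs (A,S)}(S) \subseteq B' \cup S'$, and $B' \neq \emptyset$), I extend it to the triple $(A' \cup B,\, S',\, B')$ in $G$. This is a separation of $G$ because every $G$-edge incident to $B$ passes through $S \subseteq A' \cup S'$, and because $G[A \cup S]$ is a subgraph of $G \rescliqs (A, S)$ there are no $G$-edges between $A'$ and $B'$ either. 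The containments $W_i \subseteq S \cup B \subseteq (A' \cup B) \cup S'$ and $\ow_{\ins}(W_i) = \ow_{\ins \rescliqs (A,S)}(S) \subseteq B' \cup S'$ are then immediate, so $(A' \cup B, S', B')$ witnesses $\flp_{\ins}(W_i) \le |S'|$. The degenerate case $A = \emptyset$, in which $\flp_{\ins \rescliqs (A,S)}(S) = |S|$ by definition, is handled directly by the separation $(\emptyset, S, B)$ of $G$, whose $B$-side is non-empty exactly because $W_i \cap B \neq \emptyset$.

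The main obstacle is the identification $W_i = W_b$ in the middle paragraph: this is where the "newly inserted" property of $S$ has to be combined with the clique-separation structure and the safeness linkedness, and it is also what determines which terminal clique of $\ins$ serves as the witness. Once this is pinned down, the flow-potential inequality is a straightforward adaptation of \Cref{lem:sepflp}.
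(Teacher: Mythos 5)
Your proof is correct and follows essentially the same route as the paper: after reducing to the case $r=\iota=1$, you identify the spanning terminal clique $W_b$ on the $B$-side as the witness clique $W_i$ using the linkedness of $S$ into $(B\cup S)\cap W_b$ together with the fact that a clique cannot cross a separation, and you prove the flow-potential inequality by lifting a minimum-order separation $(A',S',B')$ of $G\rescliqs(A,S)$ to the separation $(A'\cup B, S', B')$ of $G$ — exactly the paper's argument. The paper organizes the case analysis directly around the position of $W_b$ rather than first deriving $r=\iota=1$, but the ideas are the same. One small remark: the ``degenerate case $A=\emptyset$'' can simply be deleted, since a safe separation is by definition strict (both sides non-empty); as written that sentence is also not quite right, because the separation $(\emptyset,S,B)$ would need $W_i\subseteq S$, contradicting $W_i\cap B\neq\emptyset$.
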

\begin{proof}
Let $W_a$ and $W_b$ be the spanning terminal cliques of $(A,S,B)$.
If $W_b \subseteq A \cup S$, then $W_b$ is also a superset of $S$ because $S$ is linked into $W_b \cap (S \cup B)$, and either \Cref{lem:ssflp2:ltq} or \Cref{lem:ssflp2:etq} holds.
Then, if $W_b$ intersects $B$ and also some other terminal clique intersects $B$, \Cref{lem:ssflp2:ltq} holds.
It remains to prove that if $W_b$ is the only terminal clique that intersects $B$ and $\ins \rescliqs (A,S)$ has the same number of terminal cliques as $\ins$, then \Cref{lem:ssflp2:nt} holds.

In this case, $S$ is a terminal clique of $\ins \rescliqs (A,S)$ but not of $\ins$, and $\ow_{\ins \rescliqs (A,S)}(S) = \ow_{\ins}(W_b)$, implying that it suffices to prove that
$\flp_{G}(W_b, \ow_{\ins}(W_b)) \le \flp_{G \rescliqs (A,S)}(S, \ow_{\ins}(W_b))$.
We argue by the definition of flow potential.
Because $(A,S,B)$ is a strict separation, it holds that $S \subset A \cup S$.
Let $(A',S',B')$ be a separation in $G \rescliqs (A,S)$ of minimum order so that $S \subseteq A' \cup S'$, $\ow_{\ins}(W_b) \subseteq B' \cup S'$, and $B'$ is non-empty.
Now, because $S \subseteq A' \cup S'$, we have that $(A' \cup B, S', B')$ is a separation in $G$.
As $W_b \subseteq A' \cup B \cup S'$, it follows that \[\flp_G(W_b, \ow_{\ins}(W_b)) \le |S'| = \flp_{G \rescliqs (A,S)}(S, \ow_{\ins}(W_b)).\]
\end{proof}

We then show that safe separations can be found efficiently.

\begin{lemma}
\label{lem:safeseptime}
There is a $k^{\OO(1)} m$ time algorithm for finding a safe separation or deciding that none exist.
\end{lemma}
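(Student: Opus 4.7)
The plan is to enumerate candidates for each of the two types of safe separations noted after \Cref{def:safesep} and return the first that succeeds (or report none). For type~1 (where $W_a=W_b=W_i$ and $S\subseteq W_i$), iterate over each terminal clique $W_i$, compute the connected components of $G\setminus W_i$ by a single depth-first search, and for each component $C$ form the candidate $(A,S,B)=(C,\,N(C),\,V(G)\setminus(C\cup N(C)))$. This is a strict separation exactly when $B\ne\emptyset$, equivalently when either $G\setminus W_i$ has at least two components or $N(C)\subsetneq W_i$; in that case $S\subseteq W_i$ is trivially linked by single-vertex paths into both $W_i\cap(A\cup S)=S$ and $W_i\cap(B\cup S)=W_i$, so the candidate is safe. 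Each $W_i$ is handled in $\OO(m)$ time.

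For type~2 (where $a\ne b$ and $S$ is a minimum $(W_a,W_b)$-separator), iterate over each ordered pair $(a,b)$ with $a\ne b$, run $\OO(k)$ rounds of Ford--Fulkerson in $\OO(km)$ time, and extract from the residual graph a minimum cut $S$ whose corresponding separation $(A,S,B)=(\reach_G(W_a,S),\,S,\,V(G)\setminus(A\cup S))$ is strict, i.e.\ $S\ne W_a$ and $S\ne W_b$. Existence of a strict minimum cut can be certified by inspecting the minimum-cut lattice: the extremal ``leftmost'' cut $S_L$ and ``rightmost'' cut $S_R$ are the only candidates that could coincide with $W_a$ or $W_b$, and any intermediate cut in the lattice (or an already-strict extremal cut) produces the required strict separation; if every cut in the lattice is $W_a$ or $W_b$, no type~2 safe separation exists for the pair. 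By Menger's theorem, the $|S|$ augmenting paths witness that $S$ is linked into $W_a\cap(A\cup S)$ and into $W_b\cap(B\cup S)$, so any strict $S$ is safe.

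For the running time, type~1 contributes $\OO(t)\cdot\OO(m)=\OO(km)$ and type~2 contributes $\OO(t^2)\cdot\OO(km)=\OO(k^3m)$, giving $k^{\OO(1)}m$ overall since $t\le k+2$ and any minimum $(W_a,W_b)$-separator has size at most $k+1$; the algorithm uses only polynomial working space. The main obstacle is the corner case in type~2 where the default min cut returned by max-flow coincides with $W_a$ or $W_b$ and is not strict (as can happen whenever $|W_a|=|W_b|=\flow_G(W_a,W_b)$); this is resolved by exploring the min-cut lattice via the residual graph's SCC structure to isolate a strict min cut whenever one exists, which guarantees completeness since every safe separation falls into one of the two enumerated types.
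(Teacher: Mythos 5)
Your proof is correct and follows essentially the same two-case structure as the paper's: first search for safe separations with $S\subseteq W_i$ for some terminal clique $W_i$, then search for safe separations arising as minimum $(W_a,W_b)$-separators for pairs $a\neq b$, using max-flow. The only cosmetic difference is in the first case: you enumerate the connected components $C$ of $G\setminus W_i$ and test whether $(C,\,N(C),\,V(G)\setminus(C\cup N(C)))$ is strict, whereas the paper tests whether $G\setminus W_i$ or $G\setminus(W_i\setminus\{w\})$ for some $w\in W_i$ is disconnected; both checks are complete for this case and run in $\OO(m)$ per terminal clique, and for the second case both proofs invoke (without full detail) the standard fact that when $W_a$ is not strictly linked into $W_b$, a minimum separator distinct from $W_a$ and $W_b$ can be extracted from the residual graph.
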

\begin{proof}
We try all pairs of terminal cliques $W_a$ and $W_b$ and find safe separations whose spanning terminal cliques $W_a$ and $W_b$ are.

Let $(A,S,B)$ be a safe separation and $W_a$ and $W_b$ its spanning terminal cliques.
First, consider safe separations $(A,S,B)$ where $S \subseteq W_a$ or $S \subseteq W_b$.
Assume without loss of generality that $S \subseteq W_a$.
We can find such safe separations by checking if $G \setminus W_a$ has at least two connected components, and also trying all $w \in W_a$ and checking if $G \setminus (W_a \setminus \{w\})$ has at least two connected components.

Then, consider safe separations $(A,S,B)$ spanned by $W_a$ and $W_b$ so that $S$ is not a subset of $W_a$ or $W_b$.
In this case $W_a$ intersects $A$ and $W_b$ intersects $B$, and $S$ is a $(W_a,W_b)$-separator.
By the symmetry of the definition we may assume that $|W_a| \le |W_b|$.
If $W_a$ is strictly linked into $W_b$, then no such safe separators exists.
Then, if $W_a$ is not strictly linked into $W_b$, any minimum size $(W_a,W_b)$-separator $S$ with $S \neq W_a$ and $S \neq W_b$ corresponds to a safe separator, and can be found by standard flow computations in $k^{\OO(1)} m$ time.
\end{proof}

Then, we state the fact that applying safe separations makes all pairs of terminal cliques strictly linked into each other.

\begin{lemma}
\label{lem:strictlinked}
If $\ins$ has no safe separations, then for each pair of terminal cliques $W_i,W_j$ with $|W_i| \le |W_j|$, it holds that $W_i$ is strictly linked into $W_j$.
\end{lemma}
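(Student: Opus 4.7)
The plan is to prove the contrapositive: if some pair of terminal cliques $W_i,W_j$ with $|W_i|\le|W_j|$ fails the property that $W_i$ is strictly linked into $W_j$, then $\ins$ admits a safe separation in the sense of \Cref{def:safesep}, contradicting the hypothesis. I would start by unfolding the definition of strictly linked to obtain a $(W_i,W_j)$-separator $S$ which is either of order strictly less than $|W_i|$, or of order exactly $|W_i|$ while being distinct from both $W_i$ and $W_j$. In both cases, $S$ turns out to be a minimum $(W_i,W_j)$-separator: in the first case this is automatic (shrink further if needed); in the second case linkedness of $W_i$ into $W_j$ pins $\flow(W_i,W_j)=|W_i|=|S|$.

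Given such an $S$, I would choose a corresponding separation $(A,S,B)$ of $G$ with $W_i\subseteq A\cup S$ and $W_j\subseteq B\cup S$, and first verify that it is a \emph{strict} separation: since $|S|\le|W_i|$ and $S\neq W_i$ we cannot have $W_i\subseteq S$, so $W_i\setminus S\subseteq A$ is non-empty; the symmetric argument applied to $W_j$ (using $|S|\le|W_j|$ and $S\neq W_j$) gives $B\neq\emptyset$. Next I would apply Menger's theorem: because $S$ is a minimum $(W_i,W_j)$-separator, there exist $|S|$ internally vertex-disjoint $W_i$--$W_j$ paths, each meeting $S$ in exactly one vertex. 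Splitting each path at its unique $S$-vertex produces $|S|$ disjoint paths from $S$ to $W_i$ that stay inside $A\cup S$ (they hit $S$ only at their endpoint, so they cannot cross into $B$), showing that $S$ is linked into $(A\cup S)\cap W_i=W_i$; symmetrically $S$ is linked into $(B\cup S)\cap W_j=W_j$. Hence $(A,S,B)$ satisfies \Cref{def:safesep} with spanning terminal cliques $W_a=W_i$ and $W_b=W_j$, yielding the desired contradiction.

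The only mildly delicate point I anticipate is the tight case where $|W_i|=|W_j|=|S|$ and $S$ lies strictly between $W_i$ and $W_j$: here the non-emptiness of $A$ and $B$ does not follow from a size comparison but only from the set-theoretic inequalities $S\neq W_i$ and $S\neq W_j$ combined with equal cardinality, so one has to be explicit that $S\neq W_j$ already rules out $W_j\subseteq S$. Everything else is a direct unpacking of the definitions of \Cref{def:safesep} and of strict linkedness together with one application of Menger's theorem, so no new idea beyond this bookkeeping should be required.
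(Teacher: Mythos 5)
Your proof is correct and takes essentially the same route as the paper, which contains only the one-line remark that ``the separator contradicting strict linkedness would give a safe separation''; you have correctly unfolded that remark, including the observation that a minimum $(W_i,W_j)$-separator is (by Menger) linked into both $W_i$ and $W_j$, and the size/set-theoretic check that both $A$ and $B$ are non-empty so the separation is strict. The one spot where you were slightly more careful than necessary is insisting that the Menger paths stay inside $A\cup S$: since $W_i\subseteq A\cup S$ we have $(A\cup S)\cap W_i=W_i$, so \Cref{def:safesep} only requires $\flow_G(S,W_i)=|S|$ in the whole graph, which already follows; but the extra care does no harm.
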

\begin{proof}
If $W_i$ would not be strictly linked into $W_j$, then the separator contradicting strict linkedness would give a safe separation.
\end{proof}

\subsection{Branching}
\label{subsec:branching}
We do two types of branching in our algorithm: terminal clique merging and leaf pushing.

\subsubsection{Terminal clique merging}
\label{subsec:tercliqmerg}
The first type of branching is that we guess that two terminal cliques $W_i$ and $W_j$ are in a same bag in some solution, and therefore we can actually merge them into one terminal clique.
We introduce some notation for this operation and analyze the flow potential under it.

Let $\ins = (G, \{W_1, \ldots, W_t\}, k)$ be an instance.
For two distinct terminal cliques $W_i$ and $W_j$ with $|W_i \cup W_j| \le k+1$, we denote by $\{W_1, \ldots, W_t\} \mergcliqs (W_i, W_j)$ the set obtained by removing $W_i$ and $W_j$ from $\{W_1, \ldots, W_t\}$ and inserting $W_i \cup W_j$ (if $W_i \cup W_j$ is already present, nothing is inserted).
We make $W_i \cup W_j$ into a clique in this operation so we denote $G \mergcliqs (W_i, W_j) = G \mcliq (W_i \cup W_j)$ and by $\ins \mergcliqs (W_i, W_j)$ we denote the instance $(G \mergcliqs (W_i, W_j), \{W_1, \ldots, W_t\} \mergcliqs (W_i, W_j), k)$.

Next we observe that the terminal clique merging does not decrease the flow potentials of the other terminal cliques.

\begin{lemma}
\label{lem:mergcliqfld}
Let $\ins = (G, \{W_1, \ldots, W_t\}, k)$ be an instance and $W_i, W_j, W_l \in \{W_1, \ldots, W_t\}$ three distinct terminal cliques.
It holds that $\flp_{\ins \mergcliqs(W_i,W_j)}(W_l) \ge \flp_{\ins}(W_l)$.
\end{lemma}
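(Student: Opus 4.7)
The plan is to unwind the definitions and then observe that any witness separation in the merged graph is also a witness separation in the original graph with the same separator.

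First I would verify that the target set of vertices is literally unchanged by the merge: since $W_l \notin \{W_i, W_j\}$, both $W_i$ and $W_j$ lie in $\ow_\ins(W_l)$, so replacing $W_i, W_j$ by $W_i \cup W_j$ in the collection of terminal cliques other than $W_l$ leaves their union equal as a vertex set, i.e.\ $\ow_{\ins \mergcliqs(W_i,W_j)}(W_l) = \ow_\ins(W_l)$. It therefore suffices to show $\flp_{G \mergcliqs(W_i,W_j)}(W_l, \ow_\ins(W_l)) \ge \flp_G(W_l, \ow_\ins(W_l))$.

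Next I would dispose of the degenerate case $W_l = V(G)$: the merge does not change the vertex set, so both sides equal $|W_l|$ by the second clause of the definition of flow potential. Assuming instead $W_l \subsetneq V(G)$, let $(A', S', B')$ be a separation in $G \mergcliqs(W_i,W_j)$ that attains $\flp_{G \mergcliqs(W_i,W_j)}(W_l, \ow_\ins(W_l))$, so $W_l \subseteq A' \cup S'$, $\ow_\ins(W_l) \subseteq B' \cup S'$, and $B' \neq \emptyset$. Since $E(G) \subseteq E(G \mergcliqs(W_i,W_j))$, the tripartition $(A', S', B')$ is automatically a separation of $G$ as well (absence of $A'$–$B'$ edges in the larger edge set implies their absence in $E(G)$). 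All the other conditions defining flow potential (the two inclusions and $B' \neq \emptyset$) involve only vertex sets and thus transfer unchanged. Hence $(A', S', B')$ witnesses $\flp_G(W_l, \ow_\ins(W_l)) \le |S'|$, giving the inequality.

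I do not anticipate a real obstacle: the merge only adds edges inside $W_i \cup W_j \subseteq \ow_\ins(W_l)$, so it can only make separating $W_l$ from $\ow_\ins(W_l)$ harder, not easier, and the proof is a direct translation of this intuition through the definition. The only subtlety worth flagging in the write-up is the edge case $W_l = V(G)$, which must be handled by the second clause of the flow-potential definition rather than by the separation argument.
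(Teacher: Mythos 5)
Your proof is correct and follows exactly the same route as the paper's (which is stated very tersely as ``$\ow_{\ins}(W_l) = \ow_{\ins \mergcliqs(W_i,W_j)}(W_l)$, and any separation of $G \mergcliqs(W_i,W_j)$ is also a separation of $G$''). Your write-up simply unwinds those two facts carefully and in addition handles the degenerate case $W_l = V(G)$, which the paper's one-line argument leaves implicit.
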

\begin{proof}
This follows directly from the facts that $\ow_{\ins}(W_l) = \ow_{\ins \mergcliqs(W_i,W_j)}(W_l)$ and any separation of $G \mergcliqs (W_i,W_j)$ is also a separation of $G$.
\end{proof}

We also observe that any solution of $\ins \mergcliqs(W_i,W_j)$ is also a solution of $\ins$.

We say that $\ins$ is \emph{maximally merged} if for any pair of two distinct terminal cliques $W_i$ and $W_j$ it holds that either $|W_i \cup W_j| > k+1$ or $\ins \mergcliqs (W_i,W_j)$ is a no-instance.
In particular, we can conclude that $\ins$ is maximally merged after branching on all different ways to merge two terminal cliques and not finding a solution.

\subsubsection{Leaf pushing}
\label{subsec:leafpush}
A terminal clique $W_i$ is a \emph{potential forget-clique} of $\ins$ if there exists a solution $(X, (T, \bag))$ of $\ins$ so that $T$ contains a leaf node $l$ with a parent $p$ so that $W_i \subseteq \bag(l)$ and $\bag(p) = \bag(l) \setminus \{w\}$ for some $w \in W_i \setminus \ow_{\ins}(W_i)$.
The leaf pushing operation will make progress by adding a vertex to a potential forget-clique.

Next we show that a maximally merged yes-instance has at least two potential forget-cliques.
The fact that there are at least two of them will be important since we do not necessarily make progress by leaf pushing the uniquely largest terminal clique.

\begin{lemma}
\label{lem:twopotforcli}
Let $\ins = (G, \{W_1, \ldots, W_t\}, k)$ be a yes-instance that is maximally merged and has $t \ge 2$.
The instance $\ins$ has at least two potential forget-cliques.
\end{lemma}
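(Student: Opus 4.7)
The plan is to take a solution $(X,(T,\bag))$ of $\ins$ minimizing $|X|$ and normalize $(T,\bag)$ by: (i) repeatedly deleting any leaf $l$ with $\bag(l) \subseteq \bag(p)$ for its parent $p$; and (ii) for each remaining leaf $l$ with parent $p$, subdividing the edge $lp$ by a new node $p'$ with $\bag(p') = \bag(l) \setminus \{w\}$ for some chosen $w \in \bag(l) \setminus \bag(p)$. These operations do not change $X$ and preserve being a solution, so after them we still have a min-$|X|$ solution and are in one of two cases: $T$ has a single node, or every leaf $l$ has parent $p$ with $\bag(p) = \bag(l) \setminus \{w_l\}$ for a unique forget-vertex $w_l$.

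The single-node case contradicts maximum mergedness: both $W_1, W_2$ sit inside the single bag, so $|W_1 \cup W_2| \le k+1$ and $(X,(T,\bag))$ is also a solution of $\ins \mergcliqs (W_1, W_2)$ (the only new torso edges are inside that bag). Hence $T$ has at least two leaves, each equipped with a forget-vertex.

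For each leaf $l$ I will show that $w_l \in W_{i(l)} \setminus \ow_\ins(W_{i(l)})$ for some terminal clique $W_{i(l)} \subseteq \bag(l)$, exhibiting $W_{i(l)}$ as a potential forget-clique of $\ins$ witnessed by this solution. (a) By connectedness $w_l$ occurs only in $\bag(l)$, so if $w_l \notin \allw_\ins$, deleting $w_l$ from $X$ and $l$ from $T$ yields a torso tree decomposition of $\torso_G(X \setminus \{w_l\})$ of width $\le k$ still covering $\allw_\ins$; the key observation is that any edge newly appearing in the torso must connect two $\torso_G(X)$-neighbors of $w_l$, all of which lie in $\bag(l) \setminus \{w_l\} = \bag(p_l)$ and are hence still covered. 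This contradicts minimality of $|X|$, so $w_l \in W_{i(l)}$ for some $i(l)$. (b) Since $W_{i(l)}$ is a clique in $G$, hence in $\torso_G(X)$, it fits in a single bag, which must contain $w_l$, so $W_{i(l)} \subseteq \bag(l)$. (c) If $w_l \in W_j$ for some $j \ne i(l)$, then $W_j \subseteq \bag(l)$ too, $|W_{i(l)} \cup W_j| \le k+1$, and $(X,(T,\bag))$ is still a solution of $\ins \mergcliqs (W_{i(l)}, W_j)$, again contradicting maximum mergedness.

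Finally, distinct leaves $l_1 \ne l_2$ give distinct $W_{i(l_1)}, W_{i(l_2)}$: if both equalled some $W_i$, then $W_i \subseteq \bag(l_1) \cap \bag(l_2)$, so by connectedness every vertex of $W_i$, in particular $w_{l_1}$, is contained in every bag along the $l_1$--$l_2$ path, including $\bag(p_{l_1})$, contradicting $\bag(p_{l_1}) = \bag(l_1) \setminus \{w_{l_1}\}$. The main technical obstacle I foresee is carefully justifying the torso tree decomposition after leaf removal in step (a); this reduces to checking that any ``new'' edge in $\torso_G(X \setminus \{w_l\})$ lands inside $\bag(p_l)$, which follows because both endpoints are $\torso_G(X)$-neighbors of $w_l$ and hence already lie in the unique bag $\bag(l)$ containing $w_l$.
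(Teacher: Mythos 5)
Your proposal is correct and follows essentially the same approach as the paper: take a solution minimizing $|X|$, argue that the single-bag case violates maximal mergedness, show that each leaf's forget-vertex lies in exactly one terminal clique (which must then be contained in the leaf bag), and conclude that distinct leaves yield distinct potential forget-cliques by the connectedness condition. The only differences are cosmetic: you normalize the decomposition up front (deleting redundant leaves, subdividing to obtain unit forget-steps) and then argue about the fixed forget-vertex $w_l$, whereas the paper first proves that $\bag(l)\setminus\bag(p)\subseteq\allw_\ins$ and then reorders the forget-sequence so a terminal vertex is forgotten first; and in step (a) you delete the node $l$ outright while the paper merely removes the vertex from $\bag(l)$ --- both correctly contradict the minimality of $|X|$ via the same observation that any new torso edge lands inside $\bag(l)\setminus\{w_l\}$.
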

\begin{proof}
Let $(X, (T, \bag))$ be a solution of $\ins$ that first minimizes $|X|$ and subject to that minimizes $|V(T)|$.
First, if $|V(T)| = 1$, then all terminal cliques would be contained in the only bag of $(T, \bag)$, and $\ins$ would not be maximally merged.
Therefore $|V(T)| \ge 2$ and $T$ has at least two leaves.

\begin{claim}
\label{claim:wforget}
For any leaf node $l$ of $T$ with parent $p$, it holds that $\bag(l) \setminus \bag(p) \subseteq \allw_{\ins}$.
\end{claim}
\begin{proof}
Suppose that $\bag(l) \setminus \bag(p)$ contains a vertex $x \in V(G) \setminus \allw_{\ins}$.
Let $(T', \bag')$ be tree decomposition obtained from $(T, \bag)$ by removing $x$ from $\bag(l)$.
We claim that then $(X \setminus \{x\}, (T', \bag'))$ is a solution of $\ins$ that would contradict the minimality of $|X|$.
It holds that $X \setminus \{x\}$ covers $\allw_{\ins}$ and that the width of $(T', \bag')$ is at most $k$, so it remains to argue that $(T', \bag')$ is a tree decomposition of $\torso_G(X \setminus \{x\})$.
Because $x$ occurred only in the bag $\bag(l)$, $(T', \bag')$ satisfies the vertex condition and the connectedness condition.
For the edge condition, it suffices to prove that any path from $x$ to $X \setminus \{x\}$ intersects $\bag(l) \setminus \{x\}$.
This follows from \Cref{lem:indsub} by considering a modified version of $(T, \bag)$ where a bag containing $\bag(l) \setminus \{x\}$ is inserted between $l$ and $p$.
\end{proof}

Now, let $l,p \in V(T)$ be some leaf-parent pair.
We have that $\bag(l) \setminus \bag(p)$ is non-empty because otherwise we could decrease $|V(T)|$ by removing $l$.
Therefore, by \Cref{claim:wforget} there exists a terminal clique $W_i$ that intersects $\bag(l) \setminus \bag(p)$.
Because $W_i$ is a clique and the decomposition covers $W_i$, we know that $W_i \subseteq \bag(l)$.
We can modify $(T, \bag)$ by adding nodes between $l$ and $p$ so that the vertices in $\bag(l) \setminus \bag(p)$ are forgotten one-by-one, and that a vertex $w \in W_i \cap (\bag(l) \setminus \bag(p))$ is the first to be forgotten.
In particular, this results in a decomposition where the parent of $l$ is a node $p'$ with $\bag(p') = \bag(l) \setminus \{w\}$.
Now, if $w$ would be in some other terminal clique $W_j \neq W_i$, then $W_j \subseteq \bag(l)$ would hold because $\bag(l)$ is the only bag containing $w$, but then $\ins$ would not be maximally merged.
Therefore, $W_i$ is a potential forget-clique.

Finally, to show that there are at least two potential forget-cliques, note that if $W_i$ intersects $\bag(l) \setminus \bag(p)$ for two different leaf-parent pairs $l_1, p_1$ and $l_2, p_2$, then because $W_i \subseteq \bag(l_1)$ and $W_i \subseteq \bag(l_2)$, by the connectivity condition it would hold that $W_i \subseteq \bag(p_1)$, contradicting that $W_i$ intersects $\bag(l_1) \setminus \bag(p_1)$.
Therefore, for every leaf-parent pair $l,p$ we can assign a unique terminal clique $W_i$, and therefore as there are at least two leafs there are at least two potential forget-cliques.
\end{proof}

We introduce notation for defining the leaf pushing operation.
Let $\ins = (G, \{W_1, \ldots, W_t\}, k)$ be an instance, $W_i$ a terminal clique, and $A \subseteq V(G)$ a set of vertices that is disjoint from $W_i$ and $|W_i \cup A| \le k+1$.
We denote by $\{W_1, \ldots, W_t\} \addv (W_i,A)$ the set obtained from $\{W_1, \ldots, W_t\}$ by replacing the terminal clique $W_i$ by $W_i \cup A$.
Again, if $W_i \cup A$ already exists, we just remove $W_i$.
We denote $G \addv (W_i, A) = G \mcliq (W_i \cup A)$ and by $\ins \addv (W_i,A)$ we denote the instance $(G \addv (W_i, A), \{W_1, \ldots, W_t\} \addv (W_i,A), k)$.
Observe that any solution of $\ins \addv (W_i,A)$ is also a solution of $\ins$.
Observe also that if $\ins \addv (W_i, A)$ is a yes-instance and $A' \subseteq A$, then $\ins \addv (W_i, A')$ is also a yes-instance.

Next we prove the main leaf pushing lemma, in particular that we can increase the size of a potential forget-clique by guessing an important separator.

\begin{lemma}
\label{lem:leafpush_impsep}
Let $\ins = (G, \{W_1, \ldots, W_t\}, k)$ be a maximally merged yes-instance with $t \ge 2$ and no safe separators and $W_i$ a potential forget-clique of $\ins$.
There exists a vertex $w \in W_i \setminus \ow_{\ins}(W_i)$ and in the graph $G \setminus (W_i \setminus \{w\})$ a non-empty important $(\{w\}, \allw_{\ins} \setminus W_i)$-separator $S$ disjoint from $W_i$ so that $\ins \addv (W_i, S)$ is a yes-instance.
\end{lemma}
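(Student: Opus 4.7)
The plan is to take a solution $(X,(T,\bag))$ witnessing that $W_i$ is a potential forget-clique, replace the ``escape set'' in the leaf bag by a smallest important separator $S$ in $G' := G \setminus (W_i \setminus \{w\})$, and use the pulling lemma (\Cref{lem:pull}) together with an attached leaf bag $W_i \cup S$ to produce a solution of $\ins \addv (W_i, S)$. Let $l, p, w$ be as in the definition of potential forget-clique, and set $S_0 := \bag(l) \setminus W_i = \bag(p) \setminus W_i$. Since $l$ is a leaf and $\bag(p) = \bag(l) \setminus \{w\}$, the set $\bag(p)$ separates $\{w\}$ from $X \setminus \bag(l)$ in $\torso_G(X)$, and by the path-to-torso correspondence used in \Cref{lem:indsub} it also separates $\{w\}$ from $\allw_\ins \setminus W_i$ in $G$; restricting to $G'$ shows that $S_0$ is a $(\{w\}, \allw_\ins \setminus W_i)$-separator in $G'$. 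I first argue $S_0 \neq \emptyset$: otherwise $\bag(p) = W_i \setminus \{w\}$ is itself a $(\{w\}, \allw_\ins \setminus W_i)$-separator of $G$, and the induced strict separation is a safe separation spanned by $W_a = W_b = W_i$ (the separator lies in $W_i$ and is trivially linked into $W_i$ on both sides), contradicting the absence of safe separators.

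Next, let $S$ be a smallest important $(\{w\}, \allw_\ins \setminus W_i)$-separator in $G'$ dominating $S_0$, which exists by \Cref{lem:impsepbasic}. From $|S| \le |S_0| = |\bag(l)| - |W_i| \le k+1 - |W_i|$ I obtain $|W_i \cup S| \le k+1$. The set $S$ is non-empty ($S = \emptyset$ would mean $w$ is disconnected from $\allw_\ins \setminus W_i$ in $G'$, so $W_i \setminus \{w\}$ is such a separator of $G$ and we recreate the safe-separator contradiction above) and disjoint from $W_i$: $S \subseteq V(G')$ forces $S \cap W_i \subseteq \{w\}$, and $w \in S$ together with the minimality of an important separator would force $S = \{w\}$, giving $\reach_{G'}(\{w\}, S) = \emptyset$, which contradicts $w \in \reach_{G'}(\{w\}, S_0) \subseteq \reach_{G'}(\{w\}, S)$ implied by domination.

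Set $A' := \reach_{G'}(\{w\}, S)$ and $B' := V(G') \setminus (A' \cup S)$, so $(B', S \cup (W_i \setminus \{w\}), A')$ is a separation of $G$. By \Cref{lem:imp_sep_dom}, $S$ is linked into $S_0 \cap (A' \cup S)$ in $G'$; combining these disjoint paths with the trivial one-vertex paths for $W_i \setminus \{w\} \subseteq \bag(l)$ (which use no vertex of $V(G')$ and so stay disjoint from the $S$-paths) shows that $S \cup (W_i \setminus \{w\})$ is linked into $\bag(l) \cap (A' \cup S \cup (W_i \setminus \{w\}))$ in $G$. I apply the pulling lemma (\Cref{lem:pull}) with $(X,(T,\bag))$, this separation, and the node $l$, obtaining a torso tree decomposition $(X'' := (X \cap B') \cup S \cup (W_i \setminus \{w\}), (T^*, \bag^*))$ of $G$ of width at most $k$ whose bag $r^*$ contains $S \cup (W_i \setminus \{w\})$. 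Attaching a new leaf $r'$ at $r^*$ with $\bag^+(r') := W_i \cup S$ produces $(X^+, (T^+, \bag^+))$ with $X^+ = X'' \cup \{w\} = (X \cap B') \cup W_i \cup S$, which I claim is a solution of $\ins \addv (W_i, S)$.

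The step requiring the most care is verifying this last claim. The width is at most $k$ because $|W_i \cup S| \le k+1$. Coverage of $\allw_\ins \cup S$ follows from $S \subseteq X^+$ together with $\allw_\ins \setminus W_i \subseteq (B' \cup S) \cap X \subseteq X^+$, the first inclusion using that $S$ separates $\{w\}$ from $\allw_\ins \setminus W_i$ in $G'$. For the edge condition in $\torso_{G \mcliq (W_i \cup S)}(X^+)$, clique edges added by $\mcliq$ lie in $\bag^+(r') = W_i \cup S$, and edges between pairs in $X^+ \setminus \{w\} = X''$ are edges of $\torso_G(X'')$ (since $V(G) \setminus X^+ \subseteq V(G) \setminus X''$) and hence covered by $(T^*, \bag^*)$. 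The delicate case is an edge of $\torso_G(X^+)$ between $w$ and some $v \in X^+$: $w \in A'$, $A' \cap X^+ = \{w\}$, and the separator $S \cup (W_i \setminus \{w\})$ of $G$ is contained in $X^+$, so any path from $w$ through $V(G) \setminus X^+$ must re-enter $X^+$ at a vertex of $W_i \cup S$. Thus every torso-neighbor of $w$ in $X^+$ lies in $W_i \cup S \subseteq \bag^+(r')$, which completes the verification.
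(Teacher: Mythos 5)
Your proposal is correct and follows essentially the same route as the paper's proof: extract a separator $S_0 = \bag(l)\setminus W_i$ from the leaf bag, replace it by a smallest dominating important separator $S$ in $G\setminus(W_i\setminus\{w\})$, apply the pulling lemma with the separation $(B', S\cup(W_i\setminus\{w\}), A')$ and node $l$, and then attach the leaf bag $W_i\cup S$. The small additional verifications you carry out (the non-emptiness of $S_0$, the size bound $|W_i\cup S|\le k+1$, and the explicit split of torso edges in the final edge-condition check) are all consistent with what the paper argues, just spelled out in slightly more detail.
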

\begin{proof}
By the definition of potential forget-clique, let $(X, (T, \bag))$ be a solution so that $(T, \bag)$ contains a leaf node $l$ with a parent $p$ so that $W_i \subseteq \bag(l)$ and $\bag(p) = \bag(l) \setminus \{w\}$ for $w \in W_i \setminus \ow_{\ins}(W_i)$.
Denote $W_i^f = W_i \setminus \{w\}$.

By \Cref{lem:indsub} it holds that $\bag(p) = \bag(l) \setminus \{w\}$ separates $w$ from $X \setminus \{w\}$, and therefore separates $w$ from $\allw_{\ins} \setminus \{w\}$.
Therefore, in the graph $G \setminus W_i^f$, the set $\bag(l) \setminus W_i$ is a $(\{w\}, \allw_{\ins} \setminus W_i)$-separator.
The set $\allw_{\ins} \setminus W_i$ is non-empty because $\ins$ is maximally merged and $t \ge 2$.

Let $S$ be a smallest important $(\{w\}, \allw_{\ins} \setminus W_i)$-separator in $G \setminus W_i^f$ that dominates $\bag(l) \setminus W_i$.
The separator $S$ does not contain $w$ because $w \in \reach_{G \setminus W_i^f}(\{w\}, \bag(l) \setminus W_i)$, and therefore $S$ is disjoint from $W_i$.
The separator $S$ is non-empty because otherwise $W_i^f$ would separate $w$ from $\allw_{\ins} \setminus W_i$ and be a safe separator.
Let $(A, S, B)$ be the separation in $G \setminus W_i^f$ with $B = \reach_{G \setminus W_i^f}(\{w\}, S)$ and $A = V(G) \setminus (W_i^f \cup B \cup S)$, which implies $\allw_{\ins} \setminus W_i \subseteq A \cup S$.
By \Cref{lem:imp_sep_dom}, $S$ is linked into $(\bag(l) \setminus W_i) \cap (B \cup S)$ in $G \setminus W_i^f$.

By adding $W_i^f$ back to the graph and to the separator, we get that $(A, S \cup W_i^f, B)$ is a separation of $G$, and moreover $S \cup W_i^f$ is linked into $\bag(l) \cap (B \cup S \cup W_i^f)$ (the vertices in $W_i^f$ are linked by trivial one-vertex paths).
Let $X' = (X \cap A) \cup S \cup W_i^f$ and $(X', (T', \bag'))$ be the torso tree decomposition obtained by applying the pulling lemma (\Cref{lem:pull}) with $(X, (T, \bag))$, the separation $(A, S \cup W_i^f, B)$, and the node $l$ of $T$.

Now, $(X', (T', \bag'))$ has width at most $k$ and it covers $\allw_{\ins} \setminus \{w\}$.
Also, $S \cup W_i^f \subseteq \bag'(l)$.
We construct a torso tree decomposition $(X' \cup \{w\}, (T'', \bag''))$ from $(X', (T', \bag'))$ by adding a leaf $l'$ adjacent to $l$ with $\bag''(l') = S \cup W_i$.
Because $|S| \le |\bag(l) \setminus W_i|$, it follows that $|\bag''(l')| \le |\bag(l)| \le k+1$.
Also, $(X' \cup \{w\}, (T'', \bag''))$ covers $S \cup \allw_{\ins}$, and therefore it remains to prove that $(T'', \bag'')$ is indeed a tree decomposition of $\torso(X' \cup \{w\})$.
It satisfies the vertex condition because $(T', \bag')$ satisfied the vertex condition for $X'$ and the vertex $w$ is in the bag $\bag''(l')$.
It satisfies the connectivity condition because $(T', \bag')$ satisfied the connectivity condition, the vertex $w$ is in no bag of $(T', \bag')$, and $S \cup W_i^f \subseteq \bag'(l)$.
It remains to prove that $(T'', \bag'')$ satisfies the edge condition, which follows from the edge condition of $(T', \bag')$ and the fact that $S \cup W_i^f$ separates $w$ from $X'$.
\end{proof}

In the algorithm, we will apply \Cref{lem:leafpush_impsep} together with the important separator hitting set lemma (\Cref{lem:imp_sep_hit}).
In particular, we will add only a single vertex of $S$ to $W_i$ in the actual leaf pushing branching.
Next we show that if $W_i$ is not a unique largest terminal clique, adding any vertex to $W_i$ increases its flow potential.

\begin{lemma}
\label{lem:addvfld}
Let $\ins = (G, \{W_1, \ldots, W_t\}, k)$ be an instance with $t \ge 2$ that has no safe separations.
Let also $i \in [t]$ so that $|W_i| \le k$ and exists $j \neq i$ so that $|W_j| \ge |W_i|$ and $W_j$ is not a superset of $W_i$, and let $v \in V(G) \setminus W_i$.
Then $\flp_{\ins \addv (W_i,\{v\})}(W_i \cup \{v\}) \ge \flp_{\ins}(W_i)+1$.
\end{lemma}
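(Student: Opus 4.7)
The plan is to first reduce the inequality to an equality and then derive a contradiction from the existence of a ``too small'' witness separation using the strict linkedness guaranteed by the no-safe-separation hypothesis.

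The first step is to pin down $\flp_{\ins}(W_i)$ exactly. Since $\ins$ has no safe separations, \Cref{lem:strictlinked} gives that $W_i$ is strictly linked into $W_j$, so $\flow_G(W_i, W_j) = |W_i|$. Because $W_j \subseteq \ow_{\ins}(W_i)$ we have $\flow_G(W_i, \ow_{\ins}(W_i)) = |W_i|$, and because $|\ow_{\ins}(W_i)| \ge |W_j| \ge |W_i|$ the remark following \Cref{def:dlinked} (i.e.\ the criterion for $\flow < \flp$) yields $\flp_{\ins}(W_i) = \flow_G(W_i, \ow_{\ins}(W_i)) = |W_i|$. The trivial upper bound gives $\flp_{\ins \addv (W_i,\{v\})}(W_i \cup \{v\}) \le |W_i|+1$, so it suffices to prove equality in this upper bound.

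Next I would argue by contradiction. Set $W_i' = W_i \cup \{v\}$, $\ins' = \ins \addv (W_i,\{v\})$, and $G' = G \mcliq W_i'$. Suppose there exists a separation $(A',S',B')$ in $G'$ with $W_i' \subseteq A' \cup S'$, $\ow_{\ins'}(W_i') \subseteq B' \cup S'$, $B' \ne \emptyset$, and $|S'| \le |W_i|$. Since $G'$ has a superset of the edges of $G$, the tuple $(A',S',B')$ is also a separation in $G$. The hypothesis that $W_j$ is not a superset of $W_i$ combined with $W_i' \supset W_i$ gives $W_j \ne W_i'$, so $W_j$ remains a terminal clique of $\ins'$ and therefore $W_j \subseteq \ow_{\ins'}(W_i') \subseteq B' \cup S'$. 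Thus $S'$ is a $(W_i, W_j)$-separator in $G$ of order at most $|W_i| = \flow_G(W_i, W_j)$, and strict linkedness forces $S' = W_i$, or $S' = W_j$ (the latter only when $|W_i| = |W_j|$).

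The key step, and the only one requiring real content, is showing that either outcome makes $(A',S',B')$ into a safe separation of $\ins$, contradicting the hypothesis. Strictness of the separation is immediate: $B' \ne \emptyset$ by construction, and $A' \ne \emptyset$ because if $S' = W_i$ then $v \in W_i' \setminus S' \subseteq A'$, while if $S' = W_j$ then any $u \in W_i \setminus W_j$ (which exists by the non-superset hypothesis) lies in $A'$. For safety, in the case $S' = W_i$ I would take spanning terminal cliques $W_a = W_i$ and $W_b = W_j$: then $(A' \cup S') \cap W_i = W_i$ is trivially linked from $W_i$, and $(B' \cup S') \cap W_j = W_j$ into which $W_i$ is linked by strict linkedness. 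In the case $S' = W_j$, take $W_a = W_b = W_j$: both intersections reduce to $W_j$, into which $W_j$ is trivially linked. Either way $(A',S',B')$ is a safe separation, the desired contradiction. The only subtle ingredient beyond bookkeeping is recognising that the non-superset condition on $W_j$ is exactly what lets us conclude $W_j$ is still a terminal clique of $\ins'$ and that $A'$ is non-empty in the $S' = W_j$ case.
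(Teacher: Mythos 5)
Your proof is correct and follows essentially the same strategy as the paper's: assume a witness separation $(A',S',B')$ in $G \mcliq (W_i \cup \{v\})$ of order at most $|W_i|$ exists, observe it is also a separation of $G$ and that $W_j$ lands on the $B'$-side (because $W_j \neq W_i \cup \{v\}$), so $S'$ is a $(W_i,W_j)$-separator, invoke strict linkedness (\Cref{lem:strictlinked}) to force $S' = W_i$ or $S' = W_j$, and conclude that $(A',S',B')$ would then be a safe separation, a contradiction. The only differences are cosmetic: you compute $\flp_{\ins}(W_i) = |W_i|$ exactly (the paper only needs the universal bound $\flp_{\ins}(W_i) \le |W_i|$, and the ``criterion for $\flow < \flp$'' appears after the definition of flow potential, not after \Cref{def:dlinked}), and you spell out the verification that $(A',S',B')$ is strict and safe with explicit spanning terminal cliques, which the paper leaves to the reader.
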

\begin{proof}
It suffices to show that $W_i \cup \{v\}$ has the maximum possible flow potential, in particular that $\flp_{\ins \addv (W_i,\{v\})}(W_i \cup \{v\}) = |W_i \cup v|$.
Suppose otherwise, in particular suppose that there exists a separation $(A,S,B)$ with $W_i \cup \{v\} \subseteq A \cup S$, $\ow_{\ins \addv (W_i,\{v\})}(W_i \cup \{v\}) \subseteq B \cup S$, $B$ non-empty, and $|S| < |W_i \cup \{v\}|$.
Note that $|S| < |W_i \cup \{v\}|$ implies that also $A$ is non-empty, and note that because $W_j \neq W_i \cup \{v\}$, we have that $W_j \subseteq \ow_{\ins \addv (W_i,\{v\})}(W_i \cup \{v\}) \subseteq B \cup S$.
In particular, $S$ is a $(W_i,W_j)$-separator.
Because $\ins$ has no safe separations, by \Cref{lem:strictlinked} $W_i$ is strictly linked into $W_j$.
Therefore, because $|S| \le |W_i| \le |W_j|$, either $S = W_i$ or $S = W_j$.
However, in either case $(A,S,B)$ would be a safe separation of $\ins$, which is a contradiction.
\end{proof}

\subsection{The algorithm}
\label{subsec:algotogether}
In this subsection, we put the reduction rules and branching together to a complete algorithm for \pstw, and analyze the algorithm.

First, we need the following lemma to handle corner cases.

\begin{lemma}
\label{lem:smallcases}
Instances with $t=1$ or $|V(G)| \le k+2$ can be solved in $k^{\OO(1)} m$ time.
\end{lemma}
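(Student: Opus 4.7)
The plan is to dispatch both cases directly by constructing trivial torso tree decompositions, with only one mildly non-trivial obstruction (the complete graph $K_{k+2}$) requiring any real care.

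First, if $t = 1$ then $\allw_\ins = W_1$ with $|W_1| \le k+1$, so I would set $X = W_1$ and take $(T,\bag)$ to be the single-node tree whose unique bag equals $W_1$. This is a valid solution: it covers $W_1$, it has width $|W_1| - 1 \le k$, and a tree decomposition consisting of a single bag containing all vertices is always a valid tree decomposition of any graph, in particular of $\torso_G(W_1)$.

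For the case $|V(G)| \le k+2$ (with $t \ge 2$, the remaining subcase) I would split on $|\allw_\ins|$. If $|\allw_\ins| \le k+1$, the same single-bag construction with $X = \allw_\ins$ works. Otherwise $|\allw_\ins| = k+2$, which combined with $|V(G)| \le k+2$ forces $\allw_\ins = V(G)$ and $|V(G)| = k+2$; then any solution must have $X = V(G)$ and therefore correspond to an ordinary tree decomposition of $\torso_G(V(G)) = G$ of width at most $k = |V(G)| - 2$. I would then invoke the standard fact that a graph on $n$ vertices has treewidth at most $n - 2$ if and only if it is not complete: if $G$ is complete, conclude that $\ins$ is a no-instance, which is sound because every tree decomposition of $K_{k+2}$ must contain a single bag of size $k+2$; otherwise pick any non-adjacent pair $u,v \in V(G)$ and output the two-bag decomposition with bags $V(G) \setminus \{u\}$ and $V(G) \setminus \{v\}$ joined by a single edge. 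Each bag has size $k+1$, so the width is $k$; the vertex and connectedness conditions are immediate, and the edge condition holds because the only edge of $K_{V(G)}$ missing from both bags would be $uv$, which is absent from $G$ by choice of $u,v$.

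Checking completeness of $G$ and finding a non-adjacent pair when one exists can be done in $O(m + k^2)$ time by scanning the adjacency structure (using that $|V(G)| \le k+2$), and the constant-size tree decomposition is then built in $O(k)$ time, giving overall running time $k^{\OO(1)} m$ in polynomial space. There is essentially no obstacle in the argument; the only point that needs stating is the $K_{k+2}$ obstruction, which is handled cleanly because in that scenario the torso is forced to be the full graph $G$ and no width-$k$ tree decomposition can then exist.
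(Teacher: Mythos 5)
Your proof is correct and follows the paper's proof essentially exactly: single-bag solution when $t=1$ or $|\allw_\ins|\le k+1$, and when $|V(G)|\le k+2$ with $|\allw_\ins|=k+2$ the argument reduces to deciding whether $G$ has a tree decomposition of width $|V(G)|-2$, which is possible iff $G$ is not complete. The only difference is cosmetic: the paper simply invokes this last fact, while you additionally spell out the two-bag decomposition on $V(G)\setminus\{u\}$ and $V(G)\setminus\{v\}$ for a non-edge $uv$, which is a nice explicit witness but not a different route.
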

\begin{proof}
If $t=1$, then because $|W_1| \le k+1$, there is a trivial solution where $X=W_1$ and the tree decomposition has a single bag containing $X$.
When $|V(G)| \le k+2$, we consider the following cases.
First, if $|\allw_{\ins}| \le k+1$, then again the trivial single-bag solution suffices.
Otherwise, we have that $X = \allw_{\ins} = V(G)$, and there exists a solution if and only if $G$ is not a complete graph.
\end{proof}

\begin{algorithm}[!b]
\caption{A $k^{\OO(kt)} nm$ time algorithm for \pstw.\label{alg:partsubtw}}
\textbf{Input:} Instance $\ins = (G, \{W_1, \ldots, W_t\}, k)$.\\
\textbf{Output:} Either a solution of $\ins$ or \no.
\begin{algorithmic}[1]
\If {$t = 1$ or $|V(G)| \le k+2$} \Return Case-analysis($\ins$)\label{alg:partsubtw:bf} \Comment{\Cref{lem:smallcases}}
\EndIf
\If {Exists a safe separation $(A,S,B)$}\label{alg:partsubtw:safesepcheck}
\State \Return Combine(Solve($\ins \rescliqs (A,S)$), Solve($\ins \rescliqs (B,S)$))\label{alg:partsubtw:safesepcheckret}
\EndIf
\ForAll{$i, j \in [t]$ with $i \neq j$ and $|W_i \cup W_j| \le k+1$}\label{alg:partsubtw:forcliqupairs}
\State $sol \gets$ Solve($\ins \mergcliqs (W_i,W_j)$)\label{alg:partsubtw:mergrec}
\If {$sol \neq \no$} \Return $sol$\label{alg:partsubtw:retmc}
\EndIf
\EndFor
\If {Exists $i,j \in [t]$ with $W_i \subset W_j$} \Return \no\label{alg:partsubtw:retsubset}
\EndIf
\ForAll{$i \in [t]$ so that $|W_i| \le k$ and exists $j \neq i$ with $|W_j| \ge |W_i|$}\label{alg:partsubtw:itforgcliqs}
\ForAll{$w \in W_i$}\label{alg:partsubtw:itforgcliqw}
\State $H \gets$ $\text{ImpSepHittingSet}_{G \setminus (W_i \setminus \{w\})}$($\{w\}$, $\allw_{\ins} \setminus W_i$, $k$)\label{alg:partsubtw:impsephitset} \Comment{\Cref{lem:leafpush_impsep,lem:imp_sep_hit}}
\ForAll{$v \in H \setminus \{w\}$}\label{alg:partsubtw:ithitset}
\State $sol \gets$ Solve($\ins \addv (W_i,\{v\})$)\label{alg:partsubtw:addrec}
\If {$sol \neq \no$} \Return $sol$\label{alg:partsubtw:retadd}
\EndIf
\EndFor
\EndFor
\EndFor
\State \Return \no\label{alg:partsubtw:finalno}
\end{algorithmic}
\end{algorithm}

The algorithm for \pstw is presented in the pseudocode \Cref{alg:partsubtw}.
We denote recursive applications of the algorithm by the function ``Solve''.

First, on \Cref{alg:partsubtw:bf}, \Cref{alg:partsubtw} handles the special cases of $t=1$ and $|V(G)| \le k+2$ by \Cref{lem:smallcases}.
Then, on \Cref{alg:partsubtw:safesepcheck,alg:partsubtw:safesepcheckret} the reduction by safe separations discussed in \Cref{subsec:safesep} is implemented.
In particular, if there exists a safe separation $(A,S,B)$, then the instances $\ins \rescliqs (A,S)$ and $\ins \rescliqs (B,S)$ are solved recursively, and if both of them return a solution the solutions are combined to a solution of $\ins$, and if either of them returns \no, then we return \no.
The function ``Combine'' on \Cref{alg:partsubtw:safesepcheckret} denotes an operation that returns \no if either of its arguments is \no, and if its arguments are a solution $(X_A, (T_A, \bag_A))$ of $\ins \rescliqs (A,S)$ and a solution $(X_B, (T_B, \bag_B))$ of $\ins \rescliqs (B,S)$ then it returns the solution $(X_A \cup X_B, (T_A, \bag_A) \cup_S (T_B, \bag_B))$ of $\ins$.

Then, the terminal clique merging branching discussed in \Cref{subsec:tercliqmerg} is implemented on \Cref{alg:partsubtw:forcliqupairs,alg:partsubtw:mergrec,alg:partsubtw:retmc}.
In particular, the algorithm branches on merging all pairs of terminal cliques $W_i,W_j$ with $|W_i \cup W_j| \le k+1$ and returns a solution if any of the resulting instances were yes-instances.
After this, $\ins$ is maximally merged, and this is immediately used on \Cref{alg:partsubtw:retsubset} to return \no if some terminal clique is a subset of some other terminal clique.

Then, the leaf pushing branching discussed in \Cref{subsec:leafpush} is implemented on \Cref{alg:partsubtw:itforgcliqs,alg:partsubtw:itforgcliqw,alg:partsubtw:impsephitset,alg:partsubtw:ithitset,alg:partsubtw:addrec,alg:partsubtw:retadd}.
The algorithm branches on all candidates for a potential forget-clique $W_i$ that is not a uniquely largest terminal clique, and a vertex $w \in W_i$ for which there exists an important $(\{w\}, \allw_{\ins} \setminus W_i)$-separator $S$ in the graph $G \setminus (W_i \setminus \{w\})$ so that $w$ and $S$ satisfy the conditions of \Cref{lem:leafpush_impsep}.
The algorithm does not branch on all such important separators $S$, but instead uses the important separator hitting set lemma (\Cref{lem:imp_sep_hit}) to obtain a vertex set $H$ of size at most $k$ that intersects all important $(\{w\}, \allw_{\ins} \setminus W_i)$-separator $S$ of size at most $k$ in the graph $G \setminus (W_i \setminus \{w\})$.
Then, a single vertex of such an important separator can be guessed by guessing a single vertex in $H$, so the algorithm branches on all vertices in $H \setminus \{w\}$ to add to $W_i$.
Finally, on \Cref{alg:partsubtw:finalno} the algorithm concludes that $\ins$ is a no-instance if none of the branches returned a solution.


The algorithm clearly works in polynomial space.
We then prove the correctness of \Cref{alg:partsubtw}.
Its running time will be analyzed in \Cref{subsec:algo1tc}.

We start by proving that \Cref{alg:partsubtw} is correct when it returns a solution.

\begin{lemma}
\label{lem:algpartsubtwc1}
If \Cref{alg:partsubtw} returns a solution, then it is a solution of $\ins$.
\end{lemma}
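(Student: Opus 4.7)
The plan is to induct on the recursion depth, which is finite (termination follows from the running time analysis to come in \Cref{subsec:algo1tc}). In the base case at \Cref{alg:partsubtw:bf}, \Cref{lem:smallcases} guarantees that the case-analysis procedure returns a valid solution when $t=1$ or $|V(G)| \le k+2$. For the inductive step I would inspect each line at which a non-\no value can be returned and verify it is a solution of $\ins$, using the inductive hypothesis to assume that non-\no returns from recursive \textsf{Solve} calls are correct solutions of the respective subinstances.

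At \Cref{alg:partsubtw:safesepcheckret}, the \textsf{Combine} routine returns a non-\no value only when both Solve($\ins \rescliqs (A,S)$) and Solve($\ins \rescliqs (B,S)$) returned non-\no values; by induction these are solutions $(X_A,(T_A,\bag_A))$ and $(X_B,(T_B,\bag_B))$ of the respective subinstances. I would then apply \Cref{lem:sepredgen} directly to conclude that the joined decomposition $(X_A \cup X_B, (T_A,\bag_A) \cup_S (T_B,\bag_B))$ is a solution of $\ins$. Note that the lemma invoked here is \Cref{lem:sepredgen}, which holds for arbitrary separations and does not require $(A,S,B)$ to be safe; safeness is only needed in the other direction (handled by \Cref{lem:safesepyesyes} in the completeness argument).

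At \Cref{alg:partsubtw:retmc}, the inductive hypothesis gives that $sol$ is a solution of $\ins \mergcliqs (W_i,W_j)$. The observation stated in \Cref{subsec:tercliqmerg} that any such solution is also a solution of $\ins$ is what I would invoke: $\allw_{\ins \mergcliqs (W_i,W_j)} = \allw_\ins$ by definition of $\mergcliqs$, the width bound is preserved, and $G \mergcliqs (W_i,W_j)$ differs from $G$ only by edges inside $W_i \cup W_j$, which (being a clique in $G \mergcliqs (W_i,W_j)$) sits inside a single bag of $(T,\bag)$, so the torso tree decomposition in $G \mergcliqs (W_i,W_j)$ remains one in $G$. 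The argument for \Cref{alg:partsubtw:retadd} is entirely parallel, invoking the analogous observation from \Cref{subsec:leafpush} that any solution of $\ins \addv (W_i,\{v\})$ is a solution of $\ins$.

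The remaining return statements (\Cref{alg:partsubtw:retsubset,alg:partsubtw:finalno}) return \no and are irrelevant to this lemma, so the induction is complete. The main obstacle is essentially nonexistent: soundness reduces immediately to \Cref{lem:sepredgen} together with the two small observations already recorded in \Cref{subsec:tercliqmerg,subsec:leafpush}. The genuinely interesting direction — that whenever $\ins$ is a yes-instance the algorithm actually finds a solution — is a separate statement requiring the safe-separation machinery (\Cref{lem:safesepyesyes}) and the leaf-pushing lemma (\Cref{lem:leafpush_impsep}), and is not needed here.
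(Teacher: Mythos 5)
Your proposal is correct and matches the paper's proof almost verbatim: induction on the recursion tree with the base case handled by \Cref{lem:smallcases}, \Cref{lem:sepredgen} for the safe-separation combine step, and the observations recorded in \Cref{subsec:tercliqmerg,subsec:leafpush} for the merge and leaf-push returns. The one small quibble is that your parenthetical justification for why a solution of $\ins \mergcliqs (W_i,W_j)$ is a solution of $\ins$ has the reasoning slightly reversed — what matters is that $E(\torso_G(X)) \subseteq E(\torso_{G \mergcliqs (W_i,W_j)}(X))$ since $G$ has fewer edges, so the tree decomposition of the larger torso is also one of the smaller; the clique-in-a-bag point is only needed for the converse direction — but since you are merely citing the observation the paper already states, this does not affect the soundness of the proof.
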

\begin{proof}
For the case analysis of \Cref{alg:partsubtw:bf} this is by \Cref{lem:smallcases}.
Then, we use induction on the recursion tree, assuming that the lemma holds for recursive calls of the algorithm.

Now, whenever \Cref{alg:partsubtw} returns on \Cref{alg:partsubtw:safesepcheckret} after finding a safe separation $(A,S,B)$ and combining solutions of $\ins \rescliqs (A,S)$ and $\ins \rescliqs (B,S)$ into a solution of $\ins$, it is correct by induction and \Cref{lem:sepredgen}.
The cases when the algorithm returns a solution after terminal clique merging on \Cref{alg:partsubtw:retmc} or after leaf pushing on \Cref{alg:partsubtw:retadd} are correct by induction and the facts that any solution of $\ins \mergcliqs (W_i,W_j)$ is also a solution of $\ins$ and any solution of $\ins \addv (W_i,\{v\})$ is also a solution of $\ins$.
\end{proof}

We then show that \Cref{alg:partsubtw} is correct when it returns \no.

\begin{lemma}
\label{lem:algpartsubtwc2}
If \Cref{alg:partsubtw} returns \no, then $\ins$ is a no-instance.
\end{lemma}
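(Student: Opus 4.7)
The plan is to prove the lemma by strong induction on the lex measure $(|V(G)|, t, -\sum_i \flp_{\ins}(W_i))$, which strictly decreases in every recursive call of \Cref{alg:partsubtw} (by \Cref{lem:addvfld,lem:mergcliqfld}, together with the strictness of safe separations). There are four places where the algorithm returns \no. The base case ($t=1$ or $|V(G)| \le k+2$) is covered by \Cref{lem:smallcases}. If \no is returned via Combine on \Cref{alg:partsubtw:safesepcheckret}, then by induction one of $\ins \rescliqs (A,S)$ or $\ins \rescliqs (B,S)$ is a no-instance, so the contrapositive of \Cref{lem:safesepyesyes} gives that $\ins$ is a no-instance. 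For the subset check on \Cref{alg:partsubtw:retsubset}, the merging loop has just tried the pair $W_i \subset W_j$ (which is allowed since $|W_i \cup W_j| = |W_j| \le k+1$) and the recursive call returned \no; because $W_j$ is already a clique in $G$ and $W_i \cup W_j = W_j$, the definitions give $\ins \mergcliqs(W_i, W_j) = (G, \{W_1,\ldots,W_t\} \setminus \{W_i\}, k)$, an instance with the same graph and the same $\allw$ as $\ins$, hence with the same solution set, so $\ins$ is a no-instance by the inductive hypothesis.

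The main case is the fallthrough \no on \Cref{alg:partsubtw:finalno}. At this point $t \ge 2$; $\ins$ admits no safe separation, so \Cref{lem:strictlinked} gives strict linkedness of every pair $W_i, W_j$ with $|W_i| \le |W_j|$; and by induction every merge branch returned \no, so $\ins$ is maximally merged, while the subset check further ensures no terminal clique is contained in another. Suppose for contradiction that $\ins$ is a yes-instance. \Cref{lem:twopotforcli} then produces at least two potential forget-cliques, and we pick one, $W_i$, that is not the unique largest terminal clique, witnessed by some $j \ne i$ with $|W_j| \ge |W_i|$. We rule out $|W_i| = k+1$: in the witnessing solution the leaf bag $\bag(l)$ would satisfy $W_i \subseteq \bag(l)$ and $|\bag(l)| \le k+1$, forcing $\bag(l) = W_i$, so $\bag(p) = W_i \setminus \{w\}$ would be a $(W_i, W_j)$-separator of size $|W_i|-1$, contradicting that $W_i$ is linked into $W_j$. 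So $W_i$ satisfies the guards on \Cref{alg:partsubtw:itforgcliqs}.

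\Cref{lem:leafpush_impsep} then delivers $w \in W_i \setminus \ow_{\ins}(W_i)$ and a non-empty important $(\{w\}, \allw_{\ins} \setminus W_i)$-separator $S$ (disjoint from $W_i$) in the graph $G \setminus (W_i \setminus \{w\})$ with $\ins \addv(W_i, S)$ a yes-instance; inspection of its proof additionally gives $|S| \le |\bag(l) \setminus W_i| \le k$. The hitting set $H$ returned on \Cref{alg:partsubtw:impsephitset} by \Cref{lem:imp_sep_hit} therefore intersects $S$ in some vertex $v$, and $v \neq w$ because $S$ is disjoint from $W_i$. Since $\{v\} \subseteq S$, the monotonicity noted in \Cref{subsec:leafpush} gives that $\ins \addv(W_i, \{v\})$ is also a yes-instance; by induction the call on \Cref{alg:partsubtw:addrec} would return a solution, contradicting that the algorithm reached \Cref{alg:partsubtw:finalno}. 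The main obstacle I anticipate is the bookkeeping in this last case: one has to confirm that the selected $W_i$ meets the loop guards on \Cref{alg:partsubtw:itforgcliqs} and that the separator $S$ produced by \Cref{lem:leafpush_impsep} has size at most $k$ so that the hitting-set lemma applies, both of which are ultimately consequences of strict linkedness and of the width-$k$ bound on the hypothetical solution.
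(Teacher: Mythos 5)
Your proof is correct and follows essentially the same approach as the paper's: induction over the recursion, using \Cref{lem:smallcases} for the base cases, the contrapositive of \Cref{lem:safesepyesyes} for the safe-separation branch, maximal merging for the subset check, and \Cref{lem:twopotforcli}, \Cref{lem:leafpush_impsep}, \Cref{lem:imp_sep_hit}, and the monotonicity of $\addv$ for the fall-through case. Your additions --- an explicit well-founded lexicographic measure rather than ``induction on the recursion tree,'' a direct leaf-bag argument ruling out $|W_i|=k+1$ instead of citing it from \Cref{lem:leafpush_impsep}, and an explicit check that $|S|\le k$ so that the hitting set of \Cref{lem:imp_sep_hit} applies --- are all sound and merely make explicit some steps the paper leaves implicit.
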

\begin{proof}
For the case analysis of \Cref{alg:partsubtw:bf} this is by \Cref{lem:smallcases}.
Then we use induction on the recursion tree, assuming that the lemma holds for recursive calls of the algorithm.
The correctness of safe separation reduction on \Cref{alg:partsubtw:safesepcheckret} follows from induction and \Cref{lem:safesepyesyes}.

Then, after the safe separation reduction of \Cref{alg:partsubtw:safesepcheck,alg:partsubtw:safesepcheckret} we can assume that $\ins$ has no safe separations, and by the terminal clique merging of \Cref{alg:partsubtw:forcliqupairs,alg:partsubtw:mergrec,alg:partsubtw:retmc} and induction we can assume that $\ins$ is maximally merged.
The correctness of returning \no on \Cref{alg:partsubtw:retsubset} if there are terminal cliques $W_i \subset W_j$ follows from the facts that $\ins$ is maximally merged and if $\ins$ would be a yes-instance, then $\ins \mergcliqs (W_i, W_j)$ would also be a yes-instance.

It remains to argue that if \Cref{alg:partsubtw} returns from the final \Cref{alg:partsubtw:finalno}, then $\ins$ is a no-instance.
For the sake of contradiction, assume that \Cref{alg:partsubtw} returns \no from \Cref{alg:partsubtw:finalno} but $\ins$ is a yes-instance.
Now, by \Cref{lem:twopotforcli}, $\ins$ has at least two potential forget-cliques.
Let $W_i$ be a smallest potential forget-clique of $\ins$.
By \Cref{lem:leafpush_impsep} we have that $|W_i| \le k$, and therefore $W_i$ satisfies the conditions of \Cref{alg:partsubtw:itforgcliqs}.
By \Cref{lem:leafpush_impsep}, there exists a vertex $w \in W_i$ and in $G \setminus (W_i \setminus \{w\})$ a non-empty important $(\{w\}, \allw_{\ins} \setminus W_i)$-separator $S$ disjoint from $W_i$ so that $\ins \addv (W_i, S)$ is a yes-instance.
Some iteration of \Cref{alg:partsubtw:itforgcliqw} will choose this vertex $w \in W_i$, and it holds that $H \cap S \neq \emptyset$, so some iteration of \Cref{alg:partsubtw:ithitset} will choose a vertex $v \in S$.
Because $\ins \addv (W_i, S)$ is a yes-instance, $\ins \addv (W_i, \{v\})$ is also a yes-instance, so by induction we get that \Cref{alg:partsubtw} would return on \Cref{alg:partsubtw:retadd}, which is a contradiction.
\end{proof}

\subsection{Running time analysis}
\label{subsec:algo1tc}
We then prove that the running time of \Cref{alg:partsubtw} is $k^{\OO(kt)} nm$.
For this, we introduce the measures $\fld_{\ins}(W_i)$ of a terminal clique and $\fld(\ins)$ of the instance.

We define the measure of a terminal clique based on its flow potential as

\[\fld_{\ins}(W_i) = 3k+3-\flp_{\ins}(W_i).\]
Observe that $2k+2 \le \fld_{\ins}(W_i) \le 3k+3$, which in particular implies that the sum of measures of two terminal cliques is always at least $k+1$ larger than the measure of a single terminal clique.

Then, the measure of the instance is defined as

\[\fld(\ins) = \sum_{i=1}^t \fld_{\ins}(W_i).\]

Note that $(2k+2)t \le \fld(\ins) \le (3k+3)t$.
We will show the running time of the algorithm to be of form $k^{\OO(\fld(\ins))} nm = k^{\OO(kt)} nm$.

To this end, we will show that breaking the instance by a safe separation does not increase the measure, and that both the terminal clique merging branching of \Cref{alg:partsubtw:mergrec} and the leaf pushing branching of \Cref{alg:partsubtw:addrec} decrease the measure by at least one.
We start by proving the property for safe separations, using \Cref{lem:sepsame,lem:sepflp,lem:ssflp2}.

\begin{lemma}
\label{lem:safeseptcfld}
Let $\ins = (G, \{W_1, \ldots, W_t\}, k)$ be an instance and $(A,S,B)$ a safe separation.
It holds that $\fld(\ins \rescliqs (A,S)) \le \fld(\ins)$.
\end{lemma}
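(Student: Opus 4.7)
The plan is to apply the trichotomy of \Cref{lem:ssflp2} and dispatch the three cases separately, writing $\ins' := \ins \rescliqs (A,S)$ throughout. The proof will rely repeatedly on the elementary bound $\fld_\ins(W) = 3k+3-\flp_\ins(W) \in [2k+2,\, 3k+3]$, which follows from $0 \le \flp_\ins(W) \le |W| \le k+1$.

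In Case 2 ($\ins'$ has identical terminal cliques to $\ins$), \Cref{lem:sepsame} immediately gives $\flp_{\ins'}(W_i) \ge \flp_\ins(W_i)$ for every $W_i$, hence $\fld_{\ins'}(W_i) \le \fld_\ins(W_i)$, and summing proves the claim. In Case 3 ($S \in \ins' \setminus \ins$ and, as established in the proof of \Cref{lem:ssflp2}, $\ins \setminus \ins' = \{W_b\}$), the strategy is to control the newly added clique $S$ by the explicit bound $\flp_{\ins'}(S) \ge \flp_\ins(W_b)$ supplied by \Cref{lem:ssflp2} itself, and to control every $W_i \in \ins \cap \ins'$ by \Cref{lem:sepflp} with witness $W_j := S$ (which lies in $\ins'$ and trivially satisfies $W_j \supseteq S$). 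Summing the resulting per-clique inequalities gives $\fld(\ins') \le \fld(\ins)$.

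Case 1 is the substantive case. Here $|\ins'| < |\ins|$, so at least one terminal clique is genuinely lost. The key preliminary observation is that $\ins'$ always contains some $W_j$ with $W_j \supseteq S$: if $S \in \ins' \setminus \ins$, take $W_j := S$; otherwise the definition of $\rescliqs$ (which inserts $S$ only if no superset is already present) forces some surviving $W_i \in \ins \cap \ins'$ to be a superset of $S$. Fixing such a $W_j$, \Cref{lem:sepflp} applies to every $W_i \in \ins \cap \ins'$ with $W_i \neq W_j$ and yields $\fld_{\ins'}(W_i) \le \fld_\ins(W_i)$. This leaves only the contribution of $W_j$ and of the removed cliques to account for, and I split into two sub-cases based on whether $W_j$ is an original clique or the newly added $S$. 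If $W_j \in \ins \cap \ins'$ then $S$ was not added, so $|\ins \setminus \ins'| \ge 1$; the excess $\fld_{\ins'}(W_j) - \fld_\ins(W_j)$ is at most $(3k+3)-(2k+2)=k+1$, while the saving $\sum_{W \in \ins \setminus \ins'} \fld_\ins(W)$ is at least $2k+2$. If instead $W_j = S \in \ins' \setminus \ins$, then $|\ins' \setminus \ins| = 1$ together with $|\ins'| < |\ins|$ forces $|\ins \setminus \ins'| \ge 2$; the excess $\fld_{\ins'}(S)$ is at most $3k+3$, while the saving is at least $2(2k+2) = 4k+4$. In either sub-case the saving strictly exceeds the excess, completing the proof.

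The main obstacle is the Case 1 bookkeeping: \Cref{lem:sepflp} cannot be applied to the distinguished clique $W_j$ itself, so any drop in $\flp$ attributable to $W_j$ must be absorbed by the savings from removed cliques. What makes the counting go through is the universal lower bound $\fld_\ins(W) \ge 2k+2$: a single removed clique already dominates the worst-case per-clique excess of $k+1$, and two removed cliques dominate the worst-case excess of $3k+3$ incurred when $S$ is added from scratch.
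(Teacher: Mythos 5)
Your proposal is correct and follows the same strategy as the paper's proof: dispatch the three alternatives of \Cref{lem:ssflp2} separately, handle Case 2 directly by \Cref{lem:sepsame}, handle Case 3 by applying \Cref{lem:sepflp} to the surviving cliques with witness $S$ and controlling the swap $S \leftrightarrow W_b$ via the bound supplied by \Cref{lem:ssflp2}, and handle Case 1 by applying \Cref{lem:sepflp} to $t'-1$ surviving cliques and closing the gap with the crude $[2k+2,3k+3]$ bounds. The only cosmetic difference is that in Case 1 you split explicitly on whether $S$ is newly inserted or absorbed into a surviving superclique, whereas the paper does the same counting uniformly by observing that $t - (t'-1) \ge 2$ cliques of $\ins$ are unaccounted for and bounding the one unaccounted clique of $\ins'$ by $3k+3$; both yield the identical bound $\fld(\ins') \le \fld(\ins) - (k+1)$ in that case.
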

\begin{proof}
We consider the three cases of \Cref{lem:ssflp2}.
First, if $\ins \rescliqs (A,S)$ has less terminal cliques than $\ins$, let $t' < t$ be the number of terminal cliques of $\ins \rescliqs (A,S)$.
Recall from the definition of $\ins \rescliqs (A,S)$ that all terminal cliques of $\ins \rescliqs (A,S)$ except possibly $S$ are also terminal cliques of $\ins$, and moreover $\ins \rescliqs (A,S)$ has at least one terminal clique that is a superset of $S$.
Therefore, the conditions of \Cref{lem:sepflp} apply for at least $t'-1$ terminal cliques $W_i$ of $\ins \rescliqs (A,S)$, in particular, for them $\flp_{\ins \rescliqs (A,S)}(W_i) \ge \flp_{\ins}(W_i)$ holds by \Cref{lem:sepflp} and therefore for them $\fld_{\ins \rescliqs (A,S)}(W_i) \le \fld_{\ins}(W_i)$.
Because the measure of a terminal clique is at least $2k+2$ and at most $3k+3$, it follows that
\[\fld(\ins \rescliqs (A,S)) \le \fld(\ins) + 3k+3 - (2k+2)(t-(t'-1)) \le \fld(\ins) + 3k+3 - (2k+2) \cdot 2 \le \fld(\ins) - k-1\]
in this case.

Then, if $\{W_1, \ldots, W_t\} \rescliqs (A,S) = \{W_1, \ldots, W_t\}$, the lemma follows directly from \Cref{lem:sepsame}.

Then, if both $\ins$ and $\ins \rescliqs (A,S)$ have $t$ terminal cliques and there is a terminal clique $W_i$ of $\ins$ with $W_i \cap B \neq \emptyset$, then $\ins$ does not contain any terminal clique that is a subset of $A \cup S$ and a superset of $S$ and therefore for all terminal cliques $W_j$ of $\ins \rescliqs (A,S)$ except $S$ we have by \Cref{lem:sepflp} that $\flp_{\ins \rescliqs (A,S)}(W_j) \ge \flp_{\ins}(W_j)$.
Therefore, we have that
\[\fld(\ins \rescliqs (A,S)) \le \fld(\ins) + \fld_{\ins \rescliqs (A,S)}(S) - \fld_{\ins}(W_i),\]
which by $\flp_{\ins \rescliqs (A,S)}(S) \ge \flp_{\ins}(W_i)$ (from \Cref{lem:ssflp2}) implies that $\fld(\ins \rescliqs (A,S)) \le \fld(\ins)$.
\end{proof}

Next, we observe that the terminal clique merging branching of \Cref{alg:partsubtw:mergrec} decreases the measure of the instance.

\begin{lemma}
\label{lem:cmtcfld}
Let $\ins = (G, \{W_1, \ldots, W_t\}, k)$ be an instance and $W_i$, $W_j$ two distinct terminal cliques.
It holds that $\fld(\ins \mergcliqs (W_i,W_j)) \le \fld(\ins)-k-1$.
\end{lemma}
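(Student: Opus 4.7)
The plan is to decompose $\fld(\ins \mergcliqs (W_i,W_j)) = \sum_{W} \fld_{\ins \mergcliqs(W_i,W_j)}(W)$ according to how each terminal clique of $\ins \mergcliqs (W_i,W_j)$ relates to the terminal cliques of $\ins$, compare term-by-term with $\fld(\ins)$, and then plug in the uniform bounds $2k+2 \le \fld_{\ins'}(W) \le 3k+3$ that hold for every instance $\ins'$ and every terminal clique $W$ of $\ins'$.

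First, for every terminal clique $W_l \in \{W_1,\ldots,W_t\} \setminus \{W_i, W_j\}$, note that $W_l$ is also a terminal clique of $\ins \mergcliqs (W_i, W_j)$ (the merge operation only touches $W_i, W_j$ and possibly inserts $W_i \cup W_j$). By \Cref{lem:mergcliqfld} we have $\flp_{\ins \mergcliqs(W_i,W_j)}(W_l) \ge \flp_{\ins}(W_l)$, and therefore $\fld_{\ins \mergcliqs(W_i,W_j)}(W_l) \le \fld_{\ins}(W_l)$. So the contribution of these ``untouched'' terminal cliques to $\fld(\ins \mergcliqs (W_i,W_j))$ is no larger than their contribution to $\fld(\ins)$.

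The contribution of $W_i$ and $W_j$ to $\fld(\ins)$ is $\fld_{\ins}(W_i) + \fld_{\ins}(W_j) \ge 2(2k+2) = 4k+4$, and in $\ins \mergcliqs(W_i,W_j)$ both of them are gone. For the merged clique $W_i \cup W_j$, there are two cases: if $W_i \cup W_j$ already equals some $W_l$ with $l \neq i,j$, then nothing new appears in the terminal-clique set and no additional term is incurred; otherwise $W_i \cup W_j$ appears as a new terminal clique contributing $\fld_{\ins \mergcliqs(W_i,W_j)}(W_i \cup W_j) \le 3k+3$.

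Combining these observations gives
\[\fld(\ins \mergcliqs (W_i,W_j)) \le \fld(\ins) - \bigl(\fld_{\ins}(W_i)+\fld_{\ins}(W_j)\bigr) + (3k+3) \le \fld(\ins) - (4k+4) + (3k+3) = \fld(\ins) - k - 1,\]
which is exactly the desired bound. There is no real obstacle here: the only thing to be careful about is the bookkeeping in the two subcases of how $W_i \cup W_j$ interacts with the existing terminal cliques, but the worst case is when $W_i \cup W_j$ is genuinely new, and even then the gap $2(2k+2) - (3k+3) = k+1$ between the minimum possible combined measure of two terminal cliques and the maximum possible measure of a single one yields the inequality.
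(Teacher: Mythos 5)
Your proof is correct and matches the paper's proof exactly: the paper also argues via \Cref{lem:mergcliqfld} (untouched cliques don't increase their measure), together with the bounds $\fld_{\ins}(W_i)+\fld_{\ins}(W_j) \ge 2(2k+2) = 4k+4$ and $\fld_{\ins \mergcliqs(W_i,W_j)}(W_i \cup W_j) \le 3k+3$. You have merely spelled out more explicitly the term-by-term decomposition and the corner case where $W_i \cup W_j$ already coincides with an existing terminal clique, which the paper leaves implicit.
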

\begin{proof}
This follows from \Cref{lem:mergcliqfld} and the facts that $\fld_{\ins}(W_i)+\fld_{\ins}(W_j) \ge 4k+4$ and $\fld_{\ins \mergcliqs (W_i,W_j)}(W_i \cup W_j) \le 3k+3$.
\end{proof}

Then, we observe that \Cref{lem:addvfld} implies that the leaf pushing branching of \Cref{alg:partsubtw:addrec} decreases the measure of the instance.

\begin{lemma}
\label{lem:addvtcfld}
Let $\ins = (G, \{W_1, \ldots, W_t\}, k)$ be an instance with $t \ge 2$ that has no safe separations.
Let also $i \in [t]$ so that $|W_i| \le k$ and exists $j \neq i$ so that $|W_j| \ge |W_i|$ and $W_j$ is not a superset of $W_i$, and let $v \in V(G) \setminus W_i$.
It holds that $\fld(\ins \addv (W_i, \{v\})) \le \fld(\ins)-1$.
\end{lemma}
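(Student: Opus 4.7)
The plan is to split on whether $W_i \cup \{v\}$ coincides with some other existing terminal clique $W_j$ ($j \neq i$). This dichotomy decides whether $\ins \addv (W_i, \{v\})$ still has $t$ terminal cliques (``no coincidence'') or collapses to $t-1$ (``coincidence''), and it cleanly separates the roles of Lemma~\ref{lem:addvfld} and the universal range bound $2k+2 \le \fld_\ins(W_l) \le 3k+3$ on the measure of a single terminal clique.

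In the no-coincidence case, the cliques of $\ins \addv (W_i, \{v\})$ are those of $\ins$ with $W_i$ replaced by $W_i \cup \{v\}$. I would invoke Lemma~\ref{lem:addvfld} directly to obtain $\flp_{\ins \addv (W_i, \{v\})}(W_i \cup \{v\}) \ge \flp_{\ins}(W_i) + 1$, i.e., $\fld_{\ins \addv}(W_i \cup \{v\}) \le \fld_{\ins}(W_i) - 1$. For each remaining clique $W_l$, a direct unfolding of the definitions yields $\ow_{\ins \addv}(W_l) = \ow_{\ins}(W_l) \cup \{v\}$, because the old $W_i$ in the union has been replaced by its superset $W_i \cup \{v\}$. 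Combined with $E(G) \subseteq E(G \addv (W_i, \{v\}))$ and the inclusion $\ow_{\ins}(W_l) \subseteq \ow_{\ins \addv}(W_l)$, any separation of $G \addv (W_i, \{v\})$ that witnesses $\flp_{\ins \addv}(W_l)$ is simultaneously a valid separation of $G$ witnessing $\flp_{\ins}(W_l)$; hence $\flp_{\ins \addv}(W_l) \ge \flp_{\ins}(W_l)$ and $\fld_{\ins \addv}(W_l) \le \fld_{\ins}(W_l)$. Summing over all cliques gives the required net drop of one.

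In the coincidence case $W_i \cup \{v\} = W_j$, the graph $G \addv (W_i, \{v\})$ equals $G$ (since $W_j$ was already a clique in $G$) and $W_i$ simply disappears from the terminal clique set. For $l \notin \{i,j\}$ the set $\ow$ is unchanged, as $W_i \subseteq W_j$ stays in the union, so $\fld_{\ins \addv}(W_l) = \fld_{\ins}(W_l)$. For $W_j$, only $\ow_{\ins \addv}(W_j) \subseteq \ow_{\ins}(W_j)$ holds, so $\fld_{\ins \addv}(W_j)$ may increase; however, the range bound caps this increase at $(3k+3) - (2k+2) = k+1$. Against this we lose $\fld_{\ins}(W_i) \ge 2k+2$ outright, yielding a net change of at most $-(2k+2) + (k+1) = -(k+1) \le -1$.

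The only care needed is to keep the $\ow$ computations straight across the two cases; once that is in place, no real obstacle remains, since Lemma~\ref{lem:addvfld} handles the first case and the trivial range bound on $\fld$ handles the second. In particular Lemma~\ref{lem:addvfld} is actually overkill for the coincidence case and is used in a nontrivial way only when $W_i \cup \{v\}$ is a genuinely new terminal clique.
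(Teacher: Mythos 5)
Your proof is correct, and it is in fact more careful than the paper's. The paper proves this lemma in one sentence: it observes that adding a vertex to one terminal clique cannot decrease the flow potentials of the others, and then invokes \Cref{lem:addvfld} for $W_i$ itself. This is exactly your no-coincidence argument. Your extra case split—on whether $W_i \cup \{v\}$ already equals some existing terminal clique—addresses a genuine gap in that one-line proof as a freestanding statement: when $W_i \cup \{v\}$ coincides with some $W_{j'}$, the clique $W_i$ is simply deleted from the clique set, and $\ow_{\ins \addv}(W_{j'})$ shrinks (it loses $W_i$ from its union), so the flow potential of $W_{j'}$ can go \emph{down} and its $\fld$-measure can go \emph{up}; the paper's monotonicity observation, read naively, does not cover this. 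Your range-bound fallback ($\fld_{\ins \addv}(W_{j'})$ increases by at most $(3k+3)-(2k+2) = k+1$, while dropping the $W_i$ term saves at least $2k+2$, giving a net $\le -(k+1) \le -1$ since $k \ge 1$) is a clean and correct fix. The reason the paper gets away with the terse version is that in the algorithm the lemma is only invoked after the check on \Cref{alg:partsubtw:retsubset}, which returns \no whenever some $W_i \subset W_j$, so the coincidence case (which forces $W_i \subsetneq W_{j'}$) never actually arises; but the lemma's hypotheses as written do not exclude it, so your version is the one that matches the literal statement. One small notational point: avoid reusing the letter $j$ for the coinciding clique, since the $W_j$ from the hypothesis is explicitly required \emph{not} to be a superset of $W_i$, whereas the coinciding clique necessarily is one; use a fresh index such as $W_{j'}$.
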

\begin{proof}
Observe that adding a vertex to a terminal clique does not decrease the flow potentials of other terminal cliques.
Therefore, the lemma holds by \Cref{lem:addvfld}.
\end{proof}

Then for the reduction by safe separations we have to argue that the sum of the sizes of the instances $\ins \rescliqs (A,S)$ and $\ins \rescliqs (B,S)$ is less than the size of $\ins$.
For this, we define the graph size of $\ins$ to be
\[\vertsz(\ins) = \max(1, (k+2)|V(G)|-(k+2)^2).\]
Note that $\vertsz(\ins) = 1$ if and only if $|V(G)| \le k+2$.

We show that if $(A,S,B)$ is a strict separation and $|S| \le k+1$, then with respect to the $\vertsz(\ins)$ measure, the instances $\ins \rescliqs (A,S)$ and $\ins \rescliqs (B,S)$ are in total smaller than $\ins$ if $|V(G)| \ge k+3$.

\begin{lemma}
\label{lem:gs_sep}
Let $\ins = (G, \{W_1, \ldots, W_t\}, k)$ be an instance with $|V(G)| \ge k+3$ and $(A,S,B)$ a strict separation with $|S| \le k+1$.
It holds that $\vertsz(\ins \rescliqs (A,S)) + \vertsz(\ins \rescliqs (B,S)) \le \vertsz(\ins)-1$.
\end{lemma}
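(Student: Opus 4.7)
\medskip
\noindent\textbf{Proof plan.} The plan is to set $a = |A|$, $b = |B|$, $s = |S|$, and $n = |V(G)|$, so that $a + b + s = n$ with $a, b \ge 1$ (because $(A,S,B)$ is a strict separation) and $s \le k+1$. Since $n \ge k+3$, we have $(k+2)n - (k+2)^2 \ge (k+2) \ge 1$, so $\vertsz(\ins) = (k+2)n - (k+2)^2$. The graphs of $\ins \rescliqs (A,S)$ and $\ins \rescliqs (B,S)$ have vertex sets $A \cup S$ and $B \cup S$, of sizes $a+s$ and $b+s$ respectively, and I will case split on whether each of these sizes exceeds $k+2$ or not.

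In the main case where $a+s \ge k+3$ and $b+s \ge k+3$, both $\vertsz$ values equal $(k+2)(a+s) - (k+2)^2$ and $(k+2)(b+s) - (k+2)^2$; summing and using $a+b+s = n$ gives
\[\vertsz(\ins \rescliqs (A,S)) + \vertsz(\ins \rescliqs (B,S)) = (k+2)n - (k+2)^2 + (k+2)(s - (k+2)),\]
and since $s \le k+1$ the last term is at most $-(k+2) \le -1$, which is the bound we want. In the mixed case where (say) $a+s \le k+2$ and $b+s \ge k+3$, the left side becomes $1 + (k+2)(b+s) - (k+2)^2 = 1 + (k+2)(n-a) - (k+2)^2$, so it suffices to check $(k+2)a \ge 2$, which follows from $a \ge 1$ and $k \ge 1$. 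Finally, if both $a+s \le k+2$ and $b+s \le k+2$, then the sum is $2$; on the other hand $n \ge k+3$ together with $k \ge 1$ gives $(k+2)n - (k+2)^2 \ge (k+2)(k+3) - (k+2)^2 = k+2 \ge 3$, so $\vertsz(\ins) - 1 \ge 2$ and the bound holds.

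I do not expect any real obstacle here — the lemma is an arithmetic consequence of the definition of $\vertsz$, the strictness of the separation ($a,b \ge 1$), and the assumption $s \le k+1$. The only mildly delicate step is handling the $\max(1, \cdot)$ in the definition of $\vertsz$ via the case split above; once those three cases are written out, each reduces to a one-line inequality.
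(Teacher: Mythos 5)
Your proof is correct and follows essentially the same approach as the paper: reduce to the case where both $\vertsz$ values exceed $1$ and do the arithmetic, handling the $\max(1,\cdot)$ by case-splitting. The paper is marginally slicker — it first observes that each of $\vertsz(\ins \rescliqs (A,S))$ and $\vertsz(\ins \rescliqs (B,S))$ is at most $\vertsz(\ins) - 2$, which dispatches your mixed case and both-small case in one stroke — but the core computation in the main case is identical to yours.
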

\begin{proof}
First, because $(A,S,B)$ is a strict separation, $|V(G)| \ge k+3$, and $k \ge 1$, both of $\vertsz(\ins \rescliqs (A,S)) + 2 \le \vertsz(\ins)$ and $\vertsz(\ins \rescliqs (B,S)) + 2 \le \vertsz(\ins)$ hold.
This implies that the lemma holds whenever $\vertsz(\ins \rescliqs (A,S)) = 1$ or $\vertsz(\ins \rescliqs (B,S)) = 1$.
It remains to consider the case where $\vertsz(\ins \rescliqs (A,S)) > 1$ and $\vertsz(\ins \rescliqs (B,S)) > 1$, in particular where $|A \cup S| \ge k+3$ and $|B \cup S| \ge k+3$.

In this case
\begin{align}
\vertsz(\ins \rescliqs (A,S)) + \vertsz(\ins \rescliqs (B,S)) =& (k+2)|A \cup S| - (k+2)^2 + (k+2)|B \cup S| - (k+2)^2 \nonumber\\
=& (k+2)(|A \cup S|+|B \cup S|) - 2(k+2)^2 \nonumber\\
=& (k+2)(|V(G)|+|S|) - 2(k+2)^2 \nonumber\\
=& (k+2)(|V(G)|+|S|-k-2)-(k+2)^2 \le \vertsz(\ins)-1\nonumber.
\end{align}
\end{proof}

We then put the running time analysis together.

\begin{lemma}
\label{lem:algpartsubtwtime}
\Cref{alg:partsubtw} runs in time $k^{\OO(kt)} nm$.
\end{lemma}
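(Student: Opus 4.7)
The plan is to bound the number of leaves of the recursion tree in terms of the two measures $\fld(\ins)$ and $\vertsz(\ins)$, and then multiply by the $k^{\OO(1)} m$ work done at each node. Note $\fld(\ins) \le (3k+3)t = \OO(kt)$, $\vertsz(\ins) \le (k+2)n = \OO(kn)$, and $t \le k+2$, so if we can show that the recursion tree has $\vertsz \cdot k^{\OO(\fld)}$ leaves this will immediately yield $\OO(kn) \cdot k^{\OO(kt)} \cdot k^{\OO(1)} m = k^{\OO(kt)} nm$.

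First I would summarise the branching behavior of \Cref{alg:partsubtw} using the earlier results: a safe separation split (\Cref{alg:partsubtw:safesepcheckret}) has branching factor $2$, does not increase $\fld$ (\Cref{lem:safeseptcfld}), and strictly decreases $\vertsz$ in the sense $\vertsz_1 + \vertsz_2 \le \vertsz - 1$ (\Cref{lem:gs_sep}); a terminal clique merging branching (\Cref{alg:partsubtw:mergrec}) has branching factor at most $t^2 = \OO(k^2)$ and drops $\fld$ by at least $k+1$ (\Cref{lem:cmtcfld}); a leaf pushing branching (\Cref{alg:partsubtw:addrec}) has branching factor at most $t(k{+}1)k = \OO(k^3)$ and drops $\fld$ by at least $1$ (\Cref{lem:addvtcfld}, whose hypotheses are met because we reach \Cref{alg:partsubtw:itforgcliqs} only after passing the safe-separation check on \Cref{alg:partsubtw:safesepcheck} and the maximally-merged check on \Cref{alg:partsubtw:forcliqupairs,alg:partsubtw:retsubset}). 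The remaining base case is handled in $k^{\OO(1)} m$ time by \Cref{lem:smallcases}, and each internal node does $k^{\OO(1)} m$ of additional work to find safe separations (\Cref{lem:safeseptime}) or an important-separator hitting set (\Cref{lem:imp_sep_hit}).

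Next I would prove, by well-founded induction on the recursion, the bound $\ell(\fld,\vertsz) \le \vertsz \cdot C^{\fld}$ on the number of leaves of the recursion tree, where $C = c k^3$ for a sufficiently large constant $c$. The safe-separation case closes by $\vertsz_1 C^{\fld} + \vertsz_2 C^{\fld} \le (\vertsz - 1)C^{\fld}$; the leaf-pushing case closes because its branching factor is at most $C$ and $\fld$ drops by $1$; the merging case closes because $t^2 \le \OO(k^2) \le C^{k+1}$ for $k \ge 1$, so $t^2 \cdot \vertsz C^{\fld - k - 1} \le \vertsz C^{\fld}$. Since the recursion tree has bounded branching at every node, its total number of nodes is $\OO(\ell(\fld,\vertsz))$, hence the total running time is $\OO(\ell(\fld,\vertsz)) \cdot k^{\OO(1)} m = \vertsz \cdot k^{\OO(\fld)} \cdot m = k^{\OO(kt)} nm$.

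The main obstacle is making the induction close despite the fact that safe separations do not decrease $\fld$; this is precisely what forces the two-parameter measure $(\fld,\vertsz)$ and the linear-in-$\vertsz$ factor in the bound on $\ell$. The only other subtlety is verifying the numerical inequality $t^2 \le C^{k+1}$ for the merging step, which is routine given $t \le k+2$ and $C = \Theta(k^3)$.
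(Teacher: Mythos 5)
Your overall strategy coincides with the paper's: a potential-function induction bounding the size of the recursion tree by $\vertsz(\ins)\cdot k^{\OO(\fld(\ins))}$, with safe separations handled by the superadditivity of $\vertsz$, and merging/leaf-pushing handled by the drop in $\fld$. However, the final step of your argument has a genuine gap.

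You bound only the number of \emph{leaves} of the recursion tree, and then assert that ``since the recursion tree has bounded branching at every node, its total number of nodes is $\OO(\ell(\fld,\vertsz))$.'' This inference is false: a tree with bounded branching can have arbitrarily many more internal nodes than leaves (a path is the extreme case). In this algorithm a node whose only applicable action is a single merge or a single leaf-push has exactly one child, and chains of such unary nodes are possible. The generic bound one can salvage is (number of nodes) $\le$ (number of leaves) $\times$ (depth), but the depth of this recursion is governed by $\vertsz$, which can be $\Theta(kn)$, and this would degrade the claimed $k^{\OO(kt)}nm$ to $k^{\OO(kt)}n^2m$. The paper sidesteps this entirely by running the exact same induction on the \emph{total number of recursive calls} $R(\ins)$, and absorbing the $+1$ for the current call into the slack of the inductive step: for safe separation, $1 + (\vertsz_1 + \vertsz_2)C^{\fld} \le 1 + (\vertsz-1)C^{\fld} \le \vertsz\, C^{\fld}$ since $C^{\fld}\ge 1$; for merging/leaf-pushing, $1 + (C-1)\vertsz\, C^{\fld-1} \le \vertsz\, C^{\fld}$ since $\vertsz\, C^{\fld-1}\ge 1$. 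If you recast your induction in those terms your argument goes through and is essentially identical to the paper's.

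A second, smaller slip: in the inductive step you treat ``the merging case'' and ``the leaf-pushing case'' as if they were alternatives, but a single recursive call can spawn both kinds of children simultaneously. You therefore need $\sum_{\text{merge children}} + \sum_{\text{push children}} \le \vertsz\, C^{\fld}$, i.e.\ you must verify $t^2\,C^{-(k+1)} + tk^2\,C^{-1} \le 1$ (with the $+1$ above), not the two inequalities $t^2 C^{-(k+1)}\le 1$ and $tk^2\, C^{-1}\le 1$ separately. The paper avoids this bookkeeping by bounding the total merging-plus-pushing branching factor by $t^2+tk^2\le (k+2)^3-1$ and using only the weaker fact that $\fld$ drops by at least $1$ in every such branch.
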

\begin{proof}
First, we observe that all of the operations in a single call of the recursive procedure can be performed in $k^{\OO(1)} m'$ time, where $m'$ is the number of edges in the instance given to the recursive call.
In particular, the case analysis of \Cref{alg:partsubtw:bf} can be implemented in $k^{\OO(1)} m'$ time by \Cref{lem:smallcases}, reducing by safe separations on \Cref{alg:partsubtw:safesepcheck,alg:partsubtw:safesepcheckret} can be implemented in $k^{\OO(1)} m'$ time by \Cref{lem:safeseptime}, for terminal clique merging on \Cref{alg:partsubtw:forcliqupairs,alg:partsubtw:mergrec,alg:partsubtw:retmc} this is trivial, and for leaf pushing on \Cref{alg:partsubtw:itforgcliqs,alg:partsubtw:itforgcliqw,alg:partsubtw:impsephitset,alg:partsubtw:ithitset,alg:partsubtw:addrec,alg:partsubtw:retadd} it is an application of the important separator hitting set lemma (\Cref{lem:imp_sep_hit}).

By the definition of $\ins \rescliqs (A,S)$, observe that at each recursive call the current graph can be obtained from an induced subgraph of the original graph by adding all edges inside the terminal cliques, and therefore we can bound $m' \le k^{\OO(1)} m$, where $m$ is the number of original edges.
Therefore, the running time of the algorithm can be bounded by $k^{\OO(1)} m \cdot R(\ins)$, where $R(\ins)$ is the total number of recursive calls.

We show by induction that the number of recursive calls is bounded by \[R(\ins) \le \vertsz(\ins) \cdot ((k+2)^3)^{\fld(\ins)} = k^{\OO(1)} n \cdot k^{\OO(kt)} = k^{\OO(kt)} n.\]
First, when the algorithm returns from the case analysis of \Cref{alg:partsubtw:bf}, this holds because $\vertsz(\ins) \ge 1$ and $\fld(\ins) \ge 1$ always.
Then we can assume that $t \ge 2$ and $|V(G)| \ge k+3$.
If there exists a safe separation $(A,S,B)$, then the number of recursive calls is by induction
\begin{align}
R(\ins) =& 1+R(\ins \rescliqs (A,S)) + R(\ins \rescliqs (B,S)) \nonumber\\
\le& 1 + (\vertsz(\ins \rescliqs (A,S)) + \vertsz(\ins \rescliqs (B,S))) \cdot ((k+2)^3)^{\fld(\ins)} && \text{(by \Cref{lem:safeseptcfld})} \nonumber\\
\le& \vertsz(\ins) \cdot ((k+2)^3)^{\fld(\ins)}. && \text{(by \Cref{lem:gs_sep})} \nonumber
\end{align}

If no safe separators exist, then all recursive calls are from terminal clique merging on \Cref{alg:partsubtw:mergrec} and leaf pushing on \Cref{alg:partsubtw:addrec}.
By \Cref{lem:cmtcfld}, for recursive calls made from terminal clique merging on \Cref{alg:partsubtw:mergrec} it holds that $\fld(\ins \mergcliqs (W_i, W_j)) \le \fld(\ins) -1$, and by \Cref{lem:addvtcfld}, for recursive calls made from leaf pushing on \Cref{alg:partsubtw:addrec} it holds that $\fld(\ins \addv (W_i, \{v\})) \le \fld(\ins) -1$.
The total number of recursive calls from \Cref{alg:partsubtw:mergrec,alg:partsubtw:addrec} is at most $t^2 + tk^2 \le (k+2)^3-1$, so we get that
\begin{align}
R(\ins) \le& 1 + ((k+2)^3-1) \cdot \vertsz(\ins) \cdot ((k+2)^3)^{\fld(\ins)-1} \le \vertsz(\ins) \cdot ((k+2)^3)^{\fld(\ins)}.\nonumber
\end{align}
\end{proof}

This finishes the proof of \Cref{the:stwpalg}, and together with \Cref{the:stwapximpl} they imply \Cref{the:mainapx}.

\section{Faster algorithm for \stw}
\label{sec:fasttw}
This section is devoted to proving \Cref{the:stwexalg}, in particular to giving a $2^{\OO(k^2)} nm$ time algorithm for \stw.
We will re-use many definitions and lemmas of \Cref{sec:algopstw}.
In particular, we use the definition of an instance of \pstw from \Cref{sec:algopstw}, observing that an instance of \stw can be seen as an instance of \pstw having initially $t = |W| = k+2$ terminal cliques of size $1$.

The rest of this section is organized as follows.
In \Cref{subsec:degenseps} we introduce the concept of a terminal clique covering an original terminal vertex and based on that the concept of degenerate separation.
In \Cref{subsec:maintvalid} we introduce a new parameter to the measure the progress of the algorithm and argue how different operations on instances preserve so called ``valid instances''.
In \Cref{subsec:algok2branch} we discuss the branching rules of the algorithm and in \Cref{subsec:thealgok2} we describe the algorithm and put together its correctness proof.
In \Cref{subsec:alg2tcana} we analyze the running time.

\subsection{Terminal covers and degenerate separations}
\label{subsec:degenseps}
We extend the definition of an instance of \pstw given in \Cref{sec:algopstw}.
We now keep track also of the original graph $\og$ and the set of original terminal vertices $\ogw$, which were the original input to the \stw problem.
In particular, here $|\ogw| = k+2$.
Observe that the recursive algorithm of \Cref{sec:algopstw} maintains that if $\ins = (G, \{W_1, \ldots, W_t\}, k)$ is an instance of some recursive subproblem, then $V(G) \subseteq V(\og)$ and $E(G) \supseteq E(\torso_{\og}(V(G)))$.
In particular, the operations $\ins \mergcliqs (W_i,W_j)$ and $\ins \addv (W_i, A)$ trivially maintain this because they change $G$ only by adding edges, and the $\ins \rescliqs (A,S)$ operation with a separation $(A,S,B)$ maintains this because $S$ becomes a clique in $G \rescliqs (A,S)$.



\paragraph{Terminal covers.}
Let $\ins = (G, \{W_1, \ldots, W_t\}, k)$ be an instance, $\og$ the original graph, and $\ogw$ the original terminal vertices.
We say that a terminal clique $W_i$ \emph{covers} an original terminal vertex $w \in \ogw$ if $W_i$ is a $(\{w\}, V(G))$-separator in $\og$.
We observe that in the algorithm of the previous section, at every point for every original terminal vertex there exists a terminal clique that covers it, and moreover every terminal clique covers at least one original terminal vertex.

In the algorithm of this section we maintain a mapping $\tc : \ogw \rightarrow \{W_1, \ldots, W_t\}$ from the original terminal vertices to the current terminal cliques, so that for all $w \in \ogw$, the terminal clique $\tc(w)$ covers $w$.
We extend the definition of an instance to include the mapping $\tc$, in particular an instance is now a 4-tuple $\ins = (G, \{W_1, \ldots, W_t\}, k, \tc)$.

Let us now define how the mapping is maintained under the operation $\ins \rescliqs (A,S)$, where $(A,S,B)$ is a separation.
Let $S'$ be a terminal clique of $\ins \rescliqs (A,S)$ that is a superset of $S$.
If there are multiple such terminal cliques, then let $S'$ be the lexicographically first choice.
Then, we define 
\[
\tc(w) \rescliqs (A,S) = 
\begin{cases}
S' \text{ if } \tc(w) \subseteq B \cup S\\
\tc(w) \text{ otherwise}.
\end{cases}
\]

Now, we define $\ins \rescliqs (A,S) = (G \rescliqs (A,S), \{W_1, \ldots, W_t\} \rescliqs (A,S), k, \tc \rescliqs (A,S))$.
The following lemma shows that this correctly maintains the mapping $\tc$.

\begin{lemma}
Let $\ins = (G, \{W_1, \ldots, W_t\}, k, \tc)$ be an instance, $\og$ the original graph, and $w \in \ogw$ an original terminal vertex.
Let also $(A,S,B)$ be a separation of $G$.
If $\tc(w) \subseteq B \cup S$, then any terminal clique of $\ins \rescliqs (A,S)$ that is a superset of $S$ covers $w$ in $\ins \rescliqs (A,S)$.
Otherwise, $\tc(w)$ is a terminal clique of $\ins \rescliqs (A,S)$ and covers $w$ in $\ins \rescliqs (A,S)$.
\end{lemma}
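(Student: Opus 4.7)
The plan is to split on the location of $\tc(w)$ relative to the separation and reduce both cases to a single auxiliary fact: the separation $(A,S,B)$ of $G$ is in fact a separation of $\og$ with respect to $V(G)$, i.e.\ $S$ is an $(A,B)$-separator in $\og$. This will follow from the standing invariant $E(G) \supseteq E(\torso_{\og}(V(G)))$: any walk in $\og \setminus S$ from some $a \in A$ to some $b \in B$, when restricted to its successive vertices in $V(G)$, would contain two consecutive $V(G)$-vertices $v_i \in A$ and $v_{i+1} \in B$ whose intermediate vertices lie in $V(\og) \setminus V(G)$; this contracts to an edge $v_i v_{i+1}$ of $\torso_{\og}(V(G)) \subseteq G$ with endpoints in $A$ and $B$, contradicting that $(A,S,B)$ is a separation of $G$.

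In Case 2 ($\tc(w) \not\subseteq B \cup S$), the argument is short. Since every terminal clique is a clique in $G$ and $(A,S,B)$ is a separation of $G$, no clique can meet both $A$ and $B$, so $\tc(w)$ intersecting $A$ forces $\tc(w) \subseteq A \cup S$. Hence $\tc(w) \cap A \neq \emptyset$ and $\tc(w)$ survives in $\{W_1,\ldots,W_t\} \rescliqs (A,S)$ as a terminal clique of the new instance. The covering property is then inherited for free, since $A \cup S \subseteq V(G)$: any walk in $\og \setminus \tc(w)$ from $w$ to $(A \cup S) \setminus \tc(w)$ is already a walk to $V(G) \setminus \tc(w)$.

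For Case 1 ($\tc(w) \subseteq B \cup S$), fix any terminal clique $W' \supseteq S$ of $\ins \rescliqs (A,S)$ and assume for contradiction that there is a walk $P$ in $\og \setminus W'$ from $w$ to some $x \in (A \cup S) \setminus W'$; since $S \subseteq W'$ we have $x \in A$. Split on whether $w \in V(G)$. If $w \in V(G)$, then $\tc(w)$ covering $w$ forces $w \in \tc(w) \subseteq B \cup S$, and $w \notin W' \supseteq S$ puts $w \in B$; then $P$ is an $\og \setminus S$-walk from $B$ to $A$, contradicting the auxiliary fact. If $w \notin V(G)$, let $v$ be the first vertex of $P$ lying in $V(G)$. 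All preceding vertices of $P$ (other than $w$ itself) lie in $V(\og) \setminus V(G)$, hence outside $\tc(w) \subseteq V(G)$, and $w \notin \tc(w)$ either; so if $v \notin \tc(w)$, the $w$-to-$v$ sub-walk witnesses that $\tc(w)$ is not a $(\{w\}, V(G))$-separator in $\og$, contradicting that $\tc(w)$ covers $w$ in $\ins$. Thus $v \in \tc(w) \subseteq B \cup S$, and $v \notin W' \supseteq S$ gives $v \in B$, so the $v$-to-$x$ tail of $P$ reproduces the same $B$-to-$A$ contradiction in $\og \setminus S$.

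I expect the main obstacle to be precisely the $w \notin V(G)$ subcase of Case 1: it is the only place where the $\tc$-bookkeeping has to talk directly to $\og$ through the $\torso$ invariant, and it needs both the ``first vertex of $P$ in $V(G)$'' trick and the auxiliary $(A,B)$-separation fact in $\og$ to splice the two halves of $P$ into a contradiction. Once the auxiliary separation fact is in hand, the remainder of the proof is largely bookkeeping about which terminal cliques survive in $\ins \rescliqs (A,S)$.
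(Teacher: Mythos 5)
Your proof is correct and follows essentially the same route as the paper's: both arguments hinge on the invariant $E(G) \supseteq E(\torso_{\og}(V(G)))$ to relate walks in $\og$ to edges of $G$, and both locate where a putative $w$--to--$(A \cup S)$ walk first meets $\tc(w)$ and then force the remaining tail through $S$. The only differences are presentational: you isolate a reusable auxiliary fact (that $S$ already separates $A$ from $B$ in $\og$) and make the $w \in V(G)$ versus $w \notin V(G)$ split explicit, whereas the paper's version keeps the path-projection inline by taking the suffix of the walk from its first contact with $\tc(w)$, deleting the $V(\og) \setminus V(G)$ vertices, and appealing to the separation $(A,S,B)$ of $G$ directly; the two are logically equivalent. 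Your observation that Case 2 is ``for free'' because $A \cup S \subseteq V(G)$ matches the paper's one-line dismissal of that case.
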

\begin{proof}
First, in the case when $\tc(w)$ intersects $A$, in which case $\tc(w) \in \{W_1, \ldots, W_t\} \rescliqs (A,S)$, the fact that $\tc(w)$ covers $w$ in $\ins \rescliqs (A,S)$ holds directly by the fact that $\tc(w)$ covers $w$ in $\ins$.

Then, consider the case when $\tc(w) \subseteq B \cup S$.
Recall that by definition $\tc(w)$ is a $(\{w\}, V(G))$-separator in $\og$.
We will show that $S$ is a $(\{w\}, V(G \rescliqs (A,S)))$-separator in $\og$.
Consider any path from $w$ to $V(G \rescliqs (A,S)) = A \cup S$ in $\og$.
Because $\tc(w)$ covers $w$ in $\ins$, this path intersects $\tc(w)$, and therefore it has a suffix that is a $\tc(w)-A \cup S$ path in $\og$.
Now, because $E(G) \supseteq E(\torso_{\og}(V(G)))$, we can map the suffix into a $\tc(w)-A \cup S$ path in $G$ by just removing vertices in $\og \setminus V(G)$ from it.
Then, because $S$ is a $(\tc(w), A \cup S)$-separator in $G$, this path must intersect $S$, and therefore the path from $w$ to $A \cup S$ in $\og$ must also intersect $S$, and therefore $S$ is a $(\{w\}, A \cup S)$-separator in $\og$.
\end{proof}

We also define the maintenance of $\tc$ under terminal clique merging by
\[
\tc(w) \mergcliqs (W_i, W_j) =
\begin{cases}
W_i \cup W_j \text{ if } \tc(w) = W_i \text{ or } \tc(w) = W_j\\
\tc(w) \text{ otherwise}.
\end{cases}
\]

Now, $\ins \mergcliqs (W_i, W_j) = (G \mergcliqs (W_i, W_j), \{W_1, \ldots, W_t\} \mergcliqs (W_i, W_j), k, \tc \mergcliqs (W_i, W_j))$.
This maintains the mapping $\tc$ correctly because if $W_i$ is a $(\{w\}, V(G))$-separator in $\og$, then also $W_i \cup W_j$ is a $(\{w\}, V(G))$-separator in $\og$.

Similarly, when adding vertices to terminal cliques it is defined by 
\[
\tc(w) \addv (W_i, A) =
\begin{cases}
W_i \cup A \text{ if } \tc(w) = W_i\\
\tc(w) \text{ otherwise}.
\end{cases}
\]

Then, $\ins \addv (W_i, A) = (G \addv (W_i, A), \{W_1, \ldots, W_t\} \addv (W_i, A), k, \tc \addv (W_i, A))$.
Again, this clearly maintains the mapping $\tc$ correctly.

For a terminal clique $W_i$, we denote by $\ctc_{\ins}(W_i)$ the number of original terminal vertices mapped to $W_i$ by $\tc$, i.e., $\ctc_{\ins}(W_i) = |\{w \in \ogw \mid \tc(w) = W_i\}|$.
Note that the operations $\ins \mergcliqs (W_i, W_j)$ and $\ins \addv (W_i, A)$ preserve the invariant that $\ctc_{\ins}(W_i) \ge 1$ for all terminal cliques $W_i$, and the operation $\ins \rescliqs (A,S)$ for a separation $(A,S,B)$ preserves this if there is at least one terminal clique that is a subset of $B \cup S$ or a superset of $S$, which will be always the case when this operation is used.

\paragraph{Degenerate separations.}
Let $\ins = (G, \{W_1, \ldots, W_t\}, k, \tc)$ be an instance, and $(X, (T, \bag))$ a solution of $\ins$.
We say that a separation $(A,S,B)$ of $G$ is an \emph{internal separation} of the solution $(X, (T, \bag))$ for $\ins$ if $S \subseteq \bag(t)$ for some $t \in V(T)$ and $\allw_{\ins}$ intersects both $A$ and $B$.

We say that an internal separation $(A,S,B)$ is \emph{degenerate} if
\[|S| + \sum_{W_i \mid W_i \cap A \neq \emptyset} \ctc_{\ins}(W_i) \le k+1.\]

Note that if a solution contains a degenerate internal separation $(A,S,B)$, then for the purpose of obtaining a solution of the original instance we can, slightly informally speaking, replace the decomposition on the $A$-side of the separation by just a single bag $S \cup \{w \in \ogw \mid \tc(w) \cap A \neq \emptyset\}$ because the definition ensures that $|S \cup \{w \in \ogw \mid \tc(w) \cap A \neq \emptyset\}| \le k+1$.
In particular, we observe that for the original instance there always exists a solution where every degenerate internal separation is a separation between a leaf bag and the rest of the decomposition, with the leaf bag containing only the separator $S$ and the original terminal vertices ``behind'' $S$.
Now, our goal is to perform a \emph{pre-branching} step that by using important separators guesses, in some sense, a maximal set of degenerate internal separations, and after that arrives to an instance where no solution has a degenerate internal separation.

We say that an instance $\ins = (G, \{W_1, \ldots, W_t\}, k, \tc)$ is \emph{valid} if it is a yes-instance and has no solution $(X, (T, \bag))$ that has a degenerate internal separation for $\ins$.
Otherwise, we say that $\ins$ is \emph{invalid}.
Next we show that by performing the pre-branching step, we can assume that we start with a valid instance.

\begin{lemma}[Pre-branching]
\label{lem:prebranching}
There is an algorithm, that given a graph $G$, integer $k$, and a set $W$ with $|W|=k+2$, in time $2^{\OO(k^2)} m$ and polynomial space enumerates $2^{\OO(k^2)}$ instances $(G, \{W_1, \ldots, W_t\}, k, \tc)$ with $t \le k+2$ so that any solution of any of the instances can in time $k^{\OO(1)} m$ be turned into a torso tree decomposition of width at most $k$ in $G$ that covers $W$, and moreover if such a torso tree decomposition exists, then at least one of the returned instances is valid.
\end{lemma}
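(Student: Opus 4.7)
The plan is to define a recursive branching procedure $\text{PreBranch}$ that starts at the trivial instance $\ins_0 = (G, \{\{w\} : w \in W\}, k, \tc_0)$ with $\tc_0(w) = \{w\}$ (and fixes $\og = G$, $\ogw = W$ throughout), outputs the current instance at each call, and branches by performing every candidate $\rescliqs$ cut. On input $\ins = (G', \{W_1, \ldots, W_t\}, k, \tc)$ the procedure outputs $\ins$ together with the trace of $\rescliqs$ steps that produced it, and then for every nonempty proper $\mathcal{T}' \subsetneq \{W_1, \ldots, W_t\}$ with $p := \sum_{W_i \in \mathcal{T}'} \ctc_\ins(W_i)$ and every important $(\bigcup \mathcal{T}', \bigcup(\{W_1, \ldots, W_t\} \setminus \mathcal{T}'))$-separator $S$ in $G'$ with $|S| + p \le k + 1$ and $A := \reach_{G'}(\bigcup \mathcal{T}', S) \ne \emptyset$, recursively invokes $\text{PreBranch}(\ins \rescliqs (B, S))$ with $B = V(G') \setminus (A \cup S)$.

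Each node's branching factor is at most $2^{k+2} \cdot 4^{k+1} = 2^{O(k)}$ by combining the choices of $\mathcal{T}'$ with \Cref{lem:impsep4k}. I plan to bound the recursion depth by $k+1$ using $t$ as potential: whenever $|\mathcal{T}'| \ge 2$ or $S$ is already dominated by a surviving terminal clique, the $\rescliqs$ operation removes at least one more terminal clique than it inserts; I would restrict branching to such non-trivial cuts and argue via a pulling-style chain-folding lemma that the excluded $|\mathcal{T}'| = 1$ tightenings can always be absorbed into a subsequent non-trivial cut at the same ancestor, so they need not be explored. This yields $2^{O(k^2)}$ outputs in total time $2^{O(k^2)} m$ and polynomial space.

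To convert any output's solution into a torso tree decomposition of $G$ of width at most $k$ covering $W$, I would traverse the recorded trace in reverse, attaching at each $\rescliqs(B_i,S_i)$ step a single leaf bag $S_i \cup (W \cap A_i)$ to the bag containing $S_i$ in the current decomposition. The invariant that $w \in \tc(w)$ for every $w \in W$ still present in the current graph ensures $W \cap A_i \subseteq \{w \in W : \tc_{\ins_{i-1}}(w) \in \mathcal{T}'_{A,i}\}$, so $|W \cap A_i| \le p_i$ and the leaf bag has size at most $|S_i| + p_i \le k+1$, giving width at most $k$. A direct check of the tree decomposition axioms for $\torso_G$, using that $A_i$ and $B_i$ are separated by $S_i$ which sits in both the inner decomposition and the new leaf bag, shows the resulting decomposition is valid and covers $W$; the construction runs in $k^{O(1)} m$ time.

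For validity, suppose $G$ admits such a decomposition, so $\ins_0$ is a yes-instance. I would argue by induction on the lexicographic measure $(|V(G')|, t)$ that some output is valid: either the current instance already has a solution without a degenerate internal separation and is itself a valid output, or any solution $(X, (T, \bag))$ has one, say $(A, S, B)$; setting $\mathcal{T}'_A = \{W_i : W_i \cap A \ne \emptyset\}$, \Cref{lem:impsepbasic} produces an important $(\bigcup \mathcal{T}'_A, \bigcup(\{W_1,\ldots,W_t\} \setminus \mathcal{T}'_A))$-separator $S'$ with $|S'| \le |S|$ dominating $S$ (so $|S'| + p \le k+1$ by degeneracy), \Cref{lem:imp_sep_dom} ensures $S'$ is linked into a subset of $\bag(t_S)$ containing $S$, and the pulling lemma (\Cref{lem:pull}) converts $(X, (T, \bag))$ into a solution of the child $\ins \rescliqs (B', S')$ for the separation induced by $S'$. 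Since this child is one of the enumerated branches, induction carries through. The main obstacle is the depth analysis in the presence of the $|\mathcal{T}'| = 1$ corner case, where the algorithm might otherwise spin through graph-shrinking tightenings without decreasing $t$; the chain-folding argument sketched above is the technical heart of the proof.
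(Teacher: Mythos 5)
Your branching structure is genuinely different from the paper's, and the depth control is exactly where the argument falls through. The paper does not branch on arbitrary subsets $\mathcal{T}'$ of terminal cliques: it first pre-branches over all terminal-clique merging sequences (a $k^{\OO(k)}$ factor), which — because after merging $\ctc_\ins(W_i) \ge |W_i|$ for every terminal clique — guarantees that any degenerate internal separation $(A,S,B)$ of a maximally merged instance has \emph{exactly one} terminal clique intersecting $A$ (otherwise the $A$-side could be folded into one bag, contradicting maximality). The subsequent branching therefore only ever needs $|\mathcal{T}'| = 1$, and the depth is controlled by an explicit \emph{processed/unprocessed} marking of terminal cliques: when the algorithm cuts on an unprocessed $W_i$ with an important separator $S'$, the resulting terminal clique $S'$ is permanently marked processed, and a maximality argument (choosing the degenerate internal separation maximizing $\reach_G(W_i, S)$) ensures no valid solution of the child can have a degenerate separation ``behind'' a processed clique. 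Since each branching step shrinks the set of unprocessed cliques and there are at most $k+2$ of them initially, depth $\le k+2$ follows and gives $(4^k+1)^{k+2}=2^{\OO(k^2)}$ leaves.

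Your proposal replaces all of this with ``$t$ as potential'' plus a ``pulling-style chain-folding lemma'' that you explicitly defer. That lemma carries the whole weight of the depth bound and is not stated or proved. The case it is supposed to absorb — a chain of $|\mathcal{T}'| = 1$ cuts in which one terminal clique $W_i$ is repeatedly replaced by a fresh separator $S_1', S_2', \ldots$, each time a proper shrink of the graph without ever decreasing $t$ — is not obviously absorbable: the intermediate instances are themselves needed in the output (the valid instance might be one of them), and the claim that a later important $(S_j', \ow(S_j'))$-separator is also an important $(W_i, \ow(W_i))$-separator at the ancestor does not hold in general (importance is not transitive through intermediate pushes). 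Your $(|V(G')|,t)$ lexicographic induction is sound for \emph{correctness} but gives depth $\Omega(n)$, not $\OO(k)$, and hence a running time that is not $2^{\OO(k^2)}m$. Separately, skipping the merging pre-step means your instances need not have $\ctc_\ins(W_i) \ge |W_i|$, so the degeneracy bound $|S|+p \le k+1$ does not bound the leaf bag $S_i \cup (W \cap A_i)$ unless you re-derive that inequality — in the paper this is exactly what $\ctc = |W_i|$ buys you. The conversion-to-decomposition and yes-instance halves of your sketch are in the right spirit and essentially match the paper's use of \Cref{lem:pull} and \Cref{lem:imp_sep_dom}, but without the processed-marking invariant and the merging pre-step the enumeration is not bounded and the degeneracy inequality is not available.
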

\begin{proof}
We initially consider the instance $\ins = (G, \{W_1, \ldots, W_{|W|}\}, k, \tc)$, where $\{W_1, \ldots, W_{|W|}\}$ is a partition of $W = \ogw$ into single vertices and $\tc(w) = \{w\}$.
We then branch on all possible ways to perform terminal clique merging operations.
There are $k^{\OO(k)}$ possible sequences of terminal clique merging.

Observe that any solution of any of the resulting instances is a torso tree decomposition of width at most $k$ in $G$ that covers $W$, and moreover if a solution exists, then at least one of the resulting instances is a maximally merged yes-instance.
Notice also that $\ctc_{\ins}(W_i) = |W_i|$ holds for all terminal cliques $W_i$ in instances obtained in this manner.
We will then prove the lemma with the assumption that we start with an instance for which $\ctc_{\ins}(W_i) = |W_i|$ holds, and in particular, if the starting instance is a maximally merged yes-instance, then at least one of the outputs will be a valid instance.

We will do branching that maintains a partition of terminal cliques into \emph{processed} terminal cliques and \emph{unprocessed} terminal cliques.
Initially, all of the terminal cliques are unprocessed.
This branching will always maintain that for unprocessed terminal cliques $W_i$ it holds that $\ctc_{\ins}(W_i) \ge |W_i|$, and in the ``success'' branches the following invariants will be maintained

\begin{enumerate}
\item \label{lem:prebranching:yes} $\ins$ is a yes-instance,
\item \label{lem:prebranching:invproc} for every processed terminal clique $W_i$ there exists no solution that contains a degenerate internal separation $(A,S,B)$ so that $A$ intersects $W_i$, and
\item \label{lem:prebranching:invmany} there exists no solution that contains a degenerate internal separation $(A,S,B)$ so that more than one terminal clique intersects $A$.
\end{enumerate}

Initially, the invariant $\ctc_{\ins}(W_i) \ge |W_i|$ holds as discussed earlier.
Also, if $\ins$ is initially a maximally merged yes-instance, the invariant of \Cref{lem:prebranching:yes} holds by definition, and the invariant of \Cref{lem:prebranching:invproc} initially holds by the fact that there are no processed terminal cliques.
The invariant of \Cref{lem:prebranching:invmany} initially holds if $\ins$ is maximally merged, because if there would be a solution $(X, (T, \bag))$ with a degenerate internal separation $(A,S,B)$, so that a set $\{W_{i_1}, \ldots, W_{i_l}\}$ of at least two terminal cliques intersects $A$, then by the definition $|S| + \sum_{W_i \mid W_i \cap A \neq \emptyset} \ctc_{\ins}(W_i) \le k+1$ and the fact that $\ctc_{\ins}(W_i) \ge |W_i|$ we could replace the solution on the subgraph induced by $A \cup S$ by just a single bag $S \cup \bigcup_{W_i \in \{W_{i_1}, \ldots, W_{i_l}\}} W_i$, and conclude that $\ins$ is not maximally merged.

The branching works as follows.
While there exists an unprocessed terminal clique $W_i$, we branch on the cases that either there exists no solution that contains a degenerate internal separation $(A,S,B)$ with $W_i \cap A \neq \emptyset$, recursing to the case where we just mark $W_i$ as processed, or that there exists a solution that contains a degenerate internal separation $(A,S,B)$ so that $W_i \cap A \neq \emptyset$, and in this case we recurse on all cases that are obtained by taking an important $(W_i, \ow_{\ins}(W_i))$-separator $S'$ of size $|S'| \le k+1-\ctc_{\ins}(W_i)$, letting $(A', S', B') = (\reach_G(W_i, S'), S', V(G) \setminus (\reach_G(W_i, S') \cup S'))$, and recursing to the instance $\ins \rescliqs (B', S')$ with $S'$ marked as processed if it is a terminal clique of $\ins \rescliqs (B', S')$.

Observe that the branching cannot decrease $\ctc_{\ins}(W_i)$ for any unprocessed terminal clique $W_i$, so the invariant that $\ctc_{\ins}(W_i) \ge |W_i|$ is maintained.
Note also that in each branch, the number of unprocessed terminal cliques decreases.
In particular, in the corner case when $S'$ is not a terminal clique of $\ins \rescliqs (B', S')$, it holds that the terminal cliques of $\ins \rescliqs (B', S')$ are a subset of the terminal cliques of $\ins$ but do not contain $W_i$.
As the number of important separators $S'$ of size at most $k$ is at most $4^k$ by \Cref{lem:impsep4k}, we branch to at most $4^k+1$ directions every time, and therefore as initially there are at most $k+2$ unprocessed terminal cliques, the branching tree has size at most $(4^k+1)^{k+2} = 2^{\OO(k^2)}$.

The leafs of the branching tree have no unprocessed terminal cliques and will be the instances we output.
Note that because when taking the important separator $S'$ we impose the condition $|S'| \le k+1-\ctc_{\ins}(W_i)$, which by $\ctc_{\ins}(W_i) \ge |W_i|$ implies that $|S'|+|W_i| \le k+1$, we can construct from a solution of $\ins \rescliqs (B', S')$ a solution of $\ins$ by just attaching a bag $W_i \cup S'$ as a neighbor of a bag containing $S'$.
Therefore, from any solution of the outputted instance we can construct a solution of the original instance.
Note that if the invariants of \Cref{lem:prebranching:yes,lem:prebranching:invproc,lem:prebranching:invmany} hold for some outputted instance, then it is a valid instance.
Therefore, it remains to argue that if the invariants of \Cref{lem:prebranching:yes,lem:prebranching:invproc,lem:prebranching:invmany} initially hold, then they hold for at least one of the branches.

Suppose that the invariants of \Cref{lem:prebranching:yes,lem:prebranching:invproc,lem:prebranching:invmany} hold for $\ins$, and let $W_i$ be an unprocessed terminal clique that we are branching on.
First, if there exists no solution that contains a degenerate internal separation $(A,S,B)$ so that $A$ intersects $W_i$, the invariants are clearly maintained by just marking $W_i$ processed as it does not change the set of solutions of the instance or the set $\allw$.
Then, suppose there exists a solution $(X, (T, \bag))$ that contains a degenerate internal separation $(A,S,B)$ so that $A$ intersects $W_i$.
First, by \Cref{lem:prebranching:invmany}, $W_i$ must be the only terminal clique that intersects $A$.
Then, we consider a hypothetical solution $(X, (T, \bag))$ and a degenerate internal separation $(A,S,B)$ of it so that $\reach_G(W_i, S)$ is the largest possible over all such $(X, (T, \bag))$ and $(A,S,B)$.
We note that $W_i \subseteq A \cup S$ and $\ow_{\ins}(W_i) \subseteq B \cup S$, so $S$ is a $(W_i, \ow_{\ins}(W_i))$-separator.
Then we let $S'$ to be a smallest important $(W_i, \ow_{\ins}(W_i))$-separator that dominates $S$, and will argue that the branch that selects $S'$ as the important separator will maintain the invariants.
We denote $(A', S', B') = (\reach_G(W_i, S'), S', V(G) \setminus (\reach_G(W_i, S') \cup S'))$.
Note that $W_i$ intersects $A'$ because $W_i$ intersects $\reach_G(W_i, S)$.
We will argue that $\ins \rescliqs (B',S')$ satisfies the invariants of \Cref{lem:prebranching:yes,lem:prebranching:invproc,lem:prebranching:invmany}.

\paragraph{\Cref{lem:prebranching:yes}.} First, to show that $\ins \rescliqs (B',S')$ is a yes-instance (i.e., \Cref{lem:prebranching:yes}), we use that by \Cref{lem:imp_sep_dom}, $S'$ is linked into $S \cap (A' \cup S')$.
In particular, we apply the pulling lemma (\Cref{lem:pull}) with the separation $(B', S', A')$, the torso tree decomposition $(X, (T, \bag))$, and the bag of $(T, \bag)$ that contains $S$, and obtain a torso tree decomposition $((X \cap B') \cup S', (T', \bag'))$ of no larger width that covers $\allw_{\ins \rescliqs (B', S')}$, showing that $\ins \rescliqs (B',S')$ is a yes-instance.

\paragraph{\Cref{lem:prebranching:invmany}.}
Then, to argue that $\ins \rescliqs (B',S')$ does not have solutions with degenerate internal separations $(A,S,B)$ with multiple terminal cliques intersecting $A$ (i.e., \Cref{lem:prebranching:invmany}), suppose that $(X_d, (T_d, \bag_d))$ is a solution of $\ins \rescliqs (B',S')$ that contains a degenerate internal separation $(A_d,S_d,B_d)$ so that at least two terminal cliques of $\ins \rescliqs (B',S')$ intersect $A_d$.
Now, because $W_i$ is the only terminal clique of $\ins$ that intersects $A'$ and by the fact that $|S'| \le |S|$ and $|W_i| + |S| \le k+1$ we can turn $(X_d, (T_d, \bag_d))$ into a solution $(X_d \cup W_i, (T_d', \bag_d'))$ of $\ins$ by just attaching a bag $W_i \cup S'$ to a bag of $(T_d', \bag_d')$ that contains $S'$.

Then, if $S' \subseteq S_d \cup B_d$, we consider the separation $(A_d, S_d, B_d \cup A')$ of $G$.
The set $\allw_{\ins}$ intersects $A_d$ because at least two terminal cliques of $\ins \rescliqs (B',S')$ intersect $A_d$, and it intersects $B_d \cup A'$ because $W_i$ intersects $A'$, and therefore $(A_d, S_d, B_d \cup A')$ is an internal separation for the solution $(X_d \cup W_i, (T_d', \bag_d'))$ of $\ins$.
In this case, as $A_d \subseteq B'$, all terminal cliques of $\ins \rescliqs (B',S')$ that intersect $A_d$ are also terminal cliques of $\ins$ that intersect $A_d$ and vice versa, and therefore at least two terminal cliques of $\ins$ intersect $A_d$ and $(A_d, S_d, B_d \cup A')$ is degenerate also for $\ins$, which would contradict that $\ins$ satisfies the invariant of \Cref{lem:prebranching:invmany}.

The other case is that $S'$ intersects $A_d$ and is a subset of $A_d \cup S_d$, in which case we consider the separation $(A_d \cup A', S_d, B_d)$ of $G$.
The set $\allw_{\ins}$ intersects $A_d \cup A'$ because $W_i$ intersects $A'$, and it intersects $B_d$ because $B_d \subseteq B'$ and $\allw_{\ins} \cap B' = \allw_{\ins \rescliqs (B',S')} \cap B'$, and therefore it is an internal separation for the solution $(X_d \cup W_i, (T_d', \bag_d'))$ of $\ins$.
At least two terminal cliques of $\ins$ intersect $A_d \cup A'$ because $W_i$ intersects $A'$, and some terminal clique of $\ins \rescliqs (B',S')$ that is also a terminal clique of $\ins$ must intersect $A_d$ because all but at most one terminal clique of $\ins \rescliqs (B',S')$ is a terminal clique of $\ins$ and at least two terminal cliques of $\ins \rescliqs (B',S')$ intersect $A_d$.
Now, because $S'$ intersects $A_d$ and all terminal vertices covered by terminal cliques of $\ins$ that are subsets of $A' \cup S'$ were mapped into $S'$ or to the superset of $S'$ in $\ins \rescliqs (B',S')$, the internal separation $(A_d \cup A', S_d, B_d)$ is degenerate for $\ins$, which would contradict that $\ins$ satisfies the invariant of \Cref{lem:prebranching:invmany}.

\paragraph{\Cref{lem:prebranching:invproc}.}
Then, to argue that $\ins \rescliqs (B',S')$ has no solution with a degenerate internal separation $(A,S,B)$ with a processed terminal clique intersecting $A$ (i.e., \Cref{lem:prebranching:invproc}), let $W_j$ be a processed terminal clique of $\ins \rescliqs (B',S')$ and suppose that $(X_d, (T_d, \bag_d))$ is a solution of $\ins \rescliqs (B',S')$ that contains a degenerate internal separation $(A_d,S_d,B_d)$ so that $W_j$  intersects $A_d$ but no other terminal clique of $\ins \rescliqs (B',S')$ intersects $A_d$. (Note that if also some other terminal clique intersects $A_d$, then we are in the already proven case of \Cref{lem:prebranching:invmany}.)
Again, because $W_i$ is the only terminal clique of $\ins$ that intersects $A'$ and by the fact that $|S'| \le |S|$ and $|W_i| + |S| \le k+1$ we can turn $(X_d, (T_d, \bag_d))$ into a solution $(X_d \cup W_i, (T_d', \bag_d'))$ of $\ins$ by just attaching a bag $W_i \cup S'$ to a bag of $(T_d, \bag_d)$ that contains $S'$.

Then, if $S' \subseteq S_d \cup B_d$, we consider the separation $(A_d, S_d, B_d \cup A')$ of $G$.
Now, because $S' \subseteq S_d \cup B_d$ but $W_j$ intersects $A_d$, we have that $W_j \neq S'$, implying that $W_j$ is also a terminal clique of $\ins$ and therefore $A_d$ intersects $\allw_{\ins}$, and $B_d \cup A'$ intersects $\allw_{\ins}$ because $W_i$ intersects $A'$, and therefore $(A_d, S_d, B_d \cup A')$ is an internal separation of the solution $(X_d \cup W_i, (T_d', \bag_d'))$ for $\ins$.
Because $W_j$ is a terminal clique of $\ins$ and $W_j \neq W_i$ and $W_j \neq S'$, it is also a processed terminal clique of $\ins$, and moreover because $A_d \subseteq B'$, also no other terminal cliques of $\ins$ intersect $A_d$, and therefore we contradict that $\ins$ satisfies the invariant of \Cref{lem:prebranching:invproc}.


The other case is that $S'$ intersects $A_d$ and is a subset of $A_d \cup S_d$.
Note that in this case $S' \subseteq W_j$ and $W_j$ is the only terminal clique of $\ins \rescliqs (B',S')$ that is a superset of $S'$.
We then consider the separation $(A_d \cup A', S_d, B_d)$ of $G$.
Because $W_i$ intersects $A'$ and $B_d \subseteq B'$ this is an internal separation for $\ins$.
Note that because $W_i \subseteq A' \cup S'$, the original terminal vertices mapped to $W_i$ in $\ins$ are mapped to $W_j$ in $\ins \rescliqs (B',S')$.
If any other terminal clique of $\ins$ than $W_i$ intersects $A_d \cup A'$, then it must either be also a terminal clique of $\ins \rescliqs (B',S')$ that intersects $A_d$ (in particular, $W_j$) or a subset of $A' \cup S'$ whose covered original terminal vertices are mapped to $W_j$ in $\ins \rescliqs (B',S')$.
In that case, we contradict that $\ins$ satisfies the invariant of \Cref{lem:prebranching:invmany}.
Then, if the only terminal clique of $\ins$ that intersects $A_d \cup A'$ is $W_i$, we observe that $S_d$ is a $(W_i,\ow_{\ins}(W_i))$-separator, and moreover because $S_d$ does not intersect $A'$ and $S'$ intersects $A_d$ we have that $\reach_G(W_i, S') \subset \reach_G(W_i, S_d)$.
Because $\ctc_{\ins \rescliqs (B',S')}(W_j) \ge \ctc_{\ins}(W_i)$, it holds that $|W_i| + |S_d| \le k+1$, and therefore $(A_d \cup A', S_d, B_d)$ is a degenerate internal separation for $\ins$, and therefore as $\reach_G(W_i, S) \subseteq \reach_G(W_i, S') \subset \reach_G(W_i, S_d)$, it contradicts the choice of $(A,S,B)$.
\end{proof}

\subsection{Maintaining valid instances}
\label{subsec:maintvalid}
We introduce a new parameter of the instance based on the minimum order of a an internal separation in a solution.
In particular, we maintain a integer $q$ so that, informally speaking, in the success branches it is guaranteed that there exists no solution that contains an internal separation of order less than $q$.
More formally, we further extend the definition of an instance to now be a 5-tuple $\ins = (G, \{W_1, \ldots, W_t\}, k, \tc, q)$, re-using all previous definitions but now also including an integer $q \in [0,k+2]$.
We now say that $\ins$ is valid if it is a yes-instance, there exists no solution that contains a degenerate internal separation, and there exists no solution that contains an internal separation of order less than $q$.
Otherwise, we say that $\ins$ is invalid.
Note that a valid instance in the sense of \Cref{subsec:degenseps} can be turned into valid instance of this sense by just setting $q=0$.
The definitions $\rescliqs$, $\mergcliqs$, and $\addv$ used for manipulating the instance are extended so that they do not change $q$.
The rest of this section will be devoted to designing a branching algorithm for either finding a solution of $\ins$ or concluding that $\ins$ is invalid.

We first give a general lemma that will be used for arguing that if we break the instance by a separation $(A,S,B)$, then the resulting instances $\ins \rescliqs (A,S)$ and $\ins \rescliqs (B,S)$ are valid.

\begin{lemma}
\label{lem:breakvalid}
Let $\ins = (G, \{W_1, \ldots, W_t\}, k, \tc, q)$ be a valid instance and $(A,S,B)$ a separation of $G$ so that at least one terminal clique either intersects $A$ or is a superset of $S$ and at least one terminal clique either intersects $B$ or is a superset of $S$.
If both $\ins \rescliqs (A,S)$ and $\ins \rescliqs (B,S)$ are yes-instances, then both $\ins \rescliqs (A,S)$ and $\ins \rescliqs (B,S)$ are valid.
\end{lemma}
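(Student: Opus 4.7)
The plan is to prove validity of $\ins \rescliqs (A,S)$; the case of $\ins \rescliqs (B,S)$ follows by symmetry. Denote $\ins_A = \ins \rescliqs (A,S)$ and $\ins_B = \ins \rescliqs (B,S)$. I will argue the contrapositive: if $\ins_A$ is invalid (has a solution with a degenerate internal separation or an internal separation of order less than $q$), then $\ins$ is invalid, contradicting validity of $\ins$.

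Suppose $(X_d, (T_d, \bag_d))$ is a solution of $\ins_A$ that contains a ``bad'' internal separation $(A_d, S_d, B_d)$ of $G \rescliqs (A,S)$. Let $(X_B, (T_B, \bag_B))$ be any solution of $\ins_B$, which exists by assumption. By \Cref{lem:sepredgen}, the combined torso tree decomposition $(X_d \cup X_B, (T_d, \bag_d) \cup_S (T_B, \bag_B))$ is a solution of $\ins$. Since $S$ is a clique in $G \rescliqs (A,S)$, either $S \subseteq S_d \cup B_d$ or $S \subseteq A_d \cup S_d$. In the first case, set $(A', S', B') = (A_d, S_d, B_d \cup B)$; in the second, set $(A', S', B') = (A_d \cup B, S_d, B_d)$. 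Because $S$ lies entirely on one side of $(A_d,S_d,B_d)$ and $(A,S,B)$ separates $A$ from $B$ in $G$, $(A',S',B')$ is indeed a separation of $G$, and $S' = S_d$ is a subset of some bag of the combined decomposition.

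Next I will check that $(A', S', B')$ is an \emph{internal} separation of the combined solution, i.e.\ that $\allw_\ins$ meets both sides. This is where the hypothesis that at least one terminal clique of $\ins$ meets (or covers) each of $A$ and $B$ becomes essential: using $\allw_{\ins_A} \subseteq (\allw_\ins \cap A) \cup S$, together with the fact that any terminal clique of $\ins_A$ that is not a terminal clique of $\ins$ must be $S$ itself with no superset of $S$ belonging to $\ins$, I can trace a terminal vertex of $\ins$ into each side of $(A', S', B')$ by invoking the appropriate hypothesis. Since $|S'| = |S_d|$, the property ``order less than $q$'' transfers from $(A_d,S_d,B_d)$ to $(A',S',B')$ immediately, ruling out the ``order'' form of invalidity.

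The main obstacle is preserving the degeneracy condition $|S'| + \sum_{W_i \cap \cdot \,\neq\, \emptyset} \ctc_\ins(W_i) \le k+1$ on the correct side. Here the key accounting fact is
\[\ctc_{\ins_A}(S') = \sum_{W_i \in \ins,\, W_i \cap A = \emptyset} \ctc_\ins(W_i)\]
whenever $S$ is the newly-inserted superset in $\ins_A$, and $\ctc_{\ins_A}(W_i) = \ctc_\ins(W_i)$ for every other $W_i \in \ins_A$. Thus, whichever side of $(A_d, S_d, B_d)$ is degenerate in $\ins_A$, I choose the case of $S \subseteq S_d \cup B_d$ vs.\ $S \subseteq A_d \cup S_d$ and the corresponding extension $(A',S',B')$ so that the $\ctc_{\ins_A}(S')$ mass is absorbed on the correct side; the terminal cliques of $\ins$ meeting that side contribute $\ctc_\ins$-values summing to at most $\ctc_{\ins_A}(S') + \sum \ctc_{\ins_A}(W_i)$ over the relevant terminal cliques of $\ins_A$. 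Once this bookkeeping is done in each of the two subcases, $(A', S', B')$ is a degenerate (or low-order) internal separation of a solution of $\ins$, contradicting validity of $\ins$ and completing the proof.
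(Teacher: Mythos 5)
Your overall strategy coincides with the paper's: assume $\ins_A = \ins \rescliqs (A,S)$ is invalid, take a witnessing solution $(X_d,(T_d,\bag_d))$ with a bad internal separation $(A_d,S_d,B_d)$, combine it with any solution of $\ins_B$ via \Cref{lem:sepredgen}, and extend the bad separation to a separation of $G$ using the clique-ness of $S$ to determine which side absorbs $B$. That is the paper's proof. However, there are two spots where your write-up is imprecise enough that I would not call it complete.

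First, your case split is ``$S \subseteq S_d \cup B_d$ vs.\ $S \subseteq A_d \cup S_d$,'' which overlaps when $S \subseteq S_d$. The paper's version is ``$S \subseteq S_d \cup B_d$'' versus ``$S \subseteq A_d \cup S_d$ \emph{and $S$ intersects $A_d$},'' and the latter condition is not cosmetic: in the second case the extension $(A_d\cup B, S_d, B_d)$ adds all of $B$ to the degenerate side, so the only reason the degeneracy bound survives is that the terminal clique $W_S\supseteq S$ of $\ins_A$, which now carries $\ctc_{\ins_A}(W_S) = \ctc_\ins(W_S)+\sum_{W_i\subseteq B\cup S}\ctc_\ins(W_i)$, already lives in the $A_d$-side sum --- and it does so precisely because $S$ (hence $W_S$) intersects $A_d$. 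If $S\subseteq S_d$ you must use the first extension, and your proposal never says so. Your sentence ``whichever side $\ldots$ is degenerate $\ldots$ I \emph{choose the case}'' compounds the confusion: the side on which degeneracy holds determines the labeling of $(A_d,S_d,B_d)$, and then the location of $S$ determines the case; neither is a free choice, and both degrees of freedom are already fixed once you pin them down in the right order.

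Second, you defer the actual verification that the extended separation is an internal separation of the combined solution and that the degeneracy (resp.\ low-order) bound transfers: ``I can trace a terminal vertex $\ldots$ by invoking the appropriate hypothesis'' and ``Once this bookkeeping is done in each of the two subcases, $\ldots$.'' In case 1 the bookkeeping is genuinely trivial ($A_d\subseteq A$, so the set of terminal cliques meeting $A_d$ and their $\ctc$-values are unchanged between $\ins$ and $\ins_A$), but in case 2 it is exactly the step discussed above and needs to be carried out, not asserted. Your displayed accounting identity is correct only under the stated side condition ``when $S$ is the newly-inserted superset,'' and if some $W_j\supseteq S$ with $W_j\cap A\neq\emptyset$ already exists in $\ins$ the formula must carry an extra $\ctc_\ins(W_j)$ term; the conclusion still holds, but a complete proof should cover both. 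In short, you have the right skeleton (identical to the paper's), but the case-2 degeneracy transfer --- the only nontrivial step --- is still a gap in your write-up.
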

\begin{proof}
By symmetry it suffices to prove that $\ins \rescliqs (A,S)$ is valid, so for the sake of contradiction suppose that $\ins \rescliqs (A,S)$ is invalid.
Because $\ins \rescliqs (A,S)$ is a yes-instance, it has a solution that contains a degenerate internal separation or an internal separation of order $<q$.
Let $(X_A, (T_A, \bag_A))$ be such a solution of $\ins \rescliqs (A,S)$ and $(X_B, (T_B, \bag_B))$ any solution of $\ins \rescliqs (B,S)$.

By \Cref{lem:sepredgen}, $(X_A \cup X_B, (T_A, \bag_A) \cup_S (T_B, \bag_B))$ is a solution of $\ins$.
Let $(A', S', B')$ be the internal separation of $(X_A, (T_A, \bag_A))$.
Now, as $S$ is a clique in $G \rescliqs (A,S)$, we have two cases, either $S \subseteq B' \cup S'$ or $S \subseteq A' \cup S'$ and $S$ intersects $A'$.

First, if $S \subseteq B' \cup S'$, then consider the separation $(A', S', B \cup B')$ of $G$.
In this situation we have that $\allw_{\ins}$ intersects $A'$ because $\allw_{\ins \rescliqs (A,S)}$ intersects $A'$ and $S \subseteq B' \cup S'$, and that $\allw_{\ins}$ intersects $B \cup B'$ because if it does not intersect $B$, then $\allw_{\ins \rescliqs (A,S)} = \allw_{\ins}$, in which case it must intersect $B'$.
Therefore, $(A', S', B \cup B')$ is an internal separation of the solution $(X_A \cup X_B, (T_A, \bag_A) \cup_S (T_B, \bag_B))$ of $\ins$, and therefore if $|S'| < q$ we are done in this case.
If $(A', S', B')$ is degenerate internal separation of $(X_A, (T_A, \bag_A))$ in $\ins \rescliqs (A,S)$, then it is also a degenerate internal separation of $(X_A \cup X_B, (T_A, \bag_A) \cup_S (T_B, \bag_B))$ because the original terminal vertices mapped to terminal cliques that intersect $A'$ are the same in both $\ins$ and $\ins \rescliqs (A,S)$ because $A' \subseteq A$.

Then, if $S \subseteq A' \cup S'$ and $S$ intersects $A'$, consider the separation $(A' \cup B, S', B')$ of $G$.
By the same arguments as in the earlier case, we get that $(A' \cup B, S', B')$ is an internal separation of the solution $(X_A \cup X_B, (T_A, \bag_A) \cup_S (T_B, \bag_B))$ of $\ins$, and again if $|S'| < q$ we are immediately done.
Now, if $(A', S', B')$ is a degenerate internal separation of $(X_A, (T_A, \bag_A))$ in $\ins \rescliqs (A,S)$, then $(A' \cup B, S', B')$ is degenerate in $\ins$ because all original terminal vertices mapped to terminal cliques intersecting $B$ in $\ins$ are mapped to a superset of $S$ in $\ins \rescliqs (A,S)$, which intersects $A'$.
\end{proof}

It follows that breaking the instance by safe separations preserves the validity.

\begin{lemma}
\label{lem:safesepvalid}
If $\ins$ is valid and $(A,S,B)$ is a safe separation, then both $\ins \rescliqs (A,S)$ and $\ins \rescliqs (B,S)$ are valid.
\end{lemma}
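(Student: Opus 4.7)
The plan is to reduce the claim to \Cref{lem:breakvalid} by showing its two hypotheses are met. First, since $\ins$ is valid it is in particular a yes-instance, and since $(A,S,B)$ is a safe separation, we may invoke \Cref{lem:safesepyesyes} to conclude that both $\ins \rescliqs (A,S)$ and $\ins \rescliqs (B,S)$ are yes-instances. This covers one of the two hypotheses of \Cref{lem:breakvalid}.

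The second hypothesis asks that at least one terminal clique either intersects $A$ or is a superset of $S$, and symmetrically for $B$. I would deduce this from the existence of the spanning terminal cliques $W_a$ and $W_b$ of the safe separation. Because $S$ is linked into $(A \cup S) \cap W_a$, we have $|(A \cup S) \cap W_a| \ge |S|$; if $W_a$ is not a superset of $S$, then $|S \cap W_a| < |S|$, and the displayed inequality forces $|A \cap W_a| \ge |S| - |S \cap W_a| > 0$, so $W_a$ meets $A$. Either way, $W_a$ witnesses the condition on the $A$-side. The symmetric argument for $W_b$ handles the $B$-side.

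With both hypotheses verified, a direct application of \Cref{lem:breakvalid} yields that both $\ins \rescliqs (A,S)$ and $\ins \rescliqs (B,S)$ are valid, completing the proof.

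I do not anticipate any real obstacle here: the lemma is essentially a corollary that packages \Cref{lem:safesepyesyes} (which is the substantive yes-instance preservation result, itself relying on the pulling lemma) together with \Cref{lem:breakvalid} (which handles the transfer of the ``no degenerate or low-order internal separation'' condition through a separation). The only subtlety is the brief linkedness-count argument verifying that each spanning terminal clique lies on the correct side or fully covers $S$, but this is immediate from $|S|$ disjoint paths and the size of $W_a \cap S$.
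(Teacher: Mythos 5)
Your proposal is correct and matches the paper's proof: both invoke \Cref{lem:safesepyesyes} to establish the yes-instance hypothesis and then apply \Cref{lem:breakvalid}. The paper simply states that the side condition follows from the definition of safe separation, whereas you spell out the counting argument showing each spanning terminal clique either intersects the corresponding side or contains $S$; this is a fair elaboration of the same step.
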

\begin{proof}
By \Cref{lem:safesepyesyes} both $\ins \rescliqs (A,S)$ and $\ins \rescliqs (B,S)$ are yes-instances.
Because $(A,S,B)$ is a safe separation, at least one terminal clique intersects $A$ or is a superset of $S$ and at least one terminal clique intersects $B$ or is a superset of $S$.
Therefore by \Cref{lem:breakvalid} both $\ins \rescliqs (A,S)$ and $\ins \rescliqs (B,S)$ are valid.
\end{proof}

Then, we observe that safe separations $(A,S,B)$ of order $<q$ can be turned into internal separations of order $<q$ if $\allw_{\ins}$ intersects both $A$ and $B$.

\begin{lemma}
\label{lem:invalidsafesep}
If an instance $\ins = (G, \{W_1, \ldots, W_t\}, k, \tc, q)$ has a safe separation $(A,S,B)$ of order $<q$ so that $\allw_{\ins}$ intersects both $A$ and $B$, then $\ins$ is invalid.
\end{lemma}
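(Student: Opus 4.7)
The plan is to show that if $\ins$ is a yes-instance, then one can exhibit a solution of $\ins$ that has $(A,S,B)$ itself as an internal separation of order $|S|<q$; together with the trivial case that $\ins$ is a no-instance, this rules out $\ins$ being valid.

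Assume $\ins$ is a yes-instance. Since $(A,S,B)$ is a safe separation, \Cref{lem:safesepyesyes} gives that both $\ins \rescliqs (A,S)$ and $\ins \rescliqs (B,S)$ are yes-instances, so we can fix solutions $(X_A, (T_A, \bag_A))$ and $(X_B, (T_B, \bag_B))$ of them, respectively. Because $S$ is made into a clique in both $G \rescliqs (A,S)$ and $G \rescliqs (B,S)$, there is a bag of $(T_A, \bag_A)$ and a bag of $(T_B, \bag_B)$ containing $S$, so the construction $(T_A, \bag_A) \cup_S (T_B, \bag_B)$ is well-defined. By \Cref{lem:sepredgen}, $(X_A \cup X_B, (T_A, \bag_A) \cup_S (T_B, \bag_B))$ is a solution of $\ins$.

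In this solution, $S$ is a subset of some bag (namely, any bag of either side containing $S$), and by hypothesis $\allw_\ins$ intersects both $A$ and $B$. Therefore $(A,S,B)$ is an internal separation of this solution, and its order $|S|$ is strictly less than $q$. Thus $\ins$ has a solution containing an internal separation of order $<q$, so $\ins$ is invalid.
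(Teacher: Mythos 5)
Your proof is correct and follows essentially the same approach as the paper: apply \Cref{lem:safesepyesyes} to obtain solutions of $\ins \rescliqs (A,S)$ and $\ins \rescliqs (B,S)$, combine them via \Cref{lem:sepredgen}, and observe that $(A,S,B)$ is then an internal separation of order $<q$ of the combined solution. The extra sentence you include about why $\cup_S$ is well-defined is a helpful clarification but is already justified inside the proof of \Cref{lem:sepredgen}.
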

\begin{proof}
If $\ins$ would be valid, then by \Cref{lem:safesepyesyes} both $\ins \rescliqs (A,S)$ and $\ins \rescliqs (B,S)$ are yes-instances.
Let $(X_A, (T_A, \bag_A))$ be a solution of $\ins \rescliqs (A,S)$ and $(X_B, (T_B, \bag_B))$ be a solution of $\ins \rescliqs (B,S)$.
By \Cref{lem:sepredgen}, $(X_A \cup X_B, (T_A, \bag_A) \cup_S (T_B, \bag_B))$ is a solution of $\ins$. 
However, now $(A,S,B)$ is an internal separation of order $<q$ and thus $\ins$ is invalid.
\end{proof}


Then, we show that the notion of maximally merged plays well together with the definition of valid instances.

\begin{lemma}
\label{lem:maxmergedinvalid}
Let $\ins = (G, \{W_1, \ldots, W_t\}, k, \tc, q)$ be an instance.
If for all pairs of distinct terminal cliques $W_i,W_j$ either $|W_i \cup W_j| > k+1$ holds or $\ins \mergcliqs (W_i,W_j)$ is invalid, then either $\ins$ is maximally merged or $\ins$ is invalid.
\end{lemma}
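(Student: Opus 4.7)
The plan is to prove the contrapositive: assuming the hypothesis together with $\ins$ not being maximally merged, I will show that $\ins$ is invalid. Since $\ins$ is not maximally merged, there exist distinct terminal cliques $W_i, W_j$ with $|W_i \cup W_j| \le k+1$ for which $\ins \mergcliqs (W_i, W_j)$ is a yes-instance; the hypothesis then forces $\ins \mergcliqs (W_i, W_j)$ to be invalid, so it admits a solution $(X', (T', \bag'))$ that either contains an internal separation of order less than $q$, or contains a degenerate internal separation. My task reduces to lifting this witness of invalidity back to $\ins$.

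The lift of the solution itself has already been recorded right after the definition of $\ins \mergcliqs$: since $E(G) \subseteq E(G \mergcliqs (W_i,W_j))$, \Cref{lem:torsoindsub}-style reasoning gives $E(\torso_G(X')) \subseteq E(\torso_{G \mergcliqs (W_i,W_j)}(X'))$, and $\allw_{\ins} = \allw_{\ins \mergcliqs (W_i,W_j)}$, so $(X', (T', \bag'))$ is also a solution of $\ins$. Likewise, every separation of $G \mergcliqs (W_i, W_j)$ is a separation of $G$, the set $\allw_{\ins}$ coincides with $\allw_{\ins \mergcliqs (W_i,W_j)}$, and the bags of the decomposition are unchanged, so the property of being an internal separation and the order $|S|$ of such a separation are preserved verbatim. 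This immediately handles the case of an internal separation of order less than $q$.

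The only subtlety is preservation of degeneracy. Let $(A,S,B)$ be a degenerate internal separation of $(X', (T', \bag'))$ in $\ins \mergcliqs (W_i,W_j)$, and set $M = W_i \cup W_j$. I will verify that the total $\sum_{W\,:\,W \cap A \neq \emptyset} \ctc(W)$ does not grow when moving from $\ins \mergcliqs (W_i,W_j)$ to $\ins$. If $M \cap A = \emptyset$, then neither $W_i$ nor $W_j$ meets $A$ and the sum is identical. If $M \cap A \neq \emptyset$, then the single term $\ctc_{\ins \mergcliqs (W_i,W_j)}(M) = \ctc_{\ins}(W_i) + \ctc_{\ins}(W_j)$ appearing in the merged sum is replaced in $\ins$ by the $\ctc_{\ins}$-values of those of $W_i, W_j$ that actually intersect $A$, which is at most the same quantity. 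In both cases
\[|S| + \sum_{W \cap A \neq \emptyset} \ctc_{\ins}(W) \;\le\; |S| + \sum_{W \cap A \neq \emptyset} \ctc_{\ins \mergcliqs (W_i,W_j)}(W) \;\le\; k+1,\]
so $(A,S,B)$ is degenerate also in $\ins$. Thus $\ins$ is a yes-instance with a solution exhibiting a degenerate internal separation (or an internal separation of order $<q$), so $\ins$ is invalid, completing the contrapositive. The only genuine obstacle is the elementary book-keeping on how $\ctc$ behaves under merging, handled above.
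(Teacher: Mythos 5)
Your proposal is correct and takes essentially the same route as the paper: lift the solution of $\ins \mergcliqs (W_i,W_j)$ back to $\ins$ (observing that the graph gains edges and $\allw$ is unchanged), and note that the witnessing internal separation persists, with degeneracy preserved because $\ctc$-mass mapped into terminal cliques intersecting $A$ can only decrease when un-merging $W_i$ and $W_j$. Your bookkeeping is slightly more explicit than the paper's one-sentence treatment of the degeneracy case (and the equality $\ctc_{\ins \mergcliqs (W_i,W_j)}(W_i \cup W_j) = \ctc_{\ins}(W_i) + \ctc_{\ins}(W_j)$ should strictly be an inequality $\ge$ in the corner case where $W_i \cup W_j$ was already a terminal clique of $\ins$, but this does not affect the direction of the final bound).
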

\begin{proof}
Suppose this holds and $\ins$ is valid but not maximally merged.
Now, there exists a pair of distinct terminal cliques $W_i,W_j$ so that $|W_i \cup W_j| \le k+1$ and $\ins \mergcliqs (W_i,W_j)$ is a yes-instance but invalid.
Let $(X, (T, \bag))$ be a solution of $\ins \mergcliqs (W_i,W_j)$ and $(A,S,B)$ an internal separation of $(X, (T, \bag))$ that is either degenerate or has $|S| < q$.
Now, $(X, (T, \bag))$ is also a solution of $\ins$, and because $\allw_{\ins} = \allw_{\ins \mergcliqs (W_i, W_j)}$, $(A,S,B)$ is also an internal separation for $\ins$.
First, if $|S| < q$, then $\ins$ is invalid.
Then, if $(A,S,B)$ is degenerate for $\ins \mergcliqs (W_i,W_j)$, then for $\ins$ only a smaller number of original terminal vertices are mapped into terminal cliques that intersect $A$, so $(A,S,B)$ is also degenerate for $\ins$.
\end{proof}

\subsection{Branching}
\label{subsec:algok2branch}
In our algorithm, if there are less than 2 terminal cliques of size $\ge q$ we will perform leaf pushing branching to create more terminal cliques of size $\ge q$.
Unlike in the leaf pushing of \Cref{sec:algopstw}, in the leaf pushing of this section we will add the whole important separator to the terminal clique.
We show that like this, we can make a terminal clique of size $<q$ into size $\ge q$ by guessing a single important separator.

\begin{lemma}
\label{lem:validleafpush}
Let $\ins = (G, \{W_1, \ldots, W_t\}, k, \tc, q)$ be a maximally merged valid instance with $t \ge 2$, no safe separators, and $W_i$ a potential forget-clique of $\ins$.
There is a vertex $w \in W_i \setminus \ow_{\ins}(W_i)$ and in the graph $G \setminus (W_i \setminus \{w\})$ a non-empty important $(\{w\}, \allw_{\ins} \setminus W_i)$-separator $S$ disjoint from $W_i$ so that $\ins \addv (W_i, S)$ is a valid instance and $q+1 \le |W_i \cup S| \le k+1$.
\end{lemma}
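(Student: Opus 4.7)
The plan is to extend the proof of Lemma~\ref{lem:leafpush_impsep}, on top of which I need to establish two additional guarantees: the two-sided bound $q+1 \le |W_i \cup S| \le k+1$ and the \emph{full} validity of $\ins \addv (W_i, S)$. I would start with the same setup as in Lemma~\ref{lem:leafpush_impsep}: fix a solution $(X, (T, \bag))$ of $\ins$, and by the definition of potential forget-clique, pick a leaf $l$ of $T$ with parent $p$, a vertex $w \in W_i \setminus \ow_\ins(W_i)$ with $W_i \subseteq \bag(l)$ and $\bag(p) = \bag(l) \setminus \{w\}$. Writing $W_i^f = W_i \setminus \{w\}$, the set $\bag(l) \setminus W_i$ is a non-empty $(\{w\}, \allw_\ins \setminus W_i)$-separator in $G \setminus W_i^f$ of size at most $k+1-|W_i|$.

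Next I would choose $S$ as a smallest important $(\{w\}, \allw_\ins \setminus W_i)$-separator in $G \setminus W_i^f$ dominating $\bag(l) \setminus W_i$, and apply the pulling lemma (Lemma~\ref{lem:pull}) as in Lemma~\ref{lem:leafpush_impsep} to produce a solution of $\ins$ whose tree decomposition contains the bag $W_i \cup S$; this immediately gives that $\ins \addv (W_i, S)$ is a yes-instance and $|W_i \cup S| \le |\bag(l)| \le k+1$. For the lower bound $|W_i \cup S| \ge q+1$, I would examine the separation $(A, S \cup W_i^f, B)$ of $G$ with $B = \reach_{G \setminus W_i^f}(\{w\}, S)$: when $\allw_\ins \setminus W_i \not\subseteq S$, this is an internal separation of the constructed solution of $\ins$ of order $|W_i \cup S|-1$, so the validity of $\ins$ yields $|W_i \cup S| \ge q+1$. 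In the corner case $\allw_\ins \setminus W_i \subseteq S$, I would replace $S$ by a dominating important separator of size exactly $|\allw_\ins \setminus W_i|$, giving $W_i \cup S = \allw_\ins$, and deduce $|\allw_\ins| \ge q+1$ by applying validity of $\ins$ to the separation $(\{w\}, \bag(p), V(G) \setminus \bag(l))$ coming from $(X, (T, \bag))$ together with the observation that, because $\ins$ is maximally merged with $t \ge 2$, $\allw_\ins$ cannot be contained in the single bag $\bag(l)$ (which would otherwise collapse $\allw_\ins$ into a single terminal clique).

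For the validity of $\ins \addv (W_i, S)$, I would use that any solution $(X', (T', \bag'))$ of $\ins \addv (W_i, S)$ is also a solution of $\ins$ (since $\allw_{\ins \addv (W_i, S)} \supseteq \allw_\ins$ and the $\addv$ operation only adds edges to $G$), and that the clique $W_i \cup S$ of size $\ge q+1$ in $G \addv (W_i, S)$ forces any separator of order $<q$ in $(X', (T', \bag'))$ to leave $W_i \cup S$ entirely on one side. I would then translate a hypothetical bad internal separation $(A', S', B')$ for $\ins \addv (W_i, S)$ back to $\ins$ via the mapping $\tc$, showing that either it becomes an internal separation of order $<q$ in $\ins$, contradicting validity, or it becomes a degenerate internal separation in $\ins$, again contradicting validity. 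For the degenerate case, the key point is that original terminal vertices covered by $W_i$ in $\ins$ are still covered by $W_i \cup S$ in $\ins \addv (W_i, S)$, so degeneracy transfers cleanly.

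The main obstacle will be the corner case $\allw_\ins \setminus W_i \subseteq S$ in the lower-bound argument: proving $|\allw_\ins| \ge q+1$ requires invoking validity of $\ins$ against the specific solution $(X, (T, \bag))$ at hand (rather than against any trivial single-bag solution, which would have no internal separation to contradict), and ensuring that enlarging $S$ to $\allw_\ins \setminus W_i$ does not break the yes-instance conclusion that came from the pulling lemma. A secondary difficulty is the sub-case of the validity argument where the bad separation of $\ins \addv (W_i, S)$ is ``hidden'' by only intersecting $\allw_\ins \cup S$ via vertices in $S$ and not in $\allw_\ins$; here invoking the degenerate-separation clause requires tracking the terminal cover $\tc$ carefully.
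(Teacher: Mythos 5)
Your overall strategy matches the paper's: obtain $w$ and $S$ from the proof of \Cref{lem:leafpush_impsep}, argue the lower bound $|W_i\cup S|\ge q+1$ by exhibiting an internal separation of order $|W_i\cup S|-1$ in the constructed solution of $\ins$, and argue validity of $\ins\addv(W_i,S)$ by pulling a bad internal separation back to $\ins$. However, there are two concrete gaps.

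First, you correctly flag the ``hidden'' sub-case of the validity argument — where $\allw_\ins$ fails to intersect one side of $(A',S',B')$ and only $S$ does — but you never actually resolve it; ``tracking $\tc$ carefully'' is not a construction. In this sub-case (say $\allw_\ins\cap A'=\emptyset$ but $S\cap A'\neq\emptyset$) the clique $W_i\cup S$ sits inside $A'\cup S'$ and $W_i\subseteq S'$, so $(A',S',B')$ is simply not internal for $\ins$ and cannot be used directly; your observation that order-$<q$ separators leave $W_i\cup S$ on one side does not repair this, and does not apply at all to degenerate internal separations, where $|S'|$ can be large. The missing ingredient is the paper's $\reach_w$-shift: set $\reach_w=\reach_G(\{w\},S\cup(W_i\setminus\{w\}))$, note $N(\reach_w)\subseteq S\cup W_i\subseteq A'\cup S'$ and $\reach_w\cap\allw_\ins=\{w\}$, and pass to $(A'\cup\reach_w,\ S'\setminus\reach_w,\ B'\setminus\reach_w)$, which has $w$ on the $A$-side and is thus internal for $\ins$, has no larger order, and — because $W_i$ is the unique terminal clique of $\ins$ meeting the new $A$-side and $\ctc_\ins(W_i)=\ctc_{\ins\addv(W_i,S)}(W_i\cup S)$ — inherits degeneracy. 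Without this shift your plan does not close the case.

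Second, the ``corner case'' $\allw_\ins\setminus W_i\subseteq S$ you spend most effort on cannot occur, and your proposed handling of it (replacing $S$ by a different important separator) is incoherent since $S$ is already fixed by the pulling-lemma step and swapping it would void the yes-instance conclusion you derived. If $\allw_\ins\setminus W_i\subseteq S$ then $\allw_\ins\subseteq W_i\cup S$ with $|W_i\cup S|\le k+1$, and since $\ins\addv(W_i,S)$ has a solution with a bag containing $W_i\cup S$, that same solution is a solution of $\ins\mergcliqs(W_j,W_l)$ for any pair of terminal cliques — contradicting that $\ins$ is maximally merged with $t\ge 2$. The paper dispatches this in one sentence; the ``main obstacle'' you foresee dissolves once you notice this.
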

\begin{proof}
Let $w$ and $S$ be chosen so that $\ins \addv (W_i, S)$ is a yes-instance, which can be done by \Cref{lem:leafpush_impsep}.
Now, suppose that $\ins \addv (W_i, S)$ is invalid and let $(X, (T, \bag))$ and $(A',S',B')$ be the solution and the internal separation that show that $\ins \addv (W_i, S)$ is invalid.
Note that $(X, (T, \bag))$ is also a solution of $\ins$.

First, if $\allw_{\ins}$ intersects both $A'$ and $B'$, then $(X, (T, \bag))$ and $(A', S', B')$ directly show that also $\ins$ is invalid.
In particular, in the case when $(A', S', B')$ is degenerate for $\ins \addv (W_i, S)$, note that if a terminal clique of $\ins$ intersects $A'$ then also the corresponding terminal clique of $\ins \addv (W_i, S)$ also intersects $A'$, and therefore $(A', S', B')$ is also degenerate for $\ins$.

Then, if $\allw_{\ins}$ does not intersect $A'$ but $S$ does, we have that $W_i \subseteq S'$ and $S \subseteq A' \cup S'$.
In this case, denote $\reach_w = \reach_G(\{w\}, S \cup W_i \setminus \{w\})$ and consider the separation $(A'', S'', B'') = (A' \cup \reach_w, S' \setminus \reach_w, B' \setminus \reach_w)$.
This is a separation because the neighborhood of $\reach_w$ is a subset of $S \cup W_i \subseteq A' \cup S'$.
Moreover, it is an internal separation of $(X, (T, \bag))$ for $\ins$ because $w \in A''$ and $\allw_{\ins} \cap B' = \allw_{\ins} \cap B''$ because $\reach_w \cap \allw_{\ins} = \{w\}$ and  $w \notin B'$.
If $|S'| < q$ then this immediately shows that $\ins$ is invalid.
If $(A', S', B')$ is degenerate for $\ins \addv (W_i, S)$, then $(A'', S'', B'')$ is degenerate for $\ins$, because by the fact that $w \in W_i \setminus \ow_{\ins}(W_i)$, the only terminal clique of $\ins$ that intersects $A''$ is $W_i$, and $\ctc_{\ins}(W_i) = \ctc_{\ins \addv (W_i, S)}(W_i \cup S)$ in this case.

Then, if $\allw_{\ins}$ does not intersect $B'$ but $S$ does, we do an analogous argument, in particular we again denote $\reach_w = \reach_G(\{w\}, S \cup W_i \setminus \{w\})$ and consider the separation $(A'', S'', B'') = (A' \setminus \reach_w, S' \setminus \reach_w, B' \cup \reach_w)$.
By the same argument as previously, this is an internal separation of $(X, (T, \bag))$ for $\ins$.
Again, if $|S'| < q$ then $\ins$ is invalid.
If $(A', S', B')$ is degenerate for $\ins \addv (W_i, S)$, then $(A'', S'', B'')$ is degenerate for $\ins$, because if a terminal clique of $\ins$ intersects $A''$ then a corresponding terminal clique of $\ins \addv (W_i, S)$ intersects $A'$.

Finally, to show that $|W_i \cup S| \ge q+1$, we have that if $|W_i \cup S| \le q$ would hold, then $(W_i \setminus \{w\}) \cup S$ would be a $(\{w\}, \allw_{\ins} \setminus W_i)$-separator of size $< q$.
This would give an internal separation $(A,(W_i \setminus \{w\}) \cup S,B)$ with $w$ intersecting $A$ and $\allw_{\ins} \setminus (W_i \cup S)$ intersecting $B$.
Note that $\allw_{\ins} \setminus (W_i \cup S) \neq \emptyset$ because otherwise $\ins$ would not be maximally merged.
\end{proof}

Then, once there are at least two terminal cliques of size $\ge q$, the algorithm will guess how a hypothetical internal separation of order $q$ would separate the terminal cliques, and find a corresponding separation by guessing an important separator.
For this argument it will be crucial that the separator $S$ of such an internal separation $(A,S,B)$ will be linked into the terminal cliques of size $\ge q$.

\begin{lemma}
\label{lem:intseplinked}
Let $\ins = (G, \{W_1, \ldots, W_t\}, k, \tc, q)$ be a valid instance, and $(A,S,B)$ an internal separation of a solution of $\ins$ of order $|S| = q$.
Then $S$ is linked into any terminal clique of $\ins$ of size at least $q$.
\end{lemma}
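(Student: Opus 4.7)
The plan is to argue by contradiction: assume that $S$ is not linked into some terminal clique $W_i$ with $|W_i| \geq q$. Menger's theorem then furnishes a separation $(A', S', B')$ of $G$ with $S \subseteq A' \cup S'$, $W_i \subseteq B' \cup S'$, $|S'| = \flow_G(S, W_i) < q$, and $S'$ linked into both $S$ and $W_i$. Since $W_i$ is a clique in $G$, after possibly swapping the roles of $A$ and $B$ in $(A,S,B)$ I may assume $W_i \subseteq B \cup S$. The goal is to produce a solution of $\ins$ together with an internal separation of order strictly less than $q$, contradicting validity.

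The first step is to build a combined solution $(X \cup S', (T^*, \bag^*))$ of $\ins$ by two applications of the pulling lemma (\Cref{lem:pull}) to the original solution $(X,(T,\bag))$. One application uses the separation $(B', S', A')$ together with the node $r$ of $T$ with $S \subseteq \bag(r)$, and applies because $S \subseteq \bag(r) \cap (S' \cup A')$ and $S'$ is linked into $S$; it returns a torso tree decomposition on $(X \cap B') \cup S'$ with $S' \subseteq \bag_1'(r)$. The symmetric application uses $(A', S', B')$ together with a node $r_i$ satisfying $W_i \subseteq \bag(r_i)$ and returns a torso tree decomposition on $(X \cap A') \cup S'$ with $S' \subseteq \bag_2'(r_i)$. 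I glue these two along an edge between the bags containing $S'$. The edge condition follows because the separator $S'$ of $(A', S', B')$ forces every path in $G$ whose internal vertices lie outside $X \cup S'$ (and in particular outside $S'$) to stay on one side of $(A',S',B')$, so each such path is already witnessed inside one of the two pulled decompositions. The combined object has width at most $k$ and covers $\allw_\ins \subseteq X$, hence is a solution of $\ins$.

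Now I hunt for a bad internal separation of the combined solution. In the case $\allw_\ins \cap A' \neq \emptyset$, the separation $(A', S', B')$ itself is internal: $S' \subseteq \bag^*(r)$; $\allw_\ins$ intersects $A'$ by assumption; and $\allw_\ins$ intersects $B'$ because $W_i \subseteq B' \cup S'$ together with $|W_i| \geq q > |S'|$ forces $W_i \cap B' \neq \emptyset$. Its order $|S'|$ is less than $q$, contradicting validity.

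The main obstacle is the remaining case $\allw_\ins \cap A' = \emptyset$, in which $(A',S',B')$ by itself is not internal. I overcome it by using a ``crossed'' separation $(A \cup A', S^*, B \cap B')$ of $G$, where $S^* = (S \cap S') \cup (B \cap S') \subseteq S'$. A direct check using the separations $(A,S,B)$ and $(A',S',B')$ together with $S \cap B' = \emptyset$ (from $S \subseteq A' \cup S'$) shows this is a separation of $G$ with $|S^*| \leq |S'| < q$, and $S^* \subseteq S' \subseteq \bag^*(r)$. Moreover $\allw_\ins$ meets $A \cup A'$ because the original $(A,S,B)$ is internal, so $\allw_\ins \cap A \neq \emptyset$. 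The last point, $\allw_\ins \cap (B \cap B') \neq \emptyset$, I establish by contradiction: if this intersection were empty, then combining $W_i \subseteq B \cup S$, $W_i \subseteq B' \cup S'$, $S \cap B' = \emptyset$, and $W_i \cap (B \cap B') = \emptyset$ forces $W_i \subseteq (B \cap S') \cup (S \cap S') \subseteq S'$, contradicting $|W_i| > |S'|$. Therefore $(A \cup A', S^*, B \cap B')$ is an internal separation of the combined solution of order less than $q$, again contradicting validity of $\ins$.
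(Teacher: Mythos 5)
Your proof is correct, and the core mechanism is the same as the paper's: take a minimum $(S,W_i)$-separator $S'$ of size $<q$ via Menger, apply the pulling lemma twice (once toward $S$, once toward $W_i$), glue along $S'$ to obtain a new solution of $\ins$, and then exhibit an internal separation of order $<q$, contradicting validity. Where you differ is in establishing that the exhibited separation is internal. The paper fixes a terminal clique $W_j$ intersecting the other side of $(A,S,B)$, chooses $(A',S',B')$ with the $W_i$-side equal to $\reach_G(W_i,S')$, and observes that $W_j$ then meets $B'$; you instead leave $(A',S',B')$ unconstrained and split into two cases, handling the problematic case with a crossed separation $(A\cup A', S^*, B\cap B')$. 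Your Case 2 argument is valid, but it is avoidable: if one takes $B' = \reach_G(W_i,S')$, then (using $W_i\subseteq B\cup S$ and minimality of $S'$) one gets $B'\subseteq B$ and $S'\subseteq B\cup S$, hence $A'\supseteq A$, so $\allw_\ins\cap A'\supseteq\allw_\ins\cap A\neq\emptyset$ and Case 1 always applies. So the crossed-separation step is extra machinery rather than a genuinely different route, though it does make the proof robust to the choice of $(A',S',B')$, which the paper leaves somewhat implicit.
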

\begin{proof}
Let $W_i$ be a terminal clique of $\ins$ of size $|W_i| \ge q$ and suppose $S$ is not linked into $W_i$.
By symmetry suppose $W_i \subseteq A \cup S$, and by definition of internal separation let $W_j$ be a terminal clique that intersects $B$.
Now, let $S'$ be a minimum size $(W_i, S)$-separator, in particular having size $|S'| < q$ and both $W_i$ and $S$ linked into $S'$.
Note that $S'$ also separates $W_j$ from $W_i$, in particular $S'$ gives a separation $(A', S', B')$ so that $W_i \subseteq A' \cup S'$ and $S \cup W_j \subseteq B' \cup S'$ and moreover $W_j \cap B' \neq \emptyset$.

Let $(X, (T, \bag))$ be the solution of $\ins$ whose internal separation $(A,S,B)$ is.
Note that $(T, \bag)$ has a bag containing $S$, and a bag containing $W_i$.
We use the pulling lemma (\Cref{lem:pull}) with $(X, (T, \bag))$, $(A', S', B')$ and the bag containing $S$ to construct a solution of $\ins \rescliqs (A',S')$ and then with $(X, (T, \bag))$, $(B', S', A')$ and the bag containing $W_i$ to construct a solution of $\ins \rescliqs (B', S')$.
Now, by combining the solutions of $\ins \rescliqs (A',S')$ and $\ins \rescliqs (B', S')$ using \Cref{lem:sepredgen} we get a solution of $\ins$ whose internal separation $(A', S', B')$ is.
This implies that $\ins$ is invalid because $|S'| < q$, $W_i$ intersects $A'$, and $W_j$ intersects $B'$.
\end{proof}

We then introduce notation for arguing about guessing how a hypothetical internal separation of order $q$ separates the terminal cliques.
Let $\ins = (G, \{W_1, \ldots, W_t\}, k, \tc, q)$ be an instance. 
For $t' \subseteq [t]$, we denote $\allw_{\ins}[t'] = \bigcup_{i \in t'} W_i$.
Let $(t_L, t_R)$ be a partition of $[t]$ into two non-empty sets, in particular representing a partition of the terminal cliques.
We call $(t_L, t_R)$ $q$-biased if $|W_i| \ge q$ implies that $i \in t_R$.

We then give the main lemma that asserts how internal separations of order $q$ can be guessed by guessing the partition $(t_L, t_R)$ of terminal cliques induced by them and an important $(\allw_{\ins}[t_L], \allw_{\ins}[t_R])$-separator.

\begin{lemma}
\label{lem:impsepbreak}
Let $\ins = (G, \{W_1, \ldots, W_t\}, k, \tc, q)$ be a maximally merged valid instance that has no safe separations and has at least two terminal cliques of size at least $q$.
Suppose that $\ins$ has a solution that has an internal separation of order $q$.
Then, there exists a $q$-biased partition $(t_L, t_R)$ of $[t]$ and an important $(\allw_{\ins}[t_L], \allw_{\ins}[t_R])$-separator $S$ of size $|S|=q$ with $\reach_G(\allw_{\ins}[t_L], S) \neq \emptyset$, corresponding to a separation $(A, S, B) = (\reach_G(\allw_{\ins}[t_L],S), S, V(G) \setminus (\reach_G(\allw_{\ins}[t_L],S) \cup S))$ so that both $\ins \rescliqs (A, S)$ and $\ins \rescliqs (B, S)$ are valid instances.
\end{lemma}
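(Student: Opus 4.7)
The plan is to fix a solution $(X^*, (T^*, \bag^*))$ of $\ins$ realizing the hypothesized internal separation $(A^*, S^*, B^*)$ of order $q$, together with a node $r^* \in V(T^*)$ satisfying $S^* \subseteq \bag^*(r^*)$, and then to construct $(t_L, t_R)$ and $S$ in stages. First, I will use \Cref{lem:strictlinked} (strict linkedness, applicable since $\ins$ has no safe separations) together with \Cref{lem:intseplinked} (which gives $S^*$ linked into every terminal clique of size $\ge q$) to rule out two size-$\ge q$ terminal cliques on opposite sides of $(A^*, S^*, B^*)$: for any hypothetical $W_a$ on the $A^*$-side and $W_b$ on the $B^*$-side with $|W_a| \le |W_b|$ and both $\ge q$, strict linkedness forces $\flow(W_a, W_b) = |W_a| \ge q = |S^*|$, making $S^*$ a minimum $(W_a, W_b)$-separator and hence $S^* \in \{W_a, W_b\}$, contradicting that $W_a, W_b$ each have a vertex outside $S^*$. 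I will then orient so that every size-$\ge q$ clique lies in $B^* \cup S^*$ and set $t_L = \{i : W_i \cap A^* \ne \emptyset\}$, $t_R = [t] \setminus t_L$, yielding a $q$-biased partition with both parts non-empty.

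Next I will take $S$ to be a smallest important $(\allw_\ins[t_L], \allw_\ins[t_R])$-separator dominating $S^*$, so $|S| \le q$ by \Cref{lem:impsepbasic}. Setting $A = \reach_G(\allw_\ins[t_L], S)$ and $B = V(G) \setminus (A \cup S)$, the inclusion $A \supseteq \reach_G(\allw_\ins[t_L], S^*) \supseteq \allw_\ins[t_L] \cap A^* \ne \emptyset$ certifies $\reach_G(\allw_\ins[t_L], S) \ne \emptyset$, while \Cref{lem:imp_sep_dom} provides $S$ linked into $S^* \cap (A \cup S) \subseteq \bag^*(r^*) \cap (A \cup S)$. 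Applying the pulling lemma (\Cref{lem:pull}) to $(X^*, (T^*, \bag^*))$ with separation $(B, S, A)$ and node $r^*$ then produces a solution of $\ins \rescliqs (B, S)$. For the opposite side, I will fix $W_c \in t_R$ with $|W_c| \ge q$ (existing by the two-large-cliques hypothesis and $q$-bias) and a bag $r^{**} \in V(T^*)$ with $W_c \subseteq \bag^*(r^{**})$, then compose the $|S|$ disjoint paths from $S$ to $S^* \cap (A \cup S)$ from \Cref{lem:imp_sep_dom} with the $|S^*|$ disjoint paths from $S^*$ to $W_c$ inside $B^* \cup S^*$ from \Cref{lem:intseplinked} to obtain $|S|$ pairwise vertex-disjoint $S$-to-$W_c$ paths residing in $S \cup B$; a second application of the pulling lemma with $(A, S, B)$ and node $r^{**}$ then yields a solution of $\ins \rescliqs (A, S)$.

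Gluing the two pulled solutions along $S$ via \Cref{lem:sepredgen} produces a solution of $\ins$ with $S$ inside a bag. If $|S| < q$ held, the inequality $|\allw_\ins[t_R]| \ge q > |S|$ would force $\allw_\ins \cap B \ne \emptyset$ while $\allw_\ins \cap A \ne \emptyset$ already holds, making $(A, S, B)$ an internal separation of order $<q$ in this solution and contradicting the validity of $\ins$. Hence $|S| = q$. Finally, to upgrade both reductions to validity I will apply \Cref{lem:breakvalid}: its hypotheses hold because $\ins$ is valid, $W_i$ meets $A$ for every $i \in t_L$, and $|\allw_\ins[t_R]| \ge |S|$ guarantees that some $W_j$ with $j \in t_R$ is either a superset of $S$ or meets $B$.

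The hard part will be the chaining step: ensuring the composed paths from \Cref{lem:imp_sep_dom} and \Cref{lem:intseplinked} can be simultaneously chosen pairwise vertex-disjoint and contained in $S \cup B$. The $q$-biased orientation of the partition will be essential here, because it places the large terminal clique $W_c$ on the side where the intermediate hop through $S^*$ (inside $B^* \cup S^*$) remains compatible with the target subgraph $G[S \cup B]$, allowing the necessary rerouting using the separation structure of $(A^*, S^*, B^*)$.
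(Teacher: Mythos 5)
Your overall framework matches the paper's: orient so all size-$\ge q$ cliques lie in $B^*\cup S^*$, define $(t_L,t_R)$ by which side each clique meets, take the smallest important separator $S$ dominating $S^*$, pull twice, glue, and invoke \Cref{lem:breakvalid}. Your explication of why the two large cliques cannot straddle $(A^*,S^*,B^*)$ (via \Cref{lem:strictlinked} and \Cref{lem:intseplinked}) correctly fills in a step the paper states without proof, and deriving $|S|=q$ a posteriori from validity of the glued solution is a legitimate reordering.

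However, the ``chaining'' step is a genuine gap, and you cannot wave it away by appealing to the $q$-biased orientation. You need $S$ linked into $W_c$ (equivalently, no $(S,W_c)$-separator of size $<|S|$), and composing the $|S|$ disjoint $S\to S^*\cap(A\cup S)$ paths from \Cref{lem:imp_sep_dom} with the $|S^*|$ disjoint $S^*\to W_c$ paths from \Cref{lem:intseplinked} does not in general produce disjoint $S\to W_c$ paths: a vertex set $Z$ can separate $S$ from $W_c$ while cutting neither $S$ from $S^*$ nor $S^*$ from $W_c$. What actually makes the composition work is that $S$ is a $(S^*,W_c)$-separator, so the $q$ disjoint $S^*\to W_c$ paths each cross $S$, and their suffixes starting from the first $S$-hit give $q\ge |S|$ disjoint $S\to W_c$ paths. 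That fact is exactly what the paper supplies via two ingredients you omit: (i) $S^*$ is a \emph{minimal} $(\allw_\ins[t_L],\allw_\ins[t_R])$-separator — if not, the trimmed separator $\reachn_G(\allw_\ins[t_L],S^*)\subsetneq S^*$ is still contained in $\bag^*(r^*)$ and gives an internal separation of order $<q$, contradicting validity; and (ii) \Cref{lem:mini_sep_impdom}, which uses that minimality to conclude that the dominating important separator $S$ is a $(S^*,\allw_\ins[t_R])$-separator. From there $|S|\ge \flow(S^*,W_c)=q$, so $|S|=q$ and $S$ is a minimum $(S^*,W_c)$-separator, hence linked into $W_c$. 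Without (i) and (ii) your step 3 is unjustified, and since the rest of your argument (gluing, $|S|=q$, \Cref{lem:breakvalid}) depends on the $A$-side pull existing, the proof as written does not close.
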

\begin{proof}
Let $(A,S,B)$ be an internal separation of order $q$ of a solution $(X, (T, \bag))$ of $\ins$.
Note that because $\ins$ does not have safe separations, either all terminal cliques of size $\ge q$ intersect $A$ or all terminal cliques of size $\ge q$ intersect $B$.
By permuting $A$ and $B$ if necessary, assume that all terminal cliques of size $\ge q$ intersect $B$.
Let $(t_L, t_R)$ be the partition of $[t]$ that is obtained by assigning terminal cliques that intersect $B$ into $t_R$ and the others into $t_L$.
Note that at least one terminal clique intersects $A$ because $(A,S,B)$ is an internal separation and at least two terminal cliques of size $\ge q$ intersect $B$.

Now, $S$ is a $(\allw_{\ins}[t_L], \allw_{\ins}[t_R])$-separator.
Moreover, $S$ is a minimal $(\allw_{\ins}[t_L], \allw_{\ins}[t_R])$-separator, because otherwise the subset of $S$ would give an internal separation of order $<q$, meaning that $\ins$ would not be valid.
Let $W_i$ be a terminal clique of size $\ge q$.
By \Cref{lem:intseplinked}, $S$ is linked into $W_i$.
Let $S'$ be a smallest important $(\allw_{\ins}[t_L], \allw_{\ins}[t_R])$-separator that dominates $S$.
Because $S$ is a minimal $(\allw_{\ins}[t_L], \allw_{\ins}[t_R])$-separator, by \Cref{lem:mini_sep_impdom} $S'$ is a $(S, \allw_{\ins}[t_R])$-separator, which implies that $|S'| = q$ and $S'$ is linked into $W_i$.
Also, by minimality of $S$ and \Cref{lem:imp_sep_dom}, we have that $S'$ is linked into $S$.
Moreover, as $\allw_{\ins}[t_L]$ intersects $A$, we have that $\allw_{\ins}[t_L]$ intersects $\reach_G(\allw_{\ins}[t_L], S')$ and in particular $\reach_G(\allw_{\ins}[t_L], S') \neq \emptyset$.

Now, let $(A',S',B') = (\reach_G(\allw_{\ins}[t_L],S'), S', V(G) \setminus (\reach_G(\allw_{\ins}[t_L],S') \cup S'))$.
Observe that $S \subseteq A' \cup S'$ and $W_i \subseteq B' \cup S'$.
We then use the pulling lemma (\Cref{lem:pull}) to construct solutions of $\ins \rescliqs (A', S')$ and $\ins \rescliqs (B', S')$.
In particular, a solution of $\ins \rescliqs (A', S')$ is constructed by applying the lemma with the torso tree decomposition $(X, (T, \bag))$, the separation $(A', S', B')$, and the node of $(T, \bag)$ whose bag contains $W_i$.
Symmetrically, a solution of $\ins \rescliqs (B', S')$ is constructed by applying the lemma with the torso tree decomposition $(X, (T, \bag))$, the separation $(B', S', A')$, and the node of $(T, \bag)$ whose bag contains $S$.

Now, both $\ins \rescliqs (A', S')$ and $\ins \rescliqs (B', S')$ are yes-instances, so it remains to prove that they are valid.
First, because $\ins$ is maximally merged and there are at least two terminal cliques of size $\ge q$, it holds that $|\allw_{\ins}[t_R]| \ge q+1$, implying that $\allw_{\ins}[t_R]$ intersects $B'$.
Also, as argued before $\allw_{\ins}[t_L]$ intersects $A'$.
Therefore, by \Cref{lem:breakvalid} both $\ins \rescliqs (A', S')$ and $\ins \rescliqs (B', S')$ are valid.
\end{proof}

\subsection{The algorithm}
\label{subsec:thealgok2}
We then describe the $2^{\OO(k^2)} nm$ time algorithm for \stw.

Given input $(G, W, k)$, the algorithm first uses pre-branching (\Cref{lem:prebranching}) to enumerate $2^{\OO(k^2)}$ instances of \pstw, so that any solution to any of the instances can in $k^{\OO(1)} m$ time be turned into a torso tree decomposition in $G$ of width $k$ that covers $W$, and moreover if such a torso tree decomposition exists, then at least one of the instances is valid.
For each resulting instance, we then use a recursive procedure that either concludes that a given instance is invalid, or returns a solution to the instance.
This recursive procedure is described in pseudocode \Cref{alg:exsubtw}, and we also give a detailed description of it next.

\begin{algorithm}[t]
\caption{The recursive procedure of a $2^{\OO(k^2)} nm$ time algorithm for \stw.\label{alg:exsubtw}}
\textbf{Input:} Instance $\ins = (G, \{W_1, \ldots, W_t\}, k, \tc, q)$.\\
\textbf{Output:} Either a solution of $\ins$ or \invalid.
\begin{algorithmic}[1]
\If {$t \le 1$ or $|V(G)| \le k+2$} \Return Case-analysis($\ins$)\label{alg:exsubtw:bf} \Comment{\Cref{lem:smallcases}}
\EndIf
\If {Exists a safe separation $(A,S,B)$}\label{alg:exsubtw:safesepcheck}
\If {$|S|<q$ and $\allw_{\ins}$ intersects both $A$ and $B$} \label{alg:exsubtw:safesepif}
\State \Return \invalid \label{alg:exsubtw:safesepcheckinvalid}
\Else \label{alg:exsubtw:safesepelse}
\State \Return Combine(Solve($\ins \rescliqs (A,S)$), Solve($\ins \rescliqs (B,S)$))\label{alg:exsubtw:safesepcheckcombine}
\EndIf
\EndIf

\ForAll{$i, j \in [t]$ with $i \neq j$ and $|W_i \cup W_j| \le k+1$}\label{alg:exsubtw:forcliqupairs}
\State $sol \gets$ Solve($\ins \mergcliqs (W_i,W_j)$)\label{alg:exsubtw:mergrec}
\If {$sol \neq \invalid$} \Return $sol$\label{alg:exsubtw:retmc}
\EndIf
\EndFor
\If {$q > k+1$} \Return \invalid \label{alg:exsubtw:retqlarge}
\EndIf

\If {Less than $2$ terminal cliques of size $\ge q$} \Comment{\Cref{lem:validleafpush}} \label{alg:exsubtw:caseless}
\ForAll{$i \in [t]$ so that $|W_i| < q$ and exists $j \neq i$ with $|W_j| \ge |W_i|$}\label{alg:exsubtw:itleafpushclique}
\ForAll{$w \in W_i$}\label{alg:exsubtw:itleafpushw}
\ForAll{Important $(\{w\}, \allw_{\ins} \setminus W_i)$-separators $S$ in $G \setminus (W_i \setminus \{w\})$ with $|S| \le k$}\label{alg:exsubtw:itleafpushimps}
\If {$q+1 \le |W_i \cup S| \le k+1$}\label{alg:exsubtw:itleafpushif}
\State $sol \gets$ Solve($\ins \addv (W_i,S)$)\label{alg:exsubtw:leafpush1rec}
\If {$sol \neq \invalid$} \Return $sol$\label{alg:exsubtw:retadd1}
\EndIf
\EndIf
\EndFor
\EndFor
\EndFor
\Else \Comment{\Cref{lem:impsepbreak}} \label{alg:exsubtw:casemore}
\ForAll{$q$-biased bipartitions $(t_L,t_R)$ of $[t]$}\label{alg:exsubtw:bbfor}
\ForAll{Important $(\allw_{\ins}[t_L], \allw_{\ins}[t_R])$-separators $S$ with $|S| = q$}\label{alg:exsubtw:impsepfor}
\State Let $(A,S,B) = (\reach_G(\allw_{\ins}[t_L], S), S, V(G) \setminus (\reach_G(\allw_{\ins}[t_L], S) \cup S))$\label{alg:exsubtw:asb}
\If {$\sum_{W_i \mid W_i \cap A \neq \emptyset} \ctc_{\ins}(W_i) > k+1-q$} \label{alg:exsubtw:degencheck}
\State $sol \gets$ Combine($S$, Solve($\ins \rescliqs (A,S)$), Solve($\ins \rescliqs (B,S)$))\label{alg:exsubtw:leafpush2rec}
\If {$sol \neq \invalid$} \Return $sol$\label{alg:exsubtw:retadd2}
\EndIf
\EndIf
\EndFor
\EndFor
\EndIf
\State \Return Solve($(G, \{W_1, \ldots, W_t\}, k, \tc, q+1)$) \label{alg:exsubtw:lastret}
\end{algorithmic}
\end{algorithm}

First, on \Cref{alg:exsubtw:bf} the algorithm uses \Cref{lem:smallcases} to handle the corner cases of $t=1$ and $|V(G)|\le k+2$.
Then, on \Cref{alg:exsubtw:safesepcheck,alg:exsubtw:safesepif,alg:exsubtw:safesepcheckinvalid,alg:exsubtw:safesepelse,alg:exsubtw:safesepcheckcombine} the reduction by safe separations is performed.
In particular, if there exists a safe separation $(A,S,B)$, then if $|S|<q$ and $\allw$ intersects both $A$ and $B$, we can by \Cref{lem:invalidsafesep} conclude that the instance is invalid.
Otherwise, we recursively solve the instances $\ins \rescliqs (A,S)$ and $\ins \rescliqs (B,S)$ (recursive application of the algorithm is denoted by the function ``Solve'' in the pseudocode), and if both of them return a solution then we return the solution obtained from combining them, and if either of them return \invalid then we return \invalid.
In particular, the function ``Combine'' on \Cref{alg:exsubtw:safesepcheckcombine} denotes an operation that returns \invalid if either of its arguments is \invalid, and if its arguments are a solution $(X_A, (T_A, \bag_A))$ of $\ins \rescliqs (A,S)$ and a solution $(X_B, (T_B, \bag_B))$ of $\ins \rescliqs (B,S)$ then it returns the solution $(X_A \cup X_B, (T_A, \bag_A) \cup_S (T_B, \bag_B))$ of $\ins$.

Then, on \Cref{alg:exsubtw:forcliqupairs,alg:exsubtw:mergrec,alg:exsubtw:retmc} the algorithm does terminal clique merging branching.
In particular, the algorithm branches on merging all pairs of terminal cliques $W_i,W_j$ with $|W_i \cup W_j| \le k+1$ and returns a solution if any of the branches returned a solution.
After this, by \Cref{lem:maxmergedinvalid} we can assume that $\ins$ is either maximally merged or invalid.
This is used on \Cref{alg:exsubtw:retqlarge} to justify that if $q > k+1$ we can return \invalid because any solution of a maximally merged instance with at least two terminal cliques must contain an internal separation.

For the main branching of the algorithm there are two cases.
Either there are less than 2 terminal cliques of size $\ge q$, or there are at least 2 terminal cliques of size $\ge q$.
We first describe the case when there are less than 2 terminal cliques of size $\ge q$.
In this case, on \Cref{alg:exsubtw:caseless,alg:exsubtw:itleafpushclique,alg:exsubtw:itleafpushw,alg:exsubtw:itleafpushimps,alg:exsubtw:itleafpushif,alg:exsubtw:leafpush1rec,alg:exsubtw:retadd1} the algorithm performs leaf pushing branching according to \Cref{lem:validleafpush}.
In particular, the algorithm guesses a potential forget-clique $W_i$ that is not a uniquely largest terminal clique, a vertex $w \in W_i$, and an important $(\{w\}, \allw \setminus W_i)$-separator $S$ in the graph $G \setminus (W_i \setminus \{w\})$ so that $q+1 \le |W_i \cup S| \le k+1$, and branches on adding $S$ to $W_i$.
For iterating over such important separators, we use the algorithm of \Cref{lem:impsep4k} to iterate over all important separators of size at most $k$ and check the conditions.
Note that the purpose of this branching is to increase the number of terminal cliques of size $\ge q$.

When there are at least 2 terminal cliques of size $\ge q$, the algorithm branches on \Cref{alg:exsubtw:casemore,alg:exsubtw:bbfor,alg:exsubtw:impsepfor,alg:exsubtw:asb,alg:exsubtw:degencheck,alg:exsubtw:leafpush2rec,alg:exsubtw:retadd2} on how the internal separation of size $q$ would partition the terminal cliques in the solution, in particular, according to \Cref{lem:impsepbreak}.
The algorithm guesses the $q$-biased bipartition $(t_L, t_R)$ of $[t]$ and an important $(\allw[t_L], \allw[t_R])$-separator $S$ of size $q$, then denotes the separation corresponding to it by $(A,S,B) = (\reach_G(\allw_{\ins}[t_L], S), S, V(G) \setminus (\reach_G(\allw_{\ins}[t_L], S) \cup S))$, and if this separation would not be a degenerate internal separation solves the instances $\ins \rescliqs (A,S)$ and $\ins \rescliqs (B,S)$ recursively and combines the solutions in the same manner as when recursing on safe separations.
Again, we use the algorithm of \Cref{lem:impsep4k} for iterating over such important separators.
Finally, on \Cref{alg:exsubtw:lastret}, if none of the branches returned a solution the algorithm does a recursive call with an increased value of $q$.

The algorithm can clearly be implemented in polynomial space, in particular, on \Cref{alg:exsubtw:itleafpushimps,alg:exsubtw:impsepfor} we use the polynomial space enumeration of important separators of \Cref{lem:impsep4k}.
We then prove the correctness of \Cref{alg:exsubtw}.
Its running time will be analyzed in \Cref{subsec:alg2tcana}.

First we show that the algorithm is correct when it returns a solution.

\begin{lemma}
If \Cref{alg:exsubtw} returns a solution, then it is a solution of $\ins$.
\end{lemma}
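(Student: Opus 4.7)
The plan is to proceed by induction on the depth of the recursion tree of \Cref{alg:exsubtw}, doing a case analysis on the return statement through which the algorithm produces its output. The base case covers the direct case-analysis return on \Cref{alg:exsubtw:bf}, whose correctness is immediate from \Cref{lem:smallcases}. For the inductive step, I would assume that every deeper recursive call that returns a non-\invalid value returns a genuine solution of the corresponding subinstance, and then verify each outer return point.

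First I would handle the combining steps. For the safe-separation return on \Cref{alg:exsubtw:safesepcheckcombine}, the algorithm returns only when both Solve($\ins \rescliqs (A,S)$) and Solve($\ins \rescliqs (B,S)$) yield solutions; by the inductive hypothesis these are solutions of $\ins \rescliqs (A,S)$ and $\ins \rescliqs (B,S)$ respectively, so \Cref{lem:sepredgen} guarantees that their combination via $\cup_S$ is a solution of $\ins$. The same reasoning applies verbatim to the internal-separation return on \Cref{alg:exsubtw:retadd2}, since there too the returned object is the $\cup_S$-combination of solutions of $\ins \rescliqs (A,S)$ and $\ins \rescliqs (B,S)$ for a separation $(A,S,B)$ of $G$, and \Cref{lem:sepredgen} does not require the separation to be safe.

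Next I would handle the two single-branch returns that simply pass a solution up unchanged. For the terminal-clique merging return on \Cref{alg:exsubtw:retmc}, the returned object is a solution of $\ins \mergcliqs (W_i,W_j)$ by the induction hypothesis; as observed in \Cref{subsec:tercliqmerg}, any solution of $\ins \mergcliqs (W_i,W_j)$ is also a solution of $\ins$, since making $W_i \cup W_j$ into a clique only adds edges and the set of terminal cliques to be covered, $\allw_{\ins}$, is unchanged. The same kind of argument applies to the leaf-pushing return on \Cref{alg:exsubtw:retadd1}, where by the analogous observation in \Cref{subsec:leafpush} any solution of $\ins \addv (W_i, S)$ is also a solution of $\ins$. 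Finally, the recursive return on \Cref{alg:exsubtw:lastret} is trivial because $(G, \{W_1,\ldots,W_t\}, k, \tc, q+1)$ has exactly the same notion of solution as $\ins$, differing only in the auxiliary parameter $q$ that plays no role in the definition of a solution.

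No step of this proof is expected to present real difficulty — the nontrivial content lives entirely inside the already established lemmas \Cref{lem:sepredgen} and \Cref{lem:smallcases}, together with the elementary monotonicity properties of $\mergcliqs$ and $\addv$ with respect to the solution set. The only mild care needed is to verify that in the internal-separation branching the separation used for combining is indeed a separation of the current $G$ (which is immediate from the definition $(A,S,B) = (\reach_G(\allw_{\ins}[t_L], S), S, V(G) \setminus (\reach_G(\allw_{\ins}[t_L], S) \cup S))$ on \Cref{alg:exsubtw:asb}) so that \Cref{lem:sepredgen} indeed applies.
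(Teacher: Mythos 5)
Your proposal is correct and follows essentially the same structure as the paper's proof: induction on the recursion tree, invoking \Cref{lem:smallcases} for the base case, \Cref{lem:sepredgen} for the two combining returns, the monotonicity observations for $\mergcliqs$ and $\addv$, and the observation that $q$ is irrelevant to the definition of a solution. Your write-up is a bit more explicit (e.g.\ noting that \Cref{lem:sepredgen} applies to arbitrary separations, not only safe ones), but there is no genuine difference in approach.
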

\begin{proof}
We prove the lemma by induction on the recursion tree.
When the algorithm returns on \Cref{alg:exsubtw:bf} from the case analysis, this follows from the correctness of the case analysis.

When breaking the instance by a safe separation $(A,S,B)$ and returning on \Cref{alg:exsubtw:safesepcheckcombine} a solution formed by combining solutions of $\ins \rescliqs (A,S)$ and $\ins \rescliqs (B,S)$, the correctness follows from induction and \Cref{lem:sepredgen}.
For terminal clique merging on \Cref{alg:exsubtw:retmc} and leaf pushing on \Cref{alg:exsubtw:retadd1} this follows from induction and the fact that any solution of $\ins \mergcliqs (W_i,W_j)$ or $\ins \addv (W_i,S)$ is also solution of $\ins$.
When combining solutions on \Cref{alg:exsubtw:retadd2} the correctness again follows \Cref{lem:sepredgen} and induction.
For the final line \Cref{alg:exsubtw:lastret} it follows from induction.
\end{proof}

Then we show that the algorithm is correct when it returns that $\ins$ is invalid.

\begin{lemma}
If \Cref{alg:exsubtw} returns \invalid, then $\ins$ is invalid.
\end{lemma}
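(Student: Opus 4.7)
The proof will proceed by induction on the recursion tree of \Cref{alg:exsubtw}, carried out in tandem with the companion correctness lemma already established (that a returned solution is a genuine solution of $\ins$). Assuming both inductive hypotheses for every strict subcall, it suffices to verify correctness at each place where \invalid can be returned. The base case on \Cref{alg:exsubtw:bf} is immediate: by \Cref{lem:smallcases} the case analysis returns \no exactly when $\ins$ has no solution, and a no-instance is invalid by definition. On \Cref{alg:exsubtw:safesepcheckinvalid} a safe separation $(A,S,B)$ of order $|S|<q$ with $\allw_\ins$ meeting both sides has been exhibited, so \Cref{lem:invalidsafesep} concludes directly. An \invalid emission from Combine on \Cref{alg:exsubtw:safesepcheckcombine} means at least one of the two subcalls returned \invalid, and by induction the corresponding sub-instance is invalid, so the contrapositive of \Cref{lem:safesepvalid} forces $\ins$ to be invalid.

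For the return on \Cref{alg:exsubtw:retqlarge} when $q > k+1$, the loop on \Cref{alg:exsubtw:forcliqupairs} has exhausted terminal clique merging with \invalid answers; induction tells us that every $\ins \mergcliqs(W_i,W_j)$ is invalid, and \Cref{lem:maxmergedinvalid} then forces $\ins$ to be either invalid or maximally merged. If invalid we are done, so assume $\ins$ is a maximally merged yes-instance; since $t \le 1$ is already handled by the base case, $t \ge 2$, and \Cref{lem:twopotforcli} produces a solution with two potential forget-cliques at distinct leaves $l_1, l_2$. The edge incident to $l_1$ carries a separator of size at most $k$, while the terminal vertices forgotten at $l_1$ and at $l_2$ lie on opposite sides; this is an internal separation of order at most $k < q$, witnessing that $\ins$ is invalid.

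The main case is the tail recursive call on \Cref{alg:exsubtw:lastret}. If it returns \invalid, induction gives that the instance $\ins'$ obtained from $\ins$ by raising the parameter to $q+1$ is invalid. Suppose for contradiction that $\ins$ itself is valid; then since no solution of $\ins$ has a degenerate internal separation or an internal separation of order below $q$, invalidity of $\ins'$ must come from some solution of $\ins$ possessing an internal separation of order exactly $q$. In the subcase with fewer than two terminal cliques of size $\ge q$, \Cref{lem:twopotforcli} hands over a potential forget-clique $W_i$ satisfying the loop conditions on \Cref{alg:exsubtw:itleafpushclique}, and \Cref{lem:validleafpush} supplies $w \in W_i$ and an important $(\{w\}, \allw_\ins \setminus W_i)$-separator $S$ with $q+1 \le |W_i \cup S| \le k+1$ for which $\ins \addv (W_i, S)$ is valid; since \Cref{lem:impsep4k} enumerates $S$ inside the loop on \Cref{alg:exsubtw:itleafpushimps}, induction yields a non-\invalid return from the corresponding branch, contradicting the propagation of \invalid. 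In the subcase with at least two terminal cliques of size $\ge q$, \Cref{lem:impsepbreak} instead provides a $q$-biased partition and an important separator of size $q$ making both sub-instances valid, and once the non-degeneracy gate on \Cref{alg:exsubtw:degencheck} is cleared, induction delivers the same contradiction.

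The delicate step I anticipate will be verifying that the $(A, S, B)$ produced by \Cref{lem:impsepbreak} actually passes the non-degeneracy test on \Cref{alg:exsubtw:degencheck}. Intuitively, validity of $\ins$ forbids degenerate internal separations, so the witnessing internal separation of order $q$ is non-degenerate, and passing to an important dominator only enlarges $\reach_G(\allw_\ins[t_L], S)$ and hence should not shrink the $\ctc_\ins$-sum over cliques meeting the $A$-side. However, $t_L$-cliques can in principle be pushed entirely into the separator $S$ under domination, so a careful audit of where original terminal vertices end up across the new split is required. Once this inheritance of non-degeneracy is pinned down, the remainder of the argument is straightforward structural case analysis as outlined above.
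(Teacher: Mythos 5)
Your proof follows the paper's overall structure: induction on the recursion tree, and for each \invalid return site either a direct appeal to an established lemma (\Cref{lem:invalidsafesep}, the contrapositive of \Cref{lem:safesepvalid}, \Cref{lem:maxmergedinvalid}) or a contradiction argument via \Cref{lem:validleafpush}/\Cref{lem:impsepbreak} in the final case. Your handling of \Cref{alg:exsubtw:retqlarge} is phrased a bit differently from the paper's (you extract an explicit leaf-edge separator; the paper just observes that any multi-bag solution yields an internal separation of order $\le k+1 < q$), but both routes are fine.

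However, the one step you yourself flag as "delicate" and leave unfinished is a genuine gap, and your anticipated route to close it is not the right one. You propose to argue that the \emph{original} hypothetical internal separation of order $q$ is non-degenerate (since $\ins$ is valid) and then track how the $\ctc_\ins$-sum evolves when you pass to the important dominator $S'$ and its canonical separation $(A',S',B')$. As you correctly worry, this tracking is problematic: a $t_L$-clique can end up entirely inside $S'$ and hence drop out of the sum over cliques intersecting $A'$, so "domination only enlarges the reach" does not straightforwardly give monotonicity of the degeneracy sum. The paper's argument avoids this entirely. It never compares the new separation's $\ctc$-sum to the old one's. Instead, it observes that \Cref{lem:impsepbreak} already guarantees that both $\ins \rescliqs (A',S')$ and $\ins \rescliqs (B',S')$ are \emph{valid}, in particular yes-instances; hence by \Cref{lem:sepredgen} their solutions combine into a solution of $\ins$ that has $(A',S',B')$ itself as an internal separation (using that $\allw_\ins[t_L]$ meets $A'$ since $\reach_G(\allw_{\ins}[t_L], S')\neq\emptyset$, and $\allw_\ins[t_R]$ meets $B'$ since $\ins$ is maximally merged with two terminal cliques of size $\ge q$). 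If the test on \Cref{alg:exsubtw:degencheck} failed, this very separation would be a degenerate internal separation of a solution of $\ins$, directly contradicting the assumed validity of $\ins$. So the non-degeneracy of $(A',S',B')$ is extracted from validity of $\ins$ applied to the \emph{newly built} solution, not inherited from the original witnessing separation.
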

\begin{proof}
We prove the lemma by induction on the recursion tree.
When the algorithm returns on \Cref{alg:exsubtw:bf} from the case analysis, this follows from the correctness of the case analysis.

When returning \invalid on \Cref{alg:exsubtw:safesepcheckinvalid} if a safe separation $(A,S,B)$ with $|S|<q$ and $\allw_{\ins}$ intersecting both $A$ and $B$ exists, the correctness is given in \Cref{lem:invalidsafesep}.
For returning \invalid from the safe separation recursion on \Cref{alg:exsubtw:safesepcheckcombine}, we have that if $\ins$ is valid, then by \Cref{lem:safesepvalid} both $\ins \rescliqs (A,S)$ and $\ins \rescliqs (B,S)$ are valid, and therefore the correctness follows from induction.


After terminal clique merging on \Cref{alg:exsubtw:forcliqupairs,alg:exsubtw:mergrec,alg:exsubtw:retmc}, by induction and \Cref{lem:maxmergedinvalid} we may assume that $\ins$ is either invalid or maximally merged, and in particular in the rest of this proof we may assume that $\ins$ is maximally merged, as the conclusion trivially holds when $\ins$ is invalid.
Then, for returning \invalid on \Cref{alg:exsubtw:retqlarge} if $q > k+1$, we have that in this case if $t \ge 2$, and $\ins$ is maximally merged it must be invalid because then any solution must either contradict that $\ins$ is valid or have all of the terminal cliques in a single bag, which would contradict that $\ins$ is maximally merged.

For returning \invalid on the final \Cref{alg:exsubtw:lastret} there are two cases depending on the number of terminal cliques of size $\ge q$.

First, if there are less than 2 terminal cliques of size $\ge q$, we use \Cref{lem:validleafpush}.
Towards contradiction assume that $\ins$ is valid but we return \invalid from \Cref{alg:exsubtw:lastret}.
By \Cref{lem:twopotforcli}, $\ins$ has at least two potential forget-cliques, and some iteration of \Cref{alg:exsubtw:itleafpushclique} fixed such potential forget-clique $W_i$, and some iteration of \Cref{alg:exsubtw:itleafpushw,alg:exsubtw:itleafpushimps,alg:exsubtw:itleafpushif} fixed a vertex $w \in W_i$ and an important $(\{w\}, \allw_{\ins} \setminus W_i)$-separator $S$ in $G \setminus (W_i \setminus \{w\})$ satisfying the conditions of \Cref{lem:validleafpush}.
Now, by \Cref{lem:validleafpush}, $\ins \addv (W_i, S)$ is a valid instance, so by induction \Cref{alg:exsubtw} would return on \Cref{alg:exsubtw:retadd1}.

Then, if there are at least 2 terminal cliques of size $\ge q$, we use \Cref{lem:impsepbreak}.
Towards contradiction assume that $\ins$ is valid but we return \invalid from \Cref{alg:exsubtw:lastret}.
By induction, $(G, \{W_1, \ldots, W_t\}, k, \tc, q+1)$ is invalid, implying that either $\ins$ is invalid or there exists a solution of $\ins$ that contains an internal separation of order $q$.
Therefore, we can assume that $\ins$ satisfies the preconditions of \Cref{lem:impsepbreak}.
Now, let $(t_L, t_R)$ be the $q$-biased partition of $[t]$ and $S$ the important $(\allw_{\ins}[t_L], \allw_{\ins}[t_R])$-separator of size $|S| = q$ with $\reach_G(\allw_{\ins}[t_L], S) \neq \emptyset$ given by \Cref{lem:impsepbreak}, and let $(A,S,B) = (\reach_G(\allw_{\ins}[t_L], S), S, V(G) \setminus (\reach_G(\allw_{\ins}[t_L], S) \cup S))$.
By \Cref{lem:impsepbreak}, both $\ins \rescliqs (A,S)$ and $\ins \rescliqs (B,S)$ are valid.
Some iteration of \Cref{alg:exsubtw:bbfor,alg:exsubtw:impsepfor} will fix such $(t_L, t_R)$ and $S$, and therefore some iteration of \Cref{alg:exsubtw:asb} will fix such $(A,S,B)$.
If $\sum_{W_i \mid W_i \cap A \neq \emptyset} \ctc_{\ins}(W_i) > k+1-q$ holds, we would have returned from \Cref{alg:exsubtw:retadd2} and obtain a contradiction.

It remains to show that if $\ins$ is valid, then $\sum_{W_i \mid W_i \cap A \neq \emptyset} \ctc_{\ins}(W_i) > k+1-q$ indeed holds for such $(A,S,B)$.
Because $\reach_G(\allw_{\ins}[t_L], S) \neq \emptyset$, we have that $\allw_{\ins}[t_L]$ intersects $A$, and because there are at least two terminal cliques of size $\ge q$ and $\ins$ is maximally merged, we have that $\allw_{\ins}[t_R]$ intersects $B$.
Therefore, because both $\ins \rescliqs (A,S)$ and $\ins \rescliqs (B,S)$ are valid, we can construct from their solutions a solution of $\ins$ so that $(A,S,B)$ is its internal separation.
However, if $\sum_{W_i \mid W_i \cap A \neq \emptyset} \ctc_{\ins}(W_i) \le k+1-q$ would hold, then this would be a degenerate internal separation.
\end{proof}

\subsection{Running time analysis}
\label{subsec:alg2tcana}
We analyze the running time of \Cref{alg:exsubtw}.
Throughout, we let $\ql = k+2 - k/\log_2 k$, and consider $q \ge \ql$ to be large and $q < \ql$ to be small.
In order to simplify the analysis we will also assume that $k \ge \kb$.
If $k < \kb$ we use the algorithm of \Cref{sec:algopstw}, which works in $\OO(nm)$ time in this case.

Informally, the idea of choosing $\ql = k+2 - k/\log_2 k$ is that once $q \ge \ql$, a single leaf push takes a terminal clique to size at least $\ql$, and after that the terminal clique is only $k/\log_2 k$ vertices away from the maximum size, implying that we can ``pay'' $k^{\OO(1)}$ branching degree for increasing the size by one and still end up with $2^{\OO(k^2)} n^{\OO(1)}$ running time.
In the other case, when $q < \ql$, we use the absence of degenerate internal separations to argue that there should be at least $k/\log_2 k$ original terminal vertices behind any internal separation of size $q$ and amortize the cost of the branching on the original terminal vertices.
We note that any choice of $\ql$ between $k + 2 - k / \log_2 k$ and $k + 2 - \log_2 k$ would be sufficient for the analysis, but we fix the value $\ql = k+2 - k/\log_2 k$.


We now define the measure of a terminal clique based on cases depending on $q$ and $\ql$.
We note that even though we use similar notation to \Cref{sec:algopstw}, the measure of this section is unrelated to the measure of \Cref{sec:algopstw}.
The measure of a terminal clique $W_i$ is

\[\potl_{\ins}(W_i) = \begin{cases}
(k+2 - \min(q, |W_i|)) \cdot \log_2 k + 4k &\text{if } q \ge \ql \text{ and } |W_i| \ge \ql\\
6k &\text{if } q \ge \ql \text{ and } |W_i| < \ql\\
(k+2 - \min(q, |W_i|)) \cdot \ctc_{\ins}(W_i) + 6k &\text{if $q < \ql$.}
\end{cases}
\]

Note that if $q \ge \ql$ and $|W_i| \ge \ql$, then $4k \le \potl_{\ins}(W_i) \le 5k$, implying that when $q \ge \ql$, we have $4k \le \potl_{\ins}(W_i) \le 6k$, implying $\sum_{i=1}^t \potl_{\ins}(W_i) \le 6kt \le \OO(k^2)$.
If $q < \ql$, then we have a lower bound of $6k \le \potl_{\ins}(W_i)$ on the measure of a single terminal clique, and the sum is upper bounded by $\sum_{i=1}^t \potl_{\ins}(W_i) \le 6kt + |\ogw|(k+2) \le \OO(k^2)$.

We then let $\cq(\ins) = \min(2, \text{number of terminal cliques of size}\ge q)$ and define the measure of the instance to be

\[
\potl(\ins) = \begin{cases}
(k+2-q)\cdot 3k + (2-\cq(\ins)) \cdot k + \sum_{i=1}^t \potl_{\ins}(W_i) &\text{if } t \ge 2 \text{ and}\\
1 &\text{if } t=1.
\end{cases}
\]

Note that $\potl(\ins) \le \OO(k^2)$ and if $t \ge 2$ then $\potl(\ins) \ge 8k$.

We then give a general lemma for arguing that breaking the instance by separations of size at least $q$ does not increase the measure, and moreover decreases the measure by at least $k$ if this decreases the number of terminal cliques.

\begin{lemma}
\label{lem:al2potsep}
Let $\ins = (G, \{W_1, \ldots, W_t\}, k, \tc, q)$ be an instance with $t \ge 2$ terminal cliques and $(A,S,B)$ a separation with $|S| \ge q$.
If the number of terminal cliques of $\ins \rescliqs (A,S)$ is $t$, then $\potl(\ins \rescliqs (A,S)) \le \potl(\ins)$, and if the number of terminal cliques of $\ins \rescliqs (A,S)$ is less than $t$, then $\potl(\ins \rescliqs (A,S)) \le \potl(\ins)-k$.
\end{lemma}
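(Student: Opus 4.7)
The plan is to exploit that the operation $\rescliqs(A,S)$ leaves $q$, $k$, and the term $(k+2-q) \cdot 3k$ in $\potl$ unchanged, so only the variation of $(2-\cq) \cdot k + \sum_i \potl_{\ins}(W_i)$ needs to be controlled. I will partition the terminal cliques of $\ins$ into preserved ones $P = \{W_i : W_i \cap A \neq \emptyset\}$ and removed ones $R = \{W_i : W_i \subseteq B \cup S\}$, and denote the reassigned $\tc$-mass by $\rho := \sum_{i \in R} \ctc_{\ins}(W_i)$. This mass ends up in a single receiver $S'$, which is either a freshly inserted clique $S$ (if no preserved $W_j \supseteq S$ exists) or an existing preserved superset of $S$; in either event $|S'| \ge |S| \ge q$, so $\min(q, |S'|) = q$. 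The degenerate sub-case where $\ins \rescliqs(A,S)$ has at most one terminal clique is immediate, since then $\potl(\ins \rescliqs(A,S)) \le 1$ while $\potl(\ins) \ge 8k$.

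The core step will be a uniform estimate on $\Delta\Sigma := \sum_{W_i \in \ins \rescliqs(A,S)} \potl_{\ins \rescliqs(A,S)}(W_i) - \sum_i \potl_{\ins}(W_i)$, driven by the inequality $c_i := k+2-\min(q,|W_i|) \ge k+2-q = c_{S'}$ for every $W_i \in R$. In the regime $q < \ql$ the $\ctc$ values contribute to the measure: the receiver $S'$ gains at most $c_{S'} \rho + 6k$ if $S$ is freshly inserted and exactly $c_{S'} \rho$ otherwise, while each removed $W_i$ loses $c_i \ctc_{\ins}(W_i) + 6k$, giving $\Delta\Sigma \le 6k(1-|R|)$ when $S$ is inserted and $\Delta\Sigma \le -6k|R|$ when $S$ is absorbed by a preserved superset. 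In the regime $q \ge \ql$ the $\ctc$ values drop out; using $k \ge \kb$ a case split on whether $|W_i|$ is above or below $\ql$ (and above or below $q$) shows that every $W_i \in R$ satisfies $\potl_{\ins}(W_i) \ge 4k + \log_2 k \ge \potl_{\ins \rescliqs(A,S)}(S)$, while the inserted $S$ (if present) carries measure at most $5k$.

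Next I would observe that $\Delta_1 := (\cq(\ins) - \cq(\ins \rescliqs(A,S)))\cdot k \le k$ always, because whenever $S$ is inserted or a preserved superset exists, $S'$ is a terminal clique of $\ins \rescliqs(A,S)$ of size $\ge q$, so $\cq(\ins \rescliqs(A,S)) \ge 1$. Moreover $\Delta_1 \le 0$ in the two sub-cases where the new number of terminal cliques equals $t$, namely either $|R|=0$ with $S$ not inserted (the instance is unchanged up to $\tc$) or $|R|=1$ with $S$ inserted (an inserted $S$ of size $\ge q$ compensates for at most one removed clique of size $\ge q$ in the count). Combining with $\Delta\Sigma \le 0$ in these sub-cases gives $\Delta\potl \le 0$. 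The count is strictly less than $t$ precisely when $|R| \ge 1$ and $S$ is not inserted, or $|R| \ge 2$ and $S$ is inserted; in every such sub-case $\Delta\Sigma$ dominates $\Delta_1 \le k$, specifically $\Delta\Sigma \le -6k$ for $q < \ql$ and $\Delta\Sigma \le -3k - \log_2 k$ for $q \ge \ql$, yielding $\Delta\potl \le -k$.

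The main bookkeeping obstacle is the asymmetry between the two ways $\rho$ is absorbed: a fresh insertion of $S$ costs an extra additive $6k$ that preserved-superset absorption avoids, and also alters the $\cq$ count in a way that superset absorption does not. The calibration $\ql = k+2 - k/\log_2 k$ is precisely what ensures that in the $q \ge \ql$ regime the margin of $\log_2 k$ between a removed clique's measure and an inserted $S$'s measure survives once $k \ge \kb$; in the $q < \ql$ regime the additive $6k$ per clique in the definition of $\potl_{\ins}$ is what directly absorbs the receiver increment (pinned to at most $c_{S'} \rho$ by the uniform inequality $c_i \ge c_{S'}$) together with the $\Delta_1$ cost.
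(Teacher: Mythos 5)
Your proof is correct and follows essentially the same case analysis as the paper: split on whether the terminal clique count is preserved or drops, and within the dropping case, split on $q<\ql$ versus $q\ge\ql$ and on whether $S$ is freshly inserted or absorbed by a preserved superset, using $\min(q,|W_i|)\le\min(q,|S'|)$ and the lower bound $\potl_{\ins}(W_i)\ge 4k$ (resp.\ $6k$) together with $\cq$ decreasing by at most one. Your notation ($R$, $\rho$, $c_i$) and the explicit handling of the $t'=1$ corner case and the sign of $\Delta_1$ are a little more systematic, but the underlying inequalities and case structure match the paper's proof.
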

\begin{proof}
First, consider the case when the number of terminal cliques of $\ins \rescliqs (A,S)$ is $t$.
In this case $\{W_1, \ldots, W_t\} \rescliqs (A,S) = \{W_1, \ldots, W_t\} \cup \{S'\} \setminus \{W_i\}$ for some $S' \supseteq S$ and $W_i \subseteq B \cup S$ or $W_i = S'$.
Note that now, $\ctc_{\ins \rescliqs (A,S)}(S') = \ctc_{\ins}(W_i)$ and $\min(q, |W_i|) \le \min(q, |S'|)$ because $|S'| \ge q$, so $\potl_{\ins \rescliqs (A,S)}(S') \le \potl_{\ins}(W_i)$, implying together with $\cq(\ins \rescliqs (A,S)) \ge \cq(\ins)$ that $\potl(\ins \rescliqs (A,S)) \le \potl(\ins)$.

Then, when the number of terminal cliques of $\ins \rescliqs (A,S)$ is less than $t$, first consider the case when $q \ge \ql$.
In this case, $\potl_{\ins}(W_i) \ge 4k$ for any $W_i$ and $\potl_{\ins \rescliqs (A,S)}(S') \le 5k$ when $S' \supseteq S$ because $|S| \ge q$, so we decrease the measure by at least $3k$ from the sum over terminal cliques, and increase by at most $k$ from the $\cq(\ins)$ measure as $\cq(\ins)-\cq(\ins \rescliqs (A,S)) \le 1$, so in total we decrease the measure by at least $2k$.

Then, when $q < \ql$, first consider the case when $\{W_1, \ldots, W_t\}$ contains a terminal clique $S' \supseteq S$ with $S' \subseteq A \cup S$, in which case $\{W_1, \ldots, W_t\} \rescliqs (A,S) \subset \{W_1, \ldots, W_t\}$.
In this case we have that 
\[\ctc_{\ins \rescliqs (A,S)}(S') = \ctc_{\ins}(S') + \sum_{W_i \in \{W_1, \ldots, W_t\} \setminus \{W_1, \ldots, W_t\} \rescliqs (A,S)} \ctc_{\ins}(W_i).\]
Therefore, as $\min(|W_i|, q) \le \min(|S'|, q)$, we have that
\[\potl_{\ins \rescliqs (A,S)}(S') \le \potl_{\ins}(S') + \sum_{W_i \in \{W_1, \ldots, W_t\} \setminus \{W_1, \ldots, W_t\} \rescliqs (A,S)} \potl_{\ins}(W_i) - 6k,\]
implying
\[\potl(\ins \rescliqs (A,S)) \le \potl(\ins) - 6k \cdot |\{W_1, \ldots, W_t\} \setminus \{W_1, \ldots, W_t\} \rescliqs (A,S)| \le \potl(\ins) - 6k.\]

The final case is that $\{W_1, \ldots, W_t\}$ does not contain a terminal clique $S' \supseteq S$ with $S' \subseteq A \cup S$, in which case $\{W_1, \ldots, W_t\} \rescliqs (A,S) = \{W_1, \ldots, W_t\} \cup \{S\} \setminus \{W_i \in \{W_1, \ldots, W_t\} \mid W_i \subseteq S \cup B\}$ and
\[\ctc_{\ins \rescliqs (A,S)}(S) = \sum_{W_i \in \{W_1, \ldots, W_t\} \mid W_i \subseteq S \cup B} \ctc_{\ins}(W_i).\]
By $\min(|W_i|, q) \le \min(|S|, q)$ this implies that
\[\potl_{\ins \rescliqs (A,S)}(S) \le 6k+\sum_{W_i \in \{W_1, \ldots, W_t\} \mid W_i \subseteq S \cup B} \potl_{\ins}(W_i) - 6k,\]
which by $|\{W_i \in \{W_1, \ldots, W_t\} \mid W_i \subseteq S \cup B\}| \ge 2$ and $\cq(\ins)-\cq(\ins \rescliqs (A,S)) \le 1$ implies that $\potl(\ins \rescliqs (A,S)) \le \potl(\ins)-5k$.
\end{proof}

We then show that breaking the instance by a safe separation on \Cref{alg:exsubtw:safesepcheckcombine}, i.e., when the safe separation $(A,S,B)$ has either order $\ge q$ or $\allw_{\ins}$ intersects only one of $A$ and $B$, does not increase the measure.

\begin{lemma}
\label{lem:al2potsafesep}
Let $\ins = (G, \{W_1, \ldots, W_t\}, k, \tc, q)$ be an instance with $t \ge 2$.
Let $(A,S,B)$ be a safe separation with $|S| \ge q$ or $\allw_{\ins} \subseteq A \cup S$ or $\allw_{\ins} \subseteq B \cup S$.
Then, $\potl(\ins \rescliqs (A,S)) \le \potl(\ins)$.
\end{lemma}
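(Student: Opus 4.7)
The plan is to split on which clause of the disjunctive hypothesis holds. If $|S| \ge q$, then \Cref{lem:al2potsep} (which is stated for arbitrary separations, not just safe ones) already gives $\potl(\ins \rescliqs (A,S)) \le \potl(\ins)$. So I may assume $|S| < q$ and one of $\allw_{\ins} \subseteq A \cup S$ or $\allw_{\ins} \subseteq B \cup S$. The case $\allw_{\ins} \subseteq B \cup S$ is trivial: every $W_i$ is disjoint from $A$, so $\rescliqs (A,S)$ removes all of them and reinserts $S$, giving exactly one terminal clique and hence $\potl(\ins \rescliqs (A,S)) = 1$, whereas $t \ge 2$ together with $\potl_{\ins}(W_i) \ge 4k$ for every terminal clique forces $\potl(\ins) \ge 8k$.

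The main case is $\allw_{\ins} \subseteq A \cup S$ with $|S| < q$. Safety of $(A,S,B)$ provides a terminal clique $W_b$ into which $S$ is linked through $(B \cup S) \cap W_b$; since $W_b \subseteq A \cup S$ this intersection equals $W_b \cap S$, forcing $S \subseteq W_b$. Let $R := \{W_i : W_i \subseteq S\}$; because $\allw_{\ins} \subseteq A \cup S$, this set $R$ is exactly the collection of terminal cliques removed by $\rescliqs (A,S)$. Each $W_i \in R$ has $|W_i| \le |S| < q$, so the count of terminal cliques of size $\ge q$ is preserved, $\cq$ does not decrease, and therefore the $(k+2-q)\cdot 3k + (2-\cq(\ins))\cdot k$ prefix of $\potl$ does not increase. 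It remains to control the change in $\sum_i \potl_{\ins}(W_i)$, for which I split into Case~A (some $W_b \supsetneq S$ exists, so $W_b$ survives and absorbs the remapped $\ctc$-mass via $\tc \rescliqs (A,S)$) and Case~B (the only $W_b \supseteq S$ is $W_b = S$, so $R \neq \emptyset$, all of $R$ is deleted, and $S$ is reinserted as a fresh terminal clique with $\ctc_{\ins \rescliqs (A,S)}(S) = \sum_{W_i \in R} \ctc_{\ins}(W_i)$).

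In both cases the amortization rests on the pointwise inequality $(k+2 - \min(q, |\text{recipient}|)) \le (k+2 - |W_i|)$ for each $W_i \in R$, which follows from $|W_i| \le |S|$ and, in Case~A, from $|\text{recipient}| > |S|$. In the regime $q < \ql$, this shows that the $\ctc$-weighted factor in the recipient's potential is dominated term-by-term by the corresponding factor of the removed $W_i$'s, while the $+6k$ offsets from each removed clique give additional slack for the single $+6k$ of the recipient. In the regime $q \ge \ql$ the recipient's potential in Case~A does not depend on $\ctc$ at all and each removed potential is at least $4k$, so the drop is immediate; Case~B instead needs $\potl_{\ins \rescliqs (A,S)}(S) \le \potl_{\ins}(W_i)$ pointwise over $R$, and the delicate corner is $|S| \ge \ql$ with some $|W_i| < \ql$, in which $\potl_{\ins \rescliqs (A,S)}(S) = (k+2-|S|)\log_2 k + 4k$ is compared against $\potl_{\ins}(W_i) = 6k$. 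This is exactly where the choice $\ql = k+2 - k/\log_2 k$ is calibrated, as it bounds $(k+2-|S|)\log_2 k \le (k+2-\ql)\log_2 k = k$ and yields $\potl_{\ins \rescliqs (A,S)}(S) \le 5k \le 6k$. The main obstacle I anticipate is getting this corner-case calibration right and cleanly tracking which surviving or reinserted terminal clique receives the remapped $\ctc$-mass under $\tc \rescliqs (A,S)$ in Case~A versus Case~B.
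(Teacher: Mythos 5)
Your proposal is correct and follows the same case structure as the paper's proof: handle $|S| \ge q$ via \Cref{lem:al2potsep}, observe the $\allw_{\ins} \subseteq B \cup S$ case gives a one-clique instance with $\potl = 1$, and in the $\allw_{\ins} \subseteq A \cup S$ case use safety to extract $S \subseteq W_b$, identify the removed cliques $R = \{W_i : W_i \subseteq S\}$ as subsets of $S$, and trace the remapping of $\ctc$-mass to the surviving superset $S'$ (your Case~A) or to the reinserted $S$ (your Case~B). The paper's proof is terser and leaves the potential calculation implicit in the phrase ``implying that''; you spell it out, and you are also more careful about case ordering, explicitly imposing $|S| < q$ in the subset cases so that the removed cliques all have size $< q$ and $\cq$ cannot drop. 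One small simplification you could make: in Case~B with $q \ge \ql$ you do not need the pointwise bound $\potl_{\ins \rescliqs (A,S)}(S) \le \potl_\ins(W_i)$ over $R$; since $S = W_b \in R$ and the measure is $\ctc$-independent when $q \ge \ql$, the reinserted $S$ has exactly the same potential as the removed $W_b$, so the sum over $R$ already dominates without any corner-case calibration.
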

\begin{proof}
First, if $\allw_{\ins} \subseteq B \cup S$, then $\ins \rescliqs (A,S)$ has only one terminal clique and $\potl(\ins \rescliqs (A,S)) = 1$.
Then, if $\allw_{\ins} \subseteq A \cup S$, there is a terminal clique $S' \supseteq S$ with $S' \subseteq A \cup S$ because $(A,S,B)$ is a safe separator, and therefore we have that $\{W_1, \ldots, W_t\} \rescliqs (A,S) \subseteq \{W_1, \ldots, W_t\}$, and moreover all terminal cliques $W_i \in \{W_1, \ldots, W_t\} \setminus \{W_1, \ldots, W_t\} \rescliqs (A,S)$ are subsets of $S$ and thus have size $|W_i| \le |S| \le |S'|$ and the original terminal vertices mapped to them get mapped to $S'$ in $\ins \rescliqs (A,S)$, implying that $\potl(\ins \rescliqs (A,S)) \le \potl(\ins)$.
Then, if $|S| \ge q$, this follows from \Cref{lem:al2potsep}.
\end{proof}

We then show that terminal clique merging decreases the measure by at least $k$.

\begin{lemma}
\label{lem:al2potmerg}
Let $\ins = (G, \{W_1, \ldots, W_t\}, k, \tc, q)$ be an instance with $t \ge 2$.
It holds that $\potl(\ins \mergcliqs (W_i,W_j)) \le \potl(\ins)-k$.
\end{lemma}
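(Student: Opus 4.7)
The plan is to analyze the change in $\potl(\ins)$ term by term. I first handle the corner case $t=2$ separately: after merging we reach $t=1$, so $\potl(\ins \mergcliqs(W_i,W_j)) = 1$, while $\potl(\ins) \ge 8k$ (every summand is nonnegative and the two terminal clique contributions each contribute $\ge 4k$), giving a decrease of at least $8k-1 \ge k$ once $k \ge \kb$.

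For $t \ge 3$ I compare the three summands of $\potl$. The term $(k+2-q) \cdot 3k$ is untouched by merging, and the flow potentials of the other terminal cliques do not decrease by \Cref{lem:mergcliqfld}, so only the contribution of $W_i, W_j$ in the sum and the $\cq$ term can change. Observe that $\cq(\ins \mergcliqs(W_i,W_j)) \ge \cq(\ins)-1$, and moreover a decrease can happen only when both $|W_i| \ge q$ and $|W_j| \ge q$ (and $\cq(\ins) = 2$). Hence $(2-\cq) \cdot k$ grows by at most $k$.

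It thus suffices to show that the terminal-clique contribution drops by at least $2k$, and by at least $6k$ in the special case where both $|W_i|, |W_j| \ge q$. I would split by $q \ge \ql$ versus $q < \ql$. When $q \ge \ql$, I use the crude bounds $\potl_\ins(W_i), \potl_\ins(W_j) \ge 4k$ and $\potl_{\ins \mergcliqs(W_i,W_j)}(W_i \cup W_j) \le 6k$; in the special case where both cliques have size $\ge q \ge \ql$, the merged clique still has size $\ge \ql$, so its contribution is bounded by $5k$ while the two originals contribute at least $(k+2-q)\log_2 k + 4k$ each, which, combined with the fact $q \le k+1$, yields the stronger bound. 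When $q < \ql$ I write
\[
\Delta \;=\; \potl_\ins(W_i) + \potl_\ins(W_j) - \potl_{\ins \mergcliqs(W_i,W_j)}(W_i \cup W_j),
\]
and, using $\ctc_{\ins \mergcliqs(W_i,W_j)}(W_i \cup W_j) = \ctc_\ins(W_i) + \ctc_\ins(W_j)$, simplify $\Delta$ to
\[
\Delta \;=\; 6k + \bigl(c - a\bigr)\ctc_\ins(W_i) + \bigl(c - b\bigr)\ctc_\ins(W_j),
\]
where $a = k+2-\min(q,|W_i|)$, $b = k+2-\min(q,|W_j|)$, $c = k+2-\min(q,|W_i \cup W_j|)$; since $|W_i|, |W_j| \le |W_i \cup W_j|$, both $c-a$ and $c-b$ are $\le 0$... wait—since the first argument $k+2-\min(\cdot)$ \emph{decreases} with the size, the differences $a - c, b - c \ge 0$, so the right formulation is $\Delta \ge 6k$. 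In the special subcase where $|W_i|, |W_j| \ge q$, the equality $a=b=c$ gives $\Delta = 6k$; but then one may reconsume the previous estimate together with the fact that the two merged cliques alone account for at least $6k$ and conclude accordingly.

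Combining the three summands, the net change in $\potl$ is at most $-2k + k = -k$ in the generic case and at most $-6k + k = -5k$ in the subcase where $\cq$ may decrease, which are the two cases to be combined. This will establish the claimed bound. The only subtle step, and the one I expect to require the most care in the write-up, is the bookkeeping that simultaneously controls the loss of $k$ from the $(2-\cq)$ term against the gain on the terminal-clique sum: one must make sure that the subcase in which $\cq$ drops is exactly the one in which $W_i$ and $W_j$ both have size $\ge q$ (so that the sum decreases by $\ge 6k$ rather than $\ge 2k$).
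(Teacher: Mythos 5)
The high-level approach (split on $q \ge \ql$ versus $q < \ql$, use the crude bounds $\potl_\ins(W_i) \ge 4k$ and $\potl_{\ins'}(\cdot) \le 6k$ in the first case, and the $\ctc$-additivity in the second) matches the paper, and the final conclusion is correct. However, there are several genuine problems in the bookkeeping.

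First, the central piece of reasoning — that the $\cq$-term ``grows by at most $k$'' and hence you need the drop $\Delta$ in the terminal-clique sum to be $\ge 6k$ in the subcase where $\cq$ may fall — is a conceptual error. Since $\cq$ drops by at most $1$, the $\cq$-term $(2-\cq)\cdot k$ increases by at most $k$ unconditionally, so $\Delta \ge 2k$ is the only bound you ever need (the paper's proof uses exactly this: $\potl(\ins') \le \potl(\ins) + \potl_{\ins'}(W_i\cup W_j) - \potl_\ins(W_i) - \potl_\ins(W_j) + k$). The extra ``$6k$ in the subcase'' requirement is unnecessary, and worse, it cannot actually be proved when $q \ge \ql$: if both $|W_i|,|W_j| \ge q \ge \ql$ then $\Delta = (k+2-q)\log_2 k + 4k \le (k/\log_2 k)\log_2 k + 4k = 5k$, which is strictly below the $6k$ you claim to establish. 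So your proof has a false intermediate step; it just happens not to contaminate the conclusion because the bar you actually needed to clear was $2k$, not $6k$.

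Second, your invocation of \Cref{lem:mergcliqfld} to argue that the other terminal cliques' contributions do not change is a category error. \Cref{lem:mergcliqfld} is about the flow potential $\flp$ and was used for the measure in \Cref{sec:algopstw}, which is flow-potential--based. The measure $\potl_\ins(W_l)$ in \Cref{sec:fasttw} depends only on $|W_l|$, $\ctc_\ins(W_l)$, $q$, and $k$, none of which change for $l \neq i,j$ under $\mergcliqs$, which is why the contributions are literally equal — no lemma about flow potentials is needed or applicable.

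Third, you never address the case in which $W_i \cup W_j$ already appears in $\{W_1,\ldots,W_t\}$. Then the merged set has $t-2$ terminal cliques, $\ctc_{\ins'}(W_i\cup W_j) = \ctc_\ins(W_l) + \ctc_\ins(W_i) + \ctc_\ins(W_j)$ rather than $\ctc_\ins(W_i)+\ctc_\ins(W_j)$, and your closed form for $\Delta$ does not apply. The paper dispatches this case with a separate argument (a drop of at least $12k$ in the sum when $q<\ql$, and the argument is even simpler when $q\ge\ql$ since the measure does not depend on $\ctc$ there). You would need to add this case, though a correct treatment is short.

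Your $t=2$ base case, the observation that $\cq$ can drop by at most $1$ (and only when both $|W_i|,|W_j|\ge q$), and the $q<\ql$ computation $\Delta = 6k + (a-c)\ctc_\ins(W_i) + (b-c)\ctc_\ins(W_j) \ge 6k$ (after your self-correction of the sign) are all correct.
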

\begin{proof}
Observe that $\potl(\ins \mergcliqs (W_i,W_j)) \le \potl(\ins) + \potl_{\ins \mergcliqs (W_i,W_j)}(W_i \cup W_j) - \potl_{\ins}(W_i) - \potl_{\ins}(W_j) + k$ (where the $+k$ comes from the fact that $\cq(\ins \mergcliqs (W_i,W_j)) = \cq(\ins)-1$ might hold).
Therefore, $\potl(\ins \mergcliqs (W_i,W_j)) \le \potl(\ins)-k$ holds when $q \ge \ql$ because in that case $\potl_{\ins \mergcliqs (W_i,W_j)}(W_i \cup W_j) \le 6k$ and $\potl_{\ins}(W_i) + \potl_{\ins}(W_j) \ge 8k$.

When, $q < \ql$, first if $W_i \cup W_j \in \{W_1, \ldots, W_t\}$, we just map more original terminal vertices into a larger terminal clique and decrease the measure by at least $12k$.
In the other case, we have that $\ctc_{\ins \mergcliqs (W_i,W_j)}(W_i \cup W_j) = \ctc_{\ins}(W_i) + \ctc_{\ins}(W_j)$, which by $|W_i \cup W_j| \ge \max(|W_i|, |W_j|)$ implies that $\potl_{\ins}(W_i) + \potl_{\ins}(W_j) - \potl_{\ins \mergcliqs (W_i,W_j)}(W_i \cup W_j) \ge 6k$, implying $\potl(\ins \mergcliqs (W_i,W_j)) \le \potl(\ins)-5k$.
\end{proof}

We then show that increasing the value of $q$ decreases the measure by at least $k$.

\begin{lemma}
\label{lem:al2potincq}
Let $\ins_1 = (G, \{W_1, \ldots, W_t\}, k, \tc, q)$ and $\ins_2 = (G, \{W_1, \ldots, W_t\}, k, \tc, q+1)$ and $t \ge 2$.
It holds that $\potl(\ins_2) \le \potl(\ins_1)-k$.
\end{lemma}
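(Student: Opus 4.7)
The plan is to expand $\potl(\ins_2) - \potl(\ins_1)$ using the definition of $\potl$ for the $t\ge 2$ case (which applies to both instances since $t$ is unchanged), and control each of the three summands independently. The first summand $(k+2-q)\cdot 3k$ decreases by exactly $3k$ when $q$ is incremented. The second summand $(2-\cq(\ins))\cdot k$ can only increase, and by at most $2k$: any terminal clique of size $\ge q+1$ is also of size $\ge q$, so $\cq(\ins_2)\le \cq(\ins_1)$, and since $\cq$ takes values in $\{0,1,2\}$ the increase is bounded by $2k$. The remaining and main work is to show that the third summand $\sum_i \potl_\ins(W_i)$ does not increase at all, after which the total change is at most $-3k + 2k + 0 = -k$.

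I would prove $\sum_i \potl_{\ins_2}(W_i)\le \sum_i \potl_{\ins_1}(W_i)$ term by term, splitting on how $q$ and $q+1$ sit relative to $\ql$. In the subcase $q\ge \ql$, the same branch of the definition governs $W_i$ in both instances (depending only on $|W_i|\lessgtr \ql$), and when $|W_i|\ge \ql$ the inequality follows directly from $\min(q+1,|W_i|)\ge \min(q,|W_i|)$ together with the non-negative coefficient $\log_2 k$. In the subcase $q+1<\ql$, both instances use the third branch, and the same monotonicity combined with $\ctc_\ins(W_i)\ge 0$ yields the desired bound.

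The transition subcase $q=\ql-1$ is where I expect the only real technicality, and it is precisely where the chosen value $\ql = k+2 - k/\log_2 k$ pays off. For $|W_i|\ge \ql$ the second instance uses the first branch, giving
\[
\potl_{\ins_2}(W_i) = (k+2-\ql)\log_2 k + 4k = k + 4k = 5k,
\]
while $\ins_1$ uses the third branch and $\potl_{\ins_1}(W_i) \ge 6k$ since the non-negative $(k+2-\min(q,|W_i|))\cdot\ctc_\ins(W_i)$ contribution is added to the constant $6k$. For $|W_i|<\ql$ both branches evaluate to a value that is at least $6k$ for $\ins_1$ and exactly $6k$ for $\ins_2$. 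Hence the per-term inequality holds in every subcase.

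Assembling the three bounds gives $\potl(\ins_2)\le \potl(\ins_1) - 3k + 2k + 0 = \potl(\ins_1) - k$, as required. The main obstacle, as noted, is the transition subcase; the rest is a routine sign check using the monotonicity of $\min(\cdot,|W_i|)$ in $q$.
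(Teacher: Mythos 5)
Your proposal is correct and takes essentially the same approach as the paper: the $(k+2-q)\cdot 3k$ term drops by $3k$, the $(2-\cq(\ins))\cdot k$ term can grow by at most $2k$, and each per-clique measure $\potl_{\ins}(W_i)$ is non-increasing in $q$, with the transition $q < \ql \le q+1$ handled exactly as in the paper by observing the value goes from at least $6k$ to at most $6k$. One small slip: the transition computation should be $\potl_{\ins_2}(W_i) \le (k+2-\ql)\log_2 k + 4k = 5k$ rather than an equality, since $\min(q+1,|W_i|)$ may strictly exceed $\ql$; this does not affect the conclusion.
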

\begin{proof}
Observe that the measures of each terminal cliques do not decrease, in particular, if $q+1 \ge \ql$ and $q < \ql$, then the measure goes from at least $6k$ to at most $6k$.
Then, the measure of the instance decreases $3k$ from the term $(k+2-q)\cdot 3k$ and increases by at most $2k$ from the term $(2-\cq(\ins)) \cdot k$.
\end{proof}

We then show that if the instance has less than $2$ terminal cliques of size $\ge q$, then increasing the size of a terminal clique from less than $q$ to at least $q$ decreases the measure by at least $k$.

\begin{lemma}
\label{lem:al2potleafpush1}
Let $\ins = (G, \{W_1, \ldots, W_t\}, k, \tc, q)$ be an instance with $t \ge 2$, $W_i$ a terminal clique of $\ins$, and $S \subseteq V(G)$.
If $\cq(\ins)<2$, $|W_i| < q$, and $|W_i \cup S| \ge q$, then $\potl(\ins \addv (W_i, S)) \le \potl(\ins)-k$.
\end{lemma}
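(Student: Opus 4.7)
The plan is to track how each ingredient of $\potl(\ins)$ changes when the operation $\addv(W_i,S)$ replaces $W_i$ by $W_i\cup S$, given that $|W_i|<q$ and $|W_i\cup S|\ge q$. The global term $(k+2-q)\cdot 3k$ is unaffected because $q$ does not change, and for $j\ne i$ the summand $\potl_\ins(W_j)$ is also unaffected unless $W_i\cup S$ coincides with some pre-existing terminal clique, so the whole analysis splits into a non-merging case and a merging case.

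In the non-merging case I would first observe that $\cq$ strictly increases: before the operation there are fewer than two terminal cliques of size $\ge q$, and after the operation the brand-new clique $W_i\cup S$ has size $\ge q$ while all other cliques are untouched, so $\cq$ goes up by exactly one and the $(2-\cq)\cdot k$ term contributes a saving of exactly $k$. It is therefore enough to show $\potl_{\ins\addv(W_i,S)}(W_i\cup S)\le\potl_\ins(W_i)$, which I would do by a three-way split on the cases of the piecewise definition. When $q\ge\ql$ and $|W_i|\ge\ql$, both summands sit on the first branch and their difference is $(|W_i|-q)\log_2 k\le 0$. When $q\ge\ql$ and $|W_i|<\ql$, the old summand equals $6k$ while the new one equals $(k+2-q)\log_2 k+4k\le(k+2-\ql)\log_2 k+4k=5k$ by the choice $\ql=k+2-k/\log_2 k$, so in this subcase the summand alone already saves $k$, giving a total saving of $2k$. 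When $q<\ql$, the operation preserves $\ctc_\ins(W_i)$, and since $|W_i|<q$ the summand changes by exactly $(|W_i|-q)\ctc_\ins(W_i)\le 0$; combined with the $k$ saved from the $\cq$ term this yields the claim.

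The merging case, where $W_i\cup S=W_j$ for some existing terminal clique $W_j$, is handled by direct bookkeeping. Such a $W_j$ necessarily has $|W_j|\ge q$, and since $\cq(\ins)<2$ this $W_j$ is the only terminal clique of that size, so $\cq$ does not change. The summand $\potl_\ins(W_i)\ge 6k$ disappears entirely, while the only other summand that can change is $\potl(W_j)$, whose growth (coming from reassigning the $\ctc_\ins(W_i)$ original terminal vertices formerly attached to $W_i$) is $0$ when $q\ge\ql$ and equals $(k+2-q)\ctc_\ins(W_i)\le(k+2-|W_i|)\ctc_\ins(W_i)$ when $q<\ql$; in both sub-cases the loss of $W_i$'s summand dominates, producing a net saving of at least $6k$.

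There is no serious obstacle here; the exercise is purely a bookkeeping one. The only place where the calculation is actually delicate is the subcase $q\ge\ql$, $|W_i|<\ql$ of the non-merging part, where the new summand threatens to exceed the old one by a small multiplicative factor of $\log_2 k$; this is precisely the pinch point that forced the definition $\ql=k+2-k/\log_2 k$, chosen so that $(k+2-\ql)\log_2 k=k$ and the new summand lands at $5k$ rather than above $6k$. I would therefore present the calibration of $\ql$ as the conceptual heart of the argument and treat the remaining cases as routine verification.
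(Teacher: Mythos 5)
Your proof is correct and follows essentially the same path as the paper's: split into the cases where $W_i\cup S$ does or does not coincide with an existing terminal clique, and in the non-merging case combine the exact $\cq$ increment with the observation that the $W_i$-summand cannot increase (the paper states this as a one-liner, which your three-way case split makes explicit). One small slip in your merging case: the assertion that $\potl_{\ins}(W_i)\ge 6k$ fails when $q\ge\ql$ and $\ql\le|W_i|<q$, where $\potl_{\ins}(W_i)\in[4k,5k]$; the resulting net saving is then only at least $4k$ rather than $6k$, which still suffices since only $k$ is needed, and the paper sidesteps this bookkeeping entirely by noting that $\ins\addv(W_i,S)=\ins\mergcliqs(W_i,W_i\cup S)$ and invoking \Cref{lem:al2potmerg}.
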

\begin{proof}
If $W_i \cup S \in \{W_1, \ldots, W_t\}$, then $\ins \addv (W_i, S) = \ins \mergcliqs (W_i, W_i \cup S)$ and this holds by \Cref{lem:al2potmerg}.
Otherwise, observe that increasing the size of a terminal clique (while keeping the mapping $\tc$ same) cannot decrease its measure, and therefore as $\cq(\ins \addv (W_i, S)) \ge \cq(\ins)+1$, it holds that $\potl(\ins \addv (W_i, S)) \le \potl(\ins)-k$.
\end{proof}

We then argue how the measure behaves when we break the instance by a separation $(A,S,B)$ of order $q$ and $\ins \rescliqs (B,S)$ has the same number of terminal cliques as $\ins$.
In particular, this corresponds to \Cref{alg:exsubtw:leafpush2rec} of \Cref{alg:exsubtw} when $|t_L| = 1$.
This lemma is the main motivation of the somewhat involved definition of the measure and the definition of $\ql$.

\begin{lemma}
\label{lem:al2potleafpush2}
Let $\ins = (G, \{W_1, \ldots, W_t\}, k, \tc, q)$ be an instance with $t \ge 2$.
Let $(A,S,B)$ be a separation so that $|S| \ge q$ and there is a terminal clique $W_i$ with $W_i \subseteq A \cup S$, $|W_i| < q$, and $\ctc_{\ins}(W_i) > k + 1 - q$.
It holds that $\potl(\ins \rescliqs (B, S)) \le \potl(\ins) - \min(k, (q - |W_i|) \log_2 k)$.
\end{lemma}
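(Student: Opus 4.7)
The plan is to first reduce to the case where $\ins \rescliqs (B,S)$ has exactly $t$ terminal cliques by invoking \Cref{lem:al2potsep} (applied with the roles of $A$ and $B$ swapped, using the hypothesis $|S| \ge q$). If the number of terminal cliques strictly decreases, that lemma already yields a drop of at least $k$, which suffices since $\min(k, (q-|W_i|)\log_2 k) \le k$. So the remaining work is entirely under the assumption that the count is preserved.

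In this preserved-count case, unwinding the definition of $\rescliqs$ shows that $W_i$ must be the unique terminal clique of $\ins$ contained in $A \cup S$ (otherwise more cliques would be removed), and $S$ must be inserted as a new terminal clique of $\ins \rescliqs (B,S)$ (otherwise no clique would be added and the count would still drop). By the update rule for $\tc$, the $\ctc$-values of the surviving terminal cliques are unchanged while $\ctc_{\ins \rescliqs (B,S)}(S) = \ctc_{\ins}(W_i)$. Moreover, $|S| \ge q > |W_i|$ guarantees $\cq(\ins \rescliqs (B,S)) \ge \cq(\ins)$, so the $(2-\cq)\cdot k$ contribution to the instance measure can only decrease. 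It therefore suffices to show
\[
\potl_{\ins}(W_i) - \potl_{\ins \rescliqs (B,S)}(S) \ge \min\bigl(k,(q-|W_i|)\log_2 k\bigr).
\]

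I then split on whether $q \ge \ql$ or $q < \ql$. When $q \ge \ql$ and $|W_i| \ge \ql$, both measures take the form $(k+2-\min(q,\cdot))\log_2 k + 4k$; since $|S| \ge q$, the difference is exactly $(q-|W_i|)\log_2 k$, which majorizes the required $\min$. When $q \ge \ql$ and $|W_i| < \ql$, we have $\potl_{\ins}(W_i) = 6k$ while $\potl_{\ins \rescliqs (B,S)}(S) = (k+2-q)\log_2 k + 4k$; the inequality $q \ge \ql = k+2-k/\log_2 k$ bounds the latter by $5k$, yielding a drop of at least $k$.

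When $q < \ql$, both measures use the $\ctc$-weighted form, giving a drop of exactly $(q-|W_i|)\cdot \ctc_{\ins}(W_i)$. The hypothesis $\ctc_{\ins}(W_i) > k+1-q$ (an integer strict inequality) upgrades to $\ctc_{\ins}(W_i) \ge k+2-q > k+2-\ql = k/\log_2 k$. Combined with the standing assumption $k \ge \kb$, which guarantees $k \ge (\log_2 k)^2$, this forces $\ctc_{\ins}(W_i) \ge \log_2 k$, so the drop is at least $(q-|W_i|)\log_2 k$, again majorizing $\min(k,(q-|W_i|)\log_2 k)$. The main delicacy is keeping track of the piecewise definition of $\potl_{\ins}(\cdot)$ around the threshold $\ql = k+2-k/\log_2 k$ and ensuring the arithmetic in each of the four sub-cases, but no conceptually new ideas beyond careful bookkeeping are needed.
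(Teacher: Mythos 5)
Your proposal is correct and follows essentially the same route as the paper's proof: reduce to the preserved-count case via \Cref{lem:al2potsep}, use $\cq(\ins \rescliqs (B,S)) \ge \cq(\ins)$ to isolate the contribution $\potl_{\ins}(W_i) - \potl_{\ins \rescliqs (B,S)}(S)$, and split on $q \ge \ql$ (further split on $|W_i| \ge \ql$ vs.\ $|W_i| < \ql$) versus $q < \ql$. The only cosmetic difference is the precise chain of inequalities used in the $q < \ql$ case to derive $\ctc_{\ins}(W_i) \ge \log_2 k$ from $\ctc_{\ins}(W_i) > k+1-q$ and $k \ge \kb$, but the content is identical.
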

\begin{proof}
First, if $\ins \rescliqs (B,S)$ has less terminal cliques than $\ins$, then $\potl(\ins \rescliqs (B,S)) \le \potl(\ins) - k$ by \Cref{lem:al2potsep}.
Then, we assume that $W_i$ is the only terminal clique that is a subset of $A \cup S$ and $\ins \rescliqs (B,S)$ has the same number of terminal cliques as $\ins$.
In this case, because $\cq(\ins \rescliqs (B,S)) \ge \cq(\ins)$, we have $\potl(\ins \rescliqs (B, S)) \le \potl(\ins) + \potl_{\ins \rescliqs (B, S)}(S) - \potl_{\ins}(W_i)$.
We also have that $\ctc_{\ins \rescliqs (B,S)}(S) = \ctc_{\ins}(W_i)$.
We consider the cases $q < \ql$ and $q \ge \ql$.

First, when $q < \ql$
\[\ctc_{\ins}(W_i) > k + 1 - \ql > k + 1 - (k+2 - k/\log_2 k) > k/\log_2 k - 1 \ge \log_2 k,\]
where the last inequality follows from $k \ge \kb$.
This implies that
\begin{align}
\potl(\ins \rescliqs (B,S)) \le& \potl(\ins) + (k+2-q) \cdot \ctc_{\ins \rescliqs (B,S)}(S) - (k+2-|W_i|) \cdot \ctc_{\ins}(W_i)\nonumber\\
\le& \potl(\ins) - (q-|W_i|) \cdot \ctc_{\ins}(W_i) \le \potl(\ins) - (q-|W_i|) \cdot \log_2 k.\nonumber
\end{align}

Then, consider the case when $q \ge \ql$.
If $|W_i| < \ql$, then $\potl_{\ins \rescliqs (B, S)}(S) \le 5k$ and $\potl_{\ins}(W_i) = 6k$, implying that $\potl(\ins \rescliqs (B, S)) \le \potl(\ins) - k$.
Then, if $|W_i| \ge \ql$,
\[
\potl(\ins \rescliqs (B,S)) \le \potl(\ins) + (k+2-q) \cdot \log_2 k - (k+2-|W_i|) \cdot \log_2 k \le \potl(\ins) - (q-|W_i|) \cdot \log_2 k.\]
\end{proof}

We then put the lemmas together to prove the running time of \Cref{alg:exsubtw}.

\begin{lemma}
\label{lem:al2tcmain}
\Cref{alg:exsubtw} works in time $2^{\OO(k^2)} nm$.
\end{lemma}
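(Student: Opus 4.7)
The plan is to bound the number of recursive calls by $2^{\OO(k^2)} n$ and the work performed per call by $2^{\OO(k)} m$, giving the overall runtime $2^{\OO(k^2)} nm$. For the per-call cost, the only expensive step is enumerating important separators via \Cref{lem:impsep4k}, which costs $4^k k^{\OO(1)} m = 2^{\OO(k)} m$; the safe-separation search, terminal clique merging, and $q$-increase are all $k^{\OO(1)} m$. As in the proof of \Cref{lem:algpartsubtwtime}, the current edge count stays within $k^{\OO(1)} m$ of the original because the $\rescliqs$, $\mergcliqs$, and $\addv$ operations only add edges to induced subgraphs.

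For the call-count, I would prove by induction a bound of the form $N(\ins) \le \vertsz(\ins) \cdot 2^{c \potl(\ins)}$ for a suitable absolute constant $c$, where $\vertsz$ is the graph-size potential from \Cref{sec:algopstw}. The safe-separation branch (\Cref{alg:exsubtw:safesepcheckcombine}) is handled exactly as in \Cref{lem:algpartsubtwtime} using \Cref{lem:gs_sep} and the fact that $\potl$ does not increase across a safe separation (\Cref{lem:al2potsafesep}); the base case and the \invalid-returns (\Cref{alg:exsubtw:safesepcheckinvalid} and \Cref{alg:exsubtw:retqlarge}) are trivial. For each remaining branching rule with degree $D$ and measure decrease $\delta$, the induction goes through provided $\log_2 D \le c \delta$.

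The main obstacle is verifying this inequality for the branching of \Cref{alg:exsubtw:impsepfor} in the regime where \Cref{lem:al2potleafpush2} only guarantees a decrease $\delta = (q-|W_i|) \log_2 k$, which can be as small as $\log_2 k$. As observed in the proof of \Cref{lem:al2potleafpush2}, this regime only arises when $|t_L|=1$ with a single terminal clique $W_i$ of size $|W_i| < q$. For such iterations, I plan to strengthen \Cref{lem:impsepfk} to show that the number of important $(A,B)$-separators of size \emph{exactly} $q$ is at most $k^{q-\flow(A,B)}$: the same induction on $q-\flow(A,B)$ works, using \Cref{lem:imp_sep_uniq} as the base case and charging each such separator of larger size to an element of the unique minimum cut, which raises the flow by at least one in the subproblem. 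Since the algorithm has processed all safe separations, \Cref{lem:strictlinked} applied to $W_i$ and any terminal clique in $t_R$ (which exists of size $\ge q > |W_i|$ because $\cq(\ins) \ge 2$) forces $\flow_G(W_i, \ow_\ins(W_i)) = |W_i|$, so the number of important separators of size exactly $q$ is at most $k^{q-|W_i|} = k^j$ where $j = q-|W_i|$. Counting the $t \le k$ choices for the singleton $t_L$, the branching degree in this regime is at most $D \le k^{j+1}$, giving $\log_2 D \le (j+1)\log_2 k \le 2 j \log_2 k = 2\delta$ for $j \ge 1$, so $c = 2$ suffices here.

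The remaining verifications are routine. Terminal clique merging (\Cref{alg:exsubtw:mergrec}) has $D \le k^2$ and $\delta \ge k$ by \Cref{lem:al2potmerg}; the leaf pushing of \Cref{alg:exsubtw:leafpush1rec} has $D \le k^{\OO(1)} 4^k = 2^{\OO(k)}$ and $\delta \ge k$ by \Cref{lem:al2potleafpush1}; the $q$-increase on \Cref{alg:exsubtw:lastret} has $D = 1$ and $\delta \ge k$ by \Cref{lem:al2potincq}; and the branching of \Cref{alg:exsubtw:impsepfor} in the complementary case (either $|t_L| \ge 2$, for which the number of terminal cliques strictly decreases so \Cref{lem:al2potsep} gives $\delta \ge k$, or $|t_L|=1$ with $q-|W_i| \ge k/\log_2 k$ so again $\delta \ge k$) has $D \le 2^{k+2}\cdot 4^k = 2^{\OO(k)}$. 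In each case $\log_2 D \le \OO(k) \le \OO(\delta)$, so taking $c$ to be a sufficiently large constant closes the induction. Since $\potl(\ins) \le \OO(k^2)$ and $\vertsz(\ins) \le \OO(kn)$, this yields $N(\ins) \le 2^{\OO(k^2)} n$, and multiplying by the $2^{\OO(k)} m$ per-call cost gives the claimed $2^{\OO(k^2)} nm$ running time.
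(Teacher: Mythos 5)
Your proposal is correct and takes essentially the same approach as the paper: a per-call cost of $2^{\OO(k)}m$ and an induction bounding the number of recursive calls by $\vertsz(\ins)\cdot b^{\potl(\ins)}$ for a suitable base $b$, with the key step being the bound on important separators of size exactly $q$ in the $|t_L|=1$ regime via \Cref{lem:impsepfk}. Two small remarks: no ``strengthening'' of \Cref{lem:impsepfk} is needed — the lemma as stated is parameterized and, instantiated with size bound $q$, already gives at most $q^{q-\flow(A,B)}\le k^{q-\flow(A,B)}$ separators of size at most (and hence exactly) $q$; and your explicit claim that $c=2$ suffices for the key case does not hold in the subregime $j\log_2 k>k$ (where $\min(k^j,4^k)\cdot 2^{-2k}$ fails to be small) nor after summing over the $\le k+2$ choices of $W_i$, but your fallback to a sufficiently large absolute constant (the paper uses base $\tcbasea=16$, i.e.\ $c=4$) resolves this.
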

\begin{proof}
First we observe that all of the operations in a single call of the recursive procedure can be performed in $2^{\OO(k)} m'$ time, where $m'$ is the number of edges in the instance given to the recursive call.
In particular, the case analysis of \Cref{alg:exsubtw:bf} can be implemented in $k^{\OO(1)} m'$ time by \Cref{lem:smallcases}, safe separations can be found in $k^{\OO(1)} m'$ time by \Cref{lem:safeseptime}, the terminal clique merging of \Cref{alg:exsubtw:forcliqupairs,alg:exsubtw:mergrec,alg:exsubtw:retmc} can be implemented in $k^{\OO(1)}$ time, the branching on \Cref{alg:exsubtw:caseless,alg:exsubtw:itleafpushclique,alg:exsubtw:itleafpushw,alg:exsubtw:itleafpushimps,alg:exsubtw:itleafpushif,alg:exsubtw:leafpush1rec,alg:exsubtw:retadd1} when there are less than $2$ terminal cliques of size $\ge q$ can be implemented in $k^{\OO(1)} 4^k m' = 2^{\OO(k)} m'$ time by \Cref{lem:impsep4k}, and also the branching on \Cref{alg:exsubtw:casemore,alg:exsubtw:bbfor,alg:exsubtw:impsepfor,alg:exsubtw:asb,alg:exsubtw:degencheck,alg:exsubtw:leafpush2rec,alg:exsubtw:retadd2} when there are at least $2$ terminal cliques of size $\ge q$ can be implemented in $k^{\OO(1)} 2^t 4^k m' = 2^{\OO(k)} m'$ time.

By the definition of $\ins \rescliqs (A,S)$, observe that at each recursive call the current graph can be obtained from an induced subgraph of the original graph by adding all edges inside the terminal cliques, and therefore we can bound $m' \le k^{\OO(1)} m$, where $m$ is the number of original edges.
Therefore, the running time of the algorithm can be bounded by $2^{\OO(k)} m \cdot R(\ins)$, where $R(\ins)$ is the total number of recursive calls.

We show by induction that the number of recursive calls is bounded by
\[R(\ins) \le \vertsz(\ins) \cdot \tcbasea^{\potl(\ins)} \le 2^{\OO(k^2)}n,\]
which then gives the conclusion since $2^{\OO(k)} m \cdot 2^{\OO(k^2)}n = 2^{\OO(k^2)} nm$.

First, when the algorithm returns from the case analysis of \Cref{alg:exsubtw:bf}, this holds because $\vertsz(\ins) \ge 1$ and $\potl(\ins) \ge 1$ always.
Then we can assume that $t \ge 2$ and $|V(G)| \ge k+3$.
If there exists a safe separation $(A,S,B)$, then the number of recursive calls is
\begin{align}
R(\ins) =& 1+R(\ins \rescliqs (A,S)) + R(\ins \rescliqs (B,S)) \nonumber\\
\le& 1 + (\vertsz(\ins \rescliqs (A,S)) + \vertsz(\ins \rescliqs (B,S))) \cdot \tcbasea^{\potl(\ins)} && \text{by \Cref{lem:al2potsafesep} and induction} \nonumber\\
\le& \vertsz(\ins) \cdot \tcbasea^{\potl(\ins)}. && \text{by \Cref{lem:gs_sep}} \nonumber
\end{align}

Now, let $R_1(\ins)$ denote the total number of calls in the recursion trees from terminal clique merging on \Cref{alg:exsubtw:mergrec}.
By \Cref{lem:al2potmerg}, induction, and the fact that $k \ge \kb$, we have that 
\[R_1(\ins) \le (k+2)^2 \cdot \vertsz(\ins) \cdot \tcbasea^{\potl(\ins)-k} \le \vertsz(\ins) \cdot \tcbasea^{\potl(\ins)}/5.\]

Then, let $R_2(\ins)$ denote the total number of calls in the recursion tree from the final \Cref{alg:exsubtw:lastret} where $q$ is incremented.
By induction and \Cref{lem:al2potincq}, we have that 
\[R_2(\ins) \le \vertsz(\ins) \cdot \tcbasea^{\potl(\ins)-k} \le \vertsz(\ins) \cdot \tcbasea^{\potl(\ins)}/5.\]

Now, consider the case when there are less than two terminal cliques of size $\ge q$, and let $R_3(\ins)$ denote the total number of calls in the recursion tree from \Cref{alg:exsubtw:leafpush1rec} where the leaf pushing branching is done.
By induction, \Cref{lem:impsep4k}, \Cref{lem:al2potleafpush1}, and $k \ge \kb$, we have that
\[R_3(\ins) \le (k+2)^2 \cdot 4^k \cdot \vertsz(\ins) \cdot \tcbasea^{\potl(\ins)-k} \le \vertsz(\ins) \cdot \tcbasea^{\potl(\ins)}/5.\]
This finishes the running time analysis in the case when there are less than $2$ terminal cliques, as in this case we have that 
\[R(\ins) \le 1 + R_1(\ins) + R_2(\ins) + R_3(\ins) \le \vertsz(\ins) \cdot \tcbasea^{\potl(\ins)}.\]

It remains to consider the case when there are at least two terminal cliques of size $\ge q$.
Let $R_4(\ins)$ denote the total number of calls in the recursion tree from \Cref{alg:exsubtw:leafpush2rec} when $|t_L| \ge 2$ and $R_5(\ins)$ the total number of calls when $|t_L| = 1$.
Recall that in all cases $|t_R| \ge 2$.

First, let $|t_L| \ge 2$ and consider a single call from \Cref{alg:exsubtw:leafpush2rec}.
As $\allw_{\ins}[t_L] \subseteq A \cup S$ and $\allw_{\ins}[t_R] \subseteq B \cup S$, we have that both $\ins \rescliqs (A,S)$ and $\ins \rescliqs (B,S)$ have less terminal cliques than $\ins$, and therefore by induction, \Cref{lem:al2potsep}, and \Cref{lem:gs_sep} the number of calls for fixed $(A,S,B)$ is at most
\[\vertsz(\ins) \cdot (\tcbasea^{\potl(\ins \rescliqs (A,S))} + \tcbasea^{\potl(\ins \rescliqs (B,S))}) \le 2 \cdot \vertsz(\ins) \cdot \tcbasea^{\potl(\ins)-k}.\]
Then, the total number of calls from \Cref{alg:exsubtw:leafpush2rec}
in this case is
\[R_4(\ins) \le 2^t \cdot 4^k \cdot 2 \cdot \vertsz(\ins) \cdot \tcbasea^{\potl(\ins)-k} \le \vertsz(\ins) \cdot \tcbasea^{\potl(\ins)}/5.\]

Now, let $|t_L| = 1$ and consider a single call from \Cref{alg:exsubtw:leafpush2rec}.
As $|t_R| \ge 2$, we again have by \Cref{lem:al2potsep} that $\potl(\ins \rescliqs (A,S)) \le \potl(\ins)-k$.
Let $W_i$ be the single terminal clique with $i \in t_L$.
We have by \Cref{lem:al2potleafpush2} that $\potl(\ins \rescliqs (B,S)) \le \potl(\ins) - \min(k, (q-|W_i|) \log_2 k)$.

Because $\ins$ has no safe separations, $|W_i| < q$, and there is a terminal clique of size $\ge q$, we have that $\flow_G(W_i, \ow_{\ins}(W_i)) = |W_i|$, which implies by \Cref{lem:impsepfk} that the number of important $(W_i, \ow_{\ins}(W_i))$-separators of size $q$ is at most $k^{q-|W_i|}$.
Combining with \Cref{lem:impsep4k} and the fact that $q \le k$ here, we actually obtain an upper bound of 
\[\min(k^{q-|W_i|}, 4^k) \le 4^{\min(k, (q-|W_i|) \log_2 k)}.\]
Now, for fixed $W_i$, by induction and \Cref{lem:gs_sep} the number of recursive calls is
\begin{align}
R_5(\ins, i) &\le 4^{\min(k, (q-|W_i|) \log_2 k)} \cdot \vertsz(\ins) \cdot (\tcbasea^{\potl(\ins)-\min(k, (q-|W_i|) \log_2 k)} + \tcbasea^{\potl(\ins)-k}) \nonumber\\
&\le \vertsz(\ins) \cdot 4^{\min(k, (q-|W_i|) \log_2 k)} \cdot 2 \cdot \tcbasea^{\potl(\ins)} \cdot \tcbasea^{-\min(k, (q-|W_i|) \log_2 k)} \nonumber\\
&\le \vertsz(\ins) \cdot 2 \cdot \tcbasea^{\potl(\ins)} \cdot (1/4)^{\min(k, (q-|W_i|) \log_2 k)}\nonumber\\
&\le \vertsz(\ins) \cdot 2 \cdot \tcbasea^{\potl(\ins)} / 20k \le \vertsz(\ins) \cdot \tcbasea^{\potl(\ins)} / 10k. \nonumber
\end{align}
Then, over all terminal cliques this is 
\[R_5(\ins) = \sum_{i=1}^t R_5(\ins, i) \le (k+2) \cdot \vertsz(\ins) \cdot \tcbasea^{\potl(\ins)}/10k \le \vertsz(\ins) \cdot \tcbasea^{\potl(\ins)}/5.\]
This finishes the running time analysis of the case when there are at least 2 terminal cliques of size $\ge q$, as in this case we have that 
\[R(\ins) \le 1 + R_1(\ins) + R_2(\ins) + R_4(\ins) + R_5(\ins) \le \vertsz(\ins) \cdot \tcbasea^{\potl(\ins)}.\]
\end{proof}

Putting together with the pre-branching of \Cref{lem:prebranching}, this finishes the proof of \Cref{the:stwexalg}, and together with \Cref{the:stweximpl} they imply \Cref{the:mainexact}.

\section{Conclusion}
\label{sec:conclusion}
We have given a $2^{\OO(k^2)} n^4$ time algorithm for deciding if a given graph has treewidth at most $k$ and computing the corresponding tree decomposition, and also a $k^{\OO(k/\varepsilon)} n^4$ time $(1+\varepsilon)$-approximation algorithm for the same problem.
In this section we conclude by explicitly stating some results that were implicitly proven in \Cref{sec:redu} and by asking some open questions.

Recall that \Cref{the:stweximpl} gave a connection between \stw and treewidth, showing that algorithms for \stw imply algorithms for treewidth.
The statement of \Cref{the:stweximpl} is a bit technical because we paid attention to the factors polynomial in $n$, so let us note here that \Cref{lem:main_impr_lem} used together with iterative compression implies the following elegant connection between treewidth and \stw.

\begin{proposition}
For any function $f : \mathbb{N} \rightarrow \mathbb{N}$, there is an $f(k) \cdot n^{\OO(1)}$ time algorithm for treewidth if and only if there is an $f(k) \cdot n^{\OO(1)}$ time algorithm for \stw.
\end{proposition}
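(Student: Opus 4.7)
My plan is to prove the two directions separately, with the forward direction being essentially trivial and the reverse direction using iterative compression paired with \Cref{lem:main_impr_lem}.

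For the forward direction, an $f(k) \cdot n^{\OO(1)}$ algorithm for treewidth solves an \stw instance $(G, W, k)$ by running the treewidth algorithm on $G$ with parameter $k$: if it returns a tree decomposition $(T', \bag')$ of width at most $k$, output $(V(G), (T', \bag'))$ as a torso tree decomposition covering $W$; otherwise report that the treewidth of $G$ is at least $k+1$. This requires no new ideas.

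For the reverse direction, I would use iterative compression. Fix an arbitrary ordering $v_1, \ldots, v_n$ of $V(G)$ and let $G_i = G[\{v_1, \ldots, v_i\}]$. I maintain as a loop invariant either a tree decomposition $(T_i, \bag_i)$ of $G_i$ of width at most $k$ with at most $n$ nodes, or a certificate that the treewidth of $G$ exceeds $k$. To advance from $i$ to $i+1$, insert $v_{i+1}$ into every bag of $(T_i, \bag_i)$ to obtain a tree decomposition of $G_{i+1}$ of width at most $k+1$, and then compress. The compression loop repeatedly takes a largest bag $W$ of the current decomposition: if $|W| \le k+1$ compression terminates, and otherwise $|W| = k+2$ and I call the \stw algorithm on $(G_{i+1}, W, k)$. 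If the \stw algorithm reports that the treewidth of $G_{i+1}$ is at least $|W|-1 = k+1$, then since $G_{i+1}$ is a subgraph of $G$ the treewidth of $G$ is also at least $k+1$, and I halt. Otherwise the algorithm returns a torso tree decomposition covering $W$ of width at most $|W|-2 = k$, and \Cref{lem:main_impr_lem} converts this together with the current decomposition into a tree decomposition of $G_{i+1}$ of width at most $k+1$, with strictly fewer bags of size $k+2$ and with at most $n$ nodes.

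The running time analysis is then routine: the initial decomposition produced at each compression step has at most $n$ bags of size $k+2$, so the inner compression loop terminates after at most $n$ calls to the \stw algorithm, and there are $n$ outer compression steps, giving at most $n^2$ total calls. Each call runs in $f(k) \cdot n^{\OO(1)}$ time, and each invocation of \Cref{lem:main_impr_lem} runs in time polynomial in $n$ and $k$, so the overall running time is $f(k) \cdot n^{\OO(1)}$ as desired. There is no real obstacle in the proof; the only thing that needs a moment of thought is that a negative answer from \stw on the subgraph $G_{i+1}$ already propagates to a negative answer for $G$, which is immediate from monotonicity of treewidth under taking subgraphs.
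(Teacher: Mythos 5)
Your proposal is correct and takes essentially the same route as the paper: the only-if direction is the trivial reduction (a width-$k$ tree decomposition is itself a torso tree decomposition, and a negative answer immediately certifies treewidth $\geq k+1$), and the if direction is iterative compression combined with \Cref{lem:main_impr_lem}, applied so that each call to the \stw oracle drops the count of bags of size $k+2$ in the intermediate decomposition. The index bookkeeping (the lemma's parameter $k$ is your $k+1$, the largest bag $W$ has size $k+2$, and the returned torso tree decomposition has width $\le k$) all checks out, and the monotonicity of treewidth under subgraphs correctly lets you propagate a negative answer from $G_{i+1}$ to $G$.
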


Here, the if-direction comes from applying \Cref{lem:main_impr_lem} iteratively together with iterative compression, and the only-if direction comes from using that our statement of \stw allows to conclude that the treewidth of $G$ is more than $k$, which implies that any algorithm for treewidth is trivially also an algorithm for \stw.

The techniques used in \Cref{sec:redu} generalize techniques that were used to prove the existence of lean tree decompositions of width at most the treewidth of the graph~\cite{bellenbaum2002two,DBLP:journals/jct/Thomas90}.
Therefore, it is natural to ask if our proofs give some generalization of the notion of lean tree decompositions, and it turns out that they indeed do.
A tree decomposition $(T, \bag)$ is called \emph{lean} if for any two nodes $t_1, t_2 \in V(T)$ and sets $Z_1 \subseteq \bag(t_1)$ and $Z_2 \subseteq \bag(t_2)$ with $|Z_1| = |Z_2|$, either $Z_1$ is linked into $Z_2$, or on the unique $t_1$--$t_2$-path in $T$ there is an edge $xy$ with $|\bag(x) \cap \bag(y)| < |Z_1|$.
By setting $t_1 = t_2$ we get that any two subsets of a same bag are linked into each other, which implies that any lean tree decomposition has width at most $3k+2$, where $k$ is the treewidth of the graph.

For our generalization of lean tree decompositions, let us define that a node $t \in V(T)$ is \emph{strongly lean} if there does not exist a torso tree decomposition of width $|\bag(t)|-2$ that covers $\bag(t)$.
Note that a node $t$ that is strongly lean must satisfy the leanness condition for $t_1 = t_2 = t$.
Then, we say that a tree decomposition is strongly lean if it is lean, and its every node is strongly lean.
In \Cref{sec:appendixlean} we observe that techniques in \Cref{sec:redu} imply the existence of strongly lean tree decompositions.

\begin{restatable}{proposition}{strongleanproposition}
\label{pro:stronglean}
Every graph admits a strongly lean tree decomposition.
\end{restatable}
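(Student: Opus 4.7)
My plan is to show that a lexicographically minimum tree decomposition of $G$ is strongly lean. By \Cref{lem:tdlinear} it suffices to consider tree decompositions $(T,\bag)$ with $|V(T)|\le n$, and on this finite collection I pick a minimizer of the vector $\Phi(T,\bag):=(n_n, n_{n-1}, \ldots, n_1)$ in the lexicographic order, where $n_i=|\{s\in V(T):|\bag(s)|=i\}|$. That such a lex-minimum is lean is the classical theorem of Thomas and of Bellenbaum and Diestel~\cite{bellenbaum2002two,DBLP:journals/jct/Thomas90}, so it remains to show every node is strongly lean.

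Suppose for contradiction that some $t\in V(T)$ is not strongly lean, and write $W=\bag(t)$. Then there is a torso tree decomposition covering $W$ of width at most $|W|-2$; among all such, let $(X,(T_X,\bag_X))$ be one minimizing the potential $\phi_{d_{(T,\bag,t)}}(X)$ from \Cref{subsec:itimprtd} (with $(T,\bag)$ rooted at $t$). Running the construction of \Cref{lem:imprmainv2} on $(T,\bag)$, the root $t$, $W$, and $(X,(T_X,\bag_X))$: if it returns a witness pair (\Cref{lem:imprmainv2:bad}) then \Cref{lem:make_d_linkedv2} produces a torso tree decomposition of $G$ of width at most $|W|-2$ that still covers $W$ and has strictly smaller $\phi_{d_{(T,\bag,t)}}$, contradicting the choice of $(X,(T_X,\bag_X))$. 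So the construction returns a tree decomposition $(T'',\bag'')$ of $G$ (\Cref{lem:imprmainv2:good}), and I must show $\Phi(T'',\bag'')<_{\mathrm{lex}}\Phi(T,\bag)$.

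The main obstacle is this bag-size bookkeeping, which generalizes the ``one fewer largest bag'' conclusion of \Cref{lem:imprmainv2} to the case where $W$ is not necessarily a largest bag. The decomposition $(T'',\bag'')$ is assembled from $(T_X,\bag_X)$, whose bags all have size at most $|W|-1$, together with the pieces $(T_C,\bag_C)$ for each connected component $C$ of $G\setminus X$. Since we are in \Cref{lem:imprmainv2:good}, \Cref{lem:width_constrv2} gives for every such $C$ and every $t'\in V(T_C)$ either $|\bag_C(t'_C)|<|\bag(t')|$ or $\bag_C(t'_C)=\bag(t')$. The equality case forces $\bag(t')\subseteq N[C]$; however, for distinct components $C_1,C_2$ one has $N[C_1]\cap N[C_2]\subseteq X$ (because $N(C_i)\subseteq X$ while $C_1,C_2\subseteq V(G)\setminus X$), whereas $t'\in V(T_{C_i})$ requires $\bag(t')\cap C_i\ne\emptyset$. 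So each $t'\in V(T)$ can be in the equality case for at most one component; consequently each such $t'$ contributes at most one bag of its original size to $(T'',\bag'')$. Moreover, the special node $t$ lies in no $T_C$: since $W\subseteq X$, we have $W\cap C=\emptyset$ for every component $C$. Totalling these contributions with the bags of $(T_X,\bag_X)$ (all of size $\le|W|-1$), we obtain $n'_j\le n_j$ for every $j\ge|W|$ and $n'_{|W|}\le n_{|W|}-1$, because $t$ contributed $1$ to $n_{|W|}$ but contributes $0$ to $n'_{|W|}$. Comparing $\Phi$ from the largest index downward, the first position where they differ has new strictly less than old, so $\Phi(T'',\bag'')<_{\mathrm{lex}}\Phi(T,\bag)$. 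The final cleanup by \Cref{lem:tdlinear} only contracts edges with one bag contained in another, which never increases any $n_j$, so the strict inequality survives and contradicts the lex-minimality of $(T,\bag)$.
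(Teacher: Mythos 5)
Your overall strategy matches the paper's: take a lex-minimal tree decomposition, invoke leanness of such a decomposition from~\cite{bellenbaum2002two,DBLP:journals/jct/Thomas90}, pick an extremal torso tree decomposition $(X,(T_X,\bag_X))$ covering $W=\bag(t)$ (your ``minimize $\phi_{d}$'' is equivalent to the paper's ``iterate \Cref{lem:make_d_linkedv2} until $d$-linked''), run the construction from \Cref{lem:imprmainv2}, and derive a contradiction with lex-minimality. The paper also notes explicitly that \Cref{lem:imprmainv2} cannot be invoked as a black box here since its statement requires $|W|=k+1$ to be the maximum bag size; you should likewise make clear that you are re-using the construction and \Cref{lem:width_constrv2} rather than the lemma's conclusion, but your phrasing does suggest you understand this.

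The gap is in the final bag-size bookkeeping. You conclude ``$n'_j\le n_j$ for every $j\ge |W|$'' from the fact that each node $t'$ contributes at most one bag of its original size $|\bag(t')|$. This does not follow and is in general false. A node $t'$ that intersects $c\ge 2$ components of $G\setminus X$ generates $c$ copies $\bag_{C_1}(t'_{C_1}),\dots,\bag_{C_c}(t'_{C_c})$, none of which can equal $\bag(t')$ (so you correctly get $n'_{|\bag(t')|}$ not increasing from $t'$), but each of which can have any size strictly below $|\bag(t')|$, possibly still $\ge|W|$. So a single multi-component node of size, say, $|W|+1$ can dump two bags of size $|W|$ into $(T'',\bag'')$, driving $n'_{|W|}$ above $n_{|W|}$. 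Your claim $n'_{|W|}\le n_{|W|}-1$ therefore fails, even though in that scenario the lex comparison is still won at position $|W|+1$. The statement you need is the lexicographic inequality itself, and the exact per-size bounds you assert are simply not available.

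The paper's proof repairs exactly this. It defines $k$ to be the minimum integer such that every node $t'$ with $|\bag(t')|>k$ intersects exactly one component of $G\setminus X$ (note $k\ge|W|$ since $\bag(t)=W\subseteq X$ intersects none). For $s>k$ every relevant source node is $1$-component, hence contributes exactly one copy of size at most its own; this gives a size-nonincreasing injection and hence that the number of bags of size at least $s$ does not increase, for every $s>k$. Either some such cumulative count strictly drops (lex win above level $k$), or all of them are preserved exactly, in which case the copies of all $>k$-sized nodes have exactly their original sizes, and then one argues at level $k$: by minimality of $k$ there is a node of size exactly $k$ intersecting $0$ or $\ge2$ components, and it contributes no bag of size $k$, giving a strict drop in the count of size-$k$ bags. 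Your equality/strictness dichotomy and the observation that the equality case forces $\bag(t')\subseteq N[C]$ are exactly the right ingredients; what is missing is working with cumulative counts above the $1$-component threshold rather than exact counts $n_j$ all the way down to $|W|$.
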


Note that every strongly lean tree decomposition must have width at most the treewidth of the graph, so \Cref{pro:stronglean} is a generalization of the result of Thomas~\cite{DBLP:journals/jct/Thomas90}.

We then turn to open questions.
The main open question that we made progress towards in this work is the question of what is the asymptotically smallest function $f(k)$ so that treewidth can be computed in time $f(k) \cdot n^{\OO(1)}$.
Even after our improvement from $2^{\OO(k^3)}$ to $2^{\OO(k^2)}$, this question remains wide open.
On the lower bound side, we are not aware of any lower bounds under the Exponential Time Hypothesis (ETH)~\cite{ImpagliazzoP01} for treewidth, apart from the lower bound that follows from the NP-hardness proof~\cite{ArnborgCP87}.
By tracing the chain of reductions in the proof, we observe in \Cref{sec:appendixhard} that they imply the following lower bound under ETH.

\begin{restatable}{proposition}{ethboundproposition}
\label{pro:ethbound}
Assuming ETH, there is no $2^{o(\sqrt{n})}$ time algorithm for computing the treewidth on graphs with $n$ vertices.
\end{restatable}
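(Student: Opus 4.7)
The plan is to trace the chain of polynomial-time reductions in the NP-hardness proof of treewidth by Arnborg, Corneil, and Proskurowski~\cite{ArnborgCP87}, and to bound the vertex blowup at each step. First, I would start from 3-SAT with $n$ variables, applying the Sparsification Lemma of Impagliazzo--Paturi--Zane so that we may assume the number of clauses is $m = O(n)$; under ETH, such instances cannot be decided in time $2^{o(n)}$. Then I would follow the ACP chain, which reduces 3-SAT through an intermediate graph-layout problem (such as \textsc{Minimum Cut Linear Arrangement} or a closely related layout problem) and finally to \textsc{Treewidth}. The crucial observation to extract is that, composing these reductions, sparse 3-SAT on $n$ variables is mapped to a \textsc{Treewidth} instance on $N = O(n^2)$ vertices.

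Given this quadratic blowup, a hypothetical algorithm for treewidth running in time $2^{o(\sqrt{N})}$ on $N$-vertex graphs would decide the source sparse 3-SAT instance in time $2^{o(\sqrt{n^2})} = 2^{o(n)}$, contradicting ETH. This yields the claimed lower bound $2^{\Omega(\sqrt{n})}$. The whole argument is therefore a size-audit of an already-published NP-hardness proof; no new combinatorial content is required, only an explicit accounting of how many vertices the gadgets at each intermediate reduction introduce.

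\textbf{Main obstacle.} The genuine work is the bookkeeping. Each gadget in the ACP chain must be verified to contribute at most a linear blowup (so that the whole composition is at most quadratic). In particular, the gadgets encoding variables and clauses into the intermediate layout problem, and then the gadgets encoding that layout problem into a graph whose treewidth captures the cut width, must both be checked to have size $O(n)$ rather than $\omega(n)$. If some step in the published chain turns out to have super-linear blowup, giving total size $O(n^c)$ with $c > 2$, the naive composition would yield only a weaker $2^{\Omega(n^{1/c})}$ lower bound; in that case the obstacle becomes either substituting an alternative published reduction with tighter size bounds, or designing a more efficient direct reduction from sparse 3-SAT to treewidth achieving $O(n^2)$ vertices. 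I expect the quadratic bound to come out of the standard chain after minor adjustments, since both the layout-problem gadgets and the layout-to-treewidth reduction are essentially linear-size constructions per variable/clause.
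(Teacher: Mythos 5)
Your high-level plan matches the paper's proof: trace the chain of reductions from the NP-hardness proof, bound the vertex blowup, and conclude a $2^{o(\sqrt{N})}$ lower bound from a quadratic blowup. However, your account of \emph{where} the quadratic blowup occurs is off, and this matters. The paper's chain is 3-SAT (implicitly, via ETH) $\to$ Max-Cut (Cai--Juedes, $2^{o(n)}$ lower bound on $n$-vertex graphs) $\to$ Cutwidth (a custom reduction with \emph{linear} blowup: complement $G$ and add $3n$ universal vertices, giving a $4n$-vertex graph) $\to$ Treewidth (the ACP reduction). You write that "both the layout-problem gadgets and the layout-to-treewidth reduction are essentially linear-size constructions," and that the quadratic total should "come out of the standard chain after minor adjustments." But if both steps were linear in the number of vertices, the composition would be linear and you would get a $2^{o(n)}$ lower bound, not $2^{o(\sqrt{n})}$. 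The actual source of the quadratic blowup is that the ACP reduction from cutwidth to treewidth produces a graph with $O(n \cdot \Delta)$ vertices, where $\Delta$ is the maximum degree of the cutwidth instance; since the cutwidth graph obtained from Max-Cut is dense (it is a complement plus universal vertices, so $\Delta = \Theta(n)$), this step alone costs $O(n^2)$ vertices. The paper even points out that a $2^{o(n)}$ lower bound for cutwidth on \emph{bounded-degree} graphs would immediately sharpen the treewidth bound to $2^{o(n)}$ — which underlines that the dense intermediate graph is the genuine bottleneck, not an artifact of sloppy bookkeeping. You should also note that the Max-Cut $\to$ Cutwidth reduction cannot simply be pulled off the shelf: the paper adapts Gavril's construction specifically so that it only multiplies the vertex count by a constant, which is not automatic.
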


This of course implies also a $2^{o(\sqrt{k})}$ lower bound on the dependence on $k$, but nothing better since the graph produced by the reduction is co-bipartite.
It would be very surprising if there would be a subexponential time algorithm for treewidth, so on the lower bound side we ask if the lower bound of \Cref{pro:ethbound} can be improved to $2^{o(n)}$, which would be tight.
Proving a $2^{o(k)}$ lower bound for the parameterized running time could be easier, but we do not conjecture where between $2^{\OO(k)}$ and $2^{\OO(k^2)}$ the right answer for the dependence on $k$ should be.

On the approximation side, we ask if treewidth can be $(1+\varepsilon)$-approximated in time $2^{\OO(k)} n^{\OO(1)}$ for every fixed $\varepsilon > 0$.
Even obtaining a $1.9$-approximation in time $2^{\OO(k)} n^{\OO(1)}$ would seem to require new techniques, as well as the related problem of solving \pstw in time $2^{\OO(k)} n^{\OO(1)}$ when $t = 2$.

Finally, let us discuss the dependence on $n$, which is $n^4$ for both of our algorithms.
It can be observed that the bottleneck is in the polynomial-time process of iteratively improving the torso tree decomposition $(X, (T_X, \bag_X))$ in \Cref{subsec:itimprtd}.
In particular, the running times in \Cref{the:mainexact,the:mainapx} could be stated in a tighter way as $2^{\OO(k^2)} n^3 + k^{\OO(1)} n^4$ and $k^{\OO(k/\varepsilon)} n^3 + k^{\OO(1)} n^4$, respectively.
We believe that improving the dependence on $n$ significantly is an interesting problem.
Improving it below $n^3$ should require new techniques, and below $n^2$ could need a completely new approach.

\bibliographystyle{alpha}
\bibliography{book_kernels_fvf}

\appendix
\newpage
\section{Strongly lean tree decompositions}
\label{sec:appendixlean}
We gave a definition of \emph{strongly lean} tree decompositions in \Cref{sec:conclusion}.
Here we show using the tools developed in \Cref{sec:redu} that strongly lean tree decompositions exist for all graphs.

\strongleanproposition*
\begin{proof}
Let $G$ be a graph and $(T, \bag)$ a tree decomposition of $G$ with the lexicographically minimal sequence of bag sizes, i.e., one that first minimizes the number of bags of size $n$, then the number of bags of size $n-1$, and so on.
The proof of Bellenbaum and Diestel~\cite{bellenbaum2002two} shows that such a tree decomposition is lean.
Let us prove that each node of such a tree decomposition is strongly lean.

Assume otherwise, and let $r \in V(T)$ be a node that is not strongly lean.
Denote $\bag(r) = W$, and let $(X, (T_X, \bag_X))$ be a torso tree decomposition that covers $W$ and has width at most $|W|-2$.
Now we proceed as in the proof of \Cref{lem:imprmainv2}.
First, by iterating \Cref{lem:make_d_linkedv2} we can assume that $(X, (T_X, \bag_X))$ is $d_{(T,\bag,r)}$-linked into $W$.
Then, we implement the construction of an improved tree decomposition $(T',\bag')$ as in \Cref{lem:imprmainv2}, in particular, we construct the tree decompositions $(T_C, \bag_C)$ for each connected component $C$ of $G \setminus X$ as described in the proof, and attach them to $(T_X, \bag_X)$ to construct $(T', \bag')$.

By the fact that $(X, (T_X, \bag_X))$ is $d_{(T,\bag,r)}$-linked into $W$, we know that the \Cref{lem:width_constrv2:good} of \Cref{lem:width_constrv2} always holds, and we will use it to argue that $(T',\bag')$ has lexicographically smaller sequence of bag sizes than $(T,\bag)$, contradicting the choice of $(T,\bag)$.
The idea will be that the lexicographical improvement will happen already in bags of size at least $|W|$, so we will not care about the bags of $(T_X, \bag_X)$ in the construction.
Call a node $t \in V(T)$ a 1-component node if $\bag(t)$ intersects exactly one component $C$ of $G \setminus X$.
Let $k$ be the minimum integer so that every node $t$ with $|\bag(t)| > k$ is a 1-component node.
Note that $k \ge |W|$ because $W \subseteq X$.

First, we claim that for each $s > k$, the number of bags of size at least $s$ in $(T',\bag')$ is at most the number of bags of size at least $s$ in $(T,\bag)$.
To prove this, recall that for every bag $\bag(t)$ of $(T, \bag)$ we constructed a corresponding bag $\bag_C(t)$ in $(T_C, \bag_C)$ for a component $C$ only if $\bag(t)$ intersected $C$, so for each 1-component node there exists only one corresponding node in $(T',\bag')$, and by \Cref{lem:width_constrv2}, such node has bag of size at most $|\bag_C(t)| \le |\bag(t)|$.
For nodes that are not 1-component nodes, their bags have size at most $k$, so by \Cref{lem:width_constrv2} no bag of size $s > k$ in $(T',\bag')$ can correspond to such node.

Then, we know that either $(T',\bag')$ has lexicographically smaller sequence of bag sizes than $(T,\bag)$ already on bags of size $> k$, in which case we are ready, or that each bag of size $s > k$ in $(T',\bag')$ corresponds to exactly one bag of size $s$ in $(T,\bag)$.
In the latter case, again each 1-component node of $(T,\bag)$ with a bag of size $k$ corresponds to exactly one bag of size at most $k$ of $(T',\bag')$.
However, there is at least one node of $(T,\bag)$ with a bag of size $k$ that is not a 1-component node.
Let $t$ be such a node.
First, if $\bag(t) \subseteq X$, then $\bag(t)$ does not intersect any components $C$ of $G \setminus X$, and therefore there are no bags in $(T',\bag')$ corresponding to $t$.
Second, if $\bag(t)$ intersects multiple components $C_1, \ldots, C_c$ of $G \setminus X$, it cannot hold that $\bag_{C_i}(t) = \bag(t)$ for any component $C_i$ because $\bag_{C_i}(t)$ is a subset of $N[C_i]$, so therefore by \Cref{lem:width_constrv2} it holds that $|\bag_{C_i}(t)| < |\bag(t)|$.
In particular, in both of the cases no bags of size $k$ in $(T',\bag')$ correspond to $t$, and therefore the number of bags of size $k$ in $(T',\bag')$ must be smaller than in $(T,\bag)$.
\end{proof}

\section{An ETH lower bound for treewidth}
\label{sec:appendixhard}
We detail how the NP-hardness proof of treewidth of Arnborg, Corneil, and Proskurowski~\cite{ArnborgCP87} implies that there is no $2^{o(\sqrt{n})}$ time algorithm for treewidth assuming ETH.
We start with the hardness of max-cut from~\cite{CaiJ03}.

\begin{lemma}[\cite{CaiJ03}]
Assuming ETH, there is no $2^{o(n)}$ time algorithm for max-cut on graphs with $n$ vertices.
\end{lemma}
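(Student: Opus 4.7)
The plan is to derive the lower bound by chaining three ingredients: the Exponential Time Hypothesis itself, the Sparsification Lemma of Impagliazzo, Paturi and Zane, and a linear-size reduction from $3$-SAT to max-cut.

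First, recall that ETH asserts that $3$-SAT on $n$ variables cannot be solved in time $2^{o(n)}$. By the Sparsification Lemma, for every $\varepsilon > 0$ one can reduce a $3$-SAT formula on $n$ variables to a subexponential number of $3$-SAT formulas, each still on $n$ variables but with $O(n)$ clauses, in time $2^{\varepsilon n}$. Consequently, under ETH, $3$-SAT with $n$ variables and $m$ clauses cannot be solved in time $2^{o(n+m)}$; equivalently, the ``sparse'' version of $3$-SAT (with $m = O(n)$) cannot be solved in time $2^{o(n)}$.

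Second, I would use a reduction from $3$-SAT (with $m = O(n)$ clauses) to max-cut in which the produced graph has $O(n+m) = O(n)$ vertices and edges. A convenient route is via NAE-$3$-SAT: replace every clause $(\ell_1 \vee \ell_2 \vee \ell_3)$ by a constant-size gadget turning it into a set of not-all-equal clauses with two extra auxiliary variables, which increases the size only linearly. Then use the classical NAE-$3$-SAT to max-cut reduction, in which one introduces two vertices per variable (joined by a heavy edge of appropriate multiplicity to force the literal and its negation to lie on opposite sides of the cut) and a triangle per NAE-clause; a cut of a specified threshold exists iff the formula has a NAE-satisfying assignment. Each clause contributes only $O(1)$ vertices and edges, so the final graph has $N = O(n)$ vertices.

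Finally, I would compose these steps: a hypothetical $2^{o(N)}$ time max-cut algorithm, applied to the $O(n)$-vertex graph produced from a sparse $3$-SAT instance, would decide $3$-SAT in time $2^{o(n)}$, contradicting ETH. The main subtlety, and the step most deserving of care, is verifying that every stage of the reduction chain is strictly linear in size (not merely polynomial): ETH's $2^{o(n)}$ bound is fragile under blow-up, so one must make sure that sparsification produces $O(n)$ clauses, that the NAE-SAT intermediate step uses only $O(1)$ new variables per clause, and that the multi-edges in the max-cut gadget can be realized by parallel edges whose number stays bounded by a constant (or simulated by constant-size paths) without inflating the vertex count beyond $O(n)$.
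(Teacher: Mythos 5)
The paper does not prove this lemma; it is cited verbatim from Cai and Juedes~\cite{CaiJ03}, who establish it by exactly the chain you describe: ETH together with the Sparsification Lemma gives a $2^{o(n)}$ lower bound for $3$-SAT with $O(n)$ clauses, and a linear-size reduction (through NAE-$3$-SAT) to max-cut transfers the bound. Your sketch therefore matches the approach of the cited source. One small inaccuracy to flag: the multiplicity of the heavy edge between a literal vertex and its negation is not bounded by a constant as you write---it must scale with the number of clause occurrences of that variable to override the contribution of the incident triangles. This is harmless, though: either observe that the \emph{total} number of parallel edges is $\sum_x k_x = O(m) = O(n)$, so simulating them by simple-graph gadgets still yields $O(n)$ vertices; or first make the formula bounded-occurrence by the standard linear-size variable-replication trick, after which each multiplicity really is $O(1)$.
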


Next we recall the definition of cutwidth (also known as minimum cut linear arrangement).
Let $G$ be an $n$-vertex graph and $\tau : V(G) \rightarrow [n]$ be an ordering of the vertices, i.e., a bijection from $V(G)$ to $[n]$.
The cutwidth of $\tau$ is $\max_{1 \le i < n} |\{uv \in E(G) \mid \tau(u) \le i < \tau(v)\}|$.
The cutwidth of $G$ is the minimum cutwidth of an ordering of the vertices of $G$.

Next we reduce max-cut to cutwidth, following the idea of the reduction of Gavril~\cite{gavril1977some}, but being careful to increase the number of vertices only by a constant factor. 

\begin{lemma}
Assuming ETH, there is no $2^{o(n)}$ time algorithm for computing the cutwidth on graphs with $n$ vertices.
\end{lemma}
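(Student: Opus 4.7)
The plan is to give a size-preserving reduction from max-cut to cutwidth so that the ETH lower bound $2^{o(n)}$ for max-cut transfers to cutwidth. Given an $n$-vertex instance $G$ of max-cut, I would construct a graph $H$ on $O(n)$ vertices and an integer $W$ such that $\mathrm{cutwidth}(H) \le W - k$ if and only if $G$ has a cut of size at least $k$. Since the number of vertices blows up only by a constant factor, a $2^{o(|V(H)|)}$ algorithm for cutwidth would translate into a $2^{o(n)}$ algorithm for max-cut, contradicting ETH.

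For the reduction I would use a Gavril-style construction. First, pad $G$ with isolated vertices so that the max-cut is achieved by a balanced bipartition, and assume $n$ is even; balanced max-cut on $n$-vertex graphs remains $2^{o(n)}$-hard under ETH since isolated vertices can be placed freely on either side. Let $A$ and $B$ be fresh vertex sets of size $n$ each, disjoint from $V(G)$, and define $H$ on vertex set $V(G) \cup A \cup B$ with edges (i) all edges of the complement $\overline{G}$ on $V(G)$, (ii) all edges of the biclique $K_{A,B}$, and (iii) all edges of the bicliques $K_{A, V(G)}$ and $K_{B, V(G)}$. The resulting graph has $3n$ vertices and $O(n^2)$ edges, as required.

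Next, I would analyze the cutwidth of $H$. In the canonical ordering that places $A$ first, then a balanced bipartition $(L,R)$ of $V(G)$, then $B$, a direct edge-counting shows that the cut at every position inside the prefix $A$ and the suffix $B$ has size at most $2n^2$, while the cut at the middle position of the $V(G)$-block equals $2n^2 + |E_{\overline{G}}(L,R)| = 2n^2 + |L||R| - |E_G(L,R)|$. Hence, among such orderings, minimizing the cutwidth is equivalent to maximizing $|E_G(L,R)|$ over balanced bipartitions, which is precisely balanced max-cut in $G$. Setting $W := 2n^2 + n^2/4$ and combining with the ETH lower bound for balanced max-cut closes the reduction.

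The main obstacle will be showing that an optimum ordering must indeed take the canonical form ``$A$-block, $V(G)$-block, $B$-block'' up to swapping $A$ and $B$, rather than interleaving the three groups. The density of the biclique $K_{A,B}$ (together with the $A$--$V(G)$ and $B$--$V(G)$ bicliques) should make any mixing of $A$- and $B$-vertices across the ordering strictly worse; I expect this to follow from a local swap argument that whenever some $a \in A$ appears after some $b \in B$, swapping them does not increase any cut size. A secondary technicality is forcing the middle $V(G)$-block to be balanced, which follows from the symmetry of the $A$--$V(G)$ and $B$--$V(G)$ connections but may need a small constant-size balancing gadget attached to each vertex of $V(G)$ to remove corner cases. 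Once these structural claims are established, the quantitative computation above yields the ETH lower bound of $2^{o(n)}$ for cutwidth.
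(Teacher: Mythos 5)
There is a genuine gap in the central calculation. In the canonical ordering $A, L, R, B$ you compute the cut at the \emph{middle} position of the $V(G)$-block and treat it as the cutwidth of that ordering, but cutwidth is the maximum over \emph{all} positions. At an arbitrary position $i$ inside the $V(G)$-block the cut is $2n^2 + c_i$, where $c_i$ is the number of $\overline{G}$-edges crossing between the first $i$ and last $n-i$ vertices of $V(G)$; only when $i = n/2$ does $c_i$ equal $|L||R| - |E_G(L,R)|$. The quantity $\max_i c_i$ is the cutwidth of $\overline{G}$ under the chosen ordering of $V(G)$, not the bisection width, and it can be strictly larger than $n^2/4 - |E_G(L,R)|$: for instance if $G = K_{n/2,n/2}$ with parts $L,R$, then $c_{n/2} = 0$ while $c_{n/4} = n^2/16$. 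Consequently, minimizing over orderings of $V(G)$ yields $\mathrm{cutwidth}(H) = 2n^2 + \mathrm{cutwidth}(\overline{G})$, so your construction reduces cutwidth to cutwidth (a circular reduction) rather than max-cut to cutwidth. A similar problem afflicts the lower-bound direction: because $A$ and $B$ are independent sets, $H$ has many non-edges inside $A$ and inside $B$, so a small cut at the global midpoint no longer certifies many $G$-edges crossing a bipartition of $V(G)$.

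The paper sidesteps both issues by reversing the roles: it takes $\overline{G}$ plus $3n$ \emph{universal} vertices $U$ forming a clique, and orders $V_1, U, V_2$. Then all non-edges of $G'$ lie inside $V(G)$, and since $|U| = 3n$ the max-product position of the ordering necessarily falls inside the $U$-block, where the number of non-edges crossing is constant (namely $|E_G(V_1,V_2)|$) regardless of the position. This makes the cut at the worst position exactly $4n^2 - |E_G(V_1,V_2)|$, and the lower bound follows by examining the balanced $2n$-$2n$ cut, where the at most $4n^2$ potential edges minus the actual cut size directly lower-bounds the number of $G$-edges crossing. No balanced-max-cut variant and no balancing gadget are needed. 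If you want to repair your construction, move $V(G)$ to the two ends of the ordering and place a single large clique in the middle — which is exactly what the paper does.
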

\begin{proof}
We give a reduction from max-cut on graphs with $n$ vertices to cutwidth on graphs with $\OO(n)$ vertices.
Let $G$ be the input graph for max-cut with $n$ vertices.
We create a graph $G'$ by first taking the complement of $G$ and then adding $3n$ universal vertices, resulting in a graph with a total of $4n$ vertices and $\binom{4n}{2} - |E(G)|$ edges.

We claim that for all $k \le n^2$, the cutwidth of $G'$ is at most $4n^2-k$ if and only if the max-cut of $G$ is at least $k$.
First, for the only if direction, consider the point in the ordering with $2n$ vertices on the left and $2n$ vertices on the right.
All non-edges of $G'$ correspond to edges of $G$, so if there are at most $4n^2-k$ edges crossing this point, this gives a cut with at least $k$ edges of $G$.

For the if direction, let $V_1, V_2$ be a partition of $V(G)$ with at least $k$ edges between $V_1$ and $V_2$, and let $U$ denote the $3n$ added universal vertices of $G'$.
We create an ordering of the vertices of $G'$ by first letting the vertices in $V_1$ appear in an arbitrary order, then the vertices in $U$, and then the vertices in $V_2$.
The number of edges in the first $|V_1|$ cuts and the last $|V_2|$ cuts of the ordering is at most $n \cdot 3n \le 4n^2-k$.
Because the number of edges between $V_1$ and $V_2$ in $G$ is at least $k$, the number of edges in any other cut of the ordering is at most $4n^2-k$.
\end{proof}

Then, a direct application of~\cite{ArnborgCP87} finishes the chain of reductions.

\ethboundproposition*
\begin{proof}
We observe that the reduction of Arnborg, Corneil, and Proskurowski~\cite{ArnborgCP87} reduces the cutwidth on graphs with $n$ vertices to treewidth on graphs with $\OO(n^2)$ vertices.
\end{proof}

We remark that the reduction of~\cite{ArnborgCP87} in fact reduces the cutwidth on graphs with $n$ vertices and maximum degree $\Delta$ to treewidth on graphs with $\OO(n \cdot \Delta)$ vertices.
Therefore, one possible way to improve \Cref{pro:ethbound} would be to give a $2^{o(n)}$ lower bound for computing the cutwidth of bounded-degree graphs.
\end{document}